\newcommand{\zz}{\mathbb{Z}_2}
\newcommand{\z}{\mathbb{Z}}
\newcommand{\fS}{\mathfrak{S}}
\newcommand{\cL}{{\cal L}}
\newcommand{\boell}{\boldsymbol{\ell}}
\theoremstyle{remark}
\newtheorem*{rem*}{\protect\remarkname}
\theoremstyle{plain}
\newtheorem{lem}{\protect\lemmaname}
\theoremstyle{plain}
\newtheorem{thm}{\protect\theoremname}
\theoremstyle{plain}
\newtheorem{prop}{\protect\propositionname}
\theoremstyle{plain}
\newtheorem{thmtext}{\protect\theoremtextname}
  \providecommand{\claimname}{Claim}
  \providecommand{\lemmaname}{Lemma}
  \providecommand{\propositionname}{Proposition}
\providecommand{\theoremname}{Theorem}
\providecommand{\theoremtextname}{Theorem}
  \providecommand{\remarkname}{Remark}
\begin{document}

\title{Space group symmetry fractionalization \\ in a family of exactly solvable models with $\zz$ topological order}

\author{Hao Song}

\author{Michael Hermele}

\affiliation{Department of Physics, University of Colorado, Boulder, Colorado
80309, USA}
\begin{abstract}
We study square lattice space group symmetry fractionalization in a family of exactly solvable models with $\zz$ topological order in two dimensions.  In particular, we have obtained a complete understanding of which distinct types of symmetry fractionalization (symmetry classes) can be realized within this class of models, which are generalizations of Kitaev's $\zz$ toric code to arbitrary lattices.  This question is motivated by earlier work of A. M. Essin and one of us (M. H.), where the idea of symmetry classification was laid out, and which, for square lattice symmetry, produces 2080 symmetry classes consistent with the fusion rules of $\zz$ topological order.  This approach does not produce a physical model for each symmetry class, and indeed there are reasons to believe that some symmetry classes may not be realizable in strictly two-dimensional systems, thus raising the question of which classes are in fact possible.  While our understanding is limited to a restricted class of models, it is complete in the sense that for each of the 2080 possible symmetry classes, we either prove rigorously that the class cannot be realized in our family of models, or we give an explicit model realizing the class.  We thus find that exactly 487 symmetry classes are realized in the family of models considered.  With a more restrictive type of symmetry action, where space group operations act trivially in the internal Hilbert space of each spin degree of freedom, we find that exactly 82 symmetry classes are realized.  In addition, we present a single model that realizes all $2^6 = 64$ types of symmetry fractionalization allowed for a single anyon species ($\zz$ charge excitation), as the parameters in the Hamiltonian are varied.  The paper concludes with a summary and a discussion of two results pertaining to more general bosonic models.
\end{abstract}
\maketitle

\section{Introduction}
\label{sec:Introduction}

\subsection{Background}
Topological phases of matter are those with an energy gap to all excitations, and host remarkable phenomena such as protected gapless edge states, and anyon quasiparticle excitations with non-trivial braiding statistics.  Following the discovery of time-reversal invariant topological band insulators,\cite{hasan10,hasan11,qi11} significant advances have been made in understanding the role of symmetry in topological phases.

Two broad families of such phases are \emph{symmetry protected topological} (SPT) phases,\cite{chen11a, fidkowski11, schuch11,turner11,chen13} and \emph{symmetry enriched topological} (SET) phases.  SPT phases, which include topological band insulators, reduce to the trivial gapped phase if the symmetries present are weakly broken.  These phases lack anyon excitations in the bulk, and many characteristic physical properties are confined to edges and surfaces.  SET phases, on the other hand, are topologically ordered, with anyon excitations in the bulk.  Topological order is robust to arbitrary perturbations provided the gap stays open, and SET phases remain non-trivial even when all symmetries are broken.  In the presence of symmetry, there can be an interesting interplay between symmetry and topological order.  This interplay is important, because properties tied to symmetry are often easier to observe experimentally.  For example, in fractional quantum Hall liquids,\cite{tsui82, laughlin83} quantization of Hall conductance\cite{tsui82} and fractional charge\cite{depicciotto97,saminadayar97,Martin2004} have been directly observed, and arise from the interplay between  ${\rm U}(1)$ charge symmetry and topological order.  The example of fractional quantum Hall liquids makes it clear that the study of SET phases has a long history, which cannot be adequately reviewed here; instead, we simply mention two areas of prior work that have close ties with the focus and results of the present paper.  First, topologically ordered quantum spin liquids are another much-studied class of SET phases.\cite{kalmeyer1987,wen89b,wen91,read91,sachdev92,balents99,senthil00,moessner01b}  Second, a systematic understanding of the role of symmetry in SET phases has recently been developing, including work on classification of such phases; some representative studies are found in Refs.~\onlinecite{wen02,wen03,kitaev06,fwang06,kou09,huh11,cho12,gchen12,levin12,essin13,Mesaros2013,
Hung2013,Hung2013b,Lu2013,Xu2013,cwang13,essin14,chen14,ygu14,ymlu14,Huang2014,hermele14,reuther14,neupert2014wire}.

Most of the recent work on SPT and SET phases has focused on on-site symmetries such as time reversal, ${\rm U}(1)$ charge symmetry, and ${\rm SO}(3)$ spin symmetry.  For SPT phases, this restriction makes sense physically, because a generic edge or surface will not have any spatial symmetries, but may have on-site symmetry.  Of course, there can be clean edges and surfaces, and some works have examined the role of space group symmetry in SPT phases.\cite{turner10,turner10b,fu11,hughes11,teo13,fang12,zwang12b,chiu13,fzhang13,slager2013space}  For SET phases, there is not a good physical justification to ignore spatial symmetries; the presence of  anyon quasiparticles means that symmetries of the bulk can directly impact characteristic physical properties. Indeed, a number of studies have focused on the role of space group symmetry in SET phases.\cite{wen02,fwang06,kou09,huh11,cho12,gchen12,essin13,essin14,ymlu14,reuther14}  However, many recent works on SET phases have limited attention to on-site symmetry.

Recently, A. M. Essin and one of us (M. H.), building on earlier work,\cite{wen02,wen03} introduced a \emph{symmetry classification} approach to bosonic SET phases in two dimensions, designed to handle both on-site and spatial symmetries.\cite{essin13}  The basic idea is to consider a fixed Abelian topological order and fixed symmetry group $G$, and establish symmetry classes corresponding to distinct possible actions of symmetry on the anyon quasiparticles, so that two phases in different symmetry classes must be distinct (as long as the symmetry is preserved).  Under the simplifying assumption that symmetry does not permute the various anyon species, the approach of Ref.~\onlinecite{essin13} amounts to classifying distinct types of \emph{symmetry fractionalization}, where this term reflects the fact that the action of symmetry fractionalizes at the operator level when acting on anyons.  

Distinct types of symmetry fractionalization are referred to as \emph{fractionalization classes}, and characterize the projective representations giving the action of the symmetry group on individual anyons.  Assigning a fractionalization class to each type of anyon specifies the \emph{symmetry class} of a SET phase.
Ref.~\onlinecite{essin13} focused primarily on the simple case of $\zz$ topological order,
giving a symmetry classification for square lattice space group plus time reversal symmetry, that can easily be generalized to any desired symmetry group.  For $\zz$ topological order with symmetry group $G$, a symmetry class is specified by fractionalization classes $[\omega_e]$ and $[\omega_m]$, for $e$ particle ($\zz$ charge) and $m$ particle ($\zz$ flux) excitations, respectively.  Mathematically, distinct fractionalization classes are elements of the cohomology group $H^2(G, \zz)$.  In more detail, a symmetry class is an un-ordered pair $\langle [\omega_e], [\omega_m] \rangle \simeq \langle [\omega_m], [\omega_e] \rangle$, where the lack of ordering comes from the fact that the distinction between $e$ and $m$ particle excitations is arbitrary, and we are always free to make the  relabeling $e \leftrightarrow m$.

A crucial issue left open by the general considerations of Ref.~\onlinecite{essin13} is the realization of symmetry classes in microscopic models (or physically reasonable low-energy effective theories).  In this paper, focusing on $\zz$ topological order and square lattice space group symmetry, we address this issue via a systematic study of a family of exactly solvable lattice models, in which many symmetry classes are realized.  This is interesting for several reasons.  First, to our knowledge, a general framework to describe SET phases with space group symmetry has not yet emerged, and concrete models for such phases are likely to be useful in developing such a framework.  This contrasts with SET phases with on-site symmetry, where powerful tools are available, including approaches based on Chern-Simons theory,\cite{levin12,Lu2013,Hung2013b}  on classification of topological terms using group cohomology,\cite{Mesaros2013,Hung2013} and on tensor category theory.\cite{lukaszpc, barkeshli14}  Second, it is likely that not all symmetry classes are realizable in strictly two-dimensional systems.  For on-site symmetry, some symmetry classes can only arise on the surface of a $d=3$ SPT phase.\cite{vishwanath12,metlitski13,cwang13,chen14}  Understanding which space group symmetry classes can be realized in simple models is a step toward addressing the more challenging general question of which classes can (and cannot) occur strictly in two dimensions.    Finally, the explicit models we construct can be used as a testing ground for new ideas to probe and detect the characteristic properties of SET phases, in both experiments and numerical studies of more realistic microscopic models.

The models we consider are generalizations of Kitaev's $\zz$ toric code\cite{kitaev03} to arbitrary two-dimensional lattices with square lattice space group symmetry (a precise definition appears in Sec.~\ref{sec:genlatt}).  By appropriately choosing the lattice geometry, varying the \emph{signs} of terms in the Hamiltonian, and allowing symmetry to act non-trivially on spin operators, many but not all symmetry classes can be realized.  Varying the signs of terms in the Hamiltonian modulates the pattern of background $\zz$ fluxes and charges in the ground state, and this in turn affects the symmetry fractionalization of $e$ and $m$ particles, respectively.  In addition, non-trivial action of symmetry on the spin degrees of freedom also affects symmetry fractionalization.  We have obtained a complete understanding for the specific family of models considered, in the sense that for every symmetry class consistent with the considerations of Ref.~\onlinecite{essin13}, we either give an explicit model realizing this symmetry class, or we prove rigorously that it cannot occur within our family of models.

The idea of choosing the lattice geometry and varying the signs of terms in the Hamiltonian can be viewed as implementations of a ``string flux'' mechanism for fractionalization in topologically ordered phases, recently introduced by one of us (M.H.).\cite{hermele14}  In Ref.~\onlinecite{hermele14}, exactly solvable $\z_n$ toric code models were constructed with on-site, unitary symmetry $G$, for $G$ an arbitrary finite group.  These models can realize arbitrary symmetry fractionalization for anyons corresponding to $\z_n$ gauge charges, and do so by encoding a pattern of fluxes into the ground state, so that the wavefunction acquires phase factors when the strings attached to anyons slide over these fluxes.  The present work differs significantly from Ref.~\onlinecite{hermele14} in the focus on space group symmetry, and in the fact that we allow for and find non-trivial symmetry fractionalization for both $\zz$ charge and flux anyons.  A perhaps even more important distinction is the emphasis here on obtaining a complete understanding for a given family of models, as compared to the emphasis in Ref.~\onlinecite{hermele14} of devising a simple means to encode physically the underlying mathematical structure of fractionalization classes.

\subsection{Outline of the paper}
Due to the length of the paper, we first point out that readers can find the main results in Section~\ref{sec:Toric}. Readers familiar with the necessary background should be able to understand the statements of results in Sec.~\ref{sec:Toric}, after quickly consulting Sec.~\ref{sec:fracreview}, and especially Eqs.~(\ref{eqn:projpx}-\ref{eqn:projtypx}), to become familiar with notation and conventions used to present symmetry classes.

Now, to overview the main results, the aim of this paper is to explore the possible symmetry classes associated to the space group $G$ of the square lattice within a particular family of local bosonic models with $\mathbb{Z}_2$ topological order. We call this family of models $TC(G)$, and it consists of variations of Kitaev's $\zz$ toric code\cite{kitaev03} obtained by changing the lattice geometry, varying the signs of terms in the Hamiltonian, and allowing symmetry to act non-trivially on spin operators (referred to as spin-orbit coupling).  Section~\ref{sec:Toric} studies symmetry fractionalization in these models, beginning with a specific example and moving towards increasing generality.
First, in Sec.~\ref{sec:ep-model} we describe a single model realizing all $e$ particle fractionalization classes while the $m$ particle always has trivial symmetry fractionalization.  The constraints that arise when both $e$ and $m$ particles have non-trivial symmetry fractionalization are considered in the following subsections.  In Sec.~\ref{sub:tc_woso}, we examine a  subclass of models,  $TC_0(G) \subset TC(G)$, where no spin-orbit coupling is allowed.  The main result of Sec.~\ref{sub:tc_woso} is Theorem~\ref{thm:nosoc_maintext}, which describes all symmetry classes that are realized by models in $TC_0(G)$.  Following the statement of the theorem, example models realizing all possible symmetry classes for $TC_0(G)$ are presented.  Finally, in Sec.~\ref{sub:tc_so}, we treat the general case of $TC(G)$, and state Theorem~\ref{thm:soc_maintext}, which describes all symmetry classes that are realized by models in $TC(G)$; the discussion parallels that of Sec.~\ref{sub:tc_woso}.  The detailed proofs of the theorems are left to the appendices, together with the presentation of models realizing all possible symmetry classes for $TC(G)$.  Our results establish that certain symmetry classes are possible in two dimensional models. For symmetry classes that are not realized by models $TC(G)$, a more general understanding of which such symmetry classes are possible strictly in two dimensions is still lacking.

Now we describe how the rest of the paper is organized.
Section~\ref{sec:z2review} reviews $\zz$ topological order, and Sec.~\ref{sec:toricreview} gives a review of the simplest $\zz$ Kitaev toric code model, on the two-dimensional square lattice.  The crucial objects are the $e$ ($\zz$ charge) and $m$ ($\zz$ flux) excitations of $\zz$ topological order, referred to as $e$ and $m$ particles.
Readers already familiar with these topics may wish to skim Sections~\ref{sec:z2review} and~\ref{sec:toricreview}, and proceed to Section~\ref{sec:genlatt}, where we introduce the family of toric code models on general lattices with square lattice symmetry; some technical details are presented in Appendices~\ref{app:complete-set} and~\ref{app:psi0}.  We actually introduce two families of models;  in one of these, square lattice symmetry acts only by moving spin degrees of freedom from one spatial location to another, but all symmetries act trivially within the internal Hilbert space of each spin.  This situation is referred to in our paper as that of \emph{no spin-orbit coupling}, and the resulting family of models is called $TC_0(G)$, where $G$ is the square lattice space group.  We also consider a larger family of models, $TC(G)$, that contains $TC_0(G)$.  In $TC(G)$, symmetries are allowed to act non-trivially on the spin degrees of freedom, and we refer to this as the presence of \emph{spin-orbit coupling}.  It should be noted that our usage of the term spin-orbit coupling is a generalization of the usual usage; in particular, our spins do not necessarily transform as electron spins do under a given rigid motion of space.  Such a generalization is physically reasonable, because there are many ways in which two-component pseudospin degrees of freedom arise in real systems, and such degrees of freedom do not always transform like electron spins under symmetry.

With the models of interest having been introduced, Sec.~\ref{sec:fracreview} follows Ref.~\onlinecite{essin13} and reviews the notions of fractionalization and symmetry classes.  It should be noted that, as in Ref.~\onlinecite{essin13}, we always make the simplifying assumption that symmetry does not permute the anyon species.  Indeed, the family of models $TC(G)$ is defined so that permutations of anyons under symmetry never occur.  Section~\ref{sec:fc-in-solvable-models} proceeds to give a detailed description of how symmetry fractionalization is realized in the solvable toric code models for both $e$ and $m$ particle excitations.  The important notions of $e$ and $m$ localizations of the symmetry are introduced and discussed, which provide the means to calculate the fractionalization and symmetry classes for given models in $TC(G)$.  In our solvable models, the $e$ and $m$ particle excitations have different character, and it is convenient to distinguish them by introducing the notion of toric code (TC) symmetry class, which is an \emph{ordered} pair $([\omega_e], [\omega_m])$.  While we do not expect TC symmetry classes to have any universal meaning, they are useful in understanding the possibilities for toric code models.  Appendix~\ref{app:eloc} proves some general results about $e$ and $m$ localizations, and gives a general expression for these localizations that is useful in deriving constraints on which symmetry classes are possible.

The main results of the paper are presented in Section~\ref{sec:Toric}, in order of increasing generality.  First, in Section~\ref{sec:ep-model} we describe a single model that realizes all $2^6 = 64$ fractionalization classes for $e$ particle excitations, as the parameters in the Hamiltonian are varied.  In this model the $m$ particle fractionalization class is trivial.  In Section~\ref{sub:tc_woso}, we discuss models in $TC_0(G)$, the family of toric code models with square lattice symmetry and the restriction of no spin-orbit coupling.  We state Theorem~\ref{thm:nosoc_maintext}, which gives conditions ruling out most of the 2080 symmetry classes (4096 TC symmetry classes) permitted by the general considerations of Ref.~\onlinecite{essin13}.  In particular, only 95 TC symmetry classes, corresponding to 82 symmetry classes, are not ruled out by the constraints of Theorem~\ref{thm:nosoc_maintext}, which are proved in Appendix~\ref{app:nosoc-constraints}.  In fact, all 95 of these TC symmetry classes are realized by models in $TC_0(G)$; these models are exhibited in Sec.~\ref{sub:tc_woso}.  Moving on to the general case of $TC(G)$ where spin-orbit coupling is allowed, Section~\ref{sub:tc_so} states Theorem~\ref{thm:soc_maintext}, which gives constraints similar to but less restrictive than those without spin-orbit coupling; these constraints are proved in Appendix~\ref{app:tcsoc}.  In this case, 945 TC symmetry classes, corresponding to 487 symmetry classes, are not ruled out by the constraints, and again all these classes are realized by explicit models in $TC(G)$.  Some examples of such models are described in Sec.~\ref{sub:tc_so}, with the full catalog of models given in Appendix~\ref{app:models}.

The paper concludes in Sec.~\ref{sec:summary} with a summary and a discussion of two results beyond the special case of solvable toric code models.  There it is argued using a parton gauge theory construction that symmetry classes not realizable in $TC(G)$ can be realized for more generic bosonic models.  In addition, we give a connection between symmetry classes of certain on-site symmetry groups and space group symmetry classes.

Some of the notation used frequently in the paper is collected in Table~\ref{tab:notation}.

\begin{table}
\caption{Notation used in the paper.}
\label{tab:notation}
\begin{tabular}{c | c}
\hline 
Symbol & Meaning \\
\hline
\hline
${\cal H}$ & Hamiltonian \\
\hline
$G$ & \begin{tabular}{@{}c@{}} Symmetry group of ${\cal H}$ \\ (square lattice space group)\end{tabular}   \\
\hline
$\mathcal{G}=\left(V,E\right)$  &  Graph on which the model is defined  \\
\hline
${\mathscr P} : {\cal G} \to T^2$ & Planar projection map into torus $T^2$\\
\hline
$v \in V$ & Vertex $v$ in set of vertices $V$ \\
\hline
$\ell \in E$ & Edge $\ell$ in set of edges $E$ \\
\hline
$s \in W$ & Path $s$ in set of paths $W$ \\
\hline
$C$ & Set of cycles (closed paths) \\
\hline
$C_0$ & Set of contractible cycles \\
\hline
$p \in P$ & Plaquette $p$ in set of plaquettes $P$ \\
\hline
$t \in \bar{W}$ & Cut $t$ in set of cuts $\bar{W}$ \\
\hline
$\bar{C}$ & Set of closed cuts \\
\hline
$\bar{C}_0$ & Set of closed, contractible cuts \\
\hline
$h \in H$ & Hole $h$ in set of holes $H$ \\
\hline
$\sigma^{x}_\ell, \sigma^z_{\ell}$ & Pauli matrix spin operators on edge $\ell$ \\
\hline
${\cal L}^e_s$ & $e$-string on path $s \in W$ \\
\hline
${\cal L}^m_t$ & $m$-string on cut $t \in \bar{W}$ \\
\hline
$A_v, a_v$ & \begin{tabular}{@{}c@{}}  Vertex operator  \\ and corresponding eigenvalue \end{tabular}  \\
\hline
$B_p, b_p$ & \begin{tabular}{@{}c@{}}  Plaquette operator  \\ and corresponding eigenvalue \end{tabular}  \\
\hline
$o=\left(0,0\right)$ & Special points in the plane.\tabularnewline
$\tilde{o}=\left(\frac{1}{2},\frac{1}{2}\right)$ & (Units of length are chosen such that\tabularnewline
$\kappa=\left(0,\frac{1}{2}\right)$ & the size of the unit cell is $1 \times 1$.)\tabularnewline
$\tilde{\kappa}=\left(\frac{1}{2},0\right)$ & \tabularnewline
\hline 
$\left|X\right|$ & Size of any finite set $X$.\tabularnewline
\hline 
$TC(G)$ & \begin{tabular}{@{}c@{}}  Family of toric code models considered,  \\  with spin-orbit coupling allowed  \end{tabular} \\
\hline
$TC_0(G)$ & \begin{tabular}{@{}c@{}}  Family of toric code models considered,  \\  \emph{no} spin-orbit coupling allowed 
 \end{tabular}  \\
\hline
\end{tabular}
\end{table}

\section{Review of $\zz$ topological order}
\label{sec:z2review}

In this paper, we focus on $\zz$ topological order in two dimensions, which is in some sense the simplest type of topological order.  $\zz$ topological order arises in the deconfined phase of $\zz$ lattice gauge theory with gapped bosonic matter  carrying the $\zz$ gauge charge.\footnote{$\zz$ lattice gauge theory with fermionic matter also gives rise to $\zz$ topological order.}  There is an energy gap to all excitations, which can carry $\zz$ gauge charge and/or $\zz$ flux.  There is a statistical interaction between charges and fluxes; the wave function acquires a statistical phase factor $e^{i \pi}$ when a charge moves around a flux or vice versa.  These properties are associated with a four-fold ground state degeneracy on a torus (\emph{i.e.} with periodic boundary conditions), although in some circumstances special boundary conditions are present that reduce the degeneracy.

$\zz$ lattice gauge theory provides a particular concrete realization of $\zz$ topological order, and it is useful to distill the essential features into  a slightly more abstract description.  Every localized excitation above a ground state can be assigned one of four particle types: $1, e, m$, and $\epsilon$.  In terms of lattice gauge theory, $e$ particles are bosonic gauge charges, $m$ particles are $\zz$-fluxes, and $\epsilon$-particles are $e$-$m$ bound states.  Excitations carrying neither $\zz$ charge nor flux are ``trivial,'' and are labeled by $1$.  

$e$, $m$ and $\epsilon$ excitations obey non-trivial braiding statistics and are thus referred to as anyons.  $e$ and $m$ are bosons, while $\epsilon$ is a fermion.  Any two distinct non-trivial particle types (for example, $e$ and $m$), have $\theta = \pi$ mutual statistics, with the wave function acquiring a phase $e^{i \pi}$ when one is brought around the other.  $1$ excitations are bosonic and have trivial mutual statistics with the other particle types.

When two excitations are brought nearby, the particle type of the resulting composite object is well-defined and is given by the fusion rules:
\begin{equation}
\begin{array}{c}
e\times e=m\times m=\epsilon\times\epsilon=1\times1=1,\\
e\times1=e,\, m\times1=m,\,\epsilon\times1=\epsilon,\\
e\times m=\epsilon,\, e\times\epsilon=m,\, m\times\epsilon=e.
\end{array}\label{eq:fusions}
\end{equation}
It is a very important property that only $1$ excitations can be locally created; that is, action with local operators cannot produce a single, isolated $e$, $m$ or $\epsilon$ (at least away from edges of the system, if there are open boundaries).  The fusion rules then  tell us that a pair of $e$, $m$ or $\epsilon$ excitations can be created locally.  An anyon can be moved from one position to another by acting with a non-local string operator connecting the initial and final positions.  There are distinct string operators for each type of anyon.

We remark that the fusion and braiding properties are invariant under the relabeling $e \leftrightarrow m$, which means we are free to make such a relabeling -- this is a kind of $\zz$ electric-magnetic duality.  This feature is important for a proper counting of symmetry classes.

\section{Review: toric code model on the square lattice}
\label{sec:toricreview}

We now review Kitaev's toric code model\cite{kitaev03} on the square lattice, which is the simplest model realizing $\zz$ topological order.  We consider a $L \times L$ square lattice with periodic boundary conditions (forming a torus), and we label vertices by $v$, edges by $\ell$, and square plaquettes by $p$.  The degrees of freedom are spin-1/2 spins, residing on the edges.  Local operators are then built from Pauli matrices $\sigma^{\mu}_\ell$ ($\mu = x,y,z$) acting on the spin at $\ell$.

We introduce operators associated with vertices and plaquettes,
\begin{eqnarray}
A_v &=& \prod_{\ell \in \operatorname{star}(v) }  \sigma^x_\ell \\
B_p &=& \prod_{\ell \in p} \sigma^z_\ell \text{,}
\end{eqnarray}
where $p$ contains the four edges in the perimeter of a square plaquette, and $\operatorname{star}(v)$ is the set of four edges touching $v$ (see Fig.~\ref{fig:toric}).  The Hamiltonian is
\begin{equation}
{\cal H} =-K^{e}\sum_{v}A_{v}-K^{m}\sum_{p}B_{p} \text{,} \label{eq:simple_toric}
\end{equation}
with $K^e, K^m > 0$.
It is easy to see that
\begin{equation}
\left[ A_v, A_{v'} \right] = \left[ B_p, B_{p'} \right]  = \left[ A_v , B_p \right] = 0 \text{,}
\end{equation}
rendering the Hamiltonian exactly solvable. The energy eigenstates can be chosen to satisfy
\begin{eqnarray}
A_v | \psi \rangle &=& a_v | \psi \rangle \\
B_p | \psi \rangle &=& b_p | \psi \rangle \text{,}
\end{eqnarray}
where $a_v, b_p \in \{ \pm 1 \}$.

The Hilbert space has dimension $2^{2 L^2}$, so we need $2 L^2$ independent Hermitian operators with eigenvalues $\pm 1$ to form a complete set of commuting observables (CSCO), whose eigenvalues uniquely label a basis of states.  Due to the periodic boundary conditions, $\prod_{v}A_{v}=\prod_{p}B_{p}=1$, and the $A_v$ and $B_p$ only give $2 L^2 - 2$ independent operators.  To obtain a CSCO, we need two additional operators, and one choice is given by
\begin{equation}
L^e_x =\prod_{\ell \in s_x} \sigma_{\ell}^{z},\, L^e_y =\prod_{\ell \in s_y} \sigma_{\ell}^{z},\label{eq:loop}
\end{equation}
with eigenvalues $l^e_{x,y} \in \{ \pm 1 \}$, where $s_x$, $s_y$ are non-contractible loops winding around the system in the $x$ and $y$ directions, respectively, as shown in Fig.~\ref{fig:toric}.  The eigenvalues $\{ a_v, b_p, l^e_x, l^e_y \}$ uniquely label a basis of energy eigenstates.  In particular, there are four ground states with $a_v = b_p = 1$, a sign of  $\zz$ topological order. 

Excitations above the ground state reside at vertices with $a_v = -1$, and plaquettes with $b_p = -1$.  These excitations have no dynamics; this is tied to the exact solubility of the model, and adding generic perturbations to the model causes the excitations to become mobile.  We identify $a_v = -1$ vertices as $e$ particles, and $b_p = -1$ plaquettes as $m$ particles.  $\epsilon$ excitations are $e$-$m$ pairs.  Acting on a ground state with $\sigma^z_\ell$ creates a pair of $e$ particles, at the two vertices touching $\ell$.  Similarly, acting with $\sigma^x_\ell$ creates two $m$ particles in the two plaquettes touching $\ell$.  Since any operator can be built from products of Pauli matrices, it follows that isolated $e$ and $m$ excitations cannot be created locally.

We now introduce $e$ and $m$ string operators.  To define an $e$-string operator, let $s$ be a set of edges $\ell$ forming a connected path, which may be either closed or open (see Fig.~\ref{fig:stringop}).  Then we define
\begin{equation}
\cL^e_s = \prod_{\ell \in s} \sigma^z_\ell \text{.} \label{eq:e-string}
\end{equation}
Suppose $s$ is an open path with endpoints $v_1$ and $v_2$.  If  $\cL^e_s$ acts on a ground state, it creates $e$ particles at $v_1$ and $v_2$.  Alternatively, acting on a state with an $e$ particle at $v_1$ and none at $v_2$, $\cL^e_s$ moves the $e$ particle from $v_1$ to $v_2$.  On the other hand, if $s$ is a closed path and is contractible (\emph{i.e.} does not wind around the torus), and if $| \psi_0 \rangle$ is a ground state, then $\cL^e_s | \psi_0 \rangle = | \psi_0 \rangle$.

$m$-strings are defined on a cut $t$, which contains the set of edges intersected by a path drawn on top of the lattice, running from plaquette to plaquette, as shown in Fig.~\ref{fig:stringop}.  Alternatively, $t$ can be viewed as a set of edges in the dual lattice forming a connected path.  The $m$-string operator is then
\begin{equation}
\cL^m_t = \prod_{\ell \in t} \sigma^x_{\ell} \text{.} \label{eq:m-string}
\end{equation}
Just as with $e$-strings, if $t$ is an open cut, with endpoints in two plaquettes $p_1$, $p_2$, $\cL^m_t$ can be used to create a pair of $m$ particles or to move a single $m$ particle from one plaquette to another.  If $t$ is a closed, contractible cut, $\cL^m_t$ gives unity acting on a ground state.

If the path $s$ and the cut $t$ cross $n_c(s,t)$ times, then
\begin{equation}
\cL^e_s \cL^m_t = (-1)^{n_c (s, t) } \cL^m_t \cL^e_s \text{.}
\end{equation}
This can be used to verify that the $e$, $m$ and $\epsilon$ excitations indeed obey the braiding statistics of $\zz$ topological order.

\begin{figure}
\includegraphics[width=0.6\columnwidth]{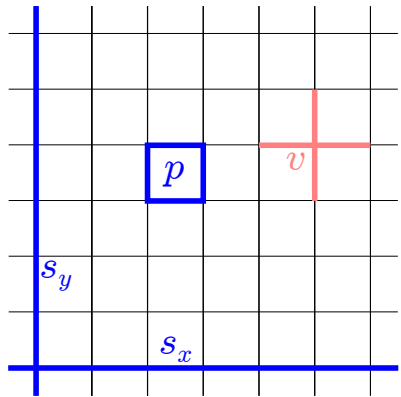}
\caption{(Color online) Illustration of some geometrical objects important in the square lattice toric code model.  
The edges in plaquette $p$ are shown as thick dark bonds (blue online), while the edges in $\operatorname{star}(v)$ are thick gray bonds (pink online).  The two strings $s_x$ and $s_y$ winding periodically around the system are also shown as thick dark bonds (blue online).}
\label{fig:toric}
\end{figure}

\begin{figure}
\includegraphics[width=0.6\columnwidth]{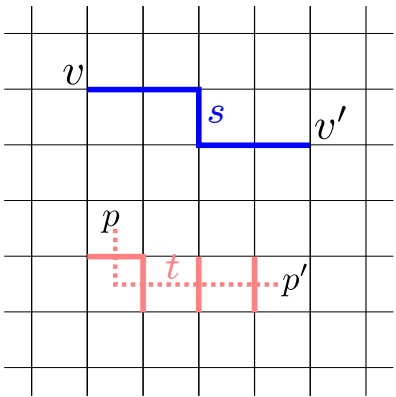}
\caption{(Color online) Depiction of $e$ and $m$ strings in the square lattice toric code.  $s$ is an open $e$-string joining vertices $v$ and $v'$, denoted with thick dark bonds (blue online).  $t$ is an open cut joining plaquettes $p$ and $p'$, shown as a dotted line.  The cut $t$ contains the thick gray bonds (pink online) intersected by the dotted line.}
\label{fig:stringop}
\end{figure}

\section{Toric codes on general two-dimensional lattices with space group symmetry}
\label{sec:genlatt}

We now introduce the family of models studied in this paper, which are generalizations of the toric code to arbitrary lattices with square lattice space group symmetry.  Sometimes it will be convenient to refer to this family of models as $TC(G)$, where in this paper $G$ is always the square lattice space group.  We will also introduce a smaller family of models $TC_0(G) \subset TC(G)$.  These two families are distinguished in that ``spin-orbit coupling'' (as defined below) is allowed for models in $TC(G)$, but is absent in $TC_0(G)$.

We begin by defining a toric code model on an arbitrary finite connected graph ${\cal G}$ with sets of vertices and edges denoted by $V$ and $E$, respectively.  The number of edges (vertices) is denoted $|E|$ ($|V|$).  We allow for the possibility that two vertices may be joined by more than one edge.  Spin-1/2 degrees of freedom reside on edges, and we again denote work with Pauli matrices $\sigma^{\mu}_\ell$ ($\mu = x,y,z$) acting on the spin at edge $\ell \in E$.

To proceed, it is helpful to introduce some notation and terminology.  A \emph{path} is a sequence of edges $s = \ell_1 \ell_2 \cdots \ell_n$ joining successive vertices; that is, $\ell_i$ and $\ell_{i+1}$ are incident on a common vertex.  Paths are considered unoriented, so that $\ell_1 \ell_2 \cdots \ell_n = \ell_n \cdots \ell_2 \ell_1$.  The set of all paths is denoted $W$.  A path may either be open with distinct endpoints $v_1, v_2 \in V$, or closed.  Two open paths $s$ and $s'$ sharing an endpoint can be composed into the path $s s'$. Since an edge may appear in $s$ more than once, 
more precisely the definition \eqref{eq:e-string} of $e$-string operator should be understood as 
\begin{equation}
\mathcal{L}_{s}^{e}=\prod_{\ell\in s}\sigma_{\ell}^{z}=\sigma^z_{\ell_{1}}\sigma^z_{\ell_{2}}\cdots\sigma^z_{\ell_{n}},\label{eq:e-string2}
\end{equation}
for for $s=\ell_{1}\ell_{2}\cdots \ell_{n}$. Since operators in the product commute, there is no harm to interpret $s$ 
as multiset of edges as well. In this paper, we use the product notation $\prod_{\ell\in X}$ for all three cases: $X$ is a set, a multiset or a sequence of edges.

The set of open paths is denoted $W_o \subset W$, while closed paths are called \emph{cycles}, and the set of cycles is $C \subset W$.  $e$-string operators are defined on paths $s \in W$ by $\cL^e_s = \prod_{\ell \in s} \sigma^z_\ell$.
An important part of the specification of a model will be the selection of a subset $P \subset C$, where elements $p \in P$ are called plaquettes.  The choice of $P$ is not entirely arbitrary, and is required to satisfy certain properties discussed below.

Just as for the square lattice,
\begin{eqnarray}
A_v &=& \prod_{\ell \in \operatorname{star}(v) }  \sigma^x_\ell \\
B_p &=& \prod_{\ell \in p} \sigma^z_\ell \text{,}
\end{eqnarray}
where $p \in P$, and $\operatorname{star}(v)$ is again the set of edges touching $v$.  It is again easy to see that
\begin{equation}
\left[ A_v, A_{v'} \right] = \left[ B_p, B_{p'} \right]  = \left[ A_v , B_p \right] = 0 \text{.}
\end{equation}
The Hamiltonian is
\begin{equation}
{\cal H} = - \sum_{v \in V} K^e_v A_v - \sum_{p \in P} K^m_p B_p \text{,}
\end{equation}
where now the coefficients $K^e_v$, $K^m_p$ are allowed to depend on the vertex or plaquette.  Only the signs of the coefficients will be important, so for convenience of notation we take $K^e_v, K^m_p \in \{ \pm 1 \}$.  Energy eigenstates can again be labeled by $a_v, b_p \in \{ \pm 1 \}$, the eigenvalues of $A_v$ and $B_p$, respectively.

Any ground state will satisfy $a_v = K^e_v$ and $b_p = K^m_p$, \emph{provided} it is possible to find such a state.  This is not guaranteed, as the couplings in the Hamiltonian could be frustrated.  We will assume the Hamiltonian is ``frustration-free,'' meaning it is possible to find at least one ground state with $a_v = K^e_v$, $b_p = K^m_p$.\footnote{This is the case provided $K^e_v$ and $K^m_p$ are compatible with constraints obeyed by $A_v$ and $B_p$ operators.  More precisely, we have $\prod_v A_v = 1$, which implies $K^e_v$ must satisfy $\prod_v K^e_v = 1$.  In addition, suppose $P'$ is a subset of $P$ for which $\prod_{p \in P'} B_p = 1$, then we must have $\prod_{p \in P'} K^m_p = 1$.}

Our discussion so far is for a general graph, but we want to specialize to two-dimensional lattices.  Essentially, this just means that we can draw the graph in two-dimensional space (with periodic boundary conditions), so that the resulting drawing has the symmetry of the square lattice.  We do \emph{not} assume the graph is planar; for instance, edges are allowed to cross or stack on top of each other when the graph is drawn in two dimensions.   

In order to make general statements about the family of models considered, it will be useful to be more precise.  First, letting $G$ be the square lattice space group, we introduce an action of $G$ on ${\cal G}$.  Group elements $g \in G$ act on vertices and edges of the graph, and we write $v \mapsto g v$, $g \mapsto g \ell$.  $G$ is generated by translation $x \to x+1$ ($T_x$), reflection $x \to -x$ ($P_x$), and reflection $x \leftrightarrow y$ ($P_{xy}$).  Translation by $y \to y+1$ is given in terms of the generators by $T_y = P_{xy} T_x P_{xy}$.  The group $G$ can be defined in terms of the generators by requiring them to obey the relations,
\begin{eqnarray}
P_{x}^{2} & = & 1,\label{eq:px}\\
P_{xy}^{2} & = & 1,\label{eq:pxy}\\
\left(T_{x}P_{x}\right)^{2} & = & 1,\label{eq:txpx}\\
\left(P_{x}P_{xy}\right)^{4} & = & 1,\label{eq:pxpxy}\\
T_{x}T_{y}T_{x}^{-1}T_{y}^{-1} & = & 1,\label{eq:txty}\\
T_{y}P_{x}T_{y}^{-1}P_{x}^{-1} & = & 1.\label{eq:typx}
\end{eqnarray}

We wish to consider a $L \times L$ lattice with periodic boundary conditions, with $L$ the integer number of square primitive cells in the $x$ and $y$ directions.  More formally, for all $v \in V$, we assume $v = (T_x)^{n_x} (T_y)^{n_y} v$ if and only if $n_x, n_y = 0 \mod L$, with the same statement holding for all $\ell \in E$.

We now introduce the planar projection ${\mathscr P} : {\cal G} \to T^2$, where $T^2$ is the 2-torus, viewed as a square with dimensions $L \times L$ and periodic boundary conditions.  ${\mathscr P}$ is a continuous map that sends vertices to points and edges to curves. (See Fig.~\ref{fig:tcg-caption} for an example.) Symmetry operations $g \in G$ act on the graph ${\cal G}$ as described above, and also act naturally on $T^2$ as rigid motions of space.  We require
\begin{equation}
g {\mathscr P} = {\mathscr P} g \text{,}
\end{equation}
which means the action of $G$ on ${\cal G}$ is compatible with the action of rigid motions on the planar projection ${\mathscr P}({\cal G})$.  The additional structure thus introduced ensures that $G$ is truly playing the role of a space group.

The above discussion implies that the planar projection ${\mathscr P}({\cal G})$ is an $L \times L$ grid of $1 \times 1$ square primitive cells.
We note that edges in ${\mathscr P}({\cal G})$ are allowed to cross at points other than vertices.  Vertices and edges are also allowed to stack on top of one another; that is, it may happen that ${\mathscr P}(v_1) = {\mathscr P}(v_2)$ for $v_1 \neq v_2$.  It is always possible to choose ${\mathscr P}(\ell)$ to be a straight line connecting its endpoints, although sometimes it will be convenient not to do so.

Now we are in a position to discuss the requirements on the set of plaquettes $P$.  First, any plaquette $p \in P$ should be in some sense local.  This can be achieved by requiring there to be a maximum size (by some measure that does not need to be precisely defined) for all $p \in P$, where the maximum size is independent of $L$.  Second, we require that any \emph{contractible} cycle can be decomposed into plaquettes.   Non-contractible cycles are those that, under the planar projection, wind around either direction of $T^2$ an odd number of times, and all others are contractible.  We let $C_0 \subset C$ be the set of contractible cycles.  The assumption that contractible cycles can be decomposed into plaquettes means that, given $s \in C_0$, there exists $\{ p_1, \dots, p_n\} \subset P$ so that $\cL^e_s = \prod_{i = 1}^n B_{p_i}$.  The physical reason for this requirement is that it ensures there are no local zero-energy excitations, as there would certainly be if we chose $P$ to be too small.  

As in the square lattice, we introduce two large cycles $s_x$ and $s_y$ that wind around the torus in the $x$ and $y$ directions, respectively.  The operators $\{ A_v \}$, $\{ B_p \}$, $\cL^e_{s_x}$ and $\cL^e_{s_y}$ form a complete set of commuting observables (Appendix~\ref{app:complete-set}).  Denoting eigenvalues of $\cL^e_{s_x, s_y}$ by $l^e_{x,y} \in \{ \pm 1\}$, it is then easy to see that ${\cal H}$ has a four-fold degenerate ground state, corresponding to the four choices of $l^e_{x,y}$ with the other eigenvalues fixed to $a_v = K^e_v$ and $b_p = K^m_p$.

Just as for the square lattice toric code, $e$ particles lie at vertices where $a_v = - K^e_v$; that is, where $a_v$ differs from its ground state value.  For $s \in W_o$, the $e$-string operator $\cL^e_s$ can be used to create $e$ particles at the two endpoints, or to move an $e$ particle from one endpoint to the other.  

Identifying $m$ particles is more tricky; the basic insight required is that $m$ particles should correspond to a threading of $\zz$ flux through ``holes'' in the planar projection ${\mathscr P}({\cal G})$.  It is easiest to proceed by defining $m$-strings, which are defined on \emph{cuts} $t \in \bar{W}$.  A cut $t$ is defined as follows:  (1) Draw a curve in $T^2$ that has no intersection with vertices ${\mathscr P}(v)$, and whose intersection with each edge ${\mathscr P}(\ell)$ contains at most a finite number of points, at which the curve is not tangent to ${\mathscr P}(\ell)$.  If the curve is open, we assume its endpoints do not lie in ${\mathscr P}({\cal G})$.  (2) The cut $t$ is then given by the sequence of edges intersected by the curve.  A cut is \emph{closed} if the curve in (1) is closed, and is \emph{simple} 
 if the curve has no self-intersections.  It is clear that a given curve produces a unique cut, but there are many possible curves that produce the same cut.

We define a $m$-string operator on a cut $t \in \bar{W}$ by  $\cL^m_t = \prod_{\ell \in t} \sigma^x_\ell$.  If $t$ is an open cut, then $\cL^m_t$ acting on a ground state creates $m$ particles at the two endpoints.  The endpoints of the $m$-string, and thus the $m$ particles it creates, naturally reside at the \emph{holes} in the planar projection; more precisely, these are the connected components of $T^2 - {\mathscr P}({\cal G})$.  We denote the set of all holes by $H$ with elements $h \in H$.  Not all $m$ excitations can be created as described above, but arbitrary such excitations can be created by first acting with $\cL^m_t$ on a ground state, then acting subsequently with operators localized near the $m$ particles created by the string operator.

Finally, we need to specify the action of symmetry on the spin degrees of freedom themselves.  Letting $U_g$ be the unitary operator representing $g \in G$, we consider
\begin{equation}
U_g \sigma^x_\ell U^{-1}_g = c^x_\ell(g) \sigma^x_{g \ell} , \quad U_g \sigma^z_\ell U^{-1}_g = c^z_\ell(g) \sigma^z_{g \ell} \text{,}
\end{equation}
assuming symmetries do not swap anyon species.
Since $U_g \sigma^{x,z}_\ell U^{-1}_g$ are hermitian and unitary simultaneously, we must have $c^{x,z}_{\ell}(g) \in \{ \pm 1 \}$.  This satisfies a general requirement that space group symmetry should be realized as a product of an on-site operation, with another operation that merely moves degrees of freedom (\emph{i.e.} $\sigma^{\mu}_\ell \mapsto \sigma^{\mu}_{g \ell}$).\footnote{The origin of these requirements is the fact that these properties holds for hold for all electrically neutral bosonic degrees of freedom  (\emph{e.g.} electron spins, bosonic atoms) that can be microscopic constituents of a condensed matter system.}  Subject to this requirement, this is the most general action of symmetry with the property that $e$-strings are taken to $e$-strings, and $m$-strings to $m$-strings; for example, $U_g \cL^e_s U^{-1}_g = (\pm 1) \cL^e_{g s}$.

Actually we need to impose a further requirement, which is that symmetry must act linearly (as opposed to projectively) on the spin operators.\cite{Note3}  In particular,
\begin{equation}
U_{g_1} U_{g_2} \sigma^{x,z}_{\ell} U^{-1}_{g_2} U^{-1}_{g_1} = U_{g_1 g_2} \sigma^{x,z}_{\ell} U^{-1}_{g_1 g_2} \text{.}
\label{eqn:linear-action}
\end{equation}
This imposes the restriction
\begin{equation}
c^{x,z}_{g_2 \ell} (g_1) c^{x,z}_\ell (g_2) = c^{x,z}_\ell (g_1 g_2) \text{,} \label{eqn:c-restriction}
\end{equation}
which holds for all $\ell \in E$ and $g_1, g_2 \in G$.  These conditions do not fix the overall ${\rm U}(1)$ phase of $U_g$, which can be adjusted (as a function of $g$) as desired.

The phase factors $c^{x,z}_{\ell}(g)$ can be modified by the unitary ``gauge'' transformation $\sigma^{x,z}_\ell  \to \gamma^{x,z}_\ell \sigma^{x,z}_\ell$, with $\gamma^{x,z}_\ell \in \{ \pm 1 \}$, which sends
\begin{equation}
c^{x,z}_\ell(g) \to \gamma^{x,z}_\ell \gamma^{x,z}_{g \ell} c^{x,z}_\ell (g) \text{.}  \label{eqn:gauge-transformation}
\end{equation}
It is always possible to choose a gauge where $c^{x,z}_\ell (T) = 1$, for all $\ell \in E$ and all translations $T \in G$; this is so because $c^{x,z}_\ell(T_x)$ and $c^{x,z}_\ell (T_y)$ behave under gauge transformation like the $x$ and $y$ components of a flux-free vector potential, residing on the links of a square lattice generated by acting on $\ell$ with translation.  We shall make this gauge choice without further comment throughout the paper.

If, in addition, it is possible to choose a gauge where $c^{x,z}_\ell (g) = 1$ for all $\ell \in E$ and $g \in G$, then by definition the model is  in $TC_0(G)$, and we say there is no ``spin-orbit coupling.''  The reason for this terminology is that, in this case, space group operations have no action on spins beyond moving them from one point in space to another.  The case of no spin-orbit coupling is simpler to analyze, and we will discuss it first before handling the general case.

It is shown in Appendix~\ref{app:psi0} that for $L$ even, it is possible to find a ground state $| \psi_{0 e} \rangle$ and make a choice of phase for $U_g$ so that
\begin{eqnarray}
U_g | \psi_{0 e} \rangle &=& | \psi_{0 e} \rangle  \label{eqn:ug-psi0e} \\
\cL^e_{s_x} | \psi_{0 e} \rangle &=& \cL^e_{s_y} | \psi_{0 e} \rangle = | \psi_{0 e} \rangle \text{,}
\end{eqnarray}
where $s_x$ and $s_y$ are closed paths chosen as described in Appendix~\ref{app:psi0} to wind once around the system in the $x$ and $y$ directions, respectively.  For the same phase choice of $U_g$, combining Eq.~(\ref{eqn:ug-psi0e}) with Eq.~(\ref{eqn:linear-action}) implies $U_{g_1} U_{g_2} = U_{g_1 g_2}$.  From now on, when we study $e$ particle excitations, we always focus on states that can be constructed by acting on $| \psi_{0 e} \rangle$ with $e$-string operators.

Appendix~\ref{app:psi0} also shows that, for $L$ even, there is a ground state $|\psi_{0 m} \rangle$ and a phase choice for $U_g$, satisfying
\begin{eqnarray}
U_g | \psi_{0 m} \rangle &=& | \psi_{0 m} \rangle  \\
\cL^m_{t_x} | \psi_{0 m} \rangle &=& \cL^m_{t_y} | \psi_{0 m} \rangle = | \psi_{0 m} \rangle \text{.}
\end{eqnarray}
Here, the electric strings have been replaced with magnetic strings, with $t_x$ and $t_y$ appropriately chosen closed cuts winding once around the system in the $x$ and $y$ directions, respectively.  When studying $m$ particle excitations, we will always consider states constructed by applying $m$-string operators to $| \psi_{0 m} \rangle$.

It should be noted that $| \psi_{0 e} \rangle$ and $| \psi_{0 m} \rangle$ cannot be the same state, because, for instance, $\cL^e_{s_x}$ and $\cL^m_{t_y}$ anticommute.  Moreover, the phase choice required to make $| \psi_{0 e} \rangle$ symmetry-invariant may not be the same as the corresponding choice for $| \psi_{0 m} \rangle$.  These points will not be problematic for us, because we always focus on excited states with either $e$ particles, or $m$ particles, but not both.  Using $ | \psi_{0 e} \rangle$ to construct $e$ particle states, and similarly $| \psi_{0 m} \rangle$ for $m$ particle states, simply provides a convenient means to calculate the $e$ and $m$ fractionalization classes.

\section{Fractionalization and Symmetry Classes}

\subsection{Review of fractionalization and symmetry classes}
\label{sec:fracreview}
We now consider in more depth the action of square lattice space group symmetry $G$ in the general class of solvable models introduced in Sec.~\ref{sec:genlatt}, showing how to determine the fractionalization classes of $e$ and $m$ particles, and the corresponding symmetry class.  We first review the general notions of fractionalization and symmetry classes, before exposing in detail the corresponding structure for the solvable models (Sec.~\ref{sec:fc-in-solvable-models}).  Readers unfamiliar with this subject may find the  review rather abstract, so we would like to emphasize that the objects involved appear in concrete and explicit fashion in the discussion of the solvable models.

Each non-trivial anyon ($e$, $m$ and $\epsilon$ in the toric code) has a corresponding fractionalization class, that describes the action of symmetry on single anyon excitations of the corresponding type.  (We assume that symmetry does not permute the anyon species.)  This structure follows from the fact that the action of symmetry factorizes into an action on individual isolated anyons.  Since physical states must contain even numbers of $e$ particles, as an example we consider a state $| \psi_{e e} \rangle$ with two $e$ particles, labeled 1 and 2.  Following the arguments of Ref.~\onlinecite{essin13}, we assume that
\begin{equation}
U_g | \psi_{e e} \rangle = U^e_g(1) U^e_g(2) | \psi_{e e} \rangle \text{,} \label{eqn:symfrac0}
\end{equation}
where $U^e_g(i)$ gives the action of symmetry on anyon $i = 1,2$.  

The physics is invariant under a redefinition
\begin{equation}
U^e_g(i) \to \lambda(g) U^e_g(i) , \quad \lambda(g) \in \{ \pm 1 \} \text{,}
\end{equation}
which we refer to as a \emph{projective transformation}.  The reason for this terminology is that the $U^e_g$ operators form a projective representation of $G$, expressed by writing
\begin{equation}
U^e_{g_1} U^e_{g_2} = \omega_e(g_1, g_2) U^e_{g_1 g_2} \text{,}
\end{equation}
where we have suppressed the anyon label $i$, and $\omega_e(g_1, g_2) \in \{ \pm 1 \}$ is referred to as a $\zz$ factor set.  The factor set satisfies the condition
\begin{equation}
\omega_e(g_1, g_2) \omega_e(g_1 g_2, g_3) = \omega_e(g_2, g_3) \omega_e(g_1, g_2 g_3) \text{,}
\end{equation}
which follows from the associative multiplication of $U^e_g$ operators.  The factor set is not invariant under projective transformations, but instead transforms as
\begin{equation}
\omega_e(g_1, g_2) \to \lambda(g_1) \lambda(g_2) \lambda(g_1 g_2) \omega_e(g_1, g_2) \text{.}
\end{equation}
A projective transformation is analogous to a gauge transformation that does not affect the physics, so such transformations should be used to group factor sets into equivalence classes.  We denote by $[\omega_e]$ the equivalence class containing the factor set $\omega_e$.  These equivalence classes are the possible fractionalization classes for $e$ particles.  It will not be important for the discussion of the present paper, but we mention that the set of fractionalization classes is the second group cohomology $H^2(G, \zz)$.  The discussion proceeds identically for $m$ particles, with $\omega_m$ the corresponding factor set, and $[\omega_m] \in H^2(G, \zz)$ the fractionalization class.  

A complete specification of fractionalization classes defines a symmetry class.  It is enough to specify $[\omega_e]$ and $[\omega_m]$, because these determine uniquely the $\epsilon$ fractionalization class.\cite{essin13}  Therefore a symmetry class is specified by the pair
\begin{equation}
\fS = \langle [\omega_e] , [\omega_m] \rangle \text{.}
\end{equation}
Because all properties of $\zz$ topological order are invariant under $e \leftrightarrow m$ (see Sec.~\ref{sec:z2review}), symmetry classes related by this relabeling are considered equivalent, that is
\begin{equation}
\langle [\omega_e], [\omega_m] \rangle \simeq  \langle [\omega_m], [\omega_e] \rangle \text{.} \label{eqn:relabel-equivalence}
\end{equation}

Despite the lack of a fundamental distinction between $e$ and $m$ particles, there is a distinction in the solvable toric code models, as is clear from the discussion of these excitations in Sec.~\ref{sec:genlatt}.  While this distinction is only well-defined within the context of the solvable models, it is not just a matter of notation; in general, we do not restrict to planar lattices, so there is not expected to be an exact duality exchanging $e \leftrightarrow m$.  Because it is relevant for the construction of solvable models, it will be useful to define \emph{toric code symmetry classes}, or TC symmetry classes, that distinguish between $e$ and $m$ particles.  A TC symmetry class is simply an ordered pair $([\omega_e], [\omega_m])$.

To determine fractionalization and symmetry classes, it is convenient to work with the generators and their relations [Eqs.~(\ref{eq:px}-\ref{eq:typx})].  Focusing on $e$ particles for concreteness, the $U^e_g$ operators obey the group relations up to possible minus signs, that is
\begin{eqnarray}
(U^e_{P_x} )^2 &=& \sigma^e_{px} \label{eqn:projpx} \\
(U^e_{P_{xy}} )^2 &=& \sigma^e_{pxy} \\
(U^e_{T_x} U^e_{P_x} )^2 &=& \sigma^e_{txpx} \\
(U^e_{P_x} U^e_{P_{xy}} )^4 &=& \sigma^e_{pxpxy} \\
U^e_{T_x} U^e_{T_y} (U^e_{T_x})^{-1} (U^e_{T_y})^{-1} &=& \sigma^e_{txty} \\
U^e_{T_y} U^e_{P_x} (U^e_{T_y})^{-1} (U^e_{P_x})^{-1} &=& \sigma^e_{typx} \label{eqn:projtypx} \text{,}
\end{eqnarray}
where $\sigma^e_{px} \in \{ \pm 1 \}$, and similarly for the other $\sigma^e$ parameters.  The $\sigma^e$'s are invariant under projective transformations, and moreover uniquely specify the fractionalization class $[\omega_e]$.\cite{essin13}  In addition, it was shown that each of the $2^6 = 64$ possible choices of the $\sigma^e$'s is mathematically possible; that is, there exists a projective representation for all choices of $\sigma^e$'s.\cite{essin13}  The same considerations lead to six $\sigma^m$ parameters characterizing the $m$ fractionalization class.  We see that 2080 symmetry classes  (4096 TC symmetry classes) are allowed by the classification of Ref.~\onlinecite{essin13}.  The reader may recall that Ref.~\onlinecite{essin13} found a larger number of symmetry classes by the same type of analysis -- the difference arises because Ref.~\onlinecite{essin13} also considered time reversal symmetry, while here we focus only on space group symmetry.

\subsection{Fractionalization and symmetry classes in the solvable models}
\label{sec:fc-in-solvable-models}

The solvable models are well-suited to the study of fractionalization and symmetry classes because the $U^e_g$ and $U^m_g$ operators can be explicitly constructed.  We focus first on $e$ particles.  It is sufficient to consider states with only two $e$ particle excitations, of the form
\begin{equation}
| \psi_e (s) \rangle = \cL^e_s | \psi_{0e} \rangle \text{,}
\end{equation}
with $s$ an open path, and $e$ particles residing on the endpoints $v_1(s)$ and $v_2(s)$.  The action of symmetry on this state is given by
\begin{equation}
U_g | \psi_e (s) \rangle = c^z_s(g) | \psi_e (g s) \rangle \text{,} \label{eqn:ug-action}
\end{equation}
where $c^z_s(g) = \prod_{\ell \in s} c^z_\ell(g)$.

The goal is to construct and study operators $U^e_g$ that act on single $e$ particles, reproducing the action of $U_g$ on  states $|\psi_e(s)\rangle$.  Consider the pair $(g, v) \in G \times V$, where $v$ is the vertex at which an $e$ particle resides, and $g \in G$ is the group operation of interest.  To each such pair we associate a number $f^e_g(v) \in \{ \pm 1 \}$ and a path $s^e_g(v)$.  The path $s^e_g(v)$ has endpoints $v$ and $gv$.  (Note that $s^e_g(v)$ is a cycle or a null path if $gv = v$.) From this data we form the operator
\begin{equation}
U^e_g(v) = f^e_g(v) \cL^e_{s^e_g(v)}  \text{.}
\end{equation}
By construction, this operator moves an $e$ particle from $v$ to $g v$, and is thus a reasonable candidate to realize the action of $g \in G$ on single $e$ particles.  In order to reproduce Eq.~(\ref{eqn:ug-action}), we require the $U^e_g(v)$ operators to obey the relation
\begin{equation}
U_g | \psi_e(s) \rangle = U^e_g[ v_1(s) ] U^e_g [ v_2(s) ] | \psi_e(s) \rangle \text{,}  \label{eqn:symfrac}
\end{equation}
which has to hold for all open paths $s \in W_o$ and all $g \in G$.  We refer to a set of $U^e_g(v)$ operators satisfying this relation as an $e$-localization of the symmetry $G$.

It should be noted that there is some redundancy in the data used to define $U^e_g(v)$.  Keeping its endpoints fixed, the path $s^e_g(v)$ can be deformed arbitrarily, at the expense of a phase factor.  When acting on states $|\psi_e(s)\rangle$ as we consider (or even on states with many $e$ particles, but no $m$ particles), this phase factor is independent of the state, and can be absorbed into a redefinition of $f^e_g(v)$.

At this point, it is important to ask whether it is always possible to find an $e$-localization, and, if it exists, whether the $e$-localization is in some sense unique.  Indeed, in Appendix~\ref{app:eloc} we prove that for toric code models as described in Sec.~\ref{sec:genlatt}, it is always possible to find an $e$-localization of $G$.  Moreover, the $e$-localization is unique up to projective transformations $U^e_g(v) \to \lambda(g) U^e_g(v)$, where $\lambda(g) \in \{ \pm 1 \}$.  This means that the $e$-localization is a legitimate tool to study the action of symmetry on $e$ particles in the solvable models.

To determine the $e$ fractionalization class from the $e$-localization, we consider the product
\begin{equation}
U^e_{g_1}(g_2 v) U^e_{g_2}(v) = F(g_1, g_2, v) U^e_{g_1 g_2}(v) \text{,}
\end{equation}
where $F(g_1, g_2, v) \in \{ \pm 1\}$, and this equation holds acting on all states containing no $m$ particle excitations [including $| \psi_e(w) \rangle$].  This relation holds because both sides of the equation are $e$ string operators joining $v$ to $g_1 g_2 v$, and can differ only by a phase factor depending on $g_1$, $g_2$ and $v$.

We now show that $F(g_1, g_2, v)$ is independent of $v$, and forms a $\zz$ factor set, so that we can write  $F(g_1, g_2, v) = \omega_e(g_1, g_2)$.  Suppose that for some $g_1, g_2$, and some vertices $v_i$, $v_j$, we have $F(g_1, g_2, v_i) \neq F(g_1, g_2, v_j)$.  Then consider the state $| \psi_e(s_{i j}) \rangle$, where $s_{i j}$ is a path joining $v_i$ to $v_j$.  We have
\begin{eqnarray}
&& U_{g_1 g_2} | \psi_e(s_{i j}) \rangle = U_{g_1} U_{g_2}  | \psi_e(s_{i j}) \rangle \\
&=& U^e_{g_1}(g_2 v_i) U^e_{g_2}(v_i) U^e_{g_1}(g_2 v_j) U^e_{g_2}(v_j)  | \psi_e(s_{i j}) \rangle \nonumber \\
&=& F(g_1, g_2, v_i) F(g_1, g_2, v_j) U^e_{g_1 g_2}(v_i) U^e_{g_1 g_2}(v_j)  | \psi_e(s_{i j}) \rangle \nonumber \\
&=& - U_{g_1 g_2} | \psi_e(s_{i j}) \rangle \text{,} \nonumber
\end{eqnarray}
a contradiction.  This shows $F = F(g_1, g_2)$, independent of $v$.  The associativity condition required for $F(g_1, g_2)$ to be a factor set follows from equating the two ways of associating the product in
\begin{equation}
U^e_{g_1} (g_2 g_3 v) U^e_{g_2} (g_3 v) U^e_{g_3}(v) | \psi_e(s) \rangle \text{,}
\end{equation}
where $| \psi_e(s) \rangle$ has one $e$ particle at $v$.
Thus we have shown
\begin{equation}
U^e_{g_1}(g_2 v) U^e_{g_2}(v) = \omega_e(g_1, g_2) U^e_{g_1 g_2}(v) \text{,}
\end{equation}
with $\omega_e$ a $\zz$ factor set.  This operator equation holds acting on all states of the form $| \psi_e(s)\rangle$, and more generally on states with any number of $e$ particle excitations created by acting on $|\psi_0\rangle$ with $e$-string operators.  The freedom to transform the $e$-localization via projective transformations induces the usual projective transformation on the factor set, so that only the fractionalization class $[\omega_e]$ is well defined. 

In addition to making explicit the general structure of fractionalization classes in the solvable models, this result also makes it simple to calculate $[\omega_e]$.  In particular, we may focus on a single $e$ particle at any desired location, and determine $[\omega_e]$ by calculating appropriate products of $U^e_g(v)$.  In particular, we can calculate the products of generators in Eqs.~(\ref{eqn:projpx}-\ref{eqn:projtypx}), and determine the $\sigma^e$ parameters.  There is then no need to check that the resulting $\sigma^e$'s are the same for every possible location of $e$ particle, because we have already established this in general.

The above discussion proceeds in much the same way for $m$ particles, which reside at holes $h \in H$ in the planar projection ${\mathscr P}({\cal G})$.  States with two $m$ particles can be written
\begin{equation}
| \psi_m(t) \rangle =\cL^m_t | \psi_{0m}\rangle \text{,}
\end{equation}
where $t$ is an open cut.  To every pair $(g, h) \in G \times H$, where the $m$ particle resides at the hole $h$, we associate a number $f^m_g(h)$ and a cut $t^m_g(h)$, which joins $h$ to $g h$.  This allows us to write
\begin{equation}
U^m_g(h) = f^m_g(h) \cL^m_{t^m_g(h)} \text{.}
\end{equation}
From this point, the discussion for $e$ particles goes over to the $m$ particle case, with only trivial modifications.  We refer to a set of $U^m_g(h)$ operators satisfying the $m$ particle analog of Eq.~(\ref{eqn:symfrac}) as a $m$-localization.  Just as in the case of $e$-localizations, Appendix~\ref{app:eloc} establishes that it is always possible to find a $m$-localization, which is unique up to projective transformations.

\section{Symmetry Classes Realized by Toric Code Models}
\label{sec:Toric}

Here, we present the main results of this work, on the realization of symmetry classes in toric code models with square lattice symmetry.  These results consist of explicit construction of models realizing various symmetry classes, as well as the derivation of general constraints showing that certain symmetry classes are impossible in the family of models under consideration.  We have obtained a complete understanding, in the sense that we have found an explicit realization of every symmetry class not ruled out by general constraints.

Below, we present our results in three stages, in order of increasing generality (and decreasing simplicity).  First, we exhibit a single model realizing all possible $e$ particle fractionalization classes $[\omega_e]$, as the parameters of the Hamiltonian are varied.  In this model, the $m$ particles always have trivial fractionalization class.  Second, we consider the family of toric code models with no spin-orbit coupling.  Finally, we consider toric code models allowing for spin-orbit coupling.

\subsection{Model realizing all $e$ particle fractionalization classes}
\label{sec:ep-model}

\begin{figure}
\begin{minipage}[t]{0.8\columnwidth}%
\includegraphics[width=1\columnwidth]{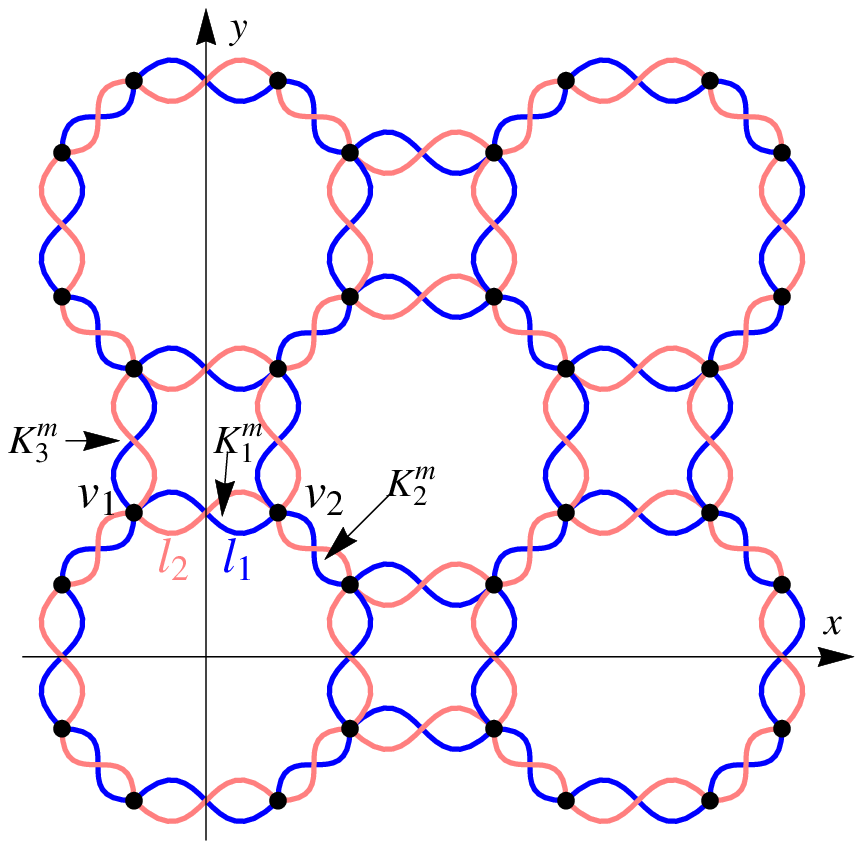}

(a)%
\end{minipage}

\begin{minipage}[t]{0.4\columnwidth}%
\includegraphics[width=1\columnwidth]{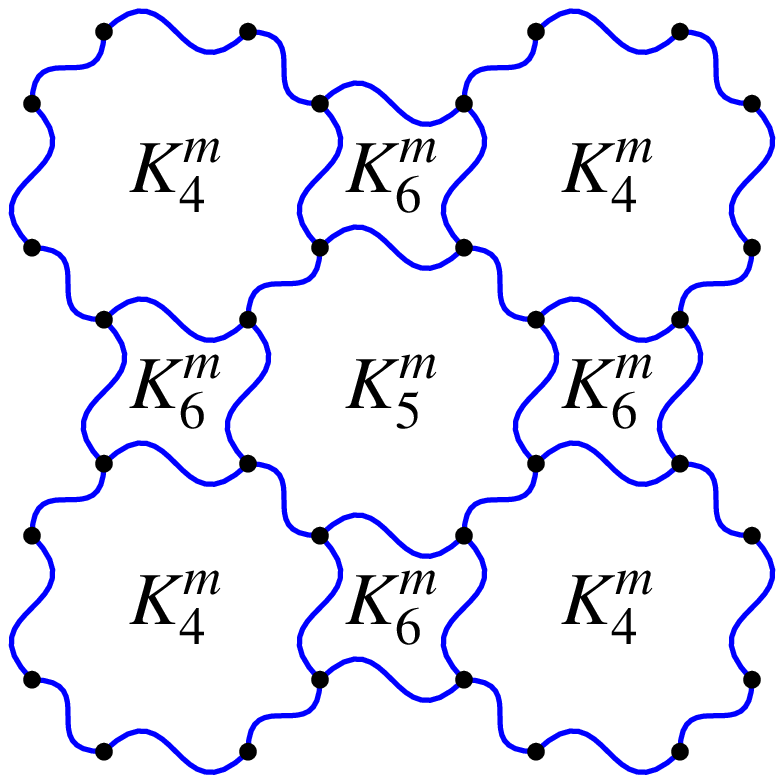}

(b)%
\end{minipage}%
\begin{minipage}[t]{0.4\columnwidth}%
\includegraphics[width=1\columnwidth]{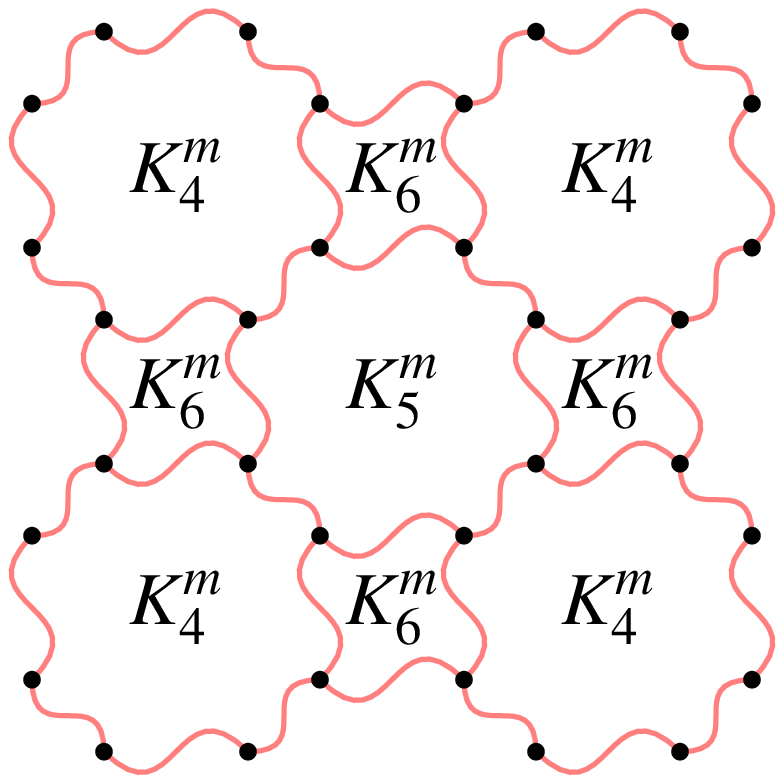}

(c)%
\end{minipage}

\protect\caption{(Color online) (a) The lattice on which all $2^6 = 64$ $e$ particle fractionalization classes can be realized. There are
six types of plaquettes not related by symmetries, and the correponding
plaquette terms are assigned independent coefficients $K_{i}^{m}$
($i=1,2,\cdots,6$).  Nearest-neighbor pairs of vertices are joined by two edges (dark and light; blue and red online), drawn curved to avoid overlapping and to be clear about their movement under space group operations. Plaquetes of type $i = 1,2,3$ are each formed by the two edges joining a nearest-neighbor pair of vertices.  Two vertices $v_1, v_2$ and two edges $l_1, l_2$ are labeled to illustrate the calculation of $\sigma^e_{px}$ discussed in the main text.
(b), (c) Subgraphs of the lattice in (a), each containing all the vertices and half the edges.  These subgraphs transform into one another under any improper space group operation (\emph{i.e.}  reflections).  
We draw these subgraphs to illustrate the plaquettes of type $i = 4,5,6$.}
\label{fig:octagon}
\end{figure}

Here, we present a model that can realize all possible $e$-fractionalization classes $[\omega_e]$, as the parameters of the Hamiltonian are varied.  In this model, the $m$-fractionalization class is always trivial.  The model is defined on the lattice shown in Fig.~\ref{fig:octagon}, and symmetry is chosen to act on the spin degrees of freedom without spin-orbit coupling.
The lattice has six types of plaquettes shown in Fig.~\ref{fig:octagon}, so that only plaquettes of the same type are related by symmetry.  Letting $P_i \subset P$ be the set of all plaquettes of type $i$ ($i = 1,\dots,6$), the Hamiltonian is 
\begin{equation}
{\cal H}=-K^{e}\sum_{v \in V}A_{v}-\sum_{i = 1}^6 K_{i}^{m}\sum_{p_{i} \in P_i}B_{p_{i}}.
\end{equation}

We choose $K^e = 1$, with arbitrary $K^m_i \in \{ \pm 1 \}$, and note that $b_i \equiv K^m_i$ is the ground-state eigenvalue of $B_{p_i}$.  Following the calculation procedure described below, we find
\begin{align}
\sigma^e_{px} &= b_1, \quad & \sigma^e_{pxy} &= b_2,  & \sigma^e_{txpx} &=  b_3,\\
\sigma^e_{pxpxy} &=  b_4, \quad & \sigma^e_{txty} &=  b_4 b_5,  & \sigma^e_{typx} &=b_{1}b_{3}b_{4}b_{6} \text{,}
\end{align}
from which it is clear that each possible $[\omega_e] \in H^2(G, \zz)$ is realized in this model for appropriate choice of $K^m_i$.  In addition we find that all the corresponding $\sigma^m$'s are unity, and thus $[\omega_m]$ is the trivial fractionalization class.

We now illustrate how these results are obtained by working through the determination of $(U^e_{P_x})^2 = \sigma^e_{px}$ as an example.  It follows from the discussion of Sec.~\ref{sec:fc-in-solvable-models} that $\sigma^e_{px}$ can be obtained by considering an $e$ particle at any desired vertex $v_1$, and then computing $(U^e_{P_x})^2$ acting on this $e$ particle.  We consider an $e$ particle at vertex $v_1$ as shown in Fig.~\ref{fig:octagon}a, so that $v_2 = P_x v_1$, and the vertices $v_1$ and $v_2$ are joined by edges $l_1, l_2$ forming a type $i=1$ plaquette.  (To be more precise, we should also specify the position of a second $e$ particle at vertex $v \neq v_1$, let $s_0$ be a path joining $v_1$ to $v$, and consider the state $| \psi_e(s_0) \rangle = \cL^e_{s_0} | \psi_{0 e} \rangle$.  However, the result for $\sigma^e_{px}$ will be independent of $v$.)

We are free to choose the $e$-localization
\begin{eqnarray}
U^e_{P_x}(v_1) &=& \sigma^z_{l_1} \\
U^e_{P_x}(v_2) &=& f \sigma^z_{l_2} \text{,}
\end{eqnarray}
where $f = \pm 1$.  To determine $f$, we consider the path $s = l_1$, which has end points $v_1$ and $v_2$.  Then we have
\begin{equation}
U_{P_x} | \psi_e(s) \rangle = U_{P_x} \sigma^z_{l_1} | \psi_{0 e} \rangle = \sigma^z_{l_2} | \psi_{0 e} \rangle \text{,}
\end{equation}
since $P_x l_1 = l_2$.  But we also have
\begin{eqnarray}
U_{P_x} | \psi_e(s) \rangle &=& U^e_{P_x}(v_1) U^e_{P_x}(v_2)  | \psi_e(s) \rangle \\
&=& ( \sigma^z_{l_1} ) ( f \sigma^z_{l_2} ) \sigma^z_{l_1} | \psi_{0 e} \rangle \\
&=& f \sigma^z_{l_2} | \psi_{0 e} \rangle \text{.}
\end{eqnarray}
Consistency of these two calculations of the action of $U_{P_x}$ then requires $f = 1$.  

Now that we have fixed the form of the $e$-localization, we can compute the action of $P_x^2$ on the $e$ particle at $v_1$.  We have
\begin{eqnarray}
\sigma^e_{px} &=& (U^e_{P_x})^2 (v_1) = U^e_{P_x}(v_2) U^e_{P_x}(v_1) \\
&=& \sigma^z_{l_2} \sigma^z_{l_1} = K^m_1 = b_1 \text{.}
\end{eqnarray}
This should be interpreted as an operator equation that hold acting on any state obtained by acting successively with $e$-string operators on $|\psi_{0 e}\rangle$.  In particular it holds acting on a state of interest, $|\psi_e(s)\rangle$, with one $e$ particle located at $v_1$.  The results for the other $\sigma^e$ parameters can be obtained by straightforward analogous calculations.

\subsection{Toric code models without spin-orbit coupling}
\label{sub:tc_woso}

\begin{figure*}
\begin{minipage}[t]{0.33\textwidth}%
\includegraphics[width=0.9\columnwidth]{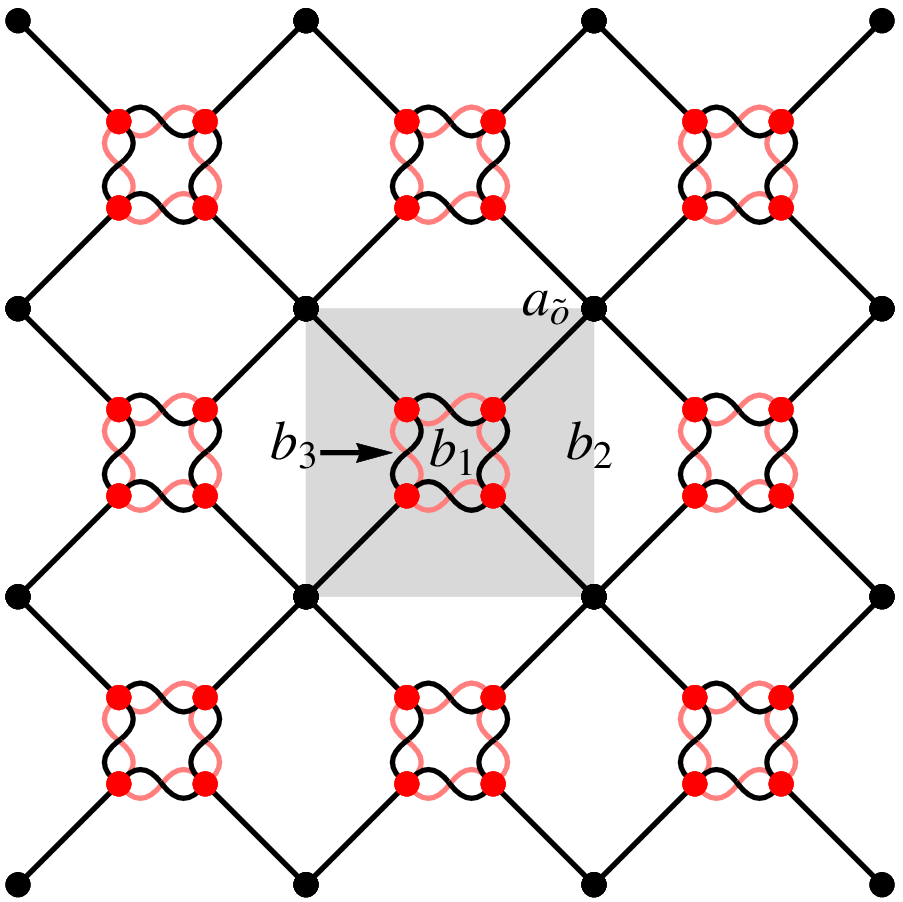}

(a) $\left(\begin{array}{cccccc}
b_{3} & 1 & 1 & b_{1} & 1 & b_{2}b_{3}\\
1 & 1 & 1 & 1 & a_{\tilde{o}} & 1
\end{array}\right)$%
\end{minipage}%
\begin{minipage}[t]{0.33\textwidth}%
\includegraphics[width=0.9\columnwidth]{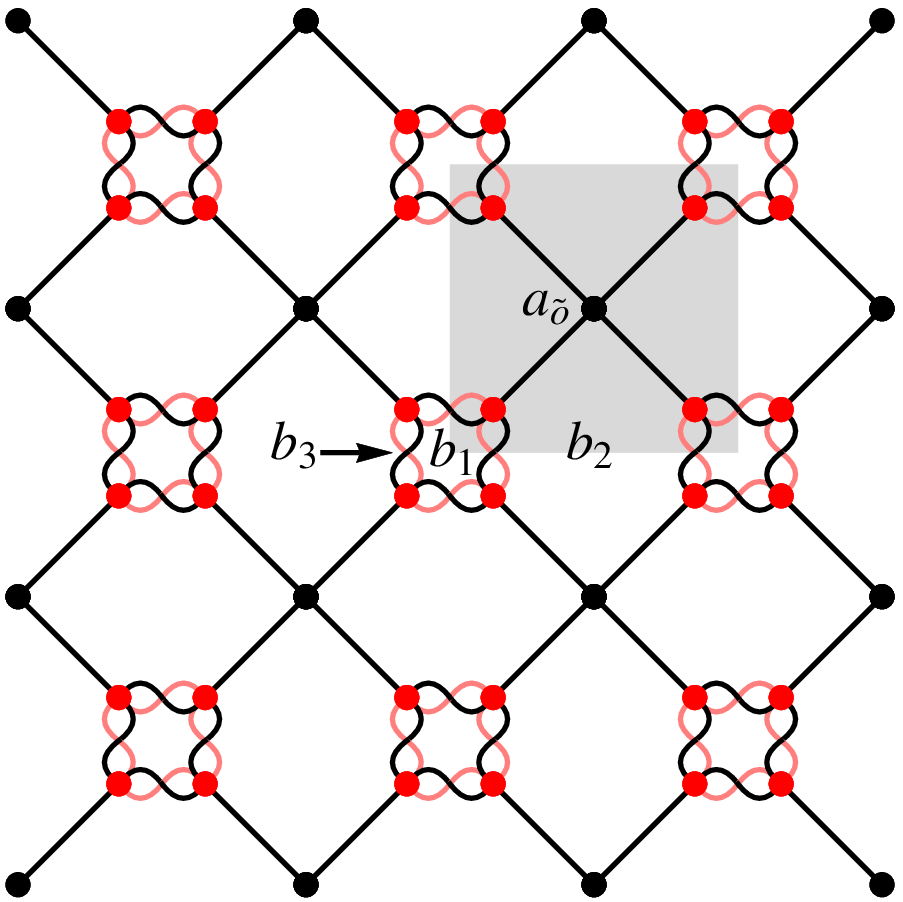}

(b) $\left(\begin{array}{cccccc}
1 & 1 & b_{3} & 1 & b_{1} & b_{2}b_{3}\\
1 & 1 & 1 & a_{\tilde{o}} & 1 & 1
\end{array}\right)$%
\end{minipage}%
\begin{minipage}[t]{0.33\textwidth}%
\includegraphics[width=0.9\columnwidth]{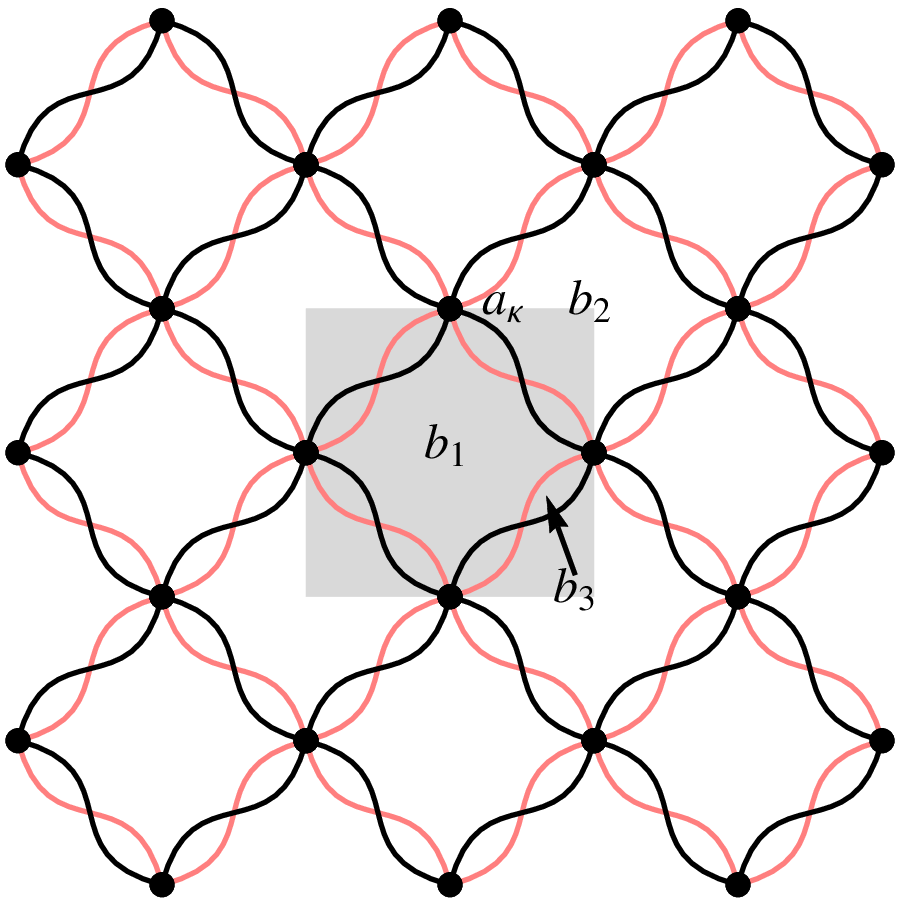}

(c) $\left(\begin{array}{cccccc}
1 & b_{3} & 1 & b_{1} & b_{2} & 1\\
1 & 1 & 1 & 1 & 1 & a_{\kappa}
\end{array}\right)$%
\end{minipage}

\bigskip{}

\begin{minipage}[t]{0.33\textwidth}%
\includegraphics[width=0.9\columnwidth]{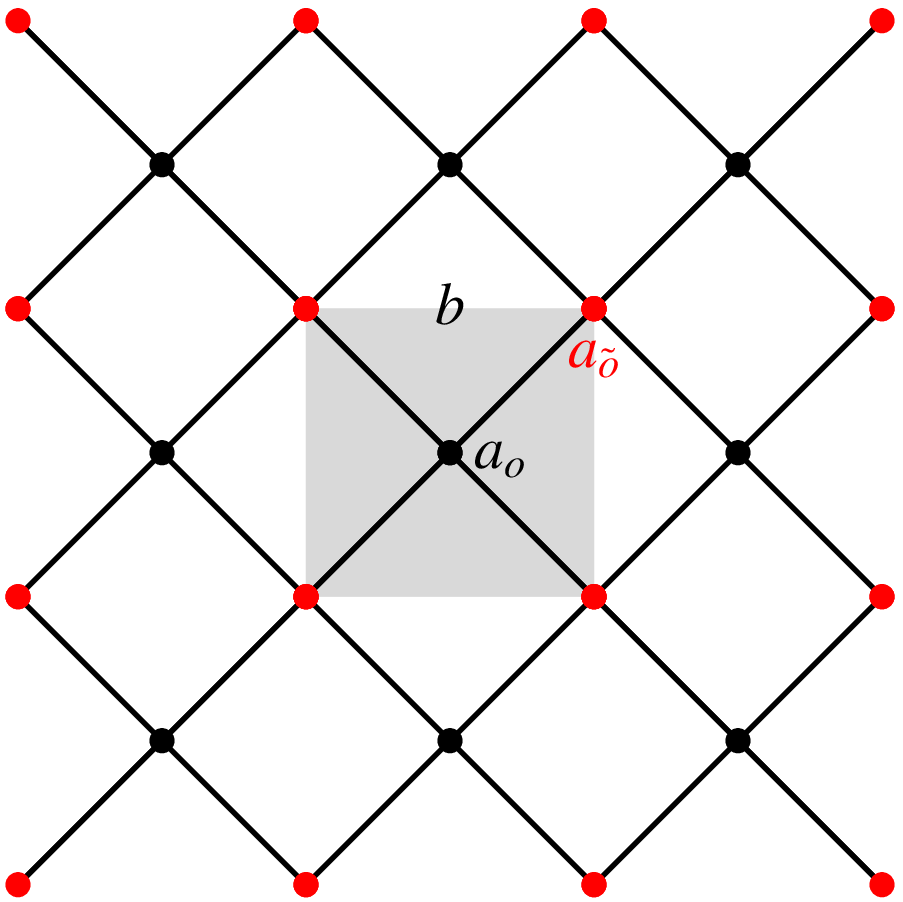}

(d) $\left(\begin{array}{cccccc}
1 & 1 & 1 & 1 & 1 & b\\
1 & 1 & 1 & a_{o} & a_{\tilde{o}} & 1
\end{array}\right)$%
\end{minipage}%
\begin{minipage}[t]{0.33\textwidth}%
\includegraphics[width=0.9\columnwidth]{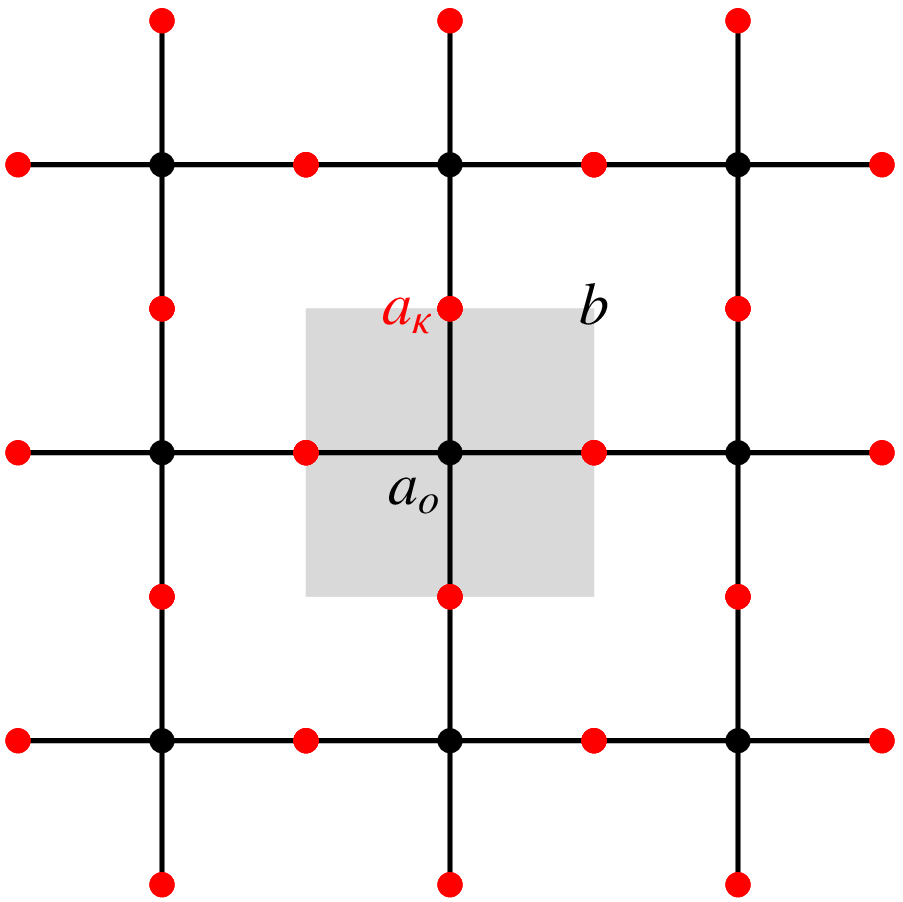}

(e) $\left(\begin{array}{cccccc}
1 & 1 & 1 & 1 & b & 1\\
1 & 1 & 1 & a_{o} & 1 & a_{\kappa}
\end{array}\right)$%
\end{minipage}%
\begin{minipage}[t]{0.33\textwidth}%
\includegraphics[width=0.9\columnwidth]{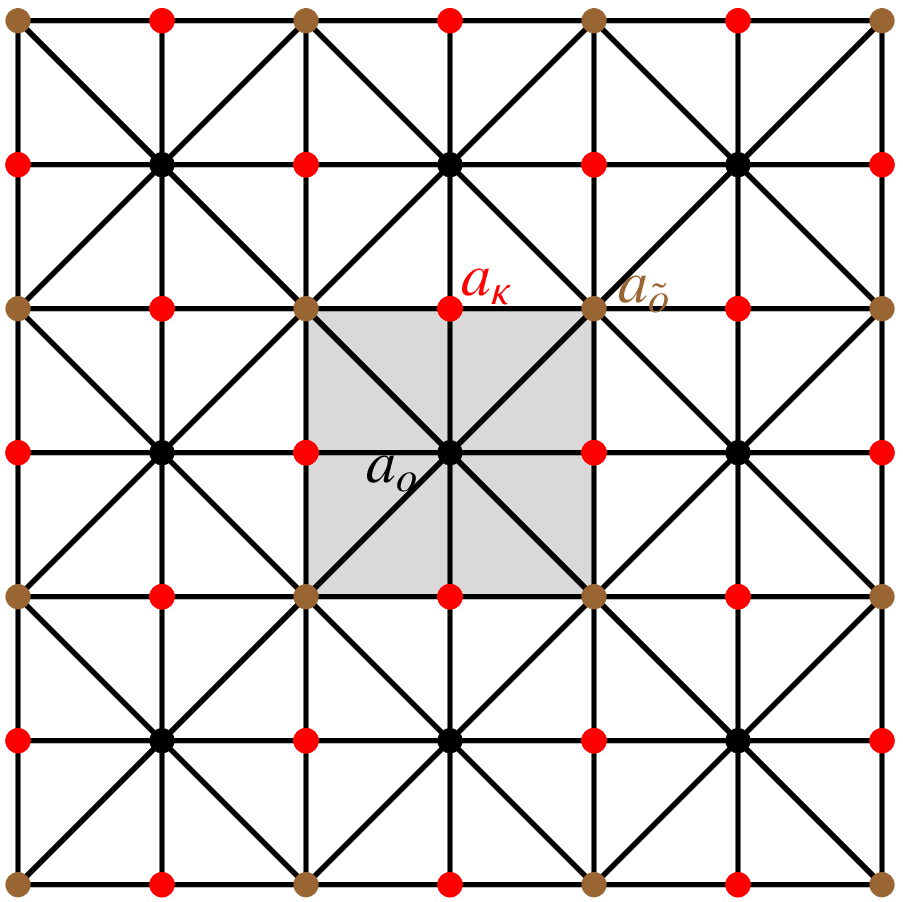}

(f) $\left(\begin{array}{cccccc}
1 & 1 & 1 & 1 & 1 & 1\\
1 & 1 & 1 & a_{o} & a_{\tilde{o}} & a_{\kappa}
\end{array}\right)$%
\end{minipage}

\caption{(Color online) $TC_{0}\left(G\right)$ models. The shaded square is a unit cell and
the origin of our coordinate system is at the center of the square.
Below each figure of lattice is the corresponding TC symmetry class
in the form \eqref{eq:sc_matrix}. Here $a_{r}$ is the ground state
eigenvalue of $A_{v}$ for $v$ at special points $r=o,\tilde{o},\kappa$;
and $b$, $b_1$, $b_2$ are the ground state eigenvalues
of $B_{p}$ for the plaquette $p$, which in these models is picked
to be the smallest cycle made with black edges where $b$, $b_1$ or $b_2$ is written,
while $b_{3}$ is for the plaquette made of a pair of black and grey edges (black and pink online). 
These edges are drawn curved to avoid overlapping and to be clear about their movement under space group operations.
The comparison between (a) and (b) gives 
an explicit example that moving the coordinate system origin by
$\left(\frac{1}{2},\frac{1}{2}\right)$ results in a transformation \eqref{eq:move_origin}: 
$P_{x}\rightarrow T_{x}P_{x}$, $\sigma_{px}\leftrightarrow\sigma_{txpx}$,
$\sigma_{pxpxy}\leftrightarrow\sigma_{pxpxy}\sigma_{txty}$. The symmetry class differs 
from (e) by such a transformation can be easily got by moving the coordinate system, 
so we do not bother drawing a separate lattice for it.}

\label{fig:tc0}
\end{figure*}

We now proceed to consider the family of models $TC_0(G)$, which includes all toric code models with square lattice space group symmetry as introduced in Sec.~\ref{sec:genlatt}, with the restriction of no spin-orbit coupling.  We remind the reader that this means, for any symmetry operation $g \in G$, we have $U_g \sigma^{\mu}_{\ell} U^{-1}_g = \sigma^{\mu}_{g \ell}$.  In words, symmetry acts simply by moving edges and vertices of the lattice, and acts trivially within the Hilbert space of each spin.

In Appendix~\ref{app:nosoc-constraints}, we obtain a number of constraints on which symmetry classes can occur for models in $TC_0(G)$.  The main result is the following theorem:
\begin{thmtext} \label{thm:nosoc_maintext}
The TC symmetry classes in $\mathsf{A}$, $\mathsf{B}$, $\mathsf{C}$,
$\mathsf{M}$, $\mathsf{M_{1}}$, $\mathsf{M_{2}}$ and $\mathsf{M_{3}}$
are not realizable in $TC_0(G)$, where
\begin{eqnarray*}
\mathsf{A} & = & \left\{ \sigma_{pxpxy}^{e}=\sigma_{pxpxy}^{m}=-1\right\} ,\\
\mathsf{B} & = & \left\{ \sigma_{pxpxy}^{e}\sigma_{txty}^{e}=\sigma_{pxpxy}^{m}\sigma_{txty}^{m}=-1\right\} ,\\
\mathsf{C} & = & \left\{ \sigma_{pxpxy}^{e}\sigma_{typx}^{e}=\sigma_{pxpxy}^{m}\sigma_{typx}^{m}=-1\right\} ,\\
\mathsf{M} & = & \left\{ \sigma_{px}^{m}=-1\vee\sigma_{pxy}^{m}=-1\vee\sigma_{txpx}^{m}=-1\right\} ,\\
\mathsf{M_{1}} & = & \left\{ \sigma_{pxpxy}^{m}=-1\wedge\left(\sigma_{px}^{e}=-1\vee\sigma_{pxy}^{e}=-1\right)\right\} ,\\
\mathsf{M_{2}} & = & \left\{ \sigma_{pxpxy}^{m}\sigma_{txty}^{m}=-1\wedge\left(\sigma_{pxy}^{e}=-1\vee\sigma_{txpx}^{e}=-1\right)\right\} ,\\
\mathsf{M_{3}} & = & \left\{ \sigma_{pxpxy}^{m}\sigma_{typx}^{m}=-1\wedge\left(\sigma_{px}^{e}=-1\vee\sigma_{txpx}^{e}=-1\right)\right\} .
\end{eqnarray*}
Here $\wedge$, $\vee$ are the logical symbols for ``and'' and
``or'' respectively.  

This leaves 95 TC symmetry classes not ruled out by the above constraints, corresponding to 82 symmetry classes under $e \leftrightarrow m$ relabeling.  In addition, all these 95 TC symmetry classes are realized by models in $TC_0(G)$. \end{thmtext}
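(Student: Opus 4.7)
My plan is to handle the two halves of the theorem—the seven non-realizability constraints and the 95 explicit realizations—separately, but using a common computational toolkit built from the general formulas for $e$- and $m$-localizations established in Appendix~\ref{app:eloc}. The first step is to translate each of the six projective phases $\sigma^e_*$ (and likewise $\sigma^m_*$) into a concrete operator identity on the ground state. For any relation $g_1 g_2 \cdots g_k = 1$ among the generators in Eqs.~(\ref{eq:px}-\ref{eq:typx}), the product $U^e_{g_1}(g_2 \cdots g_k v) \cdots U^e_{g_k}(v)$ is an $e$-string operator along a closed path, so its eigenvalue on $|\psi_{0e}\rangle$ reduces to a product of ground-state eigenvalues $b_p = K^m_p$ for plaquettes bounded by that path; dually, the corresponding $m$-product reduces to a product of $a_v = K^e_v$ at vertices enclosed by the analogous closed cut. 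In the $TC_0(G)$ case, where $c^{x,z}_\ell(g) \equiv 1$, there are no additional phase factors to track, so each $\sigma^{e,m}_*$ becomes an explicit product of $a$'s and $b$'s determined only by the lattice and the sign pattern of the Hamiltonian.

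With these formulas in hand, I would prove the $\mathsf{M}$, $\mathsf{M_1}$, $\mathsf{M_2}$, $\mathsf{M_3}$ constraints by analyzing the closed cuts produced by the reflection and rotation relations. For $(P_x)^2$, $(P_{xy})^2$, and $(T_x P_x)^2$, the fixed-point set of the symmetry is a line (an axis of reflection) that can be chosen not to cross the lattice transversally, so the associated $m$-cut can be deformed into a vertex-free closed curve, forcing $\sigma^m_{px} = \sigma^m_{pxy} = \sigma^m_{txpx} = 1$ and yielding $\mathsf{M}$. For $\mathsf{M_1}$, $\mathsf{M_2}$, $\mathsf{M_3}$, the relevant closed cuts (for $(P_x P_{xy})^4$ and for the two commutators involving $P_x$ and $P_{xy}$) shrink to small loops around an isolated rotation fixed point, so a nontrivial $\sigma^m$ on that side forces a specific parity of $a_v$'s at that point; pairing with the corresponding $e$-string relation at the same point then imposes the stated constraint that certain $\sigma^e_*$ cannot simultaneously be $-1$.

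The main obstacle I expect is the proof of the joint constraints $\mathsf{A}$, $\mathsf{B}$, $\mathsf{C}$. These couple the two sectors and cannot be read off from a single closed loop or cut in isolation. I anticipate that the cleanest argument will pair an $e$-loop with an $m$-cut, both centered at the same high-symmetry point (the $P_x P_{xy}$ fixed point for $\mathsf{A}$, shifted analogues for $\mathsf{B}$ and $\mathsf{C}$), and exploit the fundamental commutation $\cL^e_s \cL^m_t = (-1)^{n_c(s,t)} \cL^m_t \cL^e_s$. Writing both a closed $e$-product and a closed $m$-product for the rotation relation, their intertwined crossing number should be odd, yielding $\sigma^e_{pxpxy}\sigma^m_{pxpxy} = 1$ and its two translates. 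Making this precise requires carefully choosing representative paths and cuts that actually cross an odd number of times while bounding the correct plaquettes and enclosing the correct vertices, and keeping track of signs when the two curves meet at the rotation center; this bookkeeping is the delicate part.

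For the realization half, the plan is to build a catalog of lattices whose $(\sigma^e_*, \sigma^m_*)$ tables collectively cover all 95 allowed TC classes. Using the explicit formulas from the first step, the problem reduces to: given a desired assignment of the twelve $\sigma^{e,m}_*$ satisfying the constraints, choose a lattice and signs $K^e_v$, $K^m_p$ producing that assignment. I would proceed by assembling from a small library of building-block lattices such as the six shown in Fig.~\ref{fig:tc0}, each of which independently controls a distinct subset of the $\sigma$ parameters through plaquette and vertex signs at specific high-symmetry points $o,\tilde o,\kappa,\tilde\kappa$; superposing or juxtaposing these blocks (and exploiting the coordinate-origin shift $P_x \to T_x P_x$ noted in the caption of Fig.~\ref{fig:tc0} to generate further equivalent classes) should realize every allowed combination. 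The final step is a direct enumeration checking that this construction produces exactly the 95 classes not excluded by $\mathsf{A},\ldots,\mathsf{M_3}$, which is tedious but mechanical once the formulas and building blocks are in place.
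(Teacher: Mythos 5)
Your framework for the $\mathsf{M}$ and $\mathsf{M_1}$--$\mathsf{M_3}$ constraints is essentially the paper's: a reflection-symmetric cut gives $(U^m_g)^2 = (\cL^m_t)^2 = 1$ for $g = P_x, P_{xy}, T_xP_x$, and a nontrivial rotation phase $\sigma^m$ forces an odd number of vertices sitting exactly at the corresponding fixed point, whose existence then trivializes the relevant $\sigma^e$ parameters because the $e$-localization operator at a fixed vertex is just $\pm 1$ (Lemma~\ref{lm:e_inv}). However, your proposed mechanism for $\mathsf{A}$, $\mathsf{B}$, $\mathsf{C}$ is wrong. The identity $\sigma^e_{pxpxy}\sigma^m_{pxpxy}=1$ that you aim to extract from the crossing number is strictly stronger than the constraint $\mathsf{A}$ (which only forbids both being $-1$), and it is in fact false: the model of Sec.~\ref{sec:ep-model} realizes $\sigma^e_{pxpxy}=b_4=-1$ with $\sigma^m_{pxpxy}=+1$, and Fig.~\ref{fig:tc0}d realizes $\sigma^m_{pxpxy}=a_o=-1$ with $\sigma^e_{pxpxy}=+1$. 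The braiding relation $\cL^e_s\cL^m_t=(-1)^{n_c(s,t)}\cL^m_t\cL^e_s$ cannot produce such a constraint anyway: the closed $e$-product for $(U^e_R)^4$ evaluates to a product of plaquette couplings $b_p$, the closed $m$-product to a product of vertex couplings $a_v$, these are independent parameters of the Hamiltonian, and the two operators are evaluated on different reference states $|\psi_{0e}\rangle$ and $|\psi_{0m}\rangle$, so an odd crossing number yields no relation between their eigenvalues.

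The correct route for $\mathsf{A}$ and $\mathsf{B}$ is the same fixed-vertex mechanism you already invoke for $\mathsf{M_1}$--$\mathsf{M_3}$: $\sigma^m_{pxpxy}=a_{\Gamma(P_x,P_{xy})}=-1$ forces a vertex $v$ with $Rv=v$, whence $U^e_R(v)=\pm 1$ and $\sigma^e_{pxpxy}=(U^e_R)^4(v)=1$ (Lemmas~\ref{lm:mpxpxy} and~\ref{lm:mpxpxymtxty}). For $\mathsf{C}$ even this is not enough, because $\sigma^e_{pxpxy}\sigma^e_{typx}$ is not the square of an involution with a fixed point: one must use the vertex $v_0$ fixed under $P_x$ and $T_yP_y$ guaranteed by Lemma~\ref{lm:mpxpxytypx} and carry out an explicit computation with the four paths $s_j=R^j s_0$, showing that $\sigma^e_{typx}\sigma^e_{pxpxy}=\cL^e_{s_2 P_x s_1}\,R(\cL^e_{s_2 P_x s_1})=1$ because $s_2 P_x s_1$ is closed. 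Your proposal contains neither this step nor the observation that the point-group fixed-vertex argument is what drives all three of $\mathsf{A}$, $\mathsf{B}$, $\mathsf{C}$, so as written the non-realizability half of the theorem does not go through.
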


This theorem is proved in Appendix~\ref{app:nosoc-constraints}, except for the last statements regarding counting and realization of symmetry classes, which are proved here.  In fact, we exhibit a model realizing each allowed TC symmetry class.  Before proceeding to do this, we would like to give a flavor for how the above constraints are obtained, referring the reader to Appendix~\ref{app:nosoc-constraints} for the full details.  

As an illustration, we would like to show that $\sigma^m_{px} = 1$ for any model in $TC_0(G)$.  (This is part of the fact that TC symmetry classes in $\mathsf{M}$ are not realizable in $TC_0(G)$.)  Consider a $m$ particle located at a hole $h_0 \in H$.  If $P_x h_0 = h_0$, then we can choose $U^m_{P_x}(h_0) = 1$, and therefore $\sigma^m_{px} = (U^m_{P_x})^2 (h_0) = 1$.

We then consider the case $P_x h_0 = h_1 \neq h_0$.  We can always draw a simple cut $t$ joining $h_0$ to $h_1$, so that $P_x t = t$.  We are then free to choose the $m$-localization
\begin{eqnarray}
U^m_{P_x}(h_0) &=& \cL^m_t \\
U^m_{P_x}(h_1) &=& f \cL^m_t \text{,}
\end{eqnarray}
where $f = \pm 1$ needs to be determined.  To do this, consider the state $|\psi_m(t)\rangle = \cL^m_t | \psi_{0 m} \rangle$, for which we have
\begin{equation}
U_{P_x} | \psi_m(t) \rangle = U_{P_x} \cL^m_t | \psi_{0 m} \rangle = | \psi_m(t) \rangle  \text{,}
\end{equation}
where we used the fact that $U_{P_x} \cL^m_t U^{-1}_{P_x} = \cL^m_{P_x t} = \cL^m_t$. (Note that here we use the assumption of no spin-orbit coupling.)
But we also have
\begin{equation}
U_{P_x} | \psi_m(t) \rangle = U^m_{P_x}(h_0) U^m_{P_x}(h_1) | \psi_m(t) \rangle = f | \psi_m(t) \rangle \text{.}
\end{equation}
Consistency of these two calculations requires $f=1$, and we can then calculate $P_x^2$ acting on the $m$ particle located at $h_0$, to obtain
\begin{eqnarray}
\sigma^m_{px} &=& (U^m_{P_x})^2 (h_0) =  \\
&=& U^m_{P_x}(h_1) U^m_{P_x}(h_0)  = ( \cL^m_t)^2  = 1 \text{.}
\end{eqnarray}
We have thus shown that $\sigma^m_{px} = 1$ for any model in $TC_0(G)$.  Roughly similar reasoning is followed in Appendix~\ref{app:nosoc-constraints} to establish the constraints stated in the theorem.

Now we proceed to enumerate and  count the TC symmetry classes not ruled out by the constraints of Theorem~\ref{thm:nosoc_maintext}.  At the same time, we present the explicit models realizing each class (shown in Figures~\ref{fig:octagon} and~\ref{fig:tc0}).  Here, and throughout the paper, we will find it convenient to present TC symmetry classes $\left(\left[\omega_{e}\right],\left[\omega_{m}\right]\right)$ in the matrix form
\begin{equation}
\left(\begin{array}{cccccc}
\sigma_{px}^{e} & \sigma_{pxy}^{e} & \sigma_{txpx}^{e} & \sigma_{pxpxy}^{e} & \left(\sigma_{pxpxy}^{e}\sigma_{txty}^{e}\right) & \left(\sigma_{pxpxy}^{e}\sigma_{typx}^{e}\right)\\
\sigma_{px}^{m} & \sigma_{pxy}^{m} & \sigma_{txpx}^{m} & \sigma_{pxpxy}^{m} & \left(\sigma_{pxpxy}^{m}\sigma_{txty}^{m}\right) & \left(\sigma_{pxpxy}^{m}\sigma_{typx}^{m}\right) 
\end{array}\right) \label{eq:sc_matrix} \text{,}
\end{equation}
or, equivalently,
\begin{equation}
\left(\begin{array}{cc}
\sigma_{px}^{e} & \sigma_{px}^{m}\\
\sigma_{pxy}^{e} & \sigma_{pxy}^{m}\\
\sigma_{txpx}^{e} & \sigma_{txpx}^{m}\\
\sigma_{pxpxy}^{e} & \sigma_{pxpxy}^{m}\\
\sigma_{pxpxy}^{e}\sigma_{txty}^{e} & \sigma_{pxpxy}^{m}\sigma_{txty}^{m}\\
\sigma_{pxpxy}^{e}\sigma_{typx}^{e} & \sigma_{pxpxy}^{m}\sigma_{typx}^{m}
\end{array}\right) \text{.}
\end{equation}
This form allows for simple comparison to the  constraints of Theorem~\ref{thm:nosoc_maintext}.  In addition, 
under the change of origin $o \to \left(\frac{1}{2},\frac{1}{2}\right)$, the entries of the matrix are simply permuted:
\begin{align}
 & \left(\begin{array}{cccccc}
\sigma_{1}^{e} & \sigma_{2}^{e} & \sigma_{3}^{e} & \sigma_{4}^{e} & \sigma_{5}^{e} & \sigma_{6}^{e}\\
\sigma_{1}^{m} & \sigma_{2}^{m} & \sigma_{3}^{m} & \sigma_{4}^{m} & \sigma_{5}^{m} & \sigma_{6}^{m}
\end{array}\right)\nonumber \\
\rightarrow & \left(\begin{array}{cccccc}
\sigma_{3}^{e} & \sigma_{2}^{e} & \sigma_{1}^{e} & \sigma_{5}^{e} & \sigma_{4}^{e} & \sigma_{6}^{e}\\
\sigma_{3}^{m} & \sigma_{2}^{m} & \sigma_{1}^{m} & \sigma_{5}^{m} & \sigma_{4}^{m} & \sigma_{6}^{m}
\end{array}\right). \label{eq:move_origin}
\end{align}
This holds even beyond the setting of solvable toric code models, and can be verified by replacing $P_x$ as a generator of $G$ by $P_x \to \widetilde{P_x} = T_x P_x$, which corresponds to the desired change of origin.  The $\sigma$ parameters for the new generators can then be computed in terms of those for the old generators, by noting that $U^a_{\widetilde{P_x}} = \phi^a U^a_{T_x} U^a_{P_x}$, where $a = e,m$ and $\phi^a \in \{ \pm 1 \}$.

The behavior of TC symmetry classes under a change in origin is illustrated in Fig.~\ref{fig:tc0}a and Fig.~\ref{fig:tc0}b.
Apart from this example, we do not bother to draw the same lattice twice when the only difference is a change in origin.  So, for example, the model shown in Fig.~\ref{fig:tc0}e is taken to realized both TC symmetry classes
\begin{equation}
\left(\begin{array}{cccccc}
1 & 1 & 1 & 1 & b & 1\\
1 & 1 & 1 & a_{o} & 1 & a_{\kappa}
\end{array}\right)\label{eq:symb-1}
\end{equation}
and 
\begin{equation}
\left(\begin{array}{cccccc}
1 & 1 & 1 & b & 1 & 1\\
1 & 1 & 1 & 1 & a_{o} & a_{\kappa}
\end{array}\right),\label{eq:symb-2}
\end{equation}
where the TC symmetry classes \eqref{eq:symb-1} are realized if we put the
origin at the center of the shaded square, and the TC symmetry classes
\eqref{eq:symb-2} are realized if we put the origin at the corner of
the shaded square. 

Now, we divide the TC symmetry classes not ruled out by Theorem~\ref{thm:nosoc_maintext} into four collections $\mathsf{D}_{i}$, $i=0,1,2,3$.
In $\mathsf{D}_{i}$, there are $i$ of $\sigma_{pxpxy}^{m}$, $\sigma_{txty}^{m}$
and $\sigma_{typx}^{m}$ equal to $-1$. In $\mathsf{D}_{0}$,
we have TC symmetry classes
in the form 
\[
\left(\begin{array}{cc}
\square & 1\\
\square & 1\\
\square & 1\\
\square & 1\\
\square & 1\\
\square & 1
\end{array}\right),
\]
where the symbol $\square$ means that the corresponding $\sigma$ parameter can be chosen to be $\pm 1$ independently of any other parameters.  Therefore, $\left|\mathsf{D}_{0}\right|=2^{6}$.  These TC symmetry classes are realized in the model discussed in Sec.~\ref{sec:ep-model}, and shown in Fig.~\ref{fig:octagon}.

In $\mathsf{D}_{1}$, we have TC symmetry classes $\left(\left[\omega_{e}\right],\left[\omega_{m}\right]\right)$
in the form 
\[
\left(\begin{array}{cc}
1 & 1\\
1 & 1\\
\square & 1\\
1 & -1\\
\square & 1\\
\square & 1
\end{array}\right),\left(\begin{array}{cc}
\square & 1\\
1 & 1\\
1 & 1\\
\square & 1\\
1 & -1\\
\square & 1
\end{array}\right),\text{ or }\left(\begin{array}{cc}
1 & 1\\
\square & 1\\
1 & 1\\
\square & 1\\
\square & 1\\
1 & -1
\end{array}\right),
\]
so $\left|\mathsf{D}_{1}\right|=3\times2^{3}$.  These TC symmetry classes are realized
in the models shown in Fig.~\ref{fig:tc0}(a-c).

In $\mathsf{D}_{2}$, we have TC symmetry classes
in the form 
\[
\left(\begin{array}{cc}
1 & 1\\
1 & 1\\
1 & 1\\
1 & -1\\
1 & -1\\
\square & 1
\end{array}\right),\left(\begin{array}{cc}
1 & 1\\
1 & 1\\
1 & 1\\
\square & 1\\
1 & -1\\
1 & -1
\end{array}\right),\text{ or }\left(\begin{array}{cc}
1 & 1\\
1 & 1\\
1 & 1\\
1 & -1\\
\square & 1\\
1 & -1
\end{array}\right),
\]
so $\left|\mathsf{D}_{2}\right|=3\times2$. These TC symmetry classes are realized in
Fig.~\ref{fig:tc0}(d,e).

In $\mathsf{D}_{3}$, we have only the single TC symmetry class
\begin{equation}
\left(\begin{array}{cc}
1 & 1 \\
1 & 1 \\
1 & 1 \\
1 & -1 \\
1 & -1 \\
1 & -1 \end{array}\right) \text{,}
\end{equation}
which is realized by the model of Fig.~\ref{fig:tc0}f.

In total, there are thus exactly $\sum_{i=0}^{3}\left|\mathsf{D}_{i}\right|=95$ TC symmetry
classes realized by models in $TC_{0}\left(G\right)$.  Recalling that the TC symmetry classes 
$\left(\left[\omega_{m}\right],\left[\omega_{e}\right]\right)$
and $\left(\left[\omega_{e}\right],\left[\omega_{m}\right]\right)$ correspond to the same symmetry class,
it is a straightforward but somewhat tedious exercise to show that 13 symmetry classes are double-counted among the 95 TC symmetry classes.  Therefore, the total number of symmetry classes realized by models in $TC_0(G)$ is $95-13=82$.

\subsection{General toric code models}
\label{sub:tc_so}

To consider the most general toric code models introduced in Sec.~\ref{sec:genlatt}, we must allow for spin-orbit coupling.  As discussed in Sec.~\ref{sec:genlatt}, this means, for any symmetry operation $g \in G$, we have $U_g \sigma^{\mu}_{\ell} U^{-1}_g = c^{\mu}_{\ell}(g) \sigma^{\mu}_{g \ell}$, where $c^{\mu}_{\ell}(g) \in \{ \pm 1 \}$, 
$\mu = x, z$.  The corresponding family of models is referred to as $TC(G)$.  Our results on these models are summarized in the following theorem:

\begin{thmtext} \label{thm:soc_maintext}
The TC symmetry classes in $\mathsf{P_{1}}$, $\mathsf{P_{2}}$,
$\mathsf{P_{3}}$, $\mathsf{A}$, $\mathsf{B}$ and $\mathsf{C}'$
are not realizable in $TC\left(G\right)$, where
\begin{eqnarray*}
\mathsf{P_{1}} & = & \left\{ \sigma_{px}^{e}=\sigma_{px}^{m}=-1\right\} ,\\
\mathsf{P_{2}} & = & \left\{ \sigma_{pxy}^{e}=\sigma_{pxy}^{m}=-1\right\} ,\\
\mathsf{P_{3}} & = & \left\{ \sigma_{txpx}^{e}=\sigma_{txpx}^{m}=-1\right\} ,\\
\mathsf{A} & = & \left\{ \sigma_{pxpxy}^{e}=\sigma_{pxpxy}^{m}=-1\right\} ,\\
\mathsf{B} & = & \left\{ \sigma_{pxpxy}^{e}\sigma_{txty}^{e} = \sigma_{pxpxy}^{m}\sigma_{txty}^{m} = -1\right\} ,\\
\mathsf{C'} & = & \left\{ \sigma_{px}^{e}=\sigma_{txpx}^{e}=\sigma_{pxpxy}^{e}\sigma_{typx}^{e}=\sigma_{pxpxy}^{m}\sigma_{typx}^{m}=-1\right\} .
\end{eqnarray*}

This leaves 945 TC symmetry classes not ruled out by the above constraints, corresponding to 487 symmetry classes under $e \leftrightarrow m$ relabeling.  In addition, all these 945 TC symmetry classes are realized by models in $TC(G)$.
\end{thmtext}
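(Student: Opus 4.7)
The plan has three components: prove the six non-realizability constraints listed in the theorem; exhibit explicit $TC(G)$ models realizing each of the 945 remaining TC symmetry classes; and verify the reduction from 945 TC symmetry classes to 487 symmetry classes under the $e \leftrightarrow m$ identification.

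For the constraints, the strategy parallels the template of Theorem~\ref{thm:nosoc_maintext}. For each forbidden class, I would locate an $e$ particle at a vertex and an $m$ particle at a hole placed at or adjacent to a fixed point of the relevant generator, write out explicit $e$- and $m$-localizations using the freedom established in Appendix~\ref{app:eloc} with strings and cuts chosen symmetrically about the fixed point, and then extract the relevant $\sigma$ parameters by iterating the generator relations~(\ref{eq:px}-\ref{eq:typx}). The novelty relative to Sec.~\ref{sub:tc_woso} is that every string operator now picks up a product of $c^{x,z}_\ell(g)$ phases when conjugated by $U_g$; these must be tracked via the cocycle condition~(\ref{eqn:c-restriction}), with the gauge freedom~(\ref{eqn:gauge-transformation}) used to set $c^{x,z}_\ell(T)=1$ on translations. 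I expect the three involutive constraints $\mathsf{P}_1,\mathsf{P}_2,\mathsf{P}_3$ to follow from a joint analysis of an $e$-$m$ pair under an involution $g^2=1$, combining the separate $e$- and $m$-string calculations so that the independent spin-orbit phases cancel in a gauge-invariant product. The constraint $\mathsf{A}$ comes from applying $(P_x P_{xy})^4=1$ to a composite placed at the four-fold fixed point, while $\mathsf{B}$ combines the four-fold relation with $(T_x P_x)^2=1$. The constraint $\mathsf{C}'$, whose four defining conditions are noticeably asymmetric between $e$ and $m$, should emerge from combining the commutator $[T_y,P_x]=1$ with $P_x^2=1$ and $(T_x P_x)^2=1$ acting on a configuration placing the $e$ particle on the $P_x$ axis and the $m$ particle at an adjacent hole.

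For realization, I would construct an explicit catalog of models, generalizing the stratification of Sec.~\ref{sub:tc_woso} by the number of $-1$'s among the $\sigma^m$ parameters. The new ingredients compared to $TC_0(G)$ are: gauge-inequivalent assignments $c^\mu_\ell(g)$, which switch on the $\sigma^m_{px},\sigma^m_{pxy},\sigma^m_{txpx}$ parameters previously frozen to $+1$; additional orbit choices that relocate the $e$-vertices and $m$-holes relative to the various fixed points; and independent sign choices $K^e_v,K^m_p$ on each orbit of vertices and plaquettes. Starting from the models of Figs.~\ref{fig:octagon} and~\ref{fig:tc0}, which cover the 95 classes realizable in $TC_0(G)$, I would systematically decorate these with spin-orbit patterns that turn on all combinations of the newly allowed $\sigma^m$ parameters not forbidden by the above constraints, verifying class-by-class via the localization calculations of Appendix~\ref{app:eloc} and tabulating the full list in Appendix~\ref{app:models}.

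The main obstacle will be establishing $\mathsf{C}'$. With spin-orbit allowed, long products of $c^\mu_\ell(g)$ along strings must be reduced to a single geometric sign using only the cocycle relation~(\ref{eqn:c-restriction}) and the available gauge freedom; finding the basepoint and path decomposition for which the combined $e$ and $m$ contributions collapse precisely to the required four-factor identity is the step most likely to require a non-obvious trick. Once the constraints are in hand, the counting $2^{12}-|\mathsf{P}_1\cup\cdots\cup\mathsf{C}'|=945$ is a routine inclusion-exclusion on the six forbidden sets (whose defining conditions are on overlapping but mutually consistent $\sigma$-coordinates), and the reduction to 487 symmetry classes is $(945+N_{\text{fix}})/2$, where $N_{\text{fix}}$ counts the allowed TC symmetry classes invariant under $([\omega_e],[\omega_m])\mapsto([\omega_m],[\omega_e])$.
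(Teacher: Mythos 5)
Your architecture (constraints, then a catalog of models, then counting) matches the paper's, but the mechanism you propose for the constraints is not the one that works, and in one place it would prove something false. The paper's proofs of $\mathsf{P}_1$--$\mathsf{P}_3$, $\mathsf{A}$ and $\mathsf{B}$ are all \emph{fixed-point obstructions}, not cancellations: one first shows (after the gauge fixing of Lemmas~\ref{lm:cT} and~\ref{lm:R2_gauge}) that the relevant $\sigma^m$ parameter equals a product of ground-state eigenvalues $a_v$ over a fixed-point set --- e.g.\ $\sigma^m_{px}=\prod_{\ell}[c^x_\ell(P_x)]^{|\ell\cap t|}$ with the product collapsing onto edges fixed pointwise by $P_x$, and $\sigma^m_{pxpxy}=a_{\Gamma(R^2)}$ --- so that $\sigma^m=-1$ forces the existence of a vertex fixed by the relevant involution, which by Lemma~\ref{lm:e_inv} forces the corresponding $\sigma^e=+1$. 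The two computations are linked only through the existence of that fixed vertex. Your proposed ``gauge-invariant product in which the $e$ and $m$ spin-orbit phases cancel'' would naturally establish $\sigma^e_{px}\sigma^m_{px}=+1$, which is false (Fig.~\ref{fig:tcg_maintext}a realizes $\sigma^e_{px}=+1$, $\sigma^m_{px}=-1$, and other models realize the reverse); the actual constraint is the non-multiplicative statement that the two cannot \emph{both} be $-1$. For $\mathsf{C}'$ you correctly flag the hard step but do not supply it: the paper needs Lemma~\ref{lm:invs}, a graph-parity argument (every graph has an even number of odd-degree vertices, applied to the auxiliary graph of $T_yR^2$-invariant edges with $c^x_\ell(P_x)=-1$) to produce a $T_yR^2$-invariant path from $v$ to $P_xv$, along which the explicit localization computation then yields $\sigma^e_{pxpxy}\sigma^e_{typx}=1$. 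Without that lemma the $\mathsf{C}'$ argument does not close.

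There is also a concrete error in your counting. The 945 realizable TC symmetry classes are \emph{not} closed under $e\leftrightarrow m$ relabeling, precisely because $\mathsf{C}'$ is asymmetric in $e$ and $m$; a class in $\mathsf{C}'$ is forbidden while its swap may be allowed. Hence the orbit-counting formula $(945+N_{\mathrm{fix}})/2$ does not apply --- with $N_{\mathrm{fix}}=2$ it is not even an integer. The paper instead counts symmetry classes from the swap-closed set $\mathsf{T}-\mathsf{D}$ of $972=3^5\times 4$ classes, obtaining $\frac{1}{2}(972-2)+2=487$, and then observes that deleting the $27$ classes of $\mathsf{C}'-\mathsf{C}'\cap\mathsf{D}$ removes no symmetry class because each such class has its $e\leftrightarrow m$ partner still realizable.
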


This theorem is proved in Appendices~\ref{app:tcsoc} and~\ref{app:models}.  The constraints ruling out some TC symmetry classes are obtained in Appendix~\ref{app:tcsoc}, while the counting of symmetry classes and the presentation of explicit models is done in Appendix~\ref{app:models}.  

\begin{figure*}
\begin{minipage}[t]{0.33\textwidth}%
\includegraphics[width=0.9\columnwidth]{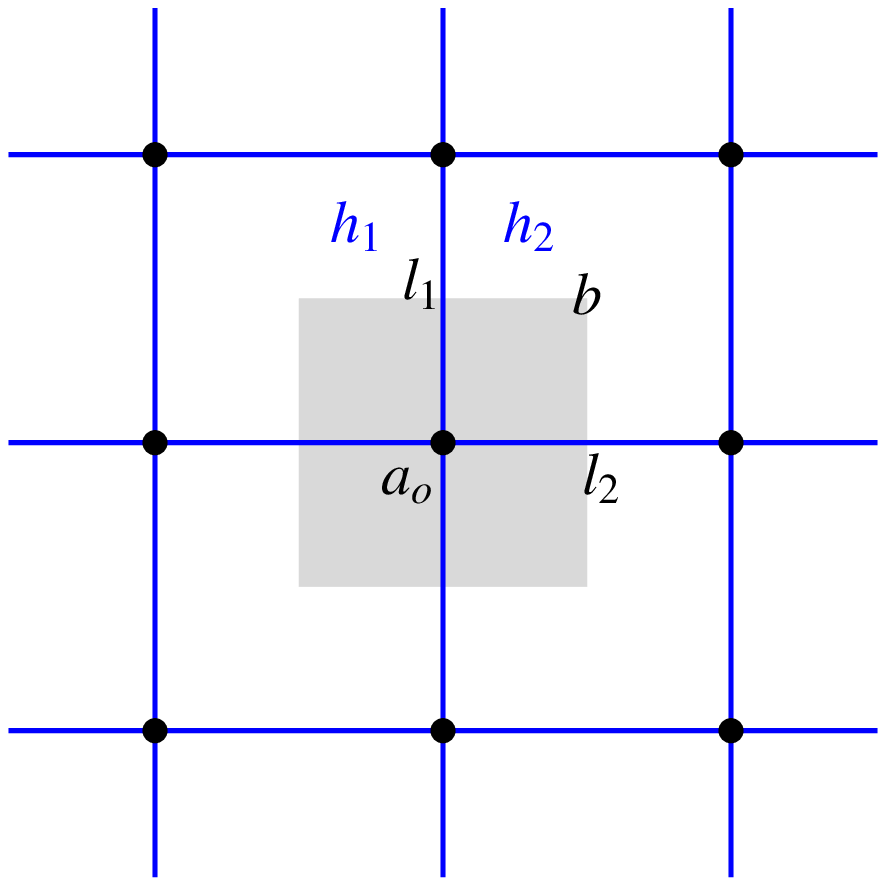}

(a) $\left(\begin{array}{cccccc}
1 & 1 & \gamma_{2} & 1 & b & \gamma_{1}\\
\alpha_{1} & 1 & 1 & a_{o} & 1 & \alpha_{2}
\end{array}\right)$%
\end{minipage}%
\begin{minipage}[t]{0.33\textwidth}%
\includegraphics[width=0.9\columnwidth]{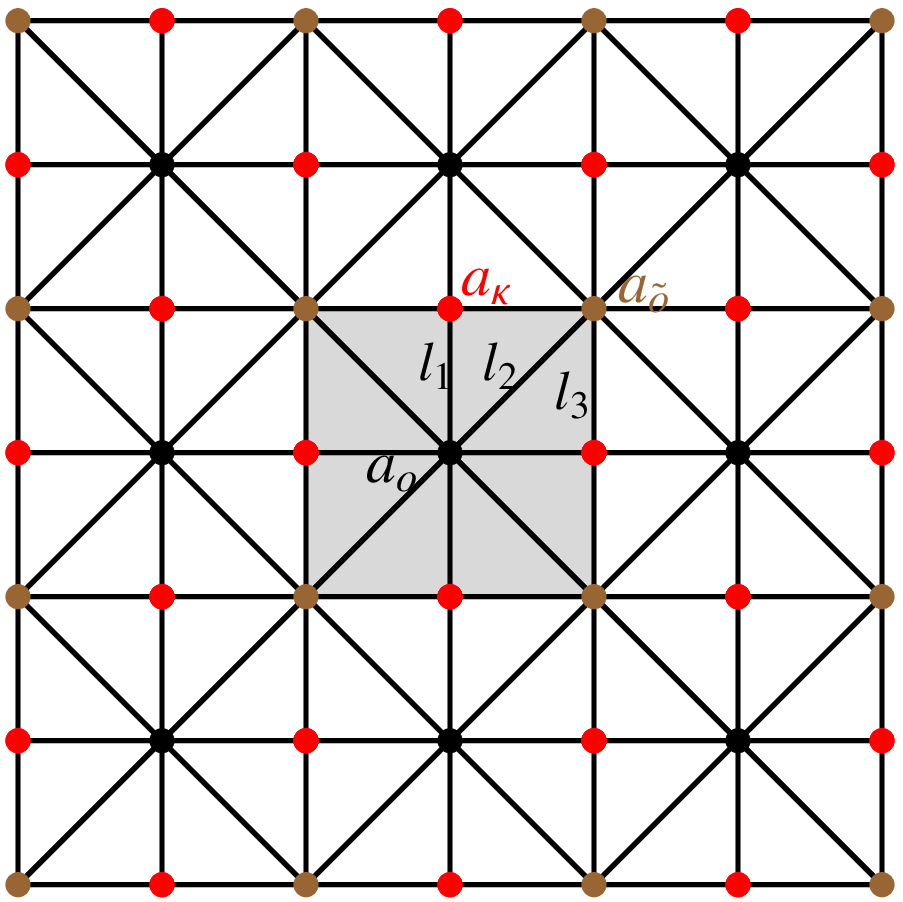}

(b) $\left(\begin{array}{cccccc}
1 & 1 & 1 & 1 & 1 & 1\\
c_{1} & c_{2} & c_{3} & a_{o} & a_{\tilde{o}} & c_{1}c_{3}a_{\kappa}
\end{array}\right)$%
\end{minipage}

\caption{(Color online) Two example models in $TC\left(G\right)$ that realize TC symmetry classes
not possible in $TC_{0}\left(G\right)$. The shaded square is a unit
cell and the origin of our coordinate system is at the center of the
square. Below each figure of lattice is the corresponding TC symmetry
class in the form \eqref{eq:sc_matrix}. Here $a_{r}$ is the ground
state eigenvalue of $A_{v}$ for $v$ at special points $r=o,\tilde{o},\kappa$
and $b$ is the ground state eigenvalue of $B_{p}$ for the plaquette
$p$, defined here to be the smallest cycle enclosing the letter ``$b$.''  We write $\alpha_{i}=c_{l_{i}}^{x}\left(P_{x}\right)$, $\beta_{i}=c_{l_{i}}^{x}\left(P_{xy}\right)$,
$\gamma_{i}=c_{l_{i}}^{z}\left(P_{x}\right)$ and $\delta_{i}=c_{l_{i}}^{z}\left(P_{xy}\right)$.
(a) A model realizing some TC symmetry classes (and symmetry classes) that cannot be realized without spin-orbit coupling.  Here ,$h_{1}$, $h_{2}$ label two positions of a $m$ particle for the calculation of $\sigma_{px}^{m}=\alpha_{1}$ in the main text. (b) A model realizing all $2^6 = 64$ possible $m$ particle fractionalization classes $\left[\omega_{m}\right]$. Here, for simplicity, we make the restriction $\gamma_i = \delta_i \equiv c_i$.}
\label{fig:tcg_maintext}
\end{figure*}

Here, we simply give an illustration how spin-orbit coupling increases the number of allowed symmetry classes.
For the model shown in Fig.~\ref{fig:tcg_maintext}a, more TC symmetry classes are possible if spin-orbit coupling is included.  
For example, take the calculation of $\sigma_{px}^{m}$. 
Suppose $U_{P_{x}}\sigma_{l_{1}}^{x}U_{P_{x}}=\alpha_{1}\sigma_{l_{1}}^{x}$, with $\alpha_1 \in \{ \pm 1 \}$. 
If we choose $U^m_{P_{x}}\left(h_{1}\right)=\sigma_{l_{1}}^{x}$, then we must have $U^m_{P_{x}}\left(h_{2}\right)=\alpha_{1}\sigma_{l_{1}}^{x}$
to ensure $U^m_{P_{x}}\left(h_{1}\right)U^m_{P_{x}}\left(h_{2}\right)\sigma_{l_{1}}^{x}\left|\psi_{m0}\right\rangle =U_{P_{x}}\sigma_{l_{1}}^{x}\left|\psi_{m0}\right\rangle $.  Therefore we have $\sigma^m_{px} = (U^m_{P_x} )^2 ( h_1) = U^m_{P_x} (h_2) U^m_{P_x} (h_1) = \alpha_1$.  Therefore we can have $\sigma^m_{px} = -1$, which is impossible without spin-orbit coupling.

Another particularly interesting example, shown in Fig.~\ref{fig:tcg_maintext}b, is a model realizing all $2^6 = 64$ $m$ particle fractionalization classes.  This model is constructed starting with the lattice of Fig.~\ref{fig:tc0}f and allowing for spin-orbit coupling.

\section{Summary and Beyond Toric Code Models}
\label{sec:summary}

To summarize, we considered the realization of distinct square lattice space group
symmetry fractionalizations in exactly solvable $\zz$ toric code models.  We obtained a complete
understanding, in the sense that every symmetry class consistent
with the fusion rules is either realized in an explicit model, or is proved rigorously to be unrealizable.  In more detail, first, we found a single model that realizes all $2^6 = 64$ $e$ particle fractionalization classes as the parameters in its Hamiltonian are varied.  Second, we considered a restricted family of models $TC_0(G)$ without spin-orbit coupling, but defined on general two-dimensional lattices.  We showed that exactly 95 TC symmetry classes $\left(\left[\omega_{e}\right],\left[\omega_{m}\right]\right)$, corresponding to 82 symmetry classes $\left\langle \left[\omega_{e}\right],\left[\omega_{m}\right]\right\rangle$, are realized by models in $TC_0(G)$.  This result was established by proving that the other TC symmetry classes cannot be realized by any model in $TC_0(G)$, and giving explicit models for  those classes not ruled out by such general arguments.  Finally, in the most general family of models considered, $TC(G)$, we allowed spin-orbit coupling in the action of symmetry.  In this case we found that exactly 945 TC symmetry classes, corresponding to 487 symmetry classes, are realized in $TC(G)$.

These main results are, of course,  confined to a special family of exactly solvable models.  Because the symmetry class is a robust characteristic of a SET phase, and thus stable to small perturbations preserving the symmetry,\cite{essin13} all the symmetry classes that we find clearly exist in more generic models.  However, there may well be symmetry classes not realized in $TC(G)$ that can occur in more generic models (this is indeed the case, as we see below).

Ideally, we would like to make statements about arbitrary local bosonic models (\emph{i.e.} those with finite-range interactions).  For example, we can ask the challenging question of which symmetry classes can be realized in the family of all local bosonic models with square lattice space group symmetry.  We do not have an answer to this question, but here we provide some partial answers.  First, we show using a parton gauge theory construction that there exist symmetry classes not realizable in $TC(G)$ that can be realized in local bosonic models.   Second, we establish a connection between symmetry classes of certain on-site symmetry groups and symmetry classes of the square lattice space group.

Our parton construction allows us to argue that if $[\omega_m]$ is a $m$ fractionalization class realized for a model in $TC_0(G)$, then the symmetry class $\langle [\omega_e], [\omega_m] \rangle$, where $[\omega_e] \in H^2(G, \zz)$ is arbitrary, can be realized in a local bosonic model.  It is easy to see that some symmetry classes obtained this way cannot be realized in $TC(G)$. For example, the symmetry classes in $\mathsf{A}$ are unrealizable in $TC(G)$ (Theorem~\ref{thm:soc_maintext}), but they are possible here.

The starting point for the construction is a Hamiltonian of the form
\begin{equation}
{\cal H} = - \sum_{v \in V} K^e_v A_v - \sum_{p \in P} B_p \text{,}
\end{equation}
where $K^e_v \in \{ \pm 1 \}$.  We take the symmetry to act without spin-orbit coupling, so this is a model in $TC_0(G)$.
We have chosen $K^m_p = 1$ for all $p \in P$, which implies the $e$ fractionalization class is trivial.  However, Hamiltonians of this form can realize any $m$ fractionalization class allowed in $TC_0(G)$, because without spin-orbit coupling the $m$ fractionalization class only depends on the lattice and on the $K^e_v$ coefficients.

We now build a $\zz$ gauge theory based on the above toric code model.  On each vertex $v$ we introduce a boson field created by $b^\dagger_{v \alpha}$, where $\alpha = 1,\dots, n$ is an internal index.  We also introduce the gauge constraint
\begin{equation}
A_v = K^e_v (-1)^{ b^\dagger_{v \alpha} b^{\vphantom\dagger}_{v \alpha} } \text{,}
\end{equation}
with sums over repeated internal indices implied.  The gauge theory Hamiltonian is taken to be
\begin{equation}
{\cal H}_{{\rm gauge}} = - \sum_{p \in P} B_p + u \sum_{v \in V} b^\dagger_{v \alpha} b^{\vphantom\dagger}_{v \alpha}
- h \sum_{\ell \in E} \sigma^x_{\ell} \text{,}
\end{equation}
with $u > 0$.
We choose symmetry to act on the boson field by
\begin{equation}
U_g b^\dagger_{v \alpha} U^{-1}_g = D_{\alpha \beta}(g) b^\dagger_{gv,  \beta} \text{,}
\end{equation}
where $D(g)$ are unitary matrices giving a $n$-dimensional projective representation of $G$.  By choosing $D(g)$, we are choosing a projective symmetry group for the parton fields.\cite{wen02}  In Ref.~\onlinecite{essin13}, it was shown that there exists a finite-dimensional projective representation for any fractionalization class $[\omega_e] \in H^2(G, \zz)$, so the bosons can be taken to transform in any desired fractionalization class.

We now discuss two limits of ${\cal H}_{{\rm gauge}}$.  First, we consider the limit $h \to +\infty$.  In this limit, we have $\sigma^x_{\ell} = 1$, and the only remaining degrees of freedom are the bosons.  The gauge constraint becomes
\begin{equation}
b^\dagger_{v \alpha} b^{\vphantom\dagger}_{v \alpha} = \left\{ \begin{array}{ll}
\text{even,} & K^e_v = 1 \\
\text{odd,} & K^e_v = -1 
\end{array}\right.  \text{.}
\end{equation}
In this Hilbert space, all local operators transform linearly under $G$, and so the model reduces to a legitimate local bosonic model in this limit.  Following the usual logic of parton constructions,\cite{senthil00,wen02,gchen12} $H_{{\rm gauge}}$ can be viewed as a low-energy effective theory for local bosonic models with the same Hilbert space and symmetry action as in the $h \to + \infty$ limit.  The expectation is that any phase realized by $H_{{\rm gauge}}$ can be realized by some such local bosonic model, although this approach does not tell us how to choose parameters of the local bosonic model to realize the corresponding phase of $H_{{\rm gauge}}$.

Now we consider the exactly solvable limit of $H_{{\rm gauge}}$ with $h = 0$. This limit is deep in the deconfined phase of the $\zz$ gauge theory, and we have
$B_p = 1$, $A_v  = K^e_v$, and $b^\dagger_{v \alpha} b^{\vphantom\dagger}_{v \alpha} = 0$ acting on ground states.  Because $A_v = K^e_v$, the $m$ particles feel the same pattern of background $\zz$ charge as in the original $TC_0(G)$ toric code model, and their fractionalization class is unchanged.  Now, however, the $\zz$-charged bosons become the $e$ particle excitations, so the $e$ particle fractionalization class $[\omega_e]$ is determined by the (arbitrarily chosen) projective representation $D(g)$.  We have thus obtained a phase with $\zz$ topological order and  symmetry class $\langle [\omega_e], [\omega_m] \rangle$, as desired.

We now present the second result of this section, namely we establish a connection between space group symmetry classes and the symmetry classes of certain on-site symmetries.  Suppose that we have a local bosonic model with symmetry $G \times G_o$, where $G$ is the space group, and $G_o$ is a finite, unitary on-site symmetry.  We do not assume square lattice symmetry here, but allow for a more general space group.  We require $G_o$ to be isomorphic to some finite quotient of the space group $G$.  For example, if $G$ is square lattice space group symmetry, we could take $G_o \simeq G / T_2$, where $T_2$ is the normal subgroup of $G$ generated by translations $T_x^2$ and $T_y^2$.  In this case, $G_o$ can be nicely described as what remains of the space group when the system is put on a $2 \times 2$ periodic torus.

Next, we suppose our model has $\zz$ topological order, and the action of symmetry is described by $e$ and $m$ fractionalization classes $[\omega_e]$ and $[\omega_m]$.  Specifying these fractionalization classes in terms of generators and relations, we further assume that the only relations with non-trivial projective phase factors (\emph{i.e.}, $\sigma$ parameters) are those involving only elements of $G_o$.  That is, space group symmetry $G$ acts linearly on $e$ and $m$ particles, and elements $g \in G$ commute with $g_o \in G_o$ when acting on $e$ and $m$ particles.  Basically, we are assuming that we have some non-trivial action of on-site symmetry, where the space group symmetry ``comes along for the ride.''  As an aside, there are some interesting open questions hidden in our assumptions.  For example, is every symmetry class of the on-site $G_o$ that can be realized in local bosonic models also compatible with an arbitrary space group symmetry $G$? Or, are there $G_o$ symmetry classes that are only compatible with a given space group $G$ if some elements of $G_o$ and $G$ are chosen not to commute acting on $e$ and/or $m$ particles?

With our assumptions specified, we proceed to break the symmetry down to the subgroup $G' \subset G \times G_o$, defined as the set of all elements of the form $(g, \phi(g) ) \in G \times G_o$, where $g \in G$ is arbitrary, and $\phi : G \to G_o$ is the quotient map.  It is easy to see that $G'$ is a subgroup, and that it is isomorphic to $G$.  We thus still have $G$ space group symmetry, but now the space group operations are combined with on-site symmetry operations.  Under the new reduced symmetry, it is easy to see that new $[\omega_e]'$ and $[\omega_m]'$ fractionalization classes for $G'$ symmetry are induced by corresponding $G_o$ fractionalization classes before breaking the symmetry.  While these remarks remain somewhat abstract at present, this discussion shows that progress in understanding symmetry classes of finite, unitary on-site symmetry\cite{Hung2013,Lu2013,chen14, lukaszpc} can potentially have direct applications to similar problems for space group symmetry.  

\acknowledgments

M.H. is grateful for related collaborations with Andrew Essin. We are also grateful for useful correspondence with Lukasz Fidkowski. H.S. thanks the hospitality of the Erwin Schr\"{o}dinger International Institute for Mathematical Physics (ESI) in Vienna during his attending the Programme on ``Topological phases of quantum matter,'' where some of this paper was written. This work was supported by the David and Lucile Packard Foundation.

\appendix

\section{Complete set of commuting observables}
\label{app:complete-set}

We show here that the operators $\{ A_v | v \in V \}$, $\{ B_p | p \in P \}$, $\cL^e_{s_x}$, $\cL^e_{s_y}$, as defined in Sec.~\ref{sec:genlatt}, form a complete set of commuting observables for any model in the family $TC(G)$.  The approach is to construct a basis that is completely labeled by the simultaneous eigenvalues of these operators.

We recall that plaquettes $P$ together with $s_x$, $s_y$, form an elementary set of cycles, so that for any $c \in C$, $\cL^e_c$ can be decomposed into a product of $\cL^e_p$'s, with the product possibly also including $\cL^e_{s_x}$ and/or $\cL^e_{s_y}$.  However, the plaquettes are in general not independent, in the sense that there may be non-trivial relations of the form $B_{p_1} \cdots B_{p_n} = 1$, for some $p_1, \dots, p_n \in P$.  For the present purpose, it will be convenient to construct an elementary and independent set of cycles.

Let ${\cal T}$ be a spanning tree of the graph ${\cal G}$.  By definition, ${\cal T}$ is a  subgraph of ${\cal G}$ containing all vertices of ${\cal G}$ (${\cal T}$ spans ${\cal G}$), so that ${\cal T}$ is connected and has no cycles (${\cal T}$ is a tree).  Any tree with $n$ vertices has $n - 1$ edges, so ${\cal T}$ has $|V| - 1$ edges.  We denote the edge set of ${\cal T}$ by $E_{{\cal T}}$, and let $E' = E - E_{\cal T}$.  For every $\ell \in E'$, there is a unique cycle $c(\ell) \in C$ containing only $\ell$ and edges in $E_{{\cal T}}$.  We claim $\{ c(\ell) | \ell \in E' \}$ is an elementary, independent set of cycles.

To show the $c(\ell)$ cycles are elementary, suppose $c$ is a cycle.  Without loss of generality, we assume $c$ has no repeated edges.  Viewing $c$ as a subset of $E$, let $c \cap E' = \{ \ell_1, \dots, \ell_n \}$.  Then we claim the desired result, namely
\begin{equation}
\cL^e_c = \cL^e_{c(\ell_1)} \cdots \cL^e_{c(\ell_n)} \text{.} \label{eqn:cycle-decomposition}
\end{equation}
To show this, consider the product $\cL^e_c \cL^e_{c(\ell_1)} \cdots \cL^e_{c(\ell_n)} = \prod_{\ell \in c'} \sigma^z_\ell$.  $c'$  lies entirely in $E_{{\cal T}}$, and must be empty or a union of disjoint cycles. These two facts are only consistent if $c'$ is  empty, and so $\cL^e_c \cL^e_{c(\ell_1)} \cdots \cL^e_{c(\ell_n)} = 1$, equivalent to Eq.~(\ref{eqn:cycle-decomposition}).  

The $c(\ell)$ cycles are also independent:  we can choose the eigenvalues of $\cL^e_{c(\ell)}$ independently for all $\ell \in E'$.  To see this, consider a reference state $| \{ \Phi_\ell \} \rangle$, defined as the eigenstate of $\sigma^z_\ell$ satisfying
\begin{equation}
\sigma^z_{\ell} | \{ \Phi_\ell \} \rangle = \left\{ \begin{array}{ll}
| \{ \Phi_\ell \} \rangle , & \ell \in E_{{\cal T}} \\
\Phi_\ell | \{ \Phi_\ell \} \rangle , & \ell \in E'  \text{,}
\end{array}\right.
\end{equation}
where $\Phi_\ell \in \{ \pm 1 \}$.  There are clearly $2^{|E| - |V| + 1}$ such reference states, which form an orthonormal set, because $E'$ contains $|E| - |V| + 1$ edges.  Also, we clearly have
\begin{equation}
\cL^e_{c(\ell)} | \{ \Phi_\ell \} \rangle = \Phi_\ell | \{ \Phi_\ell \} \rangle \text{.}
\end{equation}

From the above discussion, it is clear that for every set of $\cL^e_{c(\ell)}$ eigenvalues $\{ \Phi_\ell \}$ there is a corresponding distinct consistent choice of $\{ B_p\}$, $\cL^e_{s_x}$ and $\cL^e_{s_y}$ eigenvalues, and vice versa.  For the purpose of constructing a complete set of commuting observables, we can therefore replace $\{ B_p \}$, $\cL^e_{s_x}$ and $\cL^e_{s_y}$ by $\{ \cL^e_{c(\ell)} | \ell \in E' \}$.

We will complete the discussion by exhibiting an orthonormal basis, where the basis states are simultaneous eigenvalues of $\{ A_v \}$ and $\{ \cL^e_{c(\ell)} \}$.  We construct the basis states starting from the reference states $| \{ \Phi_\ell \} \rangle$. Let $a_v \in \{ \pm 1 \}$, subject to the constraint $\prod_v a_v = 1$, then we consider the state
\begin{equation}
| \{ a_v \} , \{ \Phi_\ell \} \rangle = \frac{1}{\sqrt{2}} \prod_{v \in V} \frac{1}{\sqrt{2}} (1 + a_v A_v) | \{ \Phi_\ell \} \rangle \text{.}
\end{equation}
These states are normalized, and satisfy
\begin{eqnarray}
A_v |  \{ a_v \} , \{ \Phi_\ell \} \rangle &=& a_v  |  \{ a_v \} , \{ \Phi_\ell \} \rangle \\
\cL^e_{c(\ell)}  |  \{ a_v \} , \{ \Phi_\ell \} \rangle &=& \Phi_\ell  |  \{ a_v \} , \{ \Phi_\ell \} \rangle \text{,}
\end{eqnarray}
thus forming an orthonormal set.  Moreover, since there are $2^{|V| - 1}$ possible choices of $\{ a_v \}$, the number of states $| \{ a_v \} , \{ \Phi_\ell \} \rangle$  is  $2^{|V| - 1} \cdot 2^{|E| - |V| + 1} = 2^{|E|}$.  This is the dimension of the Hilbert space, so we exhibited a basis completely labeled by the eigenvalues of $\{ A_v \}$ and $\cL^e_{c(\ell)}$.

\section{Symmetry-invariant ground states}
\label{app:psi0}

For an even by even lattice (\emph{i.e.} $L$ even), it is always possible to choose $U_g$ and find a ground state $| \psi_{0e} \rangle$ satisfying 
\begin{eqnarray}
U_g | \psi_{0e} \rangle &=& | \psi_{0e} \rangle \\
\cL^e_{s_x} | \psi_{0 e} \rangle &=& \cL^e_{s_y} | \psi_{0 e} \rangle = | \psi_{0 e} \rangle \text{,}
\end{eqnarray}
where $s_x$ and $s_y$ are closed paths that wind around the system once in the $x$ and $y$ directions, respectively.
From this it also follows that $U_{g_1} U_{g_2} = U_{g_1 g_2}$; this equation holds acting on $| \psi_{0 e} \rangle$, so the linear action of symmetry on local operators [Eq.~(\ref{eqn:linear-action})] implies it holds on all states.

In fact, it is also possible to find a ground state $| \psi_{0 m} \rangle$ satisfying similar properties but for $m$-string operators:
\begin{eqnarray}
U_g | \psi_{0m} \rangle &=& | \psi_{0m} \rangle \\
\cL^m_{t_x} | \psi_{0 m} \rangle &=& \cL^m_{t_y} | \psi_{0 m} \rangle = | \psi_{0 m} \rangle \text{.}
\end{eqnarray}
Here, $t_x$ and $t_y$ are closed cuts winding once around the system in $x$ and $y$ directions, respectively.  Because, for instance, $\cL^e_{s_x}$ and $\cL^m_{t_y}$ must anti-commute, $| \psi_{0 e} \rangle$ and $ | \psi_{0 m} \rangle$ cannot be the same state.

We now show the existence of $| \psi_{0 e} \rangle$; the argument for $| \psi_{0 m} \rangle$ is essentially identical, apart from one subtlety that we address at the end of this Appendix.
We define $s_x$ by first drawing a path $s^0_x$ joining  an arbitrary $v \in V$ to $T_x v$.  The path $s_x$ is then formed by joining $s^0_x$, $T_x s^0_x$, $T_x^2 s^0_x$, and so on, to obtain
\begin{equation}
s_x = (s^0_x) (T_x s^0_x ) \cdots (T^{L-1}_x s^0_x ) \text{,}
\end{equation}
a closed path winding once around the system in the $x$-direction.  We then choose $s_y = P_{xy} s_x$.  With these paths specified, we specify a unique state in the four-dimensional ground state manifold by requiring
\begin{equation}
\cL^e_{s_x} | \psi_0 \rangle = \cL^e_{s_y} | \psi_0 \rangle = | \psi_0 \rangle \text{.}
\end{equation}

By symmetry, $U_g | \psi_0 \rangle$ must also lie in the ground state manifold for all $g \in G$.  We will show that
\begin{equation}
\cL^e_{s_{\mu}} U_g | \psi_0 \rangle = U_g | \psi_0 \rangle \text{,}  \label{eqn:gs-cond}
\end{equation}
for $\mu = x,y$, which implies $U_g | \psi_0 \rangle = e^{i \phi_g} | \psi_0 \rangle$, for some phase factors $e^{i \phi_g}$.  It is enough to show this for the generators $g = T_x, P_x, P_{xy}$.  Once this is established, we can make trivial phase redefinitions $U_{T_x} \to e^{-i \phi_{T_x} } U_{T_x}$, and similarly for the other generators, thus setting $\phi_g = 0$ to obtain the desired result.

Before proceeding to show Eq.~(\ref{eqn:gs-cond}) for each generator in turn, we obtain an equivalent simpler condition.  We have
\begin{equation}
\cL^e_{s_{\mu}} U_g | \psi_0 \rangle = U_g U^{-1}_g \cL^e_{s_{\mu}} U_g | \psi_0 \rangle 
= c^z_{g^{-1}}(s_{\mu}) U_g \cL^e_{g^{-1} s_{\mu} } | \psi_0 \rangle \text{,}
\end{equation}
for $\mu = x,y$.  Now, it is clear we can break $s_{\mu}$ into two paths, $s_{\mu} = s_{\mu 1} s_{\mu 2}$, so that $s_{\mu 2} = T^{L/2}_{\mu} s_{\mu 1}$.  Then we have
\begin{equation}
c^z_{g^{-1}}(s_{\mu}) = c^z_{g^{-1}} (s_{\mu 1} )  c^z_{g^{-1}} ( T^{L/2}_{\mu} s_{\mu 1} ) \text{.}
\end{equation}
Using Lemma~\ref{lm:cT} of Appendix~\ref{app:tcsoc}, $c^z_{g^{-1}} ( T^{L/2}_{\mu} s_{\mu 1} ) = c^z_{g^{-1}} (s_{\mu 1} )$, so that
\begin{equation}
c^z_{g^{-1}}(s_{\mu}) = 1 \text{.}
\end{equation}
Therefore we have shown
\begin{equation}
\cL^e_{s_{\mu}} U_g | \psi_0 \rangle = U_g \cL^e_{g^{-1} s_{\mu} } | \psi_0 \rangle \text{.}
\end{equation}
This implies Eq.~(\ref{eqn:gs-cond}) will hold if, for each generator $g$,
\begin{equation}
\cL^e_{g^{-1} s_{\mu} } | \psi_0 \rangle  = | \psi_0 \rangle \text{.} \label{eqn:simple-gs-cond}
\end{equation}

Now we consider $g = T_x$.  Since $T_{x}^{-1} s_x = s_x$, Eq.~(\ref{eqn:simple-gs-cond}) is satisifed for $\mu = x$.  For $\mu = y$, we have
\begin{equation}
\cL^e_{T_x^{-1} s_y} | \psi_0 \rangle =\cL^e_{c} \cL^e_{s_y} | \psi_0 \rangle = \cL^e_c | \psi_0 \rangle = | \psi_0 \rangle \text{,}
\end{equation}
where $c = s_y \cup T_x^{-1} s_y$, and the last equality follows from a graphical argument in Fig.~\ref{fig:gTx}. Here and in the following, for the union $\cup$ operation to make sense, we can view paths as multisets of edges. And the meaning of $\cL^e_{c}$ is obvious; it is a product of $\sigma^{z}_l$ with multiplicities taken into account.

\begin{figure}
\includegraphics[width=0.8\columnwidth]{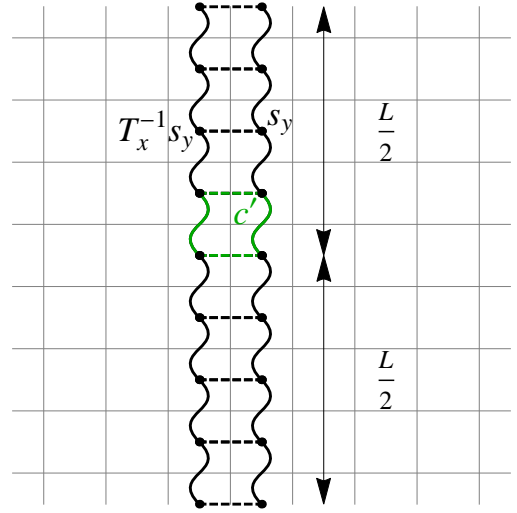}
\caption{(Color online) Graphical argument that $\cL^e_{c} | \psi_0 \rangle = | \psi_0 \rangle$, for $c = s_y \cup T^{-1}_x s_y$. The dotted lines show the $L \times L$ grid of primitive cells, and the paths $s_y$ and $T^{-1}_x s_y$ are shown.  $c$ encloses a region of area $L$, which can be broken (dashed lines) into $L$ smaller sub-regions each of unit area.  Let $c'$ be the cycle bounding one of the sub-regions, then $\cL^e_c = \prod_{n=0}^{L-1} \cL^e_{ T^n_x c' }$.  In addition, by translation symmetry $\cL^e_{c'} | \psi_0 \rangle = \cL^e_{T_y c'} | \psi_0 \rangle = \pm | \psi_0 \rangle$.  Since an even number of sub-regions appear in the decomposition of $\cL^e_c$ given above, we have $\cL^e_c | \psi_0 \rangle = | \psi_0 \rangle$.}
\label{fig:gTx}
\end{figure}

Next we consider $g = P_{xy}$, and Eq.~(\ref{eqn:simple-gs-cond}) becomes
\begin{equation}
\cL^e_{P_{xy} s_{\mu} } | \psi_0 \rangle  = | \psi_0 \rangle \text{.}
\end{equation}
This clearly holds, because $P_{xy} s_y = P_{xy}^2 s_x = s_x$, and $P_{xy} s_x = s_y$.

Finally, we consider $g = P_x$.  For $\mu = x$, we have
\begin{equation}
\cL^e_{P_x s_x} | \psi_0 \rangle = \cL^e_c \cL^e_{s_x} | \psi_0 \rangle = \cL^e_c | \psi_0 \rangle = | \psi_0 \rangle \text{,}
\end{equation}
where $c = s_x \cup P_x s_x$, and the last equality follows from an argument we now provide.  We first cut  $s_x$ into two equal-length pieces $s_{x1}$ and $T^{L/2} s_{x1}$, which meet at a vertex $v$.  We then have
\begin{eqnarray}
P_x s_x &=& (P_x s_{x1}) (P_x T^{L/2}_x s_{x1} ) 
= (P_x s_{x1}) ( T^{-L/2}_x P_x s_{x1} ) \nonumber  \\
&=& (P_x s_{x1}) ( T^{L/2}_x P_x s_{x1} ) \text{,}
\end{eqnarray}
where the last equality holds since $T^L_x = 1$.  We have thus decomposed $c = (s_{x1})( T^{L/2}_x s_{x1} )  \cup ( P_x s_{x1} ) ( T^{L/2}_x P_x s_{x1} )$.  Now we draw a path $s'$ joining $v$ to $P_x v$, and we decompose $\cL^e_c = \cL^e_{c_1} \cL^e_{c_2}$, introducing the cycles
\begin{eqnarray}
c_1 &=& (s_{x1}) (s') (T^{L/2}_x P_x s_{x1} ) (T^{L/2}_x s' ) \\
c_2 &=& (T^{L/2}_x s_{x1}) (T^{L/2}_x s') (P_x s_{x1} ) (s' ) \text{.}
\end{eqnarray}
Because $c_2 = T^{L/2}_x c_1$, it follows from symmetry that $\cL^e_{c_1} | \psi_0 \rangle = \cL^e_{c_2} | \psi_0 \rangle  = \pm | \psi_0 \rangle$, and $\cL^e_c | \psi_0 \rangle = | \psi_0 \rangle$, as desired.  This argument is illustrated graphically in Fig.~\ref{fig:gPxx}.

\begin{figure}
\includegraphics[width=0.8\columnwidth]{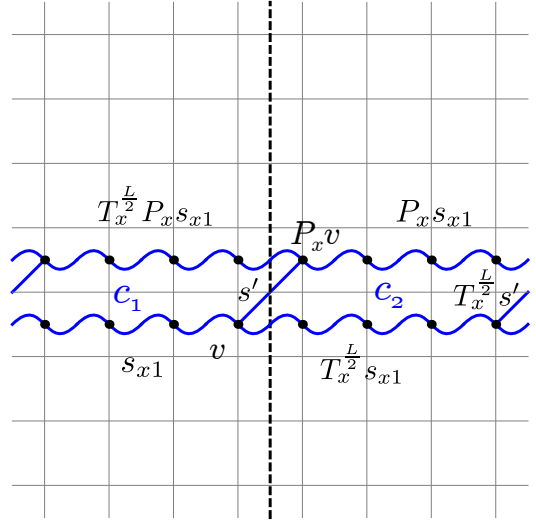}
\caption{(Color online) Graphical illustration of the argument that $\cL^e_{c} | \psi_0 \rangle = | \psi_0 \rangle$, for $c = s_x \cup P_x s_x$.  It is important to note that, in the interest of clarity, this figure is schematic in the sense that it accurately shows the connectivity of the paths involved, and their properties under translation symmetry, but \emph{not} their properties under $P_x$.  The various symbols are defined in the main text.  The vertical dashed line is the $P_x$ reflection axis, and the vertex $v$ has been chosen to lie near this axis for convenience.  $c_1$ and $c_2$ are the boundaries of the left and right shaded regions, respectively.  The most important point is that these two regions are related by $T^{L/2}_x$ translation.}
\label{fig:gPxx}
\end{figure}

For $g = P_x$ and $\mu = y$, we have
\begin{equation}
\cL^e_{P_x s_y} | \psi_0 \rangle = \cL^e_c \cL^e_{s_y} | \psi_0 \rangle = \cL^e_c | \psi_0 \rangle = | \psi_0 \rangle \text{,}
\end{equation}
where $c = s_y \cup P_x s_y$, and the last equality follows from an argument given below, which is similar to that already given in the case $\mu = x$.  We first break $s_y$ into two equal-length paths related by $T^{L/2}_y$ translation, that is
\begin{equation}
s_y = (s_{y1} ) ( T^{L/2}_y s_{y 1} ) \text{,}
\end{equation}
and let $v$ be a vertex where $s_{y 1}$ and $T^{L/2}_y s_{y 1}$ meet.  Since $P_x$ and $T_y$ commute, we have
\begin{equation}
P_x s_y = (P_x s_{y 1} ) (T^{L/2}_y P_x s_{y 1} ) \text{.}
\end{equation}
We can then proceed following the discussion for $g = P_x$, $\mu = x$ to obtain the desired result.

The argument for the existence of $| \psi_{0 m} \rangle$ is essentially identical.  However, there is one subtlety that should be addressed.  In establishing symmetry-invariance of $| \psi_{0 e} \rangle$, we had to choose the phase of $U_g$ appropriately.  The same step arises in the corresponding discussion for $| \psi_{0 m} \rangle$, and the two phase choices may not be compatible.  Fortunately, this is not an issue for our purposes, because we never need to work with $| \psi_{0 e} \rangle$ and $ | \psi_{0 m} \rangle$ at the same time.  We simply make (possibly) different phase choices for $U_g$ depending on the ground state we are working with in a given calculation.

\section{General construction of $e$ and $m$ localizations in toric code models}
\label{app:eloc}

Here, we show by explicit construction that an $e$-localization $U^e_g(v)$ always exists for the toric code models, and also that this $e$-localization is unique up to projective transformations $U^e_g(v) \to \lambda(g) U^e_g(v)$, with $\lambda(g) \in \{ \pm 1 \}$.  The explicit form for the $e$-localization we obtain is useful for obtaining general constraints on symmetry classes in Appendix~\ref{app:general-constraints}.  The corresponding results and explicit form also hold for $m$-localizations.  We focus first on $e$ particles and $e$-localizations, postponing discussion of $m$ particles to the end of this Appendix.

We fix $g \in G$, and arbitrarily single out a vertex $v_0$. $v_0$ may depend on $g$, but we do not write this explicitly.  We then  choose $U^e_g(v_0) = f^e_g(v_0) \cL^e_{s^e_g(v_0)}$, where $f^e_g(v_0) \in \{ \pm 1\}$ is arbitrary, and $s^e_g(v_0)$ is arbitrary so long as it joins $v_0$ to $g v_0$.  In addition, for each $v \neq v_0$, we choose a path $s_v$ joining $v_0$ to $v$.
We will now show that
\begin{equation}
U^e_g(v) = \cL^e_{s_v} U^e_g(v_0) g( \cL^e_{s_v} )  \label{eqn:explicit-eloc}
\end{equation}
gives an $e$-localization.  Here, we have introduced the notation $g( {\cal O} ) = U_g {\cal O} U^{-1}_g$ for any operator ${\cal O}$.  It is clear that $U^e_g(v)$ can be put into the form $U^e_g(v) = f^e_g(v) \cL^e_{s^e_g(v)}$.

To proceed, we need to show that
\begin{equation}
U_g | \psi_e (s) \rangle = U^e_g[ v_1(s) ] U^e_g [ v_2 (s) ] | \psi_e(s) \rangle \label{eqn:symloc}
\end{equation}
for all open paths $s$.  The endpoints of $s$ are denoted $v_1(s), v_2(s)$.  We first show that Eq.~(\ref{eqn:symloc}) holds for all pairs of $e$ particle positions (\emph{i.e.} all pairs of endpoints $v_1(s), v_2(s)$), using a specific choice of paths.  Then we proceed to show Eq~(\ref{eqn:symloc}) it holds for any open path $s$.  

If $s = s_v$, the endpoints of $s$ are $v_0$ and $v$, and an easy calculation shows Eq.~(\ref{eqn:symloc}) holds.  Now we consider vertices $v, v' \neq v_0$ and $v \neq v'$, which are joined by the path $s_{v v'} = s_v s_v'$, and we choose $s = s_{v v'}$.  We have
\begin{equation}
| \psi_e(s_{v v'} ) \rangle = \cL^e_{s_{v v'} } | \psi_0 \rangle = \cL^e_{s_{v} } \cL^e_{s_{v'} } | \psi_0 \rangle \text{.}
\end{equation}
Then, for the left-hand side of Eq.~(\ref{eqn:symloc}),
\begin{equation}
U_g  | \psi_e(s_{v v'} ) \rangle = g( \cL^e_{s_{v v'} } ) | \psi_0 \rangle = g( \cL^e_{s_{v} } ) g( \cL^e_{s_{v'} } ) | \psi_0 \rangle \text{.}
\end{equation}
The right-hand side of Eq.~(\ref{eqn:symloc}) can easily be verified after observing that
\begin{equation}
U^e_g (s_v) U^e_g( s_{v'} ) = U^e_g (s_v) U^e_g(v_0) U^e_g (s_{v'}) U^e_g( v_0 ) \text{,}
\end{equation}
since $ [ U^e_g(v_0) ]^2 = 1$.

Now, consider $| \psi^e(s) \rangle$, where $s$ has endpoints $v, v'$, with $v, v' \neq v_0$.  We have
\begin{equation}
| \psi^e(s) \rangle = c \cL^e_{s s_{v v'} } | \psi_e (s_{v v'} ) \rangle =  | \psi_e (s_{v v'} ) \rangle \text{,}
\end{equation}
where we used the fact that $| \psi_e(s_{v v'}) \rangle$ is an eigenstate of any closed $e$-string operator, and where $c = \pm 1$ is the eigenvalue of $\cL^e_{s s_{v v'} }$ acting on $| \psi_e(s_{v v'}) \rangle$.  The corresponding result holds when $s$ has endpoints $v_0, v$.  Therefore, Eq.~(\ref{eqn:symloc}) holds independent of the choice of $s$.

To consider uniqueness of the symmetry localization, it is convenient to use the form $U^e_g(v) = f^e_g(v) \cL^e_{s^e_g(v)}$.  The endpoints of $s^e_g(v)$ are fixed, but the path is otherwise arbitrary.  However, we are always free to deform the paths $s^e_g(v)$ to some fixed set of reference paths, since this only affects the overall phase factor $f^e_g(v)$.  Therefore it is enough to consider the redefinition  $U^e_g(v) \to \lambda(g, v) U^e_g(v)$.  We now show that Eq.~(\ref{eqn:symloc}) requires $\lambda(g,v) = \lambda(g)$, independent of $v$, which is precisely the general form of projective transformations.  Suppose there exist vertices $v_1, v_2$ with $\lambda(g,v_1) \neq \lambda(g, v_2)$.  Such a transformation changes the right-hand side of Eq.~(\ref{eqn:symloc}) by a minus sign for a state with $e$ particles at $v_1$ and $v_2$, and is not consistent.  Therefore the most general redefinition of the $e$-localization is the projective transformation $U^e_g(v) \to \lambda(g) U^e_g(v)$.

The obvious parallel discussion establishes the corresponding results for $m$-localizations.  The corresponding explicit form for the $m$-localization is
\begin{equation}
U^m_g(h) = \cL^m_{t_h} U^m_g(h_0) g( \cL^m_{t_h} ) \text{.}  \label{eqn:explicit-mloc}
\end{equation}
This is obtained following the above discussion upon replacing vertices by holes  ($v_0 \to h_0, v \to h$), and paths by cuts ($s_v \to t_h$).

\section{General constraints on symmetry classes in toric code
models}
\label{app:general-constraints}

\subsection{Toric codes without spin-orbital coupling}
\label{app:nosoc-constraints}

Here, we consider models in the family $TC_0(G)$, and  prove Theorem~\ref{thm:nosoc_maintext} stated in Section~\ref{sub:tc_woso}.

We first introduce some additional notation to be used below. Recalling that $a_{v}$
is the ground state eigenvalue of $A_{v}$, we define
$a_{X}=\prod_{v\in X}a_{v}$ for any finite subset $X \subset V$. If $t$ is
a simple closed cut (see Sec.~\ref{sec:genlatt} for a definition), we define $V_{t}=\left\{ v\in V| \mathscr{P}\left(v\right)\text{ is enclosed by }t\right\} $.
In addition, we define $\Gamma\left(g_{1},\dots,g_{n}\right)$ to be the set of vertices that are fixed under each of $g_1, \dots, g_n$.  That is,  $\Gamma(g_1, \dots, g_n) =\left\{ v\in V|g_{i}v=v,i=1,\cdots,n\right\}$.  To shorten various expressions, we write $g\left(\mathcal{O}\right)=U_{g}\mathcal{O}U_{g}^{-1}$
for the transformation of any local operator $\mathcal{O}$ under
$g\in G$, and define $R=P_{x}P_{xy}$ ($\pi/2$ counterclockwise rotation)
 and $P_{y}=P_{xy}P_{x}P_{xy}$ (reflection $y \to -y$).

In calculations below, we will use the $e$ and $m$ symmetry localizations given in Eqs.~(\ref{eqn:explicit-eloc}) and~(\ref{eqn:explicit-mloc}), and discussed in Appendix~\ref{app:eloc}.  In addition, we will often write equations like $\sigma^m_{pxpxy} = \cL^m_t$.  Such equations hold when acting on $| \psi_{0 m} \rangle$, or more generally on states created by acting on $| \psi_{0 m} \rangle$ with $m$-string operators, and should be interpreted in this way.

\begin{figure}
\includegraphics[width=0.8\columnwidth]{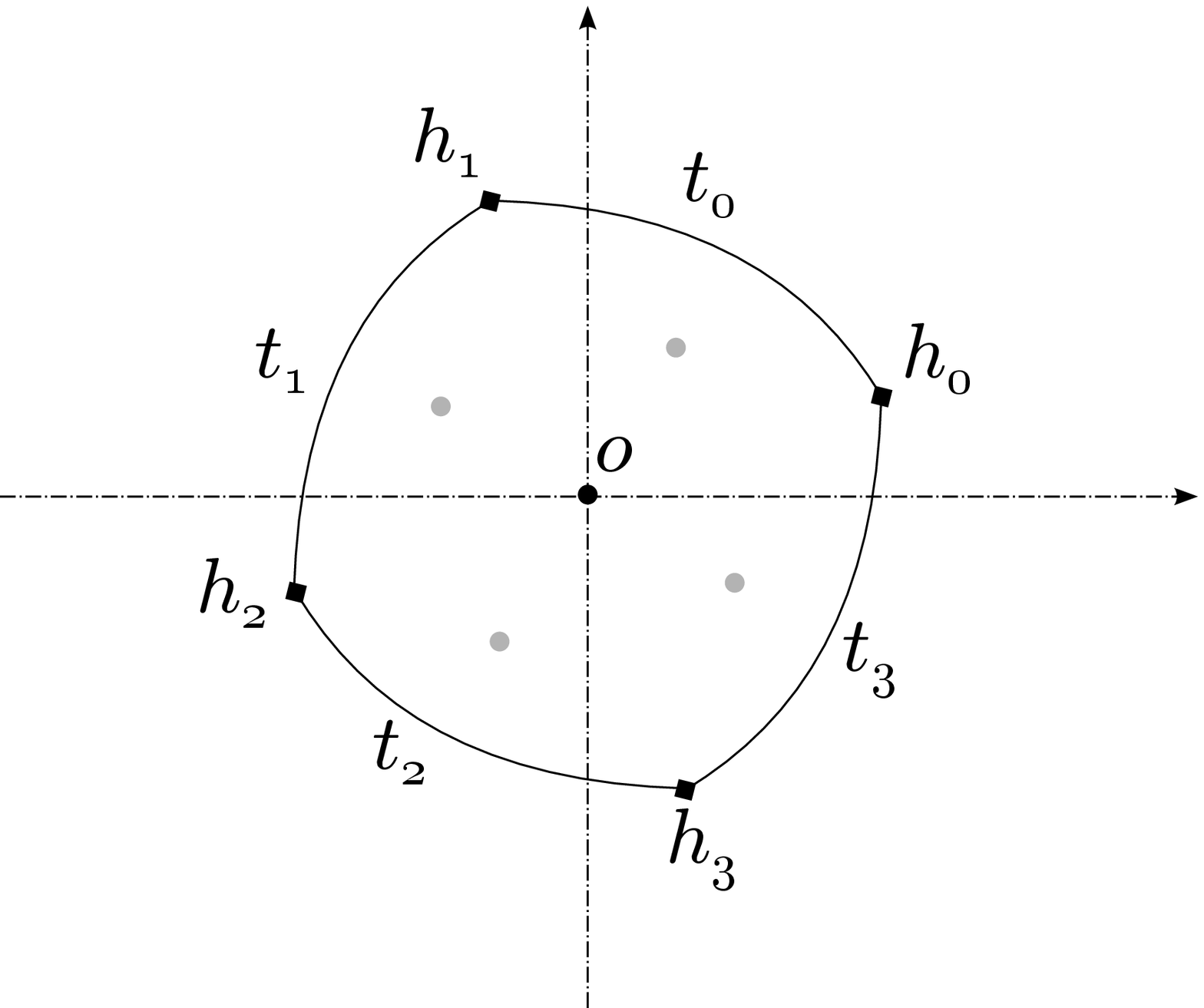}

\caption{  The calculation of $\sigma_{pxpxy}^{m}$. Put an $m$ particle at point
$h_{0}$, let $h_{j}=R^{j}\left(h_{0}\right),j=1,2,3$, let $t_{0}\in\bar{W}$
connecting $h_{0}$ to $h_{1}$ and $t_{j}=R^{j}t_{0}$. Then we have
$U_{P_{x}}^{m}U_{P_{xy}}^{m}\left(h_{0}\right)=f_{0}^{m}\mathcal{L}_{t_{0}}^{m}$
with $f_{0}^{m}\in\left\{ \pm1\right\} $ and $U_{P_{x}}^{m}U_{P_{xy}}^{m}\left(h_{j}\right)=\mathcal{L}_{t_{0}\cdots t_{j-1}}^{m}U_{P_{x}}^{m}U_{P_{xy}}^{m}\left(h_{0}\right)R\left(\mathcal{L}_{t_{0}\cdots t_{j-1}}^{m}\right)$
for $j=1,2,3$. With some calculation, $\left(U_{P_{x}}^{m}U_{P_{xy}}^{m}\right)^{4}\left(h_{0}\right)=\mathcal{L}_{t}^{m}$
with $t=t_{0}t_{1}t_{2}t_{3}$. Thus, $\sigma_{pxpxy}^{e}=a_{V_{t}}$.
If $\mathscr{P}\left(v\right)$ is enclosed by $t$ with $R^{2}v\neq v$,
then $v$, $Rv$, $R^{2}v$, $R^{3}v$ are four different vertices
enclosed by $t$ such as the grey vertices shown here. Since $\prod_{i=0}^{3}a_{R^{i}v}=1$,
we have $\sigma_{pxpxy}^{e}=a_{V_{t}}=a_{\mathscr{P}^{-1}\left(o\right)}=a_{\Gamma\left(R^{2}\right)}$.
The above statements are also true in the cases with spin-orbital coupling using the gauge choice described in Appendix~\ref{app:tcsoc}.
In the case without spin-orbital coupling, since $a_{v}=a_{P_{x}v}=a_{P_{xy}v}$,
we have $\sigma_{pxpxy}^{m}=a_{\Gamma\left(P_{x},P_{xy}\right)}$. }

\label{fig:m_rotation}
\end{figure}

\begin{lem}
\label{lm:mpxpxy}For any model in $TC_0(G)$,
we have $\sigma_{pxpxy}^{m}=a_{\mathscr{P}^{-1}\left(o\right)}=a_{\Gamma\left(P_{x},P_{xy}\right)}$,
where $o=\left(0,0\right)$. \end{lem}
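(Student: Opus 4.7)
My plan is to follow the strategy sketched in Figure~\ref{fig:m_rotation}: compute $\sigma^m_{pxpxy}$ by assembling the $m$-localization at the four $R$-rotation images of a chosen reference hole and showing the product collapses to the $m$-string on a closed cut encircling the $R$-fixed point $o$. Concretely, I would fix a hole $h_0$ with trivial $R$-stabilizer, set $h_j := R^j h_0$ for $j=0,1,2,3$, pick a simple cut $t_0$ from $h_0$ to $h_1$, and let $t_j := R^j t_0$. The composite $\tilde U(h) := U^m_{P_x}(P_{xy}h)\,U^m_{P_{xy}}(h)$ is a valid $R$-localization; since $\tilde U = \omega_m(P_x, P_{xy})\, U^m_R$ with $\omega_m(P_x, P_{xy}) \in \{\pm 1\}$, its fourth power $(\tilde U)^4(h_0)$ equals $\sigma^m_{pxpxy}$ acting on any state built from $|\psi_{0m}\rangle$ by $m$-string operators.

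Next I would invoke the explicit $m$-localization formula of Appendix~\ref{app:eloc}, yielding $\tilde U(h_j) = \mathcal{L}^m_{T_j}\, \tilde U(h_0)\, R(\mathcal{L}^m_{T_j})$ with $T_j := t_0 t_1 \cdots t_{j-1}$. Because there is no spin-orbit coupling, $R(\mathcal{L}^m_{T_j}) = \mathcal{L}^m_{R T_j} = \mathcal{L}^m_{t_1 \cdots t_j}$; since all $\sigma^x_\ell$ commute and square to the identity, the edges in the overlap $t_1 \cdots t_{j-1}$ of $T_j$ and $RT_j$ cancel in pairs, leaving $\tilde U(h_j) = f_0^m \mathcal{L}^m_{t_j}$, where $f_0^m$ is the sign in $\tilde U(h_0) = f_0^m \mathcal{L}^m_{t_0}$. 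Multiplying the four factors gives $(\tilde U)^4(h_0) = \mathcal{L}^m_t$ for the closed contractible cut $t := t_0 t_1 t_2 t_3$, which by $R$-invariance encircles $o$ exactly once. The standard identity $\mathcal{L}^m_t = \prod_{v \in V_t} A_v$ (each edge of $t$ is incident to exactly one vertex of $V_t$) then yields $\sigma^m_{pxpxy} = a_{V_t}$ acting on $|\psi_{0m}\rangle$.

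The last step is group-theoretic cancellation. In $TC_0(G)$, the relation $U_g A_v U_g^{-1} = A_{gv}$ together with $G$-invariance of $|\psi_{0m}\rangle$ forces $a_{gv} = a_v$ for all $g \in G$. Since $V_t$ is $R$-invariant and contains $\mathscr{P}^{-1}(o)$, partitioning $V_t$ into $R$-orbits and using $a_v^2 = 1$ kills every orbit of size $2$ or $4$, leaving $a_{V_t} = a_{\Gamma(R) \cap V_t} = a_{\Gamma(R)}$. Applying the same orbit-cancellation to the point-group action of $D_4 = \langle P_x, P_{xy}\rangle$ on the $D_4$-invariant sets $\Gamma(R)$ and $\mathscr{P}^{-1}(o)$ (all non-trivial orbits have even size $2, 4$, or $8$) reduces both to the $D_4$-fixed set $\Gamma(P_x, P_{xy})$, giving $a_{V_t} = a_{\Gamma(R)} = a_{\Gamma(P_x, P_{xy})} = a_{\mathscr{P}^{-1}(o)}$.

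The main bookkeeping obstacle will be ensuring generic choices: $t_0$ must be chosen so that $t = t_0 t_1 t_2 t_3$ is simple and encloses $o$ exactly once (so the operator identity $\mathcal{L}^m_t = \prod_{v\in V_t}A_v$ applies), and $h_0$ should be taken non-$R$-fixed; the degenerate case $Rh_0 = h_0$ forces $\tilde U(h_0)$ to be a cycle that may be chosen trivial, giving $\sigma^m_{pxpxy} = 1$, which matches $\mathscr{P}^{-1}(o) = \emptyset$ when $o$ lies inside a hole. Independence of the final answer from these choices is guaranteed by the uniqueness of the $m$-localization up to projective transformations. Finally, the projective factor $\omega_m(P_x, P_{xy})^4 = 1$ drops out trivially since it is $\pm 1$.
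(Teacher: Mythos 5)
Your proposal is correct and follows essentially the same route as the paper's proof: the four $R$-rotated holes $h_j$ and cuts $t_j$, the explicit $m$-localization of Appendix~\ref{app:eloc} telescoping to $(U^m_R)^4(h_0)=\mathcal{L}^m_t$ for the closed cut $t=t_0t_1t_2t_3$, the identity $\mathcal{L}^m_t=\prod_{v\in V_t}A_v$ giving $a_{V_t}$, and orbit cancellation under the point group using $a_{gv}=a_v$ (no spin-orbit coupling). The only cosmetic differences are that you build the $R$-localization as $U^m_{P_x}\circ U^m_{P_{xy}}$ rather than using $U^m_R$ directly, and you route the cancellation through $\Gamma(R)$ rather than $\Gamma(R^2)$; both variants are fine.
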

\begin{proof}
As shown in Fig.~\ref{fig:m_rotation},  consider an $m$ particle
located at an arbitrary hole $h_{0}$ and let $h_{j}=R^{j} h_{0} ,j=1,2,3$.
Then draw a cut $t\in\bar{W}$ connecting $h_{0}$ and $h_{1}$. Let
$t_{j}=R^{j} t_{0} ,j=1,2,3$ and $t= t_0 t_1 t_2 t_3$. The cuts are chosen so that $t$ is simple.
Then we choose $U_{R}^{e} \left(h_{0}\right)= \mathcal{L}_{t_{0}}^{m}$, and using the results of Appendix~\ref{app:eloc},
\begin{eqnarray}
U_{R}^{m}\left(h_{1}\right) & = & \mathcal{L}_{t_{0}}^{m} \mathcal{L}_{t_{0}}^{m }R\left(\mathcal{L}_{t_{0}}^{m}\right),\label{eq:r1}\\
U_{R}^{m}\left(h_{2}\right) & = & \mathcal{L}_{t_{0}t_{1}}^{m} \mathcal{L}_{t_{0}}^{m }R\left(\mathcal{L}_{t_{0}t_{1}}^{m}\right),\label{eq:r2}\\
U_{R}^{m}\left(h_{3}\right) & = & \mathcal{L}_{t_{0}t_{1}t_{2}}^{m} \mathcal{L}_{t_{0}}^{m}R\left(\mathcal{L}_{t_{0}t_{1}t_{2}}^{m}\right),\label{eq:r3}\\
\left(U_{R}^{m}\right)^{4}\left(h_{0}\right) & = & \mathcal{L}_{t_{0}}^{m}\mathcal{L}_{t_{2}}^{m}R\left(\mathcal{L}_{t_{0}}^{m}\mathcal{L}_{t_{2}}^{m}\right)=\mathcal{L}_{t}^{m}.\label{eq:mrotation}
\end{eqnarray}
Therefore, $\sigma_{pxpxy}^{m}=a_{V_{t}}$. For $v\in V_{t}$, if
$\mathscr{P}\left(v\right)\neq o$ or $R^{2}v\neq v$, then $v$,
$Rv$, $R^{2}v$, $R^{3}v$ are four different vertices in $V_{t}$, with $a_v = a_{R v} = \cdots$.
Then $\prod_{i=0}^{3}a_{R^{i}v}=1$, and these vertices do not contribute to
$a_{V_{t}}$. We have thus shown $\sigma_{pxpxy}^{m} = a_{V_{t}}=a_{\mathscr{P}^{-1}\left(o\right)}=a_{\Gamma\left(R^{2}\right)}$, part of the desired result.

For $v\in\Gamma\left(R^{2}\right)$, we have $P_{x}v,P_{xy}v,Rv\in\Gamma\left(R^{2}\right)$
since $R^{2}$ commutes with $P_{x}$, $P_{xy}$. Let $G_{o}$ denote
the subgroup generated by $P_{x}$, $P_{xy}$, which is the same as the subgroup fixing the origin $o$. Let $G_{o}v$ be
the orbit of $v$ under $G_{o}$ and $G_{v}=\left\{ g\in G|gv=v\right\} $.
Then $G_{o}v\subseteq\Gamma\left(R^{2}\right)$ and $G_{v}$ is a
subgroup of $G_{o}$. Because $\left|G_{o}\right|=8$ and $R^{2}v=v$,
we have $\left|G_{o}v\right|=\left|G_{o}/G_{v}\right|=1,2,4$. Now
with the assumption that there is no spin-orbital coupling, then $a_{v}=a_{P_{x}v}=a_{P_{xy}v}=a_{Rv}$
and hence $a_{G_{o}v}=1$ unless $\left|G_{o}v\right|=1$. Therefore,
$\sigma_{pxpxy}^{m}=a_{\Gamma\left(R^{2}\right)}=a_{\Gamma\left(P_{x},P_{xy}\right)}$.\end{proof}

\begin{lem}
\label{lm:o_corner}Let $\widetilde{P_{x}}=T_{x}P_{x}$ and $\sigma_{\widetilde{px}pxy}^{a}=\left(U_{\widetilde{P_{x}}}^{a}U_{P_{xy}}^{a}\right)^{4}$
for $a=e,m$. Then we have $\sigma_{\widetilde{px}pxy}^{a}=\sigma_{pxpxy}^{a}\sigma_{txty}^{a}$,
for $a=e,m$.\end{lem}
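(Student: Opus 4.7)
The plan is to lift the group identity $\widetilde{P_x}=T_{x}P_{x}$ into the projective representation, and then evaluate $\bigl(U_{\widetilde{P_{x}}}^{a}U_{P_{xy}}^{a}\bigr)^{4}$ by telescoping conjugation by the order-$4$ rotation $R=P_{x}P_{xy}$, uniformly in $a=e,m$. The first step is that $U_{\widetilde{P_{x}}}^{a}$ and $U_{T_{x}}^{a}U_{P_{x}}^{a}$ both implement the same group element $\widetilde{P_{x}}$ on single anyons, so by uniqueness of the $a$-localization up to projective transformations (Appendix~\ref{app:eloc}) they differ only by a sign $\phi^{a}\in\{\pm1\}$. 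Since $(\phi^{a})^{4}=1$, the problem reduces to showing $(U_{T_{x}}^{a}R_{a})^{4}=\sigma_{txty}^{a}\sigma_{pxpxy}^{a}$, where $R_{a}\equiv U_{P_{x}}^{a}U_{P_{xy}}^{a}$ already satisfies $R_{a}^{4}=\sigma_{pxpxy}^{a}$.

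Next I would telescope
\[
(U_{T_{x}}^{a}R_{a})^{4}=U_{T_{x}}^{a}\cdot\bigl(R_{a}U_{T_{x}}^{a}R_{a}^{-1}\bigr)\cdot\bigl(R_{a}^{2}U_{T_{x}}^{a}R_{a}^{-2}\bigr)\cdot\bigl(R_{a}^{3}U_{T_{x}}^{a}R_{a}^{-3}\bigr)\cdot R_{a}^{4}
\]
and use the group identities $RT_{x}R^{-1}=T_{y}$ and $R^{2}T_{x}R^{-2}=T_{x}^{-1}$, both immediate from $R(x,y)=(-y,x)$. Lifted to the projective representation, these read
\[
R_{a}U_{T_{x}}^{a}R_{a}^{-1}=\eta_{1}U_{T_{y}}^{a},\qquad R_{a}^{2}U_{T_{x}}^{a}R_{a}^{-2}=\tilde{\eta}_{2}\,(U_{T_{x}}^{a})^{-1},
\]
for signs $\eta_{1},\tilde{\eta}_{2}\in\{\pm1\}$ whose explicit values I would \emph{not} attempt to determine (they depend on non-canonical factor-set choices).

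The key observation — the one at the heart of the computation — is that the third conjugate is forced by the first two:
\[
R_{a}^{3}U_{T_{x}}^{a}R_{a}^{-3}=R_{a}\bigl(R_{a}^{2}U_{T_{x}}^{a}R_{a}^{-2}\bigr)R_{a}^{-1}=\tilde{\eta}_{2}\eta_{1}\,(U_{T_{y}}^{a})^{-1},
\]
obtained by inverting the first formula and using $\eta_{1}^{-1}=\eta_{1}$. Substituting all three conjugates back into the telescoping product, the sign prefactor is $\eta_{1}\cdot\tilde{\eta}_{2}\cdot\tilde{\eta}_{2}\eta_{1}=\eta_{1}^{2}\tilde{\eta}_{2}^{2}=1$, while the operators collapse to $U_{T_{x}}^{a}U_{T_{y}}^{a}(U_{T_{x}}^{a})^{-1}(U_{T_{y}}^{a})^{-1}\cdot R_{a}^{4}=\sigma_{txty}^{a}\sigma_{pxpxy}^{a}$, which is the desired identity.

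The main obstacle is purely bookkeeping. The individual conjugation phases $\eta_{1},\tilde{\eta}_{2}$ are not canonical, and one must resist the temptation to set them to unity. The clean point — and the reason the lemma holds for both $e$ and $m$ on equal footing — is that the $\zz$ coefficient structure forces every sign to appear in an even power, so these non-canonical phases automatically drop out of the final answer without ever needing to be computed.
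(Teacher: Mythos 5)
Your proof is correct and follows essentially the same route as the paper: the paper likewise writes $U^{a}_{\widetilde{P_x}}=\pm U^{a}_{T_x}U^{a}_{P_x}$, notes the sign is killed by the fourth power, and then asserts that commuting the $U^{a}_{T_x}$ factors to the left via the projective relations yields $\sigma^{a}_{txty}\sigma^{a}_{pxpxy}$. Your telescoping conjugation by $R_a$ (with the observation that the non-canonical signs $\eta_1,\tilde\eta_2$ each occur an even number of times) is simply a careful, correct execution of the step the paper calls ``straightforward.''
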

\begin{proof}
Since $U^a_{\widetilde{P_x}} = \pm U^a_{T_x} U^a_{P_x}$, we have $\left(U_{\widetilde{P_{x}}}^{a}U_{P_{xy}}^{a}\right)^{4}
= \left( U^a_{T_x} U^a_{P_x} U_{P_{xy}} \right)^4$.  It is then straightforward to bring the $U^a_{T_x}$ operators to the left side of this product, using the relations of Eqs.~(\ref{eqn:projpx}-\ref{eqn:projtypx}), and the result follows.
\end{proof}

\begin{rem*}
This lemma is valid even with spin-orbital coupling allowed.
\end{rem*}

\begin{lem}
\label{lm:mpxpxymtxty}For any model in $TC_0(G)$,
we have $\sigma_{pxpxy}^{m}\sigma_{txty}^{m}=a_{\mathscr{P}^{-1}\left(\widetilde{o}\right)}=a_{\Gamma\left(P_{xy},T_{x}P_{x}\right)}$,
where $\widetilde{o}=\left(\frac{1}{2},\frac{1}{2}\right)$.\end{lem}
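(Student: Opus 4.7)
The plan is to reduce this lemma to Lemma~\ref{lm:mpxpxy} by a change of origin. Specifically, Lemma~\ref{lm:o_corner} gives
\[
\sigma^m_{pxpxy}\sigma^m_{txty} \;=\; \sigma^m_{\widetilde{px}pxy} \;=\; \bigl(U^m_{\widetilde{P_x}} U^m_{P_{xy}}\bigr)^{4},
\]
where $\widetilde{P_x}=T_x P_x$. So it suffices to show $\sigma^m_{\widetilde{px}pxy}=a_{\mathscr{P}^{-1}(\widetilde{o})}=a_{\Gamma(P_{xy},\widetilde{P_x})}$. The key geometric observation is that $\widetilde{P_x}$ and $P_{xy}$ generate exactly the stabilizer of $\widetilde{o}$ in $G$, in complete analogy with the fact that $P_x$ and $P_{xy}$ generate the stabilizer of $o$. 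Indeed, a direct computation gives $\widetilde{R}:=\widetilde{P_x}P_{xy}: (x,y)\mapsto (1-y,x)$, which is the counterclockwise $\pi/2$ rotation about $\widetilde{o}=(\tfrac12,\tfrac12)$.

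With this in hand, I would repeat the $m$-localization construction in Fig.~\ref{fig:m_rotation} verbatim, replacing $R$ by $\widetilde{R}$ throughout. Pick any hole $h_0$, set $h_j=\widetilde{R}^j h_0$ for $j=1,2,3$, pick a cut $t_0$ joining $h_0$ to $h_1$, and let $t_j=\widetilde{R}^j t_0$ with $t=t_0 t_1 t_2 t_3$ chosen simple. The same manipulation that produced Eqs.~(\ref{eq:r1}--\ref{eq:mrotation}) gives $(U^m_{\widetilde{R}})^4(h_0)=\mathcal{L}^m_{t}$, whence $\sigma^m_{\widetilde{px}pxy}=a_{V_{t}}$, where now $V_t$ is symmetric under $\widetilde{R}$ and encloses $\widetilde{o}$ instead of $o$.

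To identify $a_{V_t}$ with $a_{\mathscr{P}^{-1}(\widetilde{o})}$, I would run the same orbit-counting argument as in Lemma~\ref{lm:mpxpxy}. For any $v\in V_t$ with $\widetilde{R}^2 v\neq v$, the orbit $\{v,\widetilde{R}v,\widetilde{R}^2 v,\widetilde{R}^3 v\}$ consists of four distinct vertices in $V_t$; the no spin-orbit coupling hypothesis yields $a_v=a_{\widetilde{R}^i v}$ for all $i$, so the orbit contributes $+1$ to $a_{V_t}$. This reduces $a_{V_t}$ to $a_{\Gamma(\widetilde{R}^2)}$, and the only vertices in $V_t$ fixed by $\widetilde{R}^2$ are those projecting to $\widetilde{o}$. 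A second orbit argument under the stabilizer $G_{\widetilde{o}}=\langle\widetilde{P_x},P_{xy}\rangle$ (which has order $8$, exactly as $G_o$ does) then shows that nontrivial orbits cancel in pairs by no spin-orbit coupling, leaving $a_{\Gamma(\widetilde{R}^2)}=a_{\Gamma(P_{xy},\widetilde{P_x})}$.

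There is no substantive obstacle here; the entire argument parallels Lemma~\ref{lm:mpxpxy}, and the only care required is in verifying that $\widetilde{R}$ really is a $\pi/2$ rotation about $\widetilde{o}$ (so that the figure-eight composition of cuts $t_0 t_1 t_2 t_3$ is again a closed simple cut encircling a single fixed point) and that the stabilizer of $\widetilde{o}$ is isomorphic in the required way to the stabilizer of $o$. Both are immediate from the coordinate calculation above, so the lemma follows by combining the shifted-origin analog of Lemma~\ref{lm:mpxpxy} with Lemma~\ref{lm:o_corner}.
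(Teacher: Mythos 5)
Your proposal is correct and follows essentially the same route as the paper: the paper's proof is precisely to repeat the argument of Lemma~\ref{lm:mpxpxy} with $P_x \to \widetilde{P_x} = T_x P_x$ and $o \to \widetilde{o}$, obtaining $\sigma^m_{\widetilde{px}pxy} = a_{\mathscr{P}^{-1}(\widetilde{o})} = a_{\Gamma(\widetilde{P_x},P_{xy})}$, and then invoke Lemma~\ref{lm:o_corner}. Your explicit verification that $\widetilde{R}=\widetilde{P_x}P_{xy}$ is the $\pi/2$ rotation about $\widetilde{o}$ and that its stabilizer has order $8$ is a useful sanity check that the paper leaves implicit, but it is the same argument.
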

\begin{proof}
Repeat the proof to Lemma~\ref{lm:mpxpxy}, replacing $P_{x} \to \widetilde{P_x} = T_x P_x$, $o \to \widetilde{o} = \left( \frac{1}{2} ,\frac{1}{2} \right)$, and $\sigma_{pxpxy}^{m} \to \sigma_{\widetilde{px}pxy}^{m}$, obtaining the result $\sigma_{\widetilde{px}pxy}^{m}= a_{ {\mathscr P}^{-1} (\widetilde{o} ) } = a_{\Gamma\left(\widetilde{P_{x}},P_{xy}\right)}$.  The desired result then follows immediately from Lemma~\ref{lm:o_corner}.
\end{proof}

\begin{lem}
\label{lm:mpxpxytypx}For any model in $TC_0(G)$,
we have $\sigma_{pxpxy}^{m}\sigma_{typx}^{m}=a_{\mathscr{P}^{-1}\left( (0,\frac{1}{2} ) \right)}=a_{\Gamma\left(P_{x},T_{y}P_{y}\right)}$.\end{lem}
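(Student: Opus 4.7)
My plan is to adapt the argument of Lemma~\ref{lm:mpxpxy}, but with the two-fold rotation $Q = T_y P_x P_y = T_y R^2$ (where $R = P_x P_{xy}$) about $\kappa = (0, 1/2)$ in place of the four-fold rotation $R$ about $o$; this substitution is forced because $\kappa$ is not a four-fold center of the square lattice. Since $Q^2 = 1$ in $G$, the strategy is to compute $(U^m_Q)^2$ in two independent ways---geometrically and algebraically---and equate them.

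\emph{Geometric side.} Paralleling Lemma~\ref{lm:mpxpxy}'s construction, I would place an $m$-particle at an arbitrary hole $h_0$, set $h_1 = Q h_0$, choose a cut $t_0$ from $h_0$ to $h_1$, and form $t = t_0(Q t_0)$, a simple closed cut encircling $\kappa$ drawn small enough to avoid the other isolated $Q$-fixed points on the $L \times L$ torus. By \eqref{eqn:explicit-mloc}, $U^m_Q(h_0) = \mathcal{L}^m_{t_0}$ and $U^m_Q(h_1) = \mathcal{L}^m_{Q t_0}$, hence $(U^m_Q)^2(h_0) = \mathcal{L}^m_t = a_{V_t}$ on the relevant states. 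For any $v \in V_t$ with $Q v \neq v$, the pair $\{v, Q v\} \subset V_t$ contributes $a_v a_{Q v} = a_v^2 = 1$ under the no-spin-orbit-coupling assumption. What remains is the product over $V_t \cap \Gamma(Q)$, which by locality of $t$ is precisely $\mathscr{P}^{-1}(\kappa) = \Gamma(P_x, T_y P_y)$.

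\emph{Algebraic side.} Choose the gauge $U^m_Q = U^m_{T_y}(U^m_R)^2$ (the remaining $\pm 1$ ambiguity squares away) and write
\begin{equation*}
(U^m_Q)^2 = U^m_{T_y}\bigl[(U^m_R)^2 U^m_{T_y}(U^m_R)^{-2}\bigr](U^m_R)^4 = \omega\, U^m_{T_y} U^m_{T_y^{-1}}\,\sigma^m_{pxpxy} = \omega\,\sigma^m_{pxpxy},
\end{equation*}
where $\omega$ is defined by $(U^m_R)^2 U^m_{T_y}(U^m_R)^{-2} = \omega\, U^m_{T_y^{-1}}$ and encodes the group identity $R^2 T_y R^{-2} = T_y^{-1}$. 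I would compute $\omega$ by expanding $R^2 = P_x P_y$ with the convention $U^m_{P_y} = U^m_{P_{xy}} U^m_{P_x} U^m_{P_{xy}}^{-1}$, pushing $U^m_{T_y}$ through the conjugation generator-by-generator via the elementary identities $P_{xy} T_y P_{xy}^{-1} = T_x$, $P_x T_x P_x^{-1} = T_x^{-1}$, $P_{xy} T_x P_{xy}^{-1} = T_y$, and $[P_x, T_y] = 1$, and invoking the consistency identity $\mu_1 \mu_2 = 1$ for the two $P_{xy}$-conjugation phases, forced by $U^m_{P_{xy}}^2 U^m_{T_y} U^m_{P_{xy}}^{-2} = U^m_{T_y}$. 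The bookkeeping yields $\omega = \sigma^m_{px}\,\sigma^m_{txpx}\,\sigma^m_{typx}$. Invoking the $TC_0(G)$ constraints $\sigma^m_{px} = \sigma^m_{txpx} = 1$ (the former proved in the main text, the latter established by the same pair-of-strings argument and appearing as part of the set $\mathsf{M}$ in Theorem~\ref{thm:nosoc_maintext}), this reduces to $\omega = \sigma^m_{typx}$, so that $(U^m_Q)^2 = \sigma^m_{pxpxy}\,\sigma^m_{typx}$.

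Combining the two expressions for $(U^m_Q)^2$ then yields $\sigma^m_{pxpxy}\,\sigma^m_{typx} = a_{\mathscr{P}^{-1}(\kappa)} = a_{\Gamma(P_x, T_y P_y)}$. The main obstacle is the algebraic side: whereas Lemmas~\ref{lm:mpxpxy} and~\ref{lm:mpxpxymtxty} exploited genuine four-fold rotational symmetry that could be accessed by relabeling a single reflection generator, $\kappa$ admits only a two-fold rotation in $G$, so the relation $Q^2 = 1$ has to be decomposed into its constituent basic relations and the resulting handful of derived $\mathbb{Z}_2$ projective phases must be tracked and shown to telescope. The collapse of the spurious factors $\sigma^m_{px}$ and $\sigma^m_{txpx}$ relies crucially on the no-spin-orbit-coupling constraints, consistent with the fact that the lemma is stated only for $TC_0(G)$.
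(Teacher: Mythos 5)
Your proof is correct, but it takes a genuinely different route from the paper's. The paper localizes $P_x$ and $T_y$ at four holes $h_0$, $P_xh_0$, $T_yh_0$, $T_yP_xh_0$ and evaluates the commutator directly, obtaining $\sigma^m_{typx}=\mathcal{L}^m_t=a_{V_t}$ for a closed cut $t$ enclosing a strip of the $y$-axis between $o$ and $\kappa$; this yields $\sigma^m_{typx}=a_{\mathscr{P}^{-1}(o)}a_{\mathscr{P}^{-1}(\kappa)}$, which is then divided by Lemma~\ref{lm:mpxpxy}. You instead exploit the site symmetry at $\kappa$ itself, computing the single invariant $(U^m_Q)^2$ for the two-fold rotation $Q=T_yR^2$ both geometrically (a small loop around $\kappa$, giving $a_{\mathscr{P}^{-1}(\kappa)}$ after pairing off non-fixed vertices) and algebraically. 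I have checked your factor-set bookkeeping: with $U_{P_{xy}}U_{T_y}U_{P_{xy}}^{-1}=\mu_1U_{T_x}$, $U_{P_{xy}}U_{T_x}U_{P_{xy}}^{-1}=\mu_2U_{T_y}$, $\mu_1\mu_2=1$, $U_{P_x}U_{T_x}U_{P_x}^{-1}=\sigma_{px}\sigma_{txpx}U_{T_x}^{-1}$ and $U_{P_x}U_{T_y}U_{P_x}^{-1}=\sigma_{typx}U_{T_y}$, one indeed gets $(U_R)^2U_{T_y}(U_R)^{-2}=\sigma_{px}\sigma_{txpx}\sigma_{typx}U_{T_y}^{-1}$ and hence $(U^m_Q)^2=\sigma^m_{px}\sigma^m_{txpx}\sigma^m_{pxpxy}\sigma^m_{typx}$, which collapses to $\sigma^m_{pxpxy}\sigma^m_{typx}$ using Lemma~\ref{clm:mpx} (whose proof is independent, so there is no circularity). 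Your approach buys a cleaner geometric picture --- $\kappa$ is isolated from the start, Lemma~\ref{lm:mpxpxy} is never invoked, and the general identity for $\omega_m(Q,Q)$ is a useful byproduct --- at the cost of the group-cohomological bookkeeping that the paper's more pedestrian commutator calculation avoids. One small imprecision: $V_t\cap\Gamma(Q)$, $\mathscr{P}^{-1}(\kappa)$ and $\Gamma(P_x,T_yP_y)$ are not equal as sets (vertices stacked at $\kappa$ may be permuted by $Q$, and a $Q$-fixed vertex need not be $P_x$-fixed); only the products $a_{(\cdot)}$ over them agree, via exactly the orbit-pairing argument you already use for $Q$, extended to the order-four stabilizer $\langle P_x, T_yP_y\rangle$ as in the last paragraph of the paper's proof.
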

\begin{proof}
As shown in Fig.~\ref{fig:m_pxty}, pick a hole $h_{0}$ near the
y-axis, let $h_{1}=P_{x} h_{0}, h_{2}=T_{y} h_{0}, h_{3}=T_{y} h_{1}$.
Draw a cut $t_{0}$ connecting $h_{0}$, $h_{1}$ and a cut $t_{1}$
joining $h_{0}$ and $h_{2}$. Let $t_{2}=P_{x} t_{1}$,
$t_{3}=T_{y} t_{0}$ and $t=t_{0}t_{1}t_{3}t_{2}$.  The cuts, some of which may contain no edges, are chosen so that $t$ is simple.
We choose $h_{0}$ and $t$ such that all vertices enclosed
by $t$ are located on the y-axis and no vertex located above $\left(0,\frac{1}{2}\right)$
is enclosed by $t$. 

We choose $U_{P_{x}}^{m}\left(h_{0}\right)= \mathcal{L}_{t_{0}}^{m}$,
$U_{T_{y}}^{m}\left(h_{0}\right)=\mathcal{L}_{t_{1}}^{m}$, then following Appendix~\ref{app:eloc} we can choose 
\begin{eqnarray*}
U_{T_{y}}^{m}\left(h_{1}\right) & = & \mathcal{L}_{t_{0}}^{m}\mathcal{L}_{t_{1}}^{m} \, T_{y} (\mathcal{L}_{t_{0}}^{m} ),\\
U_{P_{x}}^{m}\left(h_{2}\right) & = & \mathcal{L}_{t_{1}}^{m}\mathcal{L}_{t_{0}}^{m} \, P_{x}(\mathcal{L}_{t_{1}}^{m}) .
\end{eqnarray*}
These results can be used to evaluate the product $U_{T_{y}}^{m}U_{P_{x}}^{m}\left(U_{T_{y}}^{m}\right)^{-1}\left(U_{P_{x}}^{m}\right)^{-1}$ acting on a $m$ particle initially located at $h_3$.  We obtain
\begin{equation}
\sigma_{typx}^m = \mathcal{L}_{t_{0}}^{m}\mathcal{L}_{t_{1}}^{m}T_{y}\left(\mathcal{L}_{t_{0}}^{m}\right)P_{x}\left(\mathcal{L}_{t_{1}}^{m}\right).\label{eq:mtypx}
\end{equation}

So far we have not assumed the absence of spin-orbit coupling.  Now making this assumption, we have $P_{x}\left(\mathcal{L}_{t_{1}}^{m}\right)=\mathcal{L}_{t_{2}}^{m}$
and $T_{y}\left(\mathcal{L}_{t_{0}}^{m}\right)=\mathcal{L}_{t_{3}}^{m}$.
Thus, $\sigma_{typx}^{m}=\mathcal{L}_{t}^{m}=a_{V_{t}}$. If $v\in V_{t}$
and $\mathscr{P}\left(v\right) \neq o,\left(0,\frac{1}{2}\right)$, then
$v$, $P_{y}v$ are two different vertices in $V_{t}$ by construction.
Since $a_{v}=a_{P_{y}v}$, their product does not contribute to $a_{V_{t}}$.
So $\sigma_{typx}^{m}=a_{\mathscr{P}^{-1}\left(o\right)}a_{\mathscr{P}^{-1}\left(\left(0,\frac{1}{2}\right)\right)}$.
Lemma~\ref{lm:mpxpxy} says $\sigma_{pxpxy}^{m}=a_{\mathscr{P}^{-1}\left(o\right)}$.
Thus, $\sigma_{pxpxy}^{m}\sigma_{typx}^{m}=a_{\mathscr{P}^{-1}\left(\left(0,\frac{1}{2}\right)\right)}$, part of the result to be shown.

Let $\kappa=\left(0,\frac{1}{2}\right)$, $G_{\kappa}=\left\{ g\in G|g\kappa=\kappa\right\} $
and $G_{\kappa}v$ the orbit of $v$ under $G_{\kappa}$. Then $G_{\kappa}v\subseteq\mathscr{P}^{-1}\left(\kappa\right)$
if $\mathscr{P}\left(v\right)=\kappa$, and $G_{\kappa}$ is generated
by $P_{x}$, $T_{y}P_{y}$. In addition, $\left|G_{\kappa}\right|=4$
and hence $\left|G_{\kappa}v\right|=1,2,4$. Since $a_{v'}=a_{v}$
for $v'\in G_{\kappa}v$, we have $a_{G_{\kappa}v}=1$ unless $\left|G_{\kappa}v\right|=1$.
Thus, $\sigma_{pxpxy}^{m}\sigma_{typx}^{m}=a_{\mathscr{P}^{-1}\left(\kappa\right)}=a_{\Gamma\left(P_{x},T_{y}P_{y}\right)}$.
\end{proof}
\begin{figure}
\includegraphics[width=0.8\columnwidth]{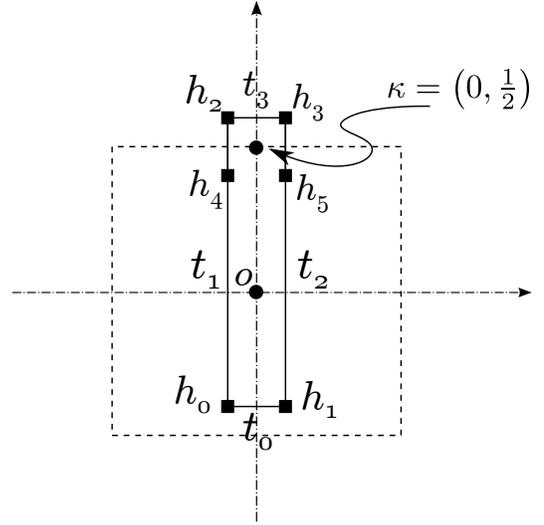}
\caption{Illustration of the calculation of $\sigma_{typx}^{m}$ in Lemmas~\ref{lm:mpxpxytypx} and~\ref{lm:mtypx_soc}. Solid squares denote the locations of holes $h \in H$, which are chosen so that $h_{0}$ is arbitrary (but near the y-axis), and $h_{1}=P_{x} h_{0} $, $h_{2}=T_{y} h_{0} $,
$h_{3}=T_{y} h_{1}$, $h_{4}=P_{y} h_{0}$,
$h_{5}=P_{y} h_{1}$.  $h_4$ and $h_5$ are not used in Lemma~\ref{lm:mpxpxytypx}.
Cuts are represented by solid lines, and $\widetilde{h_0 h_1}$ denotes, for example, a cut joining $h_0$ to $h_1$.  The cuts
$t_0 = \widetilde{h_{0}h_{1}}$, $t_1 = \widetilde{h_{0}h_{2}}$, $t_2 = \widetilde{h_{1}h_{3}}$, and $t_3 = \widetilde{h_{2}h_{3}}$ are labeled, and are chosen to have properties described in the text.  The points $o = (0,0)$ and $\kappa = (0, \frac{1}{2})$ are shown.}
\label{fig:m_pxty}
\end{figure}

\begin{lem}
\label{clm:mpx}For any model in $TC_0(G)$, we
have $\sigma_{px}^{m}=\sigma_{pxy}^{m}=\sigma_{txpx}^{m}=1$.\end{lem}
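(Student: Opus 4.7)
The plan is to generalize the illustrative argument in Sec.~\ref{sub:tc_woso} (which showed $\sigma^m_{px}=1$) so as to handle $\sigma^m_{pxy}$ and $\sigma^m_{txpx}$ in parallel. The common feature is that each of $P_x$, $P_{xy}$, and $\widetilde{P_x}\equiv T_x P_x$ is an involution in $G$ by Eqs.~(\ref{eq:px}), (\ref{eq:pxy}), (\ref{eq:txpx}), acting as a reflection on the planar projection $T^2$ through the axes $x=0$, $y=x$, and $x=\tfrac12$, respectively. The case $\sigma^m_{txpx}$ reduces to the involution $\widetilde{P_x}$ by the following observation: since $\widetilde{P_x}=T_xP_x$ as a group element, the two $m$-localizations differ only by a sign, $U^m_{\widetilde{P_x}} = \omega_m(T_x,P_x)\,U^m_{T_x}U^m_{P_x}$, and squaring kills this sign, yielding $(U^m_{\widetilde{P_x}})^2=(U^m_{T_x}U^m_{P_x})^2=\sigma^m_{txpx}$. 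So it suffices to show $(U^m_g)^2=1$ for each $g\in\{P_x,P_{xy},\widetilde{P_x}\}$.

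For each such $g$, fix an arbitrary hole $h_0\in H$. If $gh_0=h_0$, we may choose $U^m_g(h_0)=1$, giving $(U^m_g)^2(h_0)=1$ immediately. Otherwise $h_1:=gh_0\neq h_0$; the plan is to construct a simple cut $t$ from $h_0$ to $h_1$ with $gt=t$ (viewed as a multiset of edges), whose existence is discussed in the next paragraph. Setting $U^m_g(h_0)=\cL^m_t$, the explicit formula from Appendix~\ref{app:eloc} yields
\begin{equation*}
U^m_g(h_1) \;=\; \cL^m_t\cdot U^m_g(h_0)\cdot g(\cL^m_t) \;=\; \cL^m_t\,\cL^m_t\,\cL^m_{gt} \;=\; \cL^m_t,
\end{equation*}
where the identity $g(\cL^m_t)=\cL^m_{gt}$ is exactly where the absence of spin-orbit coupling enters. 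Consequently
\begin{equation*}
(U^m_g)^2(h_0) \;=\; U^m_g(h_1)\,U^m_g(h_0) \;=\; (\cL^m_t)^2 \;=\; 1,
\end{equation*}
which, by the $h$-independence of such $\sigma$ parameters established in Sec.~\ref{sec:fc-in-solvable-models}, shows $\sigma^m_{g^2}=1$ for the relation $g^2=1$.

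The only genuinely nontrivial step is the construction of a simple, $g$-invariant cut $t$ from $h_0$ to $h_1=gh_0$, and I expect this to be the sole obstacle worth discussing. The construction is geometric: since $g$ is a reflection on $T^2$, I would draw a simple curve $\gamma_0$ in $T^2$ from $h_0$ to a point $p$ on the reflection axis of $g$, and then concatenate $\gamma_0$ with its image $g\gamma_0$ to form a curve $\gamma=\gamma_0\cup g\gamma_0$ from $h_0$ to $h_1$ that is setwise invariant under $g$. Because $\mathscr{P}(\mathcal{G})$ has only finitely many vertices and edges, and because one can perturb $\gamma_0$ freely away from the axis, a generic choice avoids all $\mathscr{P}(v)$, intersects each edge transversally in at most finitely many points, meets the axis transversally at the single point $p$, and has no self-intersections; the resulting cut $t$ inherits $g$-invariance because both its defining curve and the underlying graph are $g$-symmetric. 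With this geometric input secured, the three statements $\sigma^m_{px}=\sigma^m_{pxy}=\sigma^m_{txpx}=1$ follow by the uniform calculation above.
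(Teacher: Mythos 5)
Your proof is correct and follows essentially the same route as the paper's: for each of the involutions $P_x$, $P_{xy}$, $T_xP_x$, draw a simple $g$-invariant cut $t$ from $h_0$ to $gh_0$, use the explicit $m$-localization of Appendix~\ref{app:eloc} together with $g(\cL^m_t)=\cL^m_{gt}=\cL^m_t$ (the no-spin-orbit-coupling input) to get $U^m_g(h_1)=\cL^m_t$, and conclude $(U^m_g)^2=(\cL^m_t)^2=1$. The extra details you supply — the explicit geometric construction of the invariant cut by reflecting a half-curve through the axis, and the observation that $(U^m_{T_xP_x})^2=(U^m_{T_x}U^m_{P_x})^2$ because the $\zz$ factor set squares away — are correct elaborations of steps the paper leaves implicit.
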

\begin{proof}
Given a hole $h_0$, let $h_1 = g h_0$, where $g=P_{x},P_{xy}$, or $T_{x}P_{x}$.
We can always draw a simple cut joining $h_0$ to $h_1$ so that $g t = t$.  We choose
$U_{g}^{m}\left(h_{0}\right)= \mathcal{L}_{t}^{m}$, and by Appendix~\ref{app:eloc} we can choose
\[
U_{g}^{m}\left(h_{1}\right)=\mathcal{L}_{t}^{m} \mathcal{L}_{t}^{m}g\left(\mathcal{L}_{t}^{m}\right)= \mathcal{L}_{t}^{m},
\]
where we used the assumption of no spin-orbit coupling.
Then $U_{g}^{m}\left(h_{1}\right)U_{g}^{m}\left(h_{0}\right)=1$.  We place a $m$ particle at $h_0$, and compute $(U^m_g)^2$ acting on this $m$ particle, finding $(U^m_g)^2(h_0) = U^m_g(h_1) U^m_g(h_0) = 1$.  Thus, 
$\sigma_{px}^{m}=\sigma_{pxy}^{m}=\sigma_{txpx}^{m}=1$.\end{proof}

\begin{lem}
\label{lm:e_inv}Suppose $g\in G$ such that $g^{2}=1$ and there
is $v\in V$ such that $gv=v$. Then $\left(U_{g}^{e}\right)^{2}=1$. \end{lem}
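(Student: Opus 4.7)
The plan is to compute $\sigma^e_g := (U^e_g)^2$ by placing the $e$ particle at the given fixed vertex $v$. By the general analysis in Sec.~\ref{sec:fc-in-solvable-models}, this quantity is independent of the vertex on which it is evaluated and is invariant under projective transformations of the localization, so any single convenient evaluation point suffices to determine it.

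First I would exploit the construction freedom guaranteed by Appendix~\ref{app:eloc}: for a fixed $g \in G$, one may single out an arbitrary reference vertex $v_0$, choose any path $s^e_g(v_0)$ from $v_0$ to $gv_0$, and pick the sign $f^e_g(v_0) \in \{\pm 1\}$ freely, with these data determining an admissible $e$-localization via Eq.~(\ref{eqn:explicit-eloc}). I would take $v_0 = v$. Because $gv = v$, an allowable choice of path $s^e_g(v)$ is the null (empty) path, in which case $\cL^e_{s^e_g(v)} = 1$; combined with $f^e_g(v) = 1$, this gives the particularly simple value $U^e_g(v) = 1$.

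Next, I would apply the cocycle identity $U^e_{g_1}(g_2 v') U^e_{g_2}(v') = \omega_e(g_1,g_2)\, U^e_{g_1 g_2}(v')$ from Sec.~\ref{sec:fc-in-solvable-models} at $g_1 = g_2 = g$ and $v' = v$. Using $g^2 = 1$ and the standard normalization $U^e_1(v) = 1$, this specializes to $\sigma^e_g = \omega_e(g,g) = U^e_g(gv)\, U^e_g(v) = U^e_g(v)^2 = 1 \cdot 1 = 1$. Invoking the vertex-independence of $\sigma^e_g$ established in Sec.~\ref{sec:fc-in-solvable-models} then upgrades this single computation to the operator identity $(U^e_g)^2 = 1$ on all single-$e$-particle states, which is the assertion of the lemma.

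There is essentially no obstacle to this argument: the entire content is the observation that the admissibility of $s^e_g(v) = \emptyset$ (which requires exactly the hypothesis $gv = v$) forces a trivial value of the localization at $v$, and the rest is the standard cocycle relation. I note that the argument nowhere uses the no-spin-orbit-coupling assumption of the surrounding subsection, so the conclusion holds for all models in $TC(G)$.
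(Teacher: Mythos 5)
Your proof is correct and follows essentially the same route as the paper's: the whole content is that the hypothesis $gv=v$ lets one take $s^e_g(v)$ to be a cycle or null path, so that $U^e_g(v)$ acts as a scalar $\pm 1$ on the relevant states and its square is $1$, after which vertex-independence of the $\sigma$ parameters gives the operator identity. The paper simply writes $(U^e_g)^2(v)=(U^e_g(v))^2=1$ without normalizing $U^e_g(v)$ to $+1$, but this is an immaterial difference, and your closing remark that the argument is independent of the no-spin-orbit-coupling assumption matches the paper's own remark following the lemma.
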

\begin{proof}
Because $gv=v$, we have $U_{g}^{e}\left(v\right)=1$ or $-1$. So
\[
\left(U_{g}^{e}\right)^{2}\left(v\right)=\left(U_{g}^{e}\left(v\right)\right)^{2}=1.
\]
Therefore, $\left(U_{g}^{e}\right)^{2}=1$.\end{proof}

\begin{rem*}
This lemma is valid even with spin-orbital coupling allowed.
\end{rem*}

\begin{thm}
The TC symmetry classes in $\mathsf{A}$, $\mathsf{B}$, $\mathsf{C}$,
$\mathsf{M}$, $\mathsf{M_{1}}$, $\mathsf{M_{2}}$ and $\mathsf{M_{3}}$
are not realizable in $TC_0(G)$, where
\begin{eqnarray*}
\mathsf{A} & = & \left\{ \sigma_{pxpxy}^{e}=\sigma_{pxpxy}^{m}=-1\right\} ,\\
\mathsf{B} & = & \left\{ \sigma_{pxpxy}^{e}\sigma_{txty}^{e}=\sigma_{pxpxy}^{m}\sigma_{txty}^{m}=-1\right\} ,\\
\mathsf{C} & = & \left\{ \sigma_{pxpxy}^{e}\sigma_{typx}^{e}=\sigma_{pxpxy}^{m}\sigma_{typx}^{m}=-1\right\} ,\\
\mathsf{M} & = & \left\{ \sigma_{px}^{m}=-1\vee\sigma_{pxy}^{m}=-1\vee\sigma_{txpx}^{m}=-1\right\} ,\\
\mathsf{M_{1}} & = & \left\{ \sigma_{pxpxy}^{m}=-1\wedge\left(\sigma_{px}^{e}=-1\vee\sigma_{pxy}^{e}=-1\right)\right\} ,\\
\mathsf{M_{2}} & = & \left\{ \sigma_{pxpxy}^{m}\sigma_{txty}^{m}=-1\wedge\left(\sigma_{pxy}^{e}=-1\vee\sigma_{txpx}^{e}=-1\right)\right\} ,\\
\mathsf{M_{3}} & = & \left\{ \sigma_{pxpxy}^{m}\sigma_{typx}^{m}=-1\wedge\left(\sigma_{px}^{e}=-1\vee\sigma_{txpx}^{e}=-1\right)\right\} .
\end{eqnarray*}
Here $\wedge$, $\vee$ are the logical symbols for ``and'' and
``or'' respectively.  

This leaves 95 TC symmetry classes not ruled out by the above constraints, corresponding to 82 symmetry classes under $e \leftrightarrow m$ relabeling.  In addition, all these 95 TC symmetry classes are realized by models in $TC_0(G)$. \end{thm}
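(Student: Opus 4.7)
The plan is to break the theorem into three claims and attack them in sequence: (i) non-realizability of every TC class in $\mathsf{A}, \mathsf{B}, \mathsf{C}, \mathsf{M}, \mathsf{M}_1, \mathsf{M}_2, \mathsf{M}_3$; (ii) the count of $95$ surviving TC classes and $82$ symmetry classes under $e\leftrightarrow m$; (iii) exhibition of an explicit $TC_0(G)$ model for each of the $95$ survivors. The core uniform tool for (i) is what I will call the \emph{rotation-center collapse}: whenever one of the preceding lemmas in Appendix~\ref{app:nosoc-constraints} identifies an $m$-particle quantity with $a_{\mathscr{P}^{-1}(r)}$ for a special point $r\in\{o,\tilde o,\kappa\}$, requiring that quantity to equal $-1$ forces an odd number of vertices of $\mathscr{P}^{-1}(r)$ with $a_{v_0}=-1$, so there exists a concrete $v_0$ fixed by all generators fixing $r$; at such $v_0$ the explicit $e$-localization of Appendix~\ref{app:eloc} can be built from null paths, so the associated $U^e_g(v_0)$ are scalar $\pm 1$'s and the matching $\sigma^e$-combination collapses to $+1$, contradicting the assumed value $-1$.

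Applying this mechanism, $\mathsf{M}$ is immediate from Lemma~\ref{clm:mpx}. For $\mathsf{A}$, Lemma~\ref{lm:mpxpxy} supplies $v_0\in\Gamma(P_x,P_{xy})$ and hence $\sigma^e_{pxpxy}=(U^e_R)^4(v_0)=1$. For $\mathsf{B}$, Lemma~\ref{lm:mpxpxymtxty} supplies $v_0\in\Gamma(P_{xy},\widetilde{P_x})$ and Lemma~\ref{lm:o_corner} converts the relevant $e$-quantity to $(U^e_{\widetilde{P_x}P_{xy}})^4(v_0)=1$. For $\mathsf{C}$, Lemma~\ref{lm:mpxpxytypx} supplies $v_0\in\Gamma(P_x,T_yP_y)$; I will use that $h:=P_xT_yP_y$ is the $\pi$-rotation about $\kappa$ with $h^2=1$, so $(U^e_h)^2(v_0)=1$, and a short rearrangement of the projective relations parallel to Lemma~\ref{lm:o_corner} identifies this with $\sigma^e_{pxpxy}\sigma^e_{typx}$. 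For $\mathsf{M}_1, \mathsf{M}_2, \mathsf{M}_3$ the same $v_0$ is fixed by two involutions of $G$, so Lemma~\ref{lm:e_inv} directly kills the targeted individual $\sigma^e$'s; the only subtle case is $\sigma^e_{txpx}$ in $\mathsf{M}_3$, which I will obtain from the conjugation identity $T_yP_y=P_{xy}(T_xP_x)P_{xy}^{-1}$, implying $(U^e_{T_yP_y})^2=\sigma^e_{txpx}$ (the $U^e_{P_{xy}}$-conjugation is trivial on the central element $\sigma^e_{txpx}\in\{\pm 1\}$), and then Lemma~\ref{lm:e_inv} applied at $v_0$.

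For (ii), I will partition the survivors by $k\in\{0,1,2,3\}$, the number of $-1$'s among the three matrix entries $\sigma^m_{pxpxy}$, $\sigma^m_{pxpxy}\sigma^m_{txty}$, $\sigma^m_{pxpxy}\sigma^m_{typx}$, which is the natural index because $\mathsf{A}, \mathsf{B}, \mathsf{C}$ lock the matching $\sigma^e$ entries and $\mathsf{M}_1, \mathsf{M}_2, \mathsf{M}_3$ lock further individual $\sigma^e$'s whenever those $\sigma^m$ entries are negative. Reading off the induced patterns yields $|\mathsf{D}_0|=2^6$, $|\mathsf{D}_1|=3\cdot 2^3$, $|\mathsf{D}_2|=3\cdot 2$, $|\mathsf{D}_3|=1$, summing to $95$. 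To pass to symmetry classes I will list the TC classes with $[\omega_e]=[\omega_m]$ (which are not double-counted) and pair off the rest, verifying by direct enumeration that exactly $13$ two-element orbits occur under $e\leftrightarrow m$, hence $95-13=82$ symmetry classes.

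For (iii), I will check that the models of Fig.~\ref{fig:octagon} (covering $\mathsf{D}_0$), Fig.~\ref{fig:tc0}(a--c) ($\mathsf{D}_1$), Fig.~\ref{fig:tc0}(d,e) together with their origin-shifted variants via~\eqref{eq:move_origin} ($\mathsf{D}_2$), and Fig.~\ref{fig:tc0}(f) ($\mathsf{D}_3$) realize every admissible matrix pattern as the signs $K^e_v, K^m_p\in\{\pm 1\}$ are independently tuned; each verification is a direct $\sigma$-parameter computation along the lines of the $\sigma^e_{px}=b_1$ calculation in Section~\ref{sec:ep-model}. The main obstacle, in my view, is the $\mathsf{C}$ case of part (i): the element $h=P_xT_yP_y$ is not a chosen generator, and rewriting $(U^e_h)^2$ as $\sigma^e_{pxpxy}\sigma^e_{typx}$ requires carefully tracking chained $\omega_e$ factors, using the reduction $h=T_yR^2$ and the commutation $P_xT_y=T_yP_x$ to shuffle $U^e$-operators past one another while accumulating the correct signs. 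The $13$-duplication enumeration in part (ii) is the secondary bookkeeping hazard.
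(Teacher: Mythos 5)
Your proposal reproduces the paper's architecture almost exactly. Parts (ii) and (iii) coincide with the paper: the partition of the survivors by the number of $-1$'s among $\sigma^m_{pxpxy}$, $\sigma^m_{pxpxy}\sigma^m_{txty}$, $\sigma^m_{pxpxy}\sigma^m_{typx}$ giving $64+24+6+1=95$, the $95-13=82$ count, and the same catalog of models in Figs.~\ref{fig:octagon} and~\ref{fig:tc0}. In part (i), your treatment of $\mathsf{M}$, $\mathsf{A}$, $\mathsf{B}$, $\mathsf{M_1}$, $\mathsf{M_2}$, $\mathsf{M_3}$ via Lemmas~\ref{lm:mpxpxy}, \ref{lm:mpxpxymtxty}, \ref{lm:mpxpxytypx}, \ref{clm:mpx} and~\ref{lm:e_inv} is the paper's proof; your conjugation identity $T_yP_y=P_{xy}(T_xP_x)P_{xy}^{-1}$ for $\sigma^e_{txpx}$ in $\mathsf{M_3}$ is equivalent to the paper's observation that $P_{xy}v_0$ is fixed by $T_xP_x$.

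The one genuinely different step is $\mathsf{C}$, and there your key identity is wrong as stated. The paper proves $\sigma^e_{pxpxy}\sigma^e_{typx}=1$ by an explicit string computation, showing $\sigma^e_{typx}\sigma^e_{pxpxy}=\cL^e_{s_2P_xs_1}R\bigl(\cL^e_{s_2P_xs_1}\bigr)=1$ because $s_2P_xs_1$ is a closed path. You instead want to apply the fixed-point argument to the involution $h=P_xT_yP_y=T_yR^2$ and claim $(U^e_h)^2=\sigma^e_{pxpxy}\sigma^e_{typx}$. Carrying out the "rearrangement of projective relations" you invoke — writing $U^e_h\propto U^e_{T_y}(U^e_{P_x}U^e_{P_{xy}})^2$ and commuting operators using $(U^e_{P_x})^2=\sigma^e_{px}$, $U^e_{P_x}U^e_{T_x}U^e_{P_x}\propto\sigma^e_{px}\sigma^e_{txpx}(U^e_{T_x})^{-1}$, $U^e_{P_{xy}}U^e_{T_x}U^e_{P_{xy}}\propto U^e_{T_y}$, $(U^e_{P_x}U^e_{P_{xy}})^4=\sigma^e_{pxpxy}$ and $U^e_{T_y}U^e_{P_x}(U^e_{T_y})^{-1}(U^e_{P_x})^{-1}=\sigma^e_{typx}$ — actually yields
\begin{equation*}
\left(U^e_{T_yR^2}\right)^2=\sigma^e_{px}\,\sigma^e_{txpx}\,\sigma^e_{pxpxy}\,\sigma^e_{typx}\text{,}
\end{equation*}
not $\sigma^e_{pxpxy}\sigma^e_{typx}$. (This corrected identity is consistent with the paper's $\mathsf{C}'$ constraint in Theorem~\ref{thm:soc_maintext}, where the hypotheses $\sigma^e_{px}=\sigma^e_{txpx}=-1$ flip the sign of the conclusion.) Your argument for $\mathsf{C}$ still closes, because the same vertex $v_0\in\Gamma(P_x,T_yP_y)$ supplied by Lemma~\ref{lm:mpxpxytypx} forces $\sigma^e_{px}=1$ and $\sigma^e_{txpx}=1$, so the two extra factors are unity and $(U^e_h)^2(v_0)=1$ does give $\sigma^e_{pxpxy}\sigma^e_{typx}=1$. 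But you must state and use the corrected identity; the shortcut "parallel to Lemma~\ref{lm:o_corner}" that lands directly on $\sigma^e_{pxpxy}\sigma^e_{typx}$ does not exist, and as written this is the one step of the proposal that would not survive refereeing.
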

\begin{proof}
The unrealizability of $\mathsf{M}$ is a restatement of Lemma~\ref{clm:mpx}. 

To prove the unrealizability of $\mathsf{A}$ and $\mathsf{M_{1}}$,
 suppose $\sigma_{pxpxy}^{m}=-1$, then Lemma~\ref{lm:mpxpxy}
implies that there is $v\in V$  fixed under $P_{x}$, $P_{xy}$ and
hence fixed under $R$. So $\sigma_{pxpxy}^{e}=\sigma_{px}^{e}=\sigma_{pxy}^{e}=1$
by Lemma~\ref{lm:e_inv} and hence the TC symmetry classes in $\mathsf{A}$
and $\mathsf{M_{1}}$ are unrealizable. The unrealizability of 
$\mathsf{B}$ and $\mathsf{M_{2}}$ follows from an almost identical argument, using
Lemma~\ref{lm:mpxpxymtxty} and Lemma~\ref{lm:e_inv}.

To prove the unrealizability of $\mathsf{C}$ and $\mathsf{M_{3}}$,
suppose $\sigma_{pxpxy}^{m}\sigma_{typx}^{m}=-1$. Then Lemma~\ref{lm:mpxpxytypx}
implies there exists $v_{0}\in V$  fixed under $P_{x}$, $T_{y}P_{y}$.  It follows that $P_{xy} v_0$ is a vertex
fixed under $T_x P_x$.  Therefore, $\sigma_{px}^{e}=\sigma_{txpx}^{e}=1$
by Lemma~\ref{lm:e_inv}. So the TC symmetry classes in $\mathsf{M_{3}}$
are not realizable in $TC_0(G)$. Further, for
the unrealizability of $\mathsf{C}$, let $v_{j}=R^{j} v_{0} $
for $j=1,2,3$, pick $s_{0}\in W$ joining $v_{0}$ to $v_{1}$,
and let $s_{j}=R^{j} s_{0} $ for $j=1,2,3$. We choose $U_{P_{x}}^{e}\left(v_{0}\right)= 1$,
$U_{R}^{e}\left(v_{0}\right)= \mathcal{L}_{s_{0}}^{e}$ and $U_{T_{y}}^{e}\left(v_{2}\right)=\mathcal{L}_{s_{0}s_{1}}^{e}$.
Using Appendix~\ref{app:eloc}, we have $U_{R}^{e}\left(v_{j}\right)=  \mathcal{L}_{s_{j}}^{e}$
for $j=1,2,3$, and $U_{P_{x}}^{e}\left(v_{2}\right)=\mathcal{L}_{s_{0}s_{1}P_{x}\left(s_{1}s_{0}\right)}^{e}$.
Thus, 
\begin{eqnarray*}
\sigma^e_{pxpxy}=\left(U_{R}^{e}\right)^{4}\left(v_{0}\right) & = & \mathcal{L}_{s_{0}s_{1}s_{2}s_{3}}^{e},\\
\sigma^e_{typx} = U_{T_{y}}^{e}U_{P_{x}}^{e}\left(U_{T_{y}}^{e}\right)^{-1}\left(U_{P_{x}}^{e}\right)^{-1}\left(v_{2}\right) & = & \mathcal{L}_{s_{0}s_{1}P_{x}\left(s_{1}s_{0}\right)}^{e}.
\end{eqnarray*}
Therefore, 
\begin{eqnarray*}
\sigma_{typx}^{e}\sigma_{pxpxy}^{e} & = & \mathcal{L}_{s_{0}s_{1}P_{x}\left(s_{1}s_{0}\right)}^{e}\mathcal{L}_{s_{0}s_{1}s_{2}s_{3}}^{e}\\
 & = & \mathcal{L}_{s_{2}P_{x}s_{1}}^{e}\mathcal{L}_{s_{3}P_{x}s_{0}}^{e}\\
 & = & \mathcal{L}_{s_{2}P_{x}s_{1}}^{e}R\left(\mathcal{L}_{s_{2}P_{x}s_{1}}^{e}\right)=1 ,
\end{eqnarray*}
where the last equality holds because $s_2 P_x s_1$ (and hence also $s_3 P_x s_0$) is a closed path.
 In short, $\sigma_{typx}^{m}\sigma_{pxpxy}^{m}=-1$ implies $\sigma_{typx}^{e}\sigma_{pxpxy}^{e}=1$.
So the TC symmetry classes in $\mathsf{C}$ are not realizable in $TC_0(G)$.

The statements about counting and realization of symmetry classes are proved in Sec.~\ref{sub:tc_woso}.
\end{proof}

\subsection{Toric codes with spin-orbit coupling}
\label{app:tcsoc}

We now allow for spin-orbit coupling and consider models in the family $TC(G)$.  We prove the following Theorem, which was also stated in Sec.~\ref{sub:tc_so}:
 
\begin{thm}
\label{thm:sc_soc}The TC symmetry classes in $\mathsf{P_{1}}$, $\mathsf{P_{2}}$,
$\mathsf{P_{3}}$, $\mathsf{A}$, $\mathsf{B}$ and $\mathsf{C}'$
are not realizable in $TC\left(G\right)$, where
\begin{eqnarray*}
\mathsf{P_{1}} & = & \left\{ \sigma_{px}^{e}=\sigma_{px}^{m}=-1\right\} ,\\
\mathsf{P_{2}} & = & \left\{ \sigma_{pxy}^{e}=\sigma_{pxy}^{m}=-1\right\} ,\\
\mathsf{P_{3}} & = & \left\{ \sigma_{txpx}^{e}=\sigma_{txpx}^{m}=-1\right\} ,\\
\mathsf{A} & = & \left\{ \sigma_{pxpxy}^{e}=\sigma_{pxpxy}^{m}=-1\right\} ,\\
\mathsf{B} & = & \left\{ \sigma_{pxpxy}^{e}\sigma_{txty}^{e} = \sigma_{pxpxy}^{m}\sigma_{txty}^{m} = -1\right\} ,\\
\mathsf{C'} & = & \left\{ \sigma_{px}^{e}=\sigma_{txpx}^{e}=\sigma_{pxpxy}^{e}\sigma_{typx}^{e}=\sigma_{pxpxy}^{m}\sigma_{typx}^{m}=-1\right\} .
\end{eqnarray*}

This leaves 945 TC symmetry classes not ruled out by the above constraints, corresponding to 487 symmetry classes under $e \leftrightarrow m$ relabeling.  In addition, all these 945 TC symmetry classes are realized by models in $TC(G)$.
\end{thm}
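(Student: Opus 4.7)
The plan is to split the theorem into (i) proving unrealizability of $\mathsf{P_1}\cup\mathsf{P_2}\cup\mathsf{P_3}\cup\mathsf{A}\cup\mathsf{B}\cup\mathsf{C}'$, and (ii) exhibiting explicit models realizing the 945 remaining TC classes. For (i), I would first set up a gauge lemma: in $TC(G)$ one can always choose the $c^{x,z}_\ell(g)$ phases so that the rotation-contour identities of Lemmas~\ref{lm:mpxpxy} and~\ref{lm:mpxpxymtxty} reduce to $\sigma^m_{pxpxy}=a_{\Gamma(R^2)}$ and $\sigma^m_{pxpxy}\sigma^m_{txty}=a_{\Gamma(\widetilde{R}^2)}$ (with $\widetilde{R}=\widetilde{P_x}P_{xy}$), as already anticipated in the caption of Fig.~\ref{fig:m_rotation}. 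This is the SOC version of these lemmas, weaker than the no-SOC result because with SOC we lose the symmetry identifications $a_v=a_{P_xv}=a_{P_{xy}v}$. Given this, $\mathsf{A}$ follows: $\sigma^m_{pxpxy}=-1$ forces some $v\in V$ with $R^2v=v$, and since $(R^2)^2=1$, Lemma~\ref{lm:e_inv} applied to $R^2$ gives $(U^e_{R^2})^2=1$, hence $\sigma^e_{pxpxy}=(U^e_R)^4=1$. The argument for $\mathsf{B}$ is identical after applying Lemma~\ref{lm:o_corner} (which holds in $TC(G)$).

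For $\mathsf{P_1}$, $\mathsf{P_2}$, $\mathsf{P_3}$ I would argue geometrically. Take $\mathsf{P_1}$: the $P_x$-fixed locus in $T^2$ is the disjoint union of the two circles $\{x=0\}$ and $\{x=L/2\}$. Any hole $h\in H$ meeting either circle is automatically $P_x$-invariant (holes are connected components), and a fixed hole forces $\sigma^m_{px}=1$ by the $m$-analog of Lemma~\ref{lm:e_inv}: if $P_xh=h$ then $(U^m_{P_x})^2(h)=[U^m_{P_x}(h)]^2=[f\cL^m_{t}]^2=1$ for any closed cut $t$ at $h$. Thus $\sigma^m_{px}=-1$ forces both circles to lie entirely in $\mathscr{P}(\mathcal{G})$, and $\sigma^e_{px}=-1$ forces no $\mathscr{P}(v)$ on them (Lemma~\ref{lm:e_inv}), so each circle must be covered entirely by edge images. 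I would then choose $\mathscr{P}$ $G$-equivariantly so that each edge image crosses any fixed axis transversally — e.g.\ using straight-line embedding, slightly perturbed where multi-edges would coincide. Under such a $\mathscr{P}$, each circle intersects $\mathscr{P}(E)$ in only finitely many points, contradicting full coverage. The same argument works verbatim for $\mathsf{P_2}$ and $\mathsf{P_3}$ using the fixed loci of $P_{xy}$ and $T_xP_x$.

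The most delicate case is $\mathsf{C}'$, which I expect to be the main obstacle. In $TC_0(G)$ the key step was the identification $\sigma^m_{pxpxy}\sigma^m_{typx}=a_{\mathscr{P}^{-1}(\kappa)}$ of Lemma~\ref{lm:mpxpxytypx}, whose proof used $T_y(\cL^m_{t_0})=\cL^m_{T_yt_0}$ and $P_x(\cL^m_{t_1})=\cL^m_{P_xt_1}$ in Eq.~\eqref{eq:mtypx}. In $TC(G)$ these identities acquire correction factors $c^x_{t_0}(T_y)=\prod_{\ell\in t_0}c^x_\ell(T_y)$ and $c^x_{t_1}(P_x)=\prod_{\ell\in t_1}c^x_\ell(P_x)$. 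My plan is to carry these corrections through Eq.~\eqref{eq:mtypx} and identify them with the $\sigma^m_{px}$- and $\sigma^m_{txpx}$-data along the $P_x$-invariant portion of the four-arc cut (exploiting the observation from the main-text discussion of Fig.~\ref{fig:tcg_maintext}a that $\sigma^m_{px}$ is encoded precisely by $\prod_\ell c^x_\ell(P_x)$ along a $P_x$-invariant cut). The hypothesis $\sigma^e_{px}=\sigma^e_{txpx}=-1$ ensures no vertex is fixed by $P_x$ or $T_xP_x$ (Lemma~\ref{lm:e_inv}), so the remaining $a_v$ contributions in $a_{V_t}$ organize into complete $P_x$- and $T_xP_x$-orbits away from $\kappa$ that cancel, leaving only the genuine $\Gamma(P_x,T_yP_y)$ contribution. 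A parallel $e$-localization calculation along the same four-arc contour, patterned on the $\mathsf{C}$ proof in Appendix~\ref{app:nosoc-constraints}, should then yield $\sigma^e_{pxpxy}\sigma^e_{typx}\cdot\sigma^m_{pxpxy}\sigma^m_{typx}=+1$, ruling out $\mathsf{C}'$. The real work is the careful bookkeeping of the $c^x_\ell(g)$ cocycle factors along the contour and verifying their regrouping into the claimed $\sigma^m_{px}$, $\sigma^m_{txpx}$ combinations.

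For part (ii) I would follow the template of Sec.~\ref{sub:tc_woso}: partition the 945 surviving TC classes by which of the twelve $\sigma^{e,m}$ parameters equal $-1$, and for each bucket exhibit a lattice together with coupling signs $K^e_v,K^m_p$ and spin-orbit data $c^{x,z}_\ell(g)$ realizing all classes in the bucket. The example of Fig.~\ref{fig:tcg_maintext}b shows that merely enriching an existing $TC_0(G)$ lattice with spin-orbit data already generates all $2^6$ $m$-fractionalization classes on a single lattice, so the spin-orbit freedom substantially reduces the number of distinct lattices needed compared with the no-SOC catalog. The reduction from 945 TC classes to 487 symmetry classes under $e\leftrightarrow m$ relabeling is then a straightforward but tedious combinatorial enumeration. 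I expect this part to be routine but lengthy, naturally belonging in Appendix~\ref{app:models}.
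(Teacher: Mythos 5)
Your overall architecture (constraints plus an explicit catalog of models) matches the paper, and your treatment of $\mathsf{A}$ and $\mathsf{B}$ is essentially the paper's: an $R^2$-compatible gauge (Lemma~\ref{lm:R2_gauge}), the identity $a_va_{Rv}a_{R^2v}a_{R^3v}=1$, and then Lemma~\ref{lm:e_inv} applied to the involutions $R^2$ and $T_xT_yR^2$. However, there are two genuine gaps. First, your argument for $\mathsf{P_1},\mathsf{P_2},\mathsf{P_3}$ leans on choosing ${\mathscr P}$ so that edge images meet the reflection axis transversally in finitely many points. But ${\mathscr P}$ is part of the model data, and the invariance of the computed fractionalization classes under a change of ${\mathscr P}$ is nowhere established; an adversarial ${\mathscr P}$ can run an edge image along the fixed circle even when no vertex is fixed, so "the circle cannot be covered" does not follow. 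The paper's proof sidesteps geometry entirely: writing $\sigma^m_{px}=\prod_{\ell}[c^x_\ell(P_x)]^{|\ell\cap t|}$ for a $P_x$-invariant cut $t$, pairing $\ell$ with $P_x\ell$, and noting $|\ell\cap t|$ is even for edges fixed with ends interchanged, one gets $\sigma^m_{px}=\prod_{\ell\in E_0^{px}}[c^x_\ell(P_x)]^{|\ell\cap t|}$ over edges fixed \emph{with ends fixed}; hence $\sigma^m_{px}=-1$ forces a $P_x$-fixed vertex and Lemma~\ref{lm:e_inv} gives $\sigma^e_{px}=1$. You would need either this combinatorial computation or a separate proof that the symmetry class is independent of ${\mathscr P}$.

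Second, and more seriously, your endgame for $\mathsf{C}'$ is logically miscalibrated: the identity $\sigma^e_{pxpxy}\sigma^e_{typx}\cdot\sigma^m_{pxpxy}\sigma^m_{typx}=+1$ does \emph{not} exclude $\mathsf{C}'$, since $\mathsf{C}'$ has both factors equal to $-1$ and hence satisfies it; if that identity held it would in fact permit (even suggest) realizing $\mathsf{C}'$. The correct target, as in the paper, is the implication that $\sigma^e_{px}=\sigma^e_{txpx}=\sigma^m_{pxpxy}\sigma^m_{typx}=-1$ forces $\sigma^e_{pxpxy}\sigma^e_{typx}=+1$ outright. Moreover, reaching that conclusion requires an ingredient your sketch omits and that has no analog in the no-spin-orbit $\mathsf{C}$ argument: because $\sigma^e_{px}=-1$ forbids any $P_x$-fixed vertex, one cannot set $U^e_{P_x}(v_0)=1$ as in Appendix~\ref{app:nosoc-constraints}. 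The paper instead proves (Lemma~\ref{lm:invs}), via a degree-parity argument on the auxiliary graph of $T_yR^2$-fixed edges with $c^x_\ell(P_x)=-1$, that there is a $T_yR^2$-invariant path $s_\kappa$ joining some $v$ to $P_xv$; the relations $\cL^e_{s_\kappa}P_x(\cL^e_{s_\kappa})=\sigma^e_{px}=-1$ and its $T_xP_x$ analog are then what collapse the $e$-side contour computation to $\sigma^e_{pxpxy}\sigma^e_{typx}=1$. Without this lemma the "parallel $e$-localization calculation" you propose cannot be set up. The counting ($972-27=945$ TC classes, $487$ symmetry classes) and the model catalog are as you anticipate.
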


The unrealizability in $TC(G)$ of the above TC symmetry classes is proved below in this Appendix.  Appendix~\ref{app:models} describes the counting of TC symmetry classes not ruled out by the Theorem, and gives  explicit examples of models for these classes.

\begin{lem}
\label{lm:cp}  If $g^2 = 1$, then $c_{\ell}^{\mu}\left(g\right)=c_{g \ell}^{\mu}\left(g\right),\forall \ell\in E$.\end{lem}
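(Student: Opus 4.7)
The plan is to derive this identity as an immediate consequence of the linearity (cocycle) condition Eq.~(\ref{eqn:c-restriction}), which was imposed to ensure that $G$ acts linearly (rather than projectively) on the Pauli operators. That condition reads $c^{\mu}_{g_2 \ell}(g_1)\, c^{\mu}_{\ell}(g_2) = c^{\mu}_{\ell}(g_1 g_2)$ for all $\ell \in E$ and $g_1,g_2 \in G$, and it is the only structural constraint I need.

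First I would specialize the cocycle condition to $g_1 = g_2 = g$, obtaining
\begin{equation*}
c^{\mu}_{g \ell}(g)\, c^{\mu}_{\ell}(g) = c^{\mu}_{\ell}(g^2).
\end{equation*}
Next, using the hypothesis $g^2 = 1$, the right-hand side becomes $c^{\mu}_{\ell}(1)$. Since $U_1$ is the identity operator, $U_1 \sigma^{\mu}_{\ell} U_1^{-1} = \sigma^{\mu}_{\ell}$, so $c^{\mu}_{\ell}(1) = 1$ for every edge $\ell$. (Alternatively, one can set $g_1 = g_2 = 1$ in the cocycle identity to get $c^{\mu}_{\ell}(1)^2 = c^{\mu}_{\ell}(1)$, which forces $c^{\mu}_{\ell}(1) = 1$ given $c^{\mu}_{\ell}(1) \in \{\pm 1\}$.)

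Combining these two steps yields $c^{\mu}_{g\ell}(g)\, c^{\mu}_{\ell}(g) = 1$. Since both factors lie in $\{\pm 1\}$, this is equivalent to $c^{\mu}_{\ell}(g) = c^{\mu}_{g\ell}(g)$, which is the claim. There is essentially no obstacle here: the lemma is a one-line algebraic consequence of Eq.~(\ref{eqn:c-restriction}) together with $g^2 = 1$, and its only role is to be invoked later in the spin-orbit analysis (for instance when simplifying expressions like $U_g^m(h_1)$ obtained via the explicit localization formula of Appendix~\ref{app:eloc}, where a symmetric cut $t$ with $gt = t$ naturally produces factors of the form $c^{\mu}_{\ell}(g) c^{\mu}_{g\ell}(g)$ that must collapse to unity).
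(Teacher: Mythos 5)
Your proposal is correct and follows essentially the same route as the paper: both specialize the cocycle condition Eq.~(\ref{eqn:c-restriction}) to $g_1=g_2=g$, use $g^2=1$ and $c^{\mu}_{\ell}(1)=1$ to get $c^{\mu}_{g\ell}(g)\,c^{\mu}_{\ell}(g)=1$, and conclude from the values being $\pm 1$. No issues.
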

\begin{proof}
By Eq.~(\ref{eqn:c-restriction}), if $g^{2}=1$, then for all $\ell\in E$ 
\[
c_{\ell}^{\mu}\left(g\right)c_{g \ell}^{\mu}\left(g\right)=c_{\ell}^{\mu}\left(g^{2}\right)=c_{\ell}^{\mu}\left(1\right)=1.
\]
Hence $c_{\ell}^{\mu}\left(g\right)=c_{g\ell}^{\mu}\left(g\right)$.
\end{proof}

As discussed in Sec.~\ref{sec:genlatt}, we can redefine the local axes for each spin such that $c_{\ell}^{\mu}\left(T\right)=1$, for $T \in G$ any translation.  We always work in such a gauge.

\begin{lem}
\label{lm:cT}For any translations $T,T_{1},T_{2}\in G$, and for all $\ell \in E$, $g \in G$, we have $c_{T \ell}^{\mu}\left(T_{1}gT_{2}\right)=c_{\mu}^{x,z}\left(g\right)$.\end{lem}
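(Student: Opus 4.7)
The plan is to reduce the assertion $c^{\mu}_{T\ell}(T_1 g T_2) = c^{\mu}_\ell(g)$ to a single key subclaim, namely that $c^{\mu}_\ell(g)$ is invariant under translating the edge argument: $c^{\mu}_{T'\ell}(g) = c^{\mu}_\ell(g)$ for any translation $T'$. The whole identity will then follow by peeling off the translations $T_1$ and $T_2$ one at a time using the cocycle relation~\eqref{eqn:c-restriction} and the gauge condition $c^{\mu}_\ell(T)=1$ chosen just before Lemma~\ref{lm:cT}.

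First I would apply \eqref{eqn:c-restriction} with $g_1 = T_1$ and $g_2 = gT_2$ to write
\begin{equation*}
c^{\mu}_{T\ell}(T_1 g T_2) = c^{\mu}_{gT_2(T\ell)}(T_1)\, c^{\mu}_{T\ell}(gT_2) = c^{\mu}_{T\ell}(gT_2),
\end{equation*}
since the first factor is~$1$ by the gauge choice. Applying \eqref{eqn:c-restriction} again with $g_1 = g$ and $g_2 = T_2$ gives, in the same way, $c^{\mu}_{T\ell}(gT_2) = c^{\mu}_{T_2 T\ell}(g)$. Setting $T' = T_2 T$, which is again a translation, the problem is reduced to showing $c^{\mu}_{T'\ell}(g) = c^{\mu}_\ell(g)$.

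For the key subclaim, I would compute $c^{\mu}_\ell(gT')$ in two ways. Using \eqref{eqn:c-restriction} directly with $g_1 = g$, $g_2 = T'$, and the gauge choice, we have $c^{\mu}_\ell(gT') = c^{\mu}_{T'\ell}(g)$. On the other hand, because translations form a normal subgroup of the space group, $gT' = T'' g$ with $T'' = gT'g^{-1}$ a translation; then applying \eqref{eqn:c-restriction} with $g_1 = T''$, $g_2 = g$ gives $c^{\mu}_\ell(T''g) = c^{\mu}_{g\ell}(T'')\, c^{\mu}_\ell(g) = c^{\mu}_\ell(g)$. Equating the two expressions yields $c^{\mu}_{T'\ell}(g) = c^{\mu}_\ell(g)$, as needed.

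The proof is mechanical once one exploits normality of the translation subgroup; the only substantive ingredient beyond \eqref{eqn:c-restriction} and the standing gauge condition is the observation that $gT'g^{-1}$ is a translation. There is no serious obstacle, but the one subtlety worth stating carefully is that the gauge $c^{\mu}_\ell(T)=1$ must be understood to hold for \emph{every} translation $T$ and \emph{every} edge $\ell$, including the conjugated translation $T'' = gT'g^{-1}$ acting on the translated edge $g\ell$; this is precisely what allows both applications of \eqref{eqn:c-restriction} to collapse to the desired identity.
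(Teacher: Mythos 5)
Your proof is correct and uses exactly the same ingredients as the paper's: the cocycle relation~\eqref{eqn:c-restriction}, the gauge condition $c^{\mu}_\ell(T)=1$ for all translations and all edges, and normality of the translation subgroup (so that $gT'g^{-1}$ is again a translation). The paper merely arranges the same steps into a single chain of equalities, absorbing the edge translation first via $c^{\mu}_{T\ell}(T_1gT_2)=c^{\mu}_\ell(T)c^{\mu}_{T\ell}(T_1gT_2)=c^{\mu}_\ell(T_1gT_2T)=c^{\mu}_\ell(T'g)=c^{\mu}_\ell(g)$, whereas you peel off $T_1$ and $T_2$ before invoking normality; the two routes are essentially identical.
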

\begin{proof}
We have
\begin{eqnarray*}
c_{T\ell}^{\mu}\left(T_{1}gT_{2}\right) &=& c_{\ell}^{\mu}\left(T\right)c_{T\ell}^{\mu}\left(T_{1}gT_{2}\right)=c_{\ell}^{\mu}\left(T_{1}gT_{2}T\right) \\ &=& c_{\ell}^{\mu}\left(T'g\right)=c_{\ell}^{\mu}\left(g\right)c_{g\ell}^{\mu}\left(T'\right)=c_{\ell}^{\mu}\left(g\right) \text{.}
\end{eqnarray*}
Here, we have used the fact that there is a translation
$T'\in G$ such that $T_{1}gT_{2}T=T'g$, which follows from the fact that
translations are a normal subgroup of $G$ (so, in particular, $g^{-1} T_1 g$ is a translation).
\end{proof}

To proceed, we need to consider further gauge fixing of $c_{\ell}^{z,x}\left(g\right)$ by choosing the local 
frame of spins.  By Lemma~\ref{lm:cT}, it is sufficient to restrict to $g \in G_o$, where $G_{o}=\left\{ g\in G|go=o\right\} $ with $o=\left(0,0\right)$
the origin. Let $\boell$ be the orbit of some $\ell \in E$ under
translations.  By Lemma~\ref{lm:cT}, we can write $c_{\boell}^{\mu}\left(g\right)=c_{\ell}^{\mu}\left(g\right)$, for all $g\in G_{o}$.
Gauge transformations $\gamma^{\mu}_{\ell}$ that are constant on translation orbits $\boell$ do not affect the choice $c^{\mu}_\ell(T) = 1$ for translations $T$.  Therefore, it is natural to think of the allowed gauge transformations as functions of $\boell$, and write $\gamma^{\mu}_{\boell}$ instead of $\gamma^{\mu}_{\ell}$.  The gauge transformation Eq.~(\ref{eqn:gauge-transformation}) then becomes
\begin{equation}
c^{\mu}_{\boell}(g) \to \gamma^{\mu}_{\boell} \gamma^{\mu}_{g \boell} c^{\mu}_{\boell} (g) \text{.}
\end{equation}

Now, consider some fixed translation orbit $\boell_0$. Let $G_{oR}$ be the rotation subgroup of $G_o$. Denote the orbit of $\boell_0$ under $G_o$ by $G_o \boell_0$, and the orbit of $\boell_0$ under rotations by $G_{oR} \boell_0$.  Then $\left| G_{oR} \boell_0 \right|=4,2,1$
and $\left| G_o \boell_0 \right|=8,4,2,1$.  We have the following possibilities:
\begin{enumerate}
\item $\left| G_o \boell_0 \right|=8$.  In this case, elements $\boell \in G_o \boell_0$ are in one-to-one correspondence with group elements $g \in G_o$.  That is, for each $\boell \in G_o \boell_0$, we can write uniquely $\boell = g \boell_0$ for some $g \in G_o$.  We make a gauge transformation by choosing
\begin{equation}
\gamma^{\mu}_{\boell} = \gamma^{\mu}_{g \boell_0} = c^{\mu}_{\boell_0}(g) \text{.}
\end{equation}
Then, in the transformed gauge, $c^{\mu}_{\boell_0}(g) \to 1$ for all $g \in G_o$, by construction.  We now consider $c^{\mu}_{\boell}(g)$ for arbitrary $\boell \in G_o \boell_0$, $g \in G_o$, in the transformed gauge.  We can write $\boell = g_1 \boell_0$ for some unique $g_1 \in G_o$, and
\begin{eqnarray*}
c^{\mu}_{\boell}(g) &=& c^{\mu}_{g_1 \boell_0}(g) = c^{\mu}_{g_1 \boell_0}(g) c^{\mu}_{\boell_0}(g_1)  \\
&=& c^{\mu}_{\boell_0}(g g_1) = 1 \text{.}
\end{eqnarray*}
Therefore, we are free to choose a gauge where $c^{\mu}_{\boell}(g) = 1$.  In particular, we have shown $c^{\mu}_{R^2 \boell}(g) = c^{\mu}_{\boell}(g)$.

\item $\left| G_o \boell_0 \right|=4$ and $\left| G_{oR} \boell_0 \right|=4$.  In this case, elements $\boell \in G_o \boell_0 = G_{oR} \boell_0$ are in one-to-one correspondence with $g_R \in G_{oR}$.  Therefore, the same argument given in the previous case implies we can choose a gauge so that $c^z_{\boell}(g_R) = 1$ for all $\boell \in G_{o} \boell_0$ and all $g_R \in G_{oR}$.  Now, for arbitrary $g \in G_o$, we consider
\begin{eqnarray*}
c^{\mu}_{R^2 \boell}(g) &=& c^{\mu}_{R^2 \boell}(g) c^{\mu}_{\boell}(R^2) 
= c^{\mu}_{\boell}(g R^2) \\
&=& c^{\mu}_{\boell}(R^2 g)
= c^{\mu}_{\boell}(g) c^{\mu}_{g \boell}(R^2) = c^{\mu}_{\boell}(g) \text{.}
\end{eqnarray*}
Therefore, we have also chosen a gauge in this case where $c^{\mu}_{R^2 \boell}(g) = c^{\mu}_{\boell}(g)$.

\item $|G_{oR} \boell_0| < 4$.  In this case, $R^2 \boell = \boell$ for all $\boell \in L$.  Therefore it holds trivially that $c^{\mu}_{R^2 \boell}(g) = c^{\mu}_{\boell}(g)$.
\end{enumerate}

We have thus shown the following fact, which will be useful in later calculations:
\begin{lem}
\label{lm:R2_gauge}It is possible to choose a local spin frame so that $c_{R^{2}\ell}^{\mu}\left(g\right)=c_{\ell}^{\mu}\left(g\right)$, for all $\ell \in E$ and $g \in G$, with $\mu = x,z$.
\end{lem}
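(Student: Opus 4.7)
The plan is to reduce the statement to gauge-fixing on translation orbits. By Lemma~\ref{lm:cT} and our standing choice $c^{\mu}_\ell(T) = 1$ for translations $T$, the phase factors descend to data $c^{\mu}_{\boell}(g)$ indexed by translation orbits $\boell$ and point-group elements $g \in G_o$. The residual gauge freedom is exactly signs $\gamma^{\mu}_{\boell}$ attached to translation orbits (anything non-constant on an orbit would spoil $c^{\mu}_\ell(T)=1$), acting by $c^{\mu}_{\boell}(g) \mapsto \gamma^{\mu}_{\boell}\gamma^{\mu}_{g\boell}\, c^{\mu}_{\boell}(g)$. Since $G_o$-orbits of translation orbits are disjoint, I will gauge-fix independently on each one.

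Then I would do a case analysis on the $G_{oR}$-orbit size of a chosen representative $\boell_0$, where $G_{oR} = \langle R \rangle$. If $|G_{oR} \cdot \boell_0| \le 2$, then $R^2 \boell_0 = \boell_0$, and since $R^2$ is central in $G_o$ it fixes every element of the full $G_o$-orbit; the desired identity $c^{\mu}_{R^2\boell}(g) = c^{\mu}_{\boell}(g)$ is automatic with no gauge fixing. If $|G_{oR} \cdot \boell_0| = 4$ and the full orbit has $|G_o \cdot \boell_0| = 8$ (so $G_o$ acts freely), I set $\gamma^{\mu}_{g\boell_0} := c^{\mu}_{\boell_0}(g)$, which by the cocycle identity \eqref{eqn:c-restriction} forces $c^{\mu}_{\boell}(g) \equiv 1$ on the orbit, and the identity holds trivially.

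The delicate case is $|G_{oR} \cdot \boell_0| = |G_o \cdot \boell_0| = 4$, where the reflection part of $G_o$ stabilizes $\boell_0$. Here the orbit is parameterized uniquely by rotations $R^k \boell_0$, and I would set $\gamma^{\mu}_{R^k \boell_0} := c^{\mu}_{\boell_0}(R^k)$ to force $c^{\mu}_\boell(R^k) = 1$ for every rotation $R^k$ and every $\boell$ in this orbit. To conclude, I would combine the cocycle identity with the fact that $R^2$ is central in $G_o$: for any $g \in G_o$,
\[
c^{\mu}_{R^2\boell}(g)\, c^{\mu}_\boell(R^2) \;=\; c^{\mu}_\boell(gR^2) \;=\; c^{\mu}_\boell(R^2 g) \;=\; c^{\mu}_{g\boell}(R^2)\, c^{\mu}_\boell(g),
\]
and the two $R^2$-factors on the outside are trivial since both $\boell$ and $g\boell$ lie in the orbit on which $c^{\mu}(R^2)=1$.

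The main obstacle is precisely this last case: a naive attempt to trivialize $c^{\mu}_{\boell}(g)$ everywhere fails because the reflection stabilizer of $\boell_0$ generally carries a non-trivial projective class in $H^2(G_o,\mathbb{Z}_2)$. The trick is to settle for a weaker trivialization (only on the rotation subgroup) and exploit the centrality of $R^2$ to transport the identity $c^{\mu}_\boell(R^2)=1$ into the desired symmetry of $c^{\mu}_{\cdot}(g)$ under $R^2$. I also need to check well-definedness of the gauge prescriptions (uniqueness of the parameterizations $\boell = g\boell_0$ or $\boell = R^k \boell_0$ on the orbit, which follows from freeness of the relevant actions in each case).
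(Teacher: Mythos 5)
Your proposal is correct and follows essentially the same route as the paper's proof: the same reduction to translation-orbit data via Lemma~\ref{lm:cT}, the same three-case split on the orbit sizes $\left|G_{o}\boell_0\right|$ and $\left|G_{oR}\boell_0\right|$, and in the delicate stacked case the same trick of trivializing $c^{\mu}_{\boell}$ only on the rotation subgroup and then transporting $c^{\mu}_{\boell}(R^2)=1$ through the cocycle identity using centrality of $R^2$. No gaps.
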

To be more concrete, below, we always work in a local spin frame such that $\forall \mu=x,z$,
\begin{eqnarray}
c_{\ell}^{\mu}\left(T\right) & = & 1,\text{ for any translation }T,\label{eq:gauge1}\\
c_{\ell}^{\mu}\left(g\right) & = & 1,\forall g\in G_{o},\text{ if }\left|G_{o}\boldsymbol{\ell}\right|=8,\label{eq:gauge2}\\
c_{\ell}^{\mu}\left(g\right) & = & 1,\forall g\in G_{oR},\text{ if }\left|G_{oR}\boldsymbol{\ell}\right|=4,\label{eq:gauge3}
\end{eqnarray}
and hence Lemma~\ref{lm:R2_gauge} can be applied.

\begin{prop}
No TC symmetry classes in $\mathsf{P_1}$, $\mathsf{P}_2$ or $\mathsf{P}_3$ are realizable in $TC(G)$.
\end{prop}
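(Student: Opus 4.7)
The plan is to prove the contrapositive separately for each $g\in\{P_{x},P_{xy},T_{x}P_{x}\}$, all of which satisfy $g^{2}=1$: I show that at least one of $\sigma_{g}^{e}$ and $\sigma_{g}^{m}$ must equal $+1$. The strategy is a dichotomy based on whether $g$ fixes some vertex of $\mathcal{G}$ or some hole of the planar projection.

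First I would establish an $m$-particle analog of Lemma~\ref{lm:e_inv}: if $g^{2}=1$ and some $h\in H$ satisfies $gh=h$, then $\sigma_{g}^{m}=1$. By Appendix~\ref{app:eloc} the reference operator in any $m$-localization may be taken as $U_{g}^{m}(h_{0})=f_{g}^{m}(h_{0})\mathcal{L}_{s_{g}^{m}(h_{0})}^{m}$ for an arbitrary cut $s_{g}^{m}(h_{0})$ joining $h_{0}$ to $gh_{0}$. Taking $h_{0}=h$ and choosing $s_{g}^{m}(h)$ to be a small closed loop drawn strictly inside the hole $h$ so that it crosses no edges of $\mathcal{G}$, we have $\mathcal{L}_{s_{g}^{m}(h)}^{m}=1$; setting $f_{g}^{m}(h)=1$ gives $U_{g}^{m}(h)=1$, and hence $\sigma_{g}^{m}=(U_{g}^{m})^{2}(h)=U_{g}^{m}(gh)U_{g}^{m}(h)=1$. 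Because the reference operator is the identity, no spin-orbit phase factors $c_{\ell}^{x}(g)$ intervene, so the argument is valid for all of $TC(G)$.

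Next I would establish the dichotomy: for each $g\in\{P_{x},P_{xy},T_{x}P_{x}\}$, either (a) some $v\in V$ satisfies $gv=v$ (in which case Lemma~\ref{lm:e_inv} forces $\sigma_{g}^{e}=1$), or (b) some $h\in H$ satisfies $gh=h$ (in which case the lemma of the previous paragraph forces $\sigma_{g}^{m}=1$). The fixed set of $g$ acting on $T^{2}$ is a disjoint pair of non-contractible closed loops (the reflection axes). Because $g$ permutes the connected components of $T^{2}\setminus\mathscr{P}(\mathcal{G})$, any axis point lying in a hole $h$ forces $g(h)\cap h\ne\emptyset$, hence $gh=h$ by connectedness; so if either axis component meets any hole, case (b) holds. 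Otherwise both axis components lie in $\mathscr{P}(\mathcal{G})$, and any vertex projecting onto the axis gives a stack of vertices permuted by $g$ as an order-$2$ involution, so a vertex from any odd-size stack, or any mirror-symmetric pair whose midpoint lies on the axis, supplies case (a).

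The main obstacle is closing this dichotomy in the degenerate configuration where both axis components lie entirely in $\mathscr{P}(\mathcal{G})$ but no vertex projects onto either axis, so the axes are covered solely by edge interiors. Here neither direct argument applies, and I would need to analyse the $g$-equivariant structure of the edges covering the axis, exploiting the bounded plaquette size (and hence local finiteness) to either extract a $g$-fixed vertex from the endpoints of such edges or to compute $\sigma_{g}^{e}\sigma_{g}^{m}$ directly using the explicit $e$- and $m$-localizations from Appendix~\ref{app:eloc} together with the gauge choices (\ref{eq:gauge1})--(\ref{eq:gauge3}). I expect this combinatorial step to be the hardest part of the proof.
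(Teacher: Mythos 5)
Your first step (the $m$-particle analog of Lemma~\ref{lm:e_inv}: a $g$-fixed hole forces $\sigma^m_g=1$) is correct, and the case analysis is sensible as far as it goes, but the proof is incomplete exactly where you say it is, and that degenerate case is not a technicality — it genuinely occurs in $TC(G)$. Since vertices and edges are allowed to stack under $\mathscr P$, and edges may be fixed by $g$ with their endpoints interchanged, one can cover the entire reflection axis with $\mathscr P(\mathcal G)$ while $g$ fixes no vertex and no hole (e.g.\ a $g$-swapped pair of stacked vertices and edges running along the axis, or axis-crossing edges whose endpoints are interchanged). So the dichotomy ``fixed vertex or fixed hole'' is not exhaustive, and the hardest case is left open; as written the proposal does not prove the proposition. (A minor side error: the fixed-point set of $P_{xy}$ on $T^2$ is a single non-contractible circle, not a disjoint pair.)

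The paper closes this by abandoning the dichotomy and computing $\sigma^m_g$ directly. Take $h_1=gh_0$, choose a cut $t$ joining $h_0$ to $h_1$ with $gt=t$, and set $U^m_g(h_0)=\mathcal L^m_t$; the localization of Appendix~\ref{app:eloc} then forces $U^m_g(h_1)=g(\mathcal L^m_t)$, so
$\sigma^m_g=g(\mathcal L^m_t)\,\mathcal L^m_t=\prod_{\ell\in E}\bigl[c^x_\ell(g)\bigr]^{|\ell\cap t|}$.
Grouping edges into $g$-orbits, the pairs $\{\ell,g\ell\}$ with $g\ell\neq\ell$ contribute $c^x_\ell(g)c^x_{g\ell}(g)=c^x_\ell(g^2)=1$, and edges fixed by $g$ with endpoints interchanged cross the $g$-invariant cut an even number of times. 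Hence only edges fixed by $g$ \emph{with both endpoints fixed} can contribute, so $\sigma^m_g=-1$ forces the existence of such an edge, hence of a $g$-fixed vertex, hence $\sigma^e_g=1$ by Lemma~\ref{lm:e_inv}. This is precisely the ``$g$-equivariant analysis of edges covering the axis'' you anticipated needing, but packaged as a one-line cancellation rather than a geometric case analysis; if you want to complete your proof, this computation is the missing ingredient.
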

\begin{proof}
We define
\begin{eqnarray*}
E_{0}^{px}&=&\left\{ \ell \in E   |   P_{x}\ell = \ell \text{ with ends of } \ell \text{ fixed}\right\} \text{,} \\
E_{1}^{px}&=& \left\{ \ell \in E | P_{x}\ell =\ell \text{ with ends of } \ell \text{ interchanged}\right\} \text{.}
\end{eqnarray*}
Then $E-\left(E_{0}^{px}\cup E_{1}^{px}\right)$ can be partitioned
into pairs $\left\{ \ell,P_{x} \ell\right\} $. Let $E_{2}^{px}$ be a set
formed by selecting one edge from each such pair. Now, we put
a $m$ particle at $h_{0}$ and draw a cut $t\in\bar{W}$ joining
$h_{0}$ with $h_{1}=P_{x}h_{0}$ such that $P_{x}t=t$. Then we choose
$U_{P_{x}}^{m}\left(h_{0}\right)  =  \mathcal{L}_{t}^{m}$, and by Appendix~\ref{app:eloc} we
may further choose
\begin{equation*}
U_{P_{x}}^{e}\left(h_{1}\right)  =  \mathcal{L}_{t}^{m} \mathcal{L}_{t}^{m}P_{x}\left(\mathcal{L}_{t}^{m}\right)=P_{x}\left(\mathcal{L}_{t}^{m}\right) \text{.}
\end{equation*}
Therefore,
\begin{eqnarray*}
\left(U_{P_{x}}^{m}\right)^{2}\left(h_{0}\right)  &=&  U_{P_{x}}^{m}\left(h_{1}\right)U_{P_{x}}^{m}\left(h_{0}\right) = P_{x}\left(\mathcal{L}_{t}^{m}\right)\mathcal{L}_{t}^{m} \\
&=& \prod_{\ell \in E} [c^x_{\ell}(P_x)]^{| \ell \cap t | } = \sigma^m_{px} \text{.}
\end{eqnarray*}

Since $\left|P_{x}\ell \cap t\right|=\left|\ell\cap P_{x}t\right|=\left|\ell\cap t\right|$,
we have
\begin{eqnarray*}
\sigma^m_{px} &=& \prod_{\ell\in E_{0}^{px}\cup E_{1}^{px}}\left[c_{\ell}^{x}(P_x) \right]^{\left| \ell\cap t\right|}\prod_{\ell\in E_{2}^{px}}\left[c_{\ell}^{x}(P_x) c_{P_{x}\ell}^{x}(P_x) \right]^{\left|\ell\cap t\right|} \\
&=& \prod_{\ell\in E_{0}^{px}}\left[c_{\ell}^{x}(P_x) \right]^{\left| \ell\cap t\right|} \text{,}
\end{eqnarray*}
where we used the fact that $\left|\ell\cap t\right|$ is even for $\ell\in E_{1}^{px}$, and also $c_{\ell}^{x}(P_x) c_{P_{x}\ell}^{x}(P_x) = c^x_{\ell}(P_x^2) = c^x_{\ell}(1) = 1$.

So $\sigma_{px}^{m}=-1$ implies that there is $\ell\in E$ such that
$P_{x}\ell=\ell$ with its ends fixed, and therefore there is a vertex $v$ with $P_x v = v$. Hence $\sigma_{px}^{e}=1$ by Lemma~\ref{lm:e_inv}.

In short, $\sigma_{px}^{m}=-1$ implies $\sigma_{px}^{e}=1$ and hence
$\mathsf{P_{1}}$ is not realizable in $TC(G)$.  The same arguement applies to $\mathsf{P}_{2}$ and $\mathsf{P}_{3}$.
\end{proof}

\begin{lem}
\label{lm:r4}In the chosen gauge, for any $v \in V$, $a_{v}a_{Rv}a_{R^{2}v}a_{R^{3}v}=1$
and $a_{v}a_{P_{x}v}a_{P_{y}v}a_{R^{2}v}=1$.\end{lem}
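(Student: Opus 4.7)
\medskip

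\noindent\textbf{Proof proposal.} The plan is to derive a simple transformation rule for the ground-state eigenvalues $a_v$ under the action of $G$, and then to pair up the vertices in the orbit to cancel everything out using the gauge conditions.

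First, I would extract the transformation rule for $a_v$. Since $A_v = \prod_{\ell\in\operatorname{star}(v)}\sigma^x_\ell$, the conjugation law for the Pauli operators immediately yields
\[
U_g A_v U_g^{-1} = c^x_{\operatorname{star}(v)}(g)\, A_{gv}, \qquad c^x_{\operatorname{star}(v)}(g) := \prod_{\ell\in\operatorname{star}(v)} c^x_\ell(g).
\]
Applying both sides to the symmetric ground state $|\psi_{0m}\rangle$ (which satisfies $U_g|\psi_{0m}\rangle=|\psi_{0m}\rangle$ and $A_v|\psi_{0m}\rangle = a_v|\psi_{0m}\rangle$) gives the key identity $a_{gv} = c^x_{\operatorname{star}(v)}(g)\,a_v$. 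Combined with $a_v^2=1$, this yields the compact pairing formula $a_v\, a_{gv} = c^x_{\operatorname{star}(v)}(g)$.

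For the first identity, I would split the orbit as $\{v,Rv\}\cup\{R^2v,R^3v\}$ and apply the pairing formula (with $g=R$) to each pair, obtaining
\[
a_v a_{Rv} a_{R^2v} a_{R^3v} = c^x_{\operatorname{star}(v)}(R)\, c^x_{\operatorname{star}(R^2v)}(R).
\]
Since $\operatorname{star}(R^2v)=R^2\operatorname{star}(v)$, Lemma~\ref{lm:R2_gauge} (which guarantees $c^x_{R^2\ell}(g)=c^x_\ell(g)$ in the chosen gauge) makes the two factors equal, so the product is $\bigl(c^x_{\operatorname{star}(v)}(R)\bigr)^2 = 1$.

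For the second identity, I would split $\{v,P_xv,P_yv,R^2v\}$ into the pairs $\{v,P_xv\}$ and $\{P_yv,R^2v\}$, noting that $R^2=P_xP_y=P_yP_x$ so $R^2v = P_x(P_yv)$. The pairing formula then gives
\[
a_v a_{P_xv} a_{P_yv} a_{R^2v} = c^x_{\operatorname{star}(v)}(P_x)\, c^x_{\operatorname{star}(P_yv)}(P_x).
\]
The main step (and the only place a little care is needed) is to identify the two factors. Using $P_y\ell = R^2 P_x\ell$, Lemma~\ref{lm:R2_gauge} reduces $c^x_{P_y\ell}(P_x)$ to $c^x_{P_x\ell}(P_x)$, and Lemma~\ref{lm:cp} (applied with $g=P_x$, which satisfies $g^2=1$) further reduces this to $c^x_\ell(P_x)$. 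Taking the product over $\ell\in\operatorname{star}(v)$ shows the two factors above are equal, giving $\bigl(c^x_{\operatorname{star}(v)}(P_x)\bigr)^2=1$. I do not anticipate any real obstacle; the only non-routine part is recognizing that the two gauge-fixing lemmas combine to make the $P_y$ factor equal to the $P_x$ factor in the orbit pairing.
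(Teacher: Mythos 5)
Your proposal is correct and follows essentially the same route as the paper: both rest on the conjugation rule $U_g A_v U_g^{-1} = \big[\prod_{\ell\ni v}c^x_\ell(g)\big]A_{gv}$ together with Lemma~\ref{lm:R2_gauge}, the paper phrasing it as invariance of the product $A_vA_{R^2v}$ under conjugation by $R$ or $P_x$ while you phrase it as a single-vertex rule $a_{gv}=c^x_{\operatorname{star}(v)}(g)\,a_v$ and then pair the orbit. The only (harmless) divergence is in the second identity, where the paper pairs $(v,R^2v)\mapsto(P_xv,P_yv)$ and needs only Lemma~\ref{lm:R2_gauge}, whereas your pairing $\{v,P_xv\}\cup\{P_yv,R^2v\}$ additionally invokes Lemma~\ref{lm:cp}; both are valid.
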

\begin{proof}
First we show that $a_{v}a_{R^{2}v}=a_{Rv}a_{R^{3}v}$.
We have
\begin{eqnarray*}
R\left(A_{v}A_{R^{2}v}\right) & = & \Big[ \prod_{\ell\ni v}c_{\ell}^{x}\left(R\right) \Big] \Big[ \prod_{\ell\ni R^{2}v}c_{\ell}^{x}\left(R\right) \Big] A_{R v} A_{R^3 v} \\
 & = &
  \Big[ \prod_{\ell\ni v}c_{\ell}^{x}\left(R\right) \Big] \Big[ \prod_{\ell\ni v}c_{R^2 \ell}^{x}\left(R\right) \Big] A_{R v} A_{R^3 v}\\
 & = & A_{Rv}A_{R^{3}v} \text{,}
\end{eqnarray*}
where the last equality follows from Lemma~\ref{lm:R2_gauge}.
Because $R$ is a symmetry, this implies $a_{v}a_{R^{2}v}=a_{Rv}a_{R^{3}v}$, and hence $a_{v}a_{Rv}a_{R^{2}v}a_{R^{3}v}=1$.

Similarly, 
\begin{eqnarray*}
P_x \left(A_{v}A_{R^{2}v}\right) & = & \Big[ \prod_{\ell\ni v}c_{\ell}^{x}\left(P_x \right) \Big] \Big[ \prod_{\ell\ni R^{2}v}c_{\ell}^{x}\left(P_x \right) \Big] A_{P_x v} A_{P_y v} \\
 & = &
  \Big[ \prod_{\ell\ni v}c_{\ell}^{x}\left(P_x \right) \Big] \Big[ \prod_{\ell\ni v}c_{R^2 \ell}^{x}\left(P_x \right) \Big] A_{P_x v} A_{P_y v}\\
 & = & A_{P_x v}A_{P_y v} \text{.}
\end{eqnarray*}
Therefore, $a_{v}a_{R^{2}v}=a_{P_{x}v}a_{P_{y}v}$, and hence $a_{v}a_{P_{x}v}a_{P_{y}v}a_{R^{2}v}=1$.\end{proof}

\begin{lem}
\label{lm:mpxpxy_soc}For any model in $TC\left(G\right)$, we have
$\sigma_{pxpxy}^{m}=a_{\mathscr{P}^{-1}\left(o\right)}=a_{\Gamma\left(R^{2}\right)}$,
where $o=\left(0,0\right)$.\end{lem}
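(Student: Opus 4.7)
The plan is to mirror the proof of Lemma~\ref{lm:mpxpxy}, but carefully tracking the additional $c^{x}_{\ell}(R)$ phase factors generated by spin-orbit coupling, and invoking Lemma~\ref{lm:R2_gauge} and Lemma~\ref{lm:r4} in place of the translation-invariance shortcuts available in the no-spin-orbit-coupling setting.

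I will begin with the same geometric setup as in Fig.~\ref{fig:m_rotation}: place an $m$-particle at a hole $h_{0}$ near the origin, let $h_{j}=R^{j}h_{0}$, and draw a simple cut $t_{0}$ from $h_{0}$ to $h_{1}$ supported in a small neighborhood of $o$, with $t_{j}=R^{j}t_{0}$ and $t=t_{0}t_{1}t_{2}t_{3}$ a simple closed cut enclosing $o$. Choosing $U_{R}^{m}(h_{0})=\mathcal{L}_{t_{0}}^{m}$ and applying Eq.~(\ref{eqn:explicit-mloc}) from Appendix~\ref{app:eloc} gives explicit formulas for $U_{R}^{m}(h_{j})$, directly analogous to Eqs.~(\ref{eq:r1}--\ref{eq:r3}).

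The key computational step is to show that $(U_{R}^{m})^{4}(h_{0})=\mathcal{L}_{t}^{m}$ survives intact in the presence of spin-orbit coupling. Under the conjugation action, $R(\mathcal{L}_{t_{j}}^{m})=\phi_{j}\mathcal{L}_{t_{j+1}}^{m}$ with $\phi_{j}=\prod_{\ell\in t_{j}}c_{\ell}^{x}(R)$, and expanding the four-fold product carefully shows that the only surviving phase factor is $\phi_{0}\phi_{2}$, because odd powers of $\phi_{j}$ cancel through the relation $(\mathcal{L}_{t_{j}}^{m})^{2}=1$. Since $t_{2}=R^{2}t_{0}$, the gauge choice (Eqs.~\ref{eq:gauge1}--\ref{eq:gauge3}) together with Lemma~\ref{lm:R2_gauge} gives $c_{R^{2}\ell}^{x}(R)=c_{\ell}^{x}(R)$, so $\phi_{2}=\phi_{0}$ and hence $\phi_{0}\phi_{2}=1$. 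Therefore $(U_{R}^{m})^{4}(h_{0})=\mathcal{L}_{t}^{m}=\prod_{v\in V_{t}}A_{v}$, yielding $\sigma_{pxpxy}^{m}=a_{V_{t}}$.

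The reduction of $a_{V_{t}}$ to $a_{\mathscr{P}^{-1}(o)}$ then parallels Lemma~\ref{lm:mpxpxy}, with Lemma~\ref{lm:r4} replacing the no-spin-orbit identity $a_{v}=a_{Rv}$. For any $v\in V_{t}$ with $R^{2}v\neq v$, the orbit $\{v,Rv,R^{2}v,R^{3}v\}$ lies entirely in $V_{t}$ and has $a$-product equal to $1$ by Lemma~\ref{lm:r4}, so it drops out of $a_{V_{t}}$; since $t$ was chosen small, the remaining $R^{2}$-fixed vertices in $V_{t}$ are exactly those projecting to $o$, giving $\sigma_{pxpxy}^{m}=a_{\mathscr{P}^{-1}(o)}$. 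The second equality $a_{\mathscr{P}^{-1}(o)}=a_{\Gamma(R^{2})}$ will follow because $\sigma_{pxpxy}^{m}$ is a well-defined invariant independent of $h_{0}$: repeating the same construction with a larger $R$-symmetric cut and reapplying Lemma~\ref{lm:r4} to eliminate all four-element rotation orbits leaves precisely the product of $a$-eigenvalues over $R^{2}$-fixed vertices, so both $a_{\mathscr{P}^{-1}(o)}$ and $a_{\Gamma(R^{2})}$ compute the same quantity $\sigma_{pxpxy}^{m}$.

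The main obstacle relative to the no-spin-orbit-coupling case is ensuring that the accumulated phase factors $\phi_{j}$ in the four-fold rotation product actually cancel, which is the sole reason the gauge-fixing apparatus of Lemmas~\ref{lm:cp}--\ref{lm:R2_gauge} is required; without the invariance $c^{x}_{R^{2}\ell}(R)=c^{x}_{\ell}(R)$, the clean identification $(U_{R}^{m})^{4}(h_{0})=\mathcal{L}_{t}^{m}$ would fail, and the whole reduction of $\sigma_{pxpxy}^{m}$ to a product of $a_{v}$ eigenvalues would collapse.
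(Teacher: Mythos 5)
Your proposal follows the paper's proof essentially step for step: the same rotated-cut construction $t=t_0t_1t_2t_3$, the same key observation that the only surviving spin-orbit phase in $(U^m_R)^4(h_0)$ is $\phi_0\phi_2=\prod_{\ell\in t_0}c^x_\ell(R)\,c^x_{R^2\ell}(R)=1$ by Lemma~\ref{lm:R2_gauge}, and the same replacement of the no-spin-orbit identity $a_v=a_{Rv}$ by Lemma~\ref{lm:r4} when discarding four-element rotation orbits from $a_{V_t}$. That is exactly the content of the paper's two-line proof, which simply flags these two modifications to Lemma~\ref{lm:mpxpxy}.

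The one place you genuinely depart from the paper is your justification of the last equality $a_{\mathscr{P}^{-1}(o)}=a_{\Gamma(R^2)}$ by ``repeating the same construction with a larger $R$-symmetric cut'' so that $V_t$ captures all of $\Gamma(R^2)$. That mechanism cannot work: a cut of the form $t_0(Rt_0)(R^2t_0)(R^3t_0)$ that is simple, closed and contractible bounds a disk carrying a $\mathbb{Z}_4$ action, which has a unique $R$-fixed point; hence the enclosed region can never contain vertices projecting to both $o$ and $\left(\frac{L}{2},\frac{L}{2}\right)$, both of which are fixed by $R$ on the torus and may host elements of $\Gamma(R^2)$. So no choice of cut makes $V_t\supseteq\Gamma(R^2)$, and the equality with $a_{\Gamma(R^2)}$ has to be argued directly at the level of the vertex set (relating the $R^2$-fixed vertices over the other fixed points of $R^2$ to those over $o$), not by enlarging $t$. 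The paper itself extracts all the claimed equalities from the single small cut via the orbit argument, so your detour here is both unnecessary and, as stated, incorrect; the rest of the argument stands.
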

\begin{proof}
We repeat the first paragraph of the proof to Lemma~\ref{lm:mpxpxy}.
The last equality in Eq.~\ref{eq:mrotation} is no longer obvious; it still holds
 because $R\left(\mathcal{L}_{t_{0}}^{e}\mathcal{L}_{t_{2}}^{e}\right)=R\left(\mathcal{L}_{t_{0}}^{e}\mathcal{L}_{R^{2}t_{0}}^{e}\right)=\mathcal{L}_{t_{1}}^{e}\mathcal{L}_{t_{3}}^{e}$,
by Lemma~\ref{lm:R2_gauge}.  In addition, the argument given in the proof of Lemma~\ref{lm:mpxpxy} that $\prod_{i=0}^3 a_{R^i v} = 1$ for $v \in V_t$ (with $\mathscr{P}\left(v\right)\neq o$ or $R^{2}v\neq v$) is no longer correct.  Instead, this fact follows directly from Lemma~\ref{lm:r4}.\end{proof}

\begin{prop}
\label{propn:TCG_A} No TC symmetry classes in
$\mathsf{A}$ are realizable in $TC(G)$.\end{prop}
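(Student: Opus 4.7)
The plan is to imitate the strategy used in the $TC_0(G)$ case for the class $\mathsf{A}$, but to replace the fixed-vertex-under-$R$ argument (which required no spin-orbit coupling) by a fixed-vertex-under-$R^{2}$ argument, which is all that Lemma~\ref{lm:mpxpxy_soc} now furnishes. The key observation is that even though $(U^{e}_{R})^{4}$ does not directly equal $(U^{e}_{R^{2}})^{2}$ as a definition, these two operators differ only by a $\zz$ factor set phase, so in $\zz$ the distinction disappears.

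First, suppose for contradiction that some model in $TC(G)$ realizes $\sigma^{m}_{pxpxy} = \sigma^{e}_{pxpxy} = -1$. By Lemma~\ref{lm:mpxpxy_soc}, $\sigma^{m}_{pxpxy} = a_{\Gamma(R^{2})}$, so $\Gamma(R^{2}) \neq \emptyset$; pick a vertex $v_{0} \in V$ with $R^{2} v_{0} = v_{0}$. Since $(R^{2})^{2} = 1$, Lemma~\ref{lm:e_inv} applies to $g = R^{2}$ and gives $(U^{e}_{R^{2}})^{2} = 1$.

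Next, the relation needed is $\sigma^{e}_{pxpxy} = (U^{e}_{R^{2}})^{2}$. By the projective multiplication law $U^{e}_{g_{1}} U^{e}_{g_{2}} = \omega_{e}(g_{1},g_{2}) U^{e}_{g_{1}g_{2}}$ with $\omega_{e} \in \{\pm 1\}$, we have $(U^{e}_{R})^{2} = \omega_{e}(R,R)\, U^{e}_{R^{2}}$, so
\begin{equation*}
\sigma^{e}_{pxpxy} = (U^{e}_{R})^{4} = \omega_{e}(R,R)^{2}\, (U^{e}_{R^{2}})^{2} = (U^{e}_{R^{2}})^{2} = 1,
\end{equation*}
contradicting the assumption $\sigma^{e}_{pxpxy} = -1$. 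Therefore no TC symmetry class in $\mathsf{A}$ is realizable in $TC(G)$.

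There is essentially no obstacle here beyond the small bookkeeping of passing from $U^{e}_{R}$ to $U^{e}_{R^{2}}$. The work was already done in establishing Lemma~\ref{lm:mpxpxy_soc}, which correctly generalized the no-spin-orbit statement $\sigma^{m}_{pxpxy} = a_{\Gamma(P_{x},P_{xy})}$ to the weaker $\sigma^{m}_{pxpxy} = a_{\Gamma(R^{2})}$ in the presence of spin-orbit coupling. The proposition merely observes that even this weaker guarantee of a $R^{2}$-fixed vertex suffices to force $\sigma^{e}_{pxpxy} = 1$, because the obstruction to charge conjugation symmetry fractionalization lives in $H^{2}(\z_{4},\zz) \cong \zz$ and is captured entirely by $(U^{e}_{R^{2}})^{2}$.
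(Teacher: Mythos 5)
Your proof is correct and follows essentially the same route as the paper's: Lemma~\ref{lm:mpxpxy_soc} produces a vertex fixed by $R^{2}$, Lemma~\ref{lm:e_inv} (applied with $g=R^{2}$) then forces $(U^{e}_{R^{2}})^{2}=1$, and since all factor-set phases are $\pm 1$ their even powers drop out of $\sigma^{e}_{pxpxy}=(U^{e}_{R})^{4}=(U^{e}_{R^{2}})^{2}$. The only difference is that you spell out the $\omega_e(R,R)^2$ bookkeeping that the paper leaves implicit.
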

\begin{proof}
Suppose that $\sigma_{pxpxy}^{m}=-1$. Then Lemma~\ref{lm:mpxpxy_soc}
tells us that there exists $v$ such that $R^{2}v=v$.  But then we can choose $U^e_{R^2}(v) = 1$, implying
$\sigma_{pxpxy}^{e}= \left( U_{R^{2}}^{e}\right)^{2}(v)=1$.
\end{proof}

\begin{lem}
\label{lm:mpxtpxy_soc}
For any model in $TC\left(G\right)$, we have $\sigma_{pxpxy}^{m}\sigma_{txty}^{m}=a_{\mathscr{P}^{-1}\left(\widetilde{o}\right)}=a_{\Gamma\left(T_{x} T_y R^{2}\right)}$,
where $\widetilde{o}=\left(\frac{1}{2},\frac{1}{2}\right)$.\end{lem}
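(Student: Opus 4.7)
The plan is to mimic the strategy used in the no-spin-orbit case (Lemma~\ref{lm:mpxpxymtxty}): change the origin from $o$ to $\widetilde{o}$, prove the shifted analog of Lemma~\ref{lm:mpxpxy_soc}, and then convert back using the purely group-theoretic identity of Lemma~\ref{lm:o_corner}. Concretely, shifting the origin amounts to replacing the generator $P_x$ by $\widetilde{P_x}=T_x P_x$, so the $\pi/2$ rotation about $\widetilde{o}$ becomes $\widetilde{R}=\widetilde{P_x}P_{xy}=T_x R$. A direct calculation using $R T_x R^{-1}=T_y$ gives
\begin{equation}
\widetilde{R}^{\,2}=T_x R\,T_x R=T_x T_y R^2,
\end{equation}
so that $\Gamma(\widetilde{R}^{\,2})=\Gamma(T_x T_y R^2)$ is precisely the set of vertices fixed by $\pi$-rotation about $\widetilde{o}$, which projects to $\widetilde{o}$.

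Before running the argument, I would verify that the gauge established for the proof of Lemma~\ref{lm:mpxpxy_soc} is still compatible with $\widetilde{R}$. The only gauge input needed is $c_{\widetilde{R}^{\,2}\ell}^{\mu}(\widetilde{R})=c_\ell^{\mu}(\widetilde{R})$. Applying Lemma~\ref{lm:cT} to strip the translations in $\widetilde{R}^{\,2}=T_x T_y R^2$ and in $\widetilde{R}=T_x R$, this reduces to $c_{R^2\ell}^{\mu}(R)=c_\ell^{\mu}(R)$, which is the content of Lemma~\ref{lm:R2_gauge} in the working gauge. Hence the same $R$-invariance identity used in Lemma~\ref{lm:r4} holds for $\widetilde{R}$: $a_v\,a_{\widetilde{R} v}\,a_{\widetilde{R}^{\,2}v}\,a_{\widetilde{R}^{\,3}v}=1$ for every $v\in V$.

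With these preliminaries in place, I would repeat the proof of Lemma~\ref{lm:mpxpxy_soc} verbatim with $R\to\widetilde{R}$, $o\to\widetilde{o}$, $\sigma_{pxpxy}^m\to\sigma_{\widetilde{px}pxy}^m$. Placing an $m$ particle at a hole $h_0$ near $\widetilde{o}$, choosing a cut $t_0$ from $h_0$ to $\widetilde{R}h_0$, forming $t_j=\widetilde{R}^{j}t_0$ and $t=t_0 t_1 t_2 t_3$ to be simple, the same manipulation as in Eqs.~(\ref{eq:r1})--(\ref{eq:mrotation}) yields $(U_{\widetilde{R}}^m)^4(h_0)=\mathcal{L}^m_t$, hence $\sigma_{\widetilde{px}pxy}^m=a_{V_t}$. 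The $\widetilde{R}$-analog of Lemma~\ref{lm:r4} then pairs off the vertices $v\in V_t$ with $\widetilde{R}^{\,2}v\neq v$ into orbits of size four contributing $1$, so that only vertices with $\widetilde{R}^{\,2}v=v$ (i.e.\ vertices projecting to $\widetilde{o}$ within $V_t$) survive, giving $\sigma_{\widetilde{px}pxy}^m=a_{\mathscr{P}^{-1}(\widetilde{o})}=a_{\Gamma(T_x T_y R^2)}$.

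Finally, Lemma~\ref{lm:o_corner} (which was explicitly noted to hold even with spin-orbit coupling) gives $\sigma_{\widetilde{px}pxy}^m=\sigma_{pxpxy}^m\,\sigma_{txty}^m$, and combining with the previous step yields the claimed identity. The only step requiring any care is the gauge check in the second paragraph, since the spin-orbit phases $c^x_\ell(g)$ no longer vanish identically; everything else is a direct translation of the arguments already developed. Thus I do not anticipate a substantive obstacle, provided one commits to the gauge of Eqs.~(\ref{eq:gauge1})--(\ref{eq:gauge3}) throughout.
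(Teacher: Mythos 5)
Your proposal is correct and follows essentially the same route as the paper: repeat the proof of Lemma~\ref{lm:mpxpxy_soc} with $P_x\to\widetilde{P_x}$ (so $R\to\widetilde{R}=T_xR$ and $\widetilde{R}^{\,2}=T_xT_yR^2$), note that the gauge identity of Lemma~\ref{lm:R2_gauge} survives the replacement $R^2\to T_xT_yR^2$ by Lemma~\ref{lm:cT}, and convert back via Lemma~\ref{lm:o_corner}. Your write-up merely makes explicit some steps (the $\widetilde{R}$-analog of Lemma~\ref{lm:r4}) that the paper leaves implicit.
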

\begin{proof}
We repeat the proof of Lemma~\ref{lm:mpxpxy_soc}, replacing $P_x \to \widetilde{P_x}$ and using Lemma~\ref{lm:o_corner}.  We also use the fact that $(\widetilde{P_x} P_{xy})^2 = T_x T_y R^2$  It should be noted that Lemma~\ref{lm:R2_gauge} still holds upon replacing $R^2 \to T_x T_y R^2$, by Lemma~\ref{lm:cT}.
\end{proof}

\begin{prop}
No TC symmetry classes in $\mathsf{B}$ are realizable in $TC\left(G\right)$.\end{prop}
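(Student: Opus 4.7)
The plan is to mirror exactly the proof of Proposition~\ref{propn:TCG_A} above, with the half-integer-shifted fourfold rotation center $\widetilde{o}$ playing the role of the origin $o$. The key bridge will be Lemma~\ref{lm:o_corner}, which identifies $\sigma^e_{pxpxy}\sigma^e_{txty}$ with $\sigma^e_{\widetilde{px}pxy} = (U^e_{\widetilde{P_x}} U^e_{P_{xy}})^4$; it will therefore suffice to show that this fourth power acts trivially on some state carrying a single $e$ particle.

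First I would invoke Lemma~\ref{lm:mpxtpxy_soc}: the hypothesis $\sigma^m_{pxpxy}\sigma^m_{txty}=-1$ forces $a_{\Gamma(T_x T_y R^2)} = -1$, and in particular there must exist a vertex $v \in V$ fixed by $T_x T_y R^2$. Next I would record two purely group-theoretic identities, using $\widetilde{P_x}=T_x P_x$, $R=P_x P_{xy}$, $R T_x = T_y R$, and $R^4 = 1$: namely $(\widetilde{P_x} P_{xy})^2 = (T_x R)^2 = T_x T_y R^2$, and $(T_x T_y R^2)^2 = 1$ (the latter because conjugation by $R^2$ inverts $T_x$ and $T_y$ and $R^4 = 1$). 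Setting $g := T_x T_y R^2$, the element $g$ is then of order two in $G$ and fixes $v$, so Lemma~\ref{lm:e_inv} yields $(U^e_g)^2 = 1$.

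Finally, since every $\zz$-valued projective phase squares to unity, I can collapse the fourth power via $(U^e_{\widetilde{P_x}} U^e_{P_{xy}})^4 = (U^e_{\widetilde{P_x} P_{xy}})^4 = (U^e_{(\widetilde{P_x} P_{xy})^2})^2 = (U^e_g)^2 = 1$, where the second equality repeatedly uses that any $\pm 1$ factor-set phase is killed by squaring. Combined with Lemma~\ref{lm:o_corner}, this gives $\sigma^e_{pxpxy}\sigma^e_{txty} = 1$, contradicting the definition of $\mathsf{B}$.

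The argument is a direct transcription of the $\mathsf{A}$ proof, so I do not anticipate a genuine obstacle. The only steps requiring any care are verifying $(T_x T_y R^2)^2 = 1$ as an identity in the space group $G$, and tracking the sign factors when passing from a product of two $U^e$ operators to a $U^e$ of the product; both are handled by the observation that any phase in $\{\pm 1\}$ squares to $1$.
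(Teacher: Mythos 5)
Your proof is correct and is essentially the paper's own argument: the paper proves $\mathsf{B}$ unrealizable by repeating the $\mathsf{A}$ argument with Lemma~\ref{lm:mpxtpxy_soc} in place of Lemma~\ref{lm:mpxpxy_soc}, i.e., the hypothesis forces a vertex fixed by $T_xT_yR^2=(\widetilde{P_x}P_{xy})^2$, whence Lemma~\ref{lm:e_inv} and Lemma~\ref{lm:o_corner} give $\sigma^e_{pxpxy}\sigma^e_{txty}=1$. Your explicit verification of $(T_xT_yR^2)^2=1$ and the collapsing of the $\pm 1$ factor-set phases under squaring fills in exactly the details the paper leaves implicit.
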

\begin{proof}
This follows by the same argument used to prove Prop.~\ref{propn:TCG_A}, using Lemma~\ref{lm:mpxtpxy_soc} in place of Lemma~\ref{lm:mpxpxy_soc}.
\end{proof}

\begin{lem}
\label{lm:mtypx_soc}For any model in $TC\left(G\right)$, if $\sigma_{px}^{e}=\sigma_{txpx}^{e}=-1$, then
$\sigma_{pxpxy}^{m}\sigma_{typx}^{m}=a_{\mathscr{P}^{-1}\left(\kappa\right)}=a_{\Gamma\left(T_{y}R^{2}\right)}$, where $\kappa=\left(0,\frac{1}{2}\right)$.\end{lem}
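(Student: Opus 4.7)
The plan is to adapt the proof of Lemma~\ref{lm:mpxpxytypx}, tracking the extra $c_{\ell}^{x}(P_{x})$ phase factors produced by spin-orbit coupling and then using the hypothesis $\sigma_{px}^{e}=\sigma_{txpx}^{e}=-1$ to cancel them. First I would reuse the geometric data of Lemma~\ref{lm:mpxpxytypx}: pick a hole $h_{0}$ near the y-axis, set $h_{1}=P_{x}h_{0}$, $h_{2}=T_{y}h_{0}$, $h_{3}=T_{y}h_{1}$, and choose cuts $t_{0}$ with $P_{x}t_{0}=t_{0}$, $t_{1}$ with $t_{2}=P_{x}t_{1}$ and $t_{3}=T_{y}t_{0}$, so that the closed cut $t=t_{0}t_{1}t_{3}t_{2}$ is simple and encloses only vertices projecting to the y-axis segment between $h_{0}$ and $h_{2}$, with $\mathscr{P}^{-1}(o)$ and $\mathscr{P}^{-1}(\kappa)$ inside. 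As noted in the proof of Lemma~\ref{lm:mpxpxytypx}, Eq.~\eqref{eq:mtypx} $\sigma_{typx}^{m}=\mathcal{L}_{t_{0}}^{m}\mathcal{L}_{t_{1}}^{m}T_{y}(\mathcal{L}_{t_{0}}^{m})P_{x}(\mathcal{L}_{t_{1}}^{m})$ holds in general. Using the gauge $c_{\ell}^{x}(T_{y})=1$ simplifies $T_{y}(\mathcal{L}_{t_{0}}^{m})=\mathcal{L}_{t_{3}}^{m}$, while spin-orbit coupling contributes $P_{x}(\mathcal{L}_{t_{1}}^{m})=\phi\,\mathcal{L}_{t_{2}}^{m}$ with $\phi=\prod_{\ell\in t_{1}}c_{\ell}^{x}(P_{x})$, so $\sigma_{typx}^{m}=\phi\,\mathcal{L}_{t}^{m}=\phi\,a_{V_{t}}$.

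Multiplying by $\sigma_{pxpxy}^{m}=a_{\mathscr{P}^{-1}(o)}$ from Lemma~\ref{lm:mpxpxy_soc} and splitting $V_{t}=\mathscr{P}^{-1}(o)\sqcup\mathscr{P}^{-1}(\kappa)\sqcup V'$, the identity to be proved reduces to $\phi\,a_{V'}=1$, where $V'$ is the set of generic y-axis vertices in the enclosed range. The hypothesis $\sigma_{px}^{e}=-1$ together with Lemma~\ref{lm:e_inv} forbids any $P_{x}$-fixed vertex, so $V'$ decomposes into $P_{x}$-pairs $(v,P_{x}v)$ with identical projection. Applying $P_{x}$-covariance of $A_{v}$ on the symmetric ground state yields the key identity $a_{v}a_{P_{x}v}=\prod_{\ell\in\operatorname{star}(v)}c_{\ell}^{x}(P_{x})$, so $a_{V'}$ is a product of $c_{\ell}^{x}(P_{x})$ over edges touching a fundamental domain of $V'$ under $P_{x}$.

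The main obstacle is to match this edge product against $\phi$. My plan is to choose $t_{1}$ to hug the y-axis from the right, so that its edges are precisely the $P_{x}$-straddling edges crossing the y-axis in the appropriate $y$-range; then the edges contributing to $a_{V'}$ either lie in $t_{1}$ (so their $c_{\ell}^{x}(P_{x})$ factors cancel against $\phi$) or come in $P_{y}$-related pairs that can be eliminated using the identity $a_{v}a_{P_{x}v}a_{P_{y}v}a_{R^{2}v}=1$ of Lemma~\ref{lm:r4}. The second hypothesis $\sigma_{txpx}^{e}=-1$ enters by ruling out $T_{x}P_{x}$-fixed vertices, guaranteeing that these $P_{y}$-pairings stay within $V'$ and do not degenerate at the $x=\tfrac{1}{2}$ axis. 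Once $\sigma_{pxpxy}^{m}\sigma_{typx}^{m}=a_{\mathscr{P}^{-1}(\kappa)}$ is established, the remaining equality $a_{\mathscr{P}^{-1}(\kappa)}=a_{\Gamma(T_{y}R^{2})}$ follows by a pairing argument parallel to the proof of Lemma~\ref{lm:mpxpxy_soc}: vertices in $\Gamma(T_{y}R^{2})$ projecting to the other $T_{y}R^{2}$-fixed points of $T^{2}$ fall into $P_{x}$- and $T_{x}P_{x}$-orbits whose $a$-products are forced to $1$ by Lemma~\ref{lm:r4}, again using $\sigma_{px}^{e}=\sigma_{txpx}^{e}=-1$.
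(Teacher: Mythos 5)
Your strategy diverges from the paper's at the decisive step, and the divergence is where the gap lies. The paper does \emph{not} keep the four-hole cut of Lemma~\ref{lm:mpxpxytypx} and then cancel the spin-orbit phase against vertex eigenvalues. Instead it reroutes $t_1$ through two additional holes $h_4=P_y h_0$ and $h_5=P_y h_1$, so that the two halves $\widetilde{h_0h_4}$ and $\widetilde{h_4h_2}$ are individually invariant under $P_y$ and $T_yP_y$; an edge-pairing argument (using $c^x_\ell(P_x)c^x_{P_y\ell}(P_x)=1$, which follows from Lemma~\ref{lm:R2_gauge} and the cocycle condition, together with the hypotheses to show that no edge is fixed pointwise by $P_y$ or $T_yP_y$) then gives $P_x(\mathcal{L}^m_{t_1})=\mathcal{L}^m_{t_2}$ exactly, i.e.\ your $\phi$ equals $1$ by construction. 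The vertex cancellation is then done with the four-tuples $\{v,P_xv,P_yv,P_xP_yv\}$ of Lemma~\ref{lm:r4}. Your alternative target identity $\phi\, a_{V'}=1$ is in fact equivalent to the paper's statement — the combination $c^x_{t_1}(P_x)\,a_{V_t}$ is invariant under deformations of $t_1$ precisely because $c^x_{c}(P_x)=\prod_{v\in V_c}a_va_{P_xv}$ for a closed cut $c$ — and you correctly identify the key ingredients: Eq.~(\ref{eq:mtypx}), the relation $a_va_{P_xv}=\prod_{\ell\in\operatorname{star}(v)}c^x_\ell(P_x)$, Lemma~\ref{lm:r4}, and the use of the two hypotheses to exclude $P_y$- and $T_yP_y$-fixed vertices.

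The gap is that the matching of $\phi$ against $a_{V'}$, which you rightly call the main obstacle, is not carried out, and your description of it would fail as written. A cut hugging the $y$-axis at $x=\epsilon$ contains every edge whose projection crosses that line — including edges running from the far left to the far right and $P_x$-invariant edges with swapped endpoints — not ``precisely the $P_x$-straddling edges''; these contribute to $\phi$ but not to $a_{V'}$. Conversely, $a_{V'}$ receives contributions from edges joining $v$ to $P_xv$ (two stacked axis vertices) and from edges leaving the axis to the left, neither of which lies in $t_1$. Making all of these cancel requires the $P_x$-pairing $c^x_\ell(P_x)c^x_{P_x\ell}(P_x)=1$, the $P_y$- and $T_yP_y$-pairings over a strip covering exactly one period in $y$, and parity arguments for edges invariant under some $T_y^nP_y$ with ends interchanged — essentially the full edge bookkeeping the paper performs on its symmetrized cut, plus additional cases it avoids. (Also, the relevant mirror for the second hypothesis is the line $y=\tfrac{1}{2}$, not ``the $x=\tfrac{1}{2}$ axis.'') So the proposal is a viable alternative skeleton, but the step that carries the actual content of the lemma is missing.
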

\begin{proof}
As shown in Fig.~\ref{fig:m_pxty}, choose $h_{0} \in H$ near the
$y$-axis, let $h_{1}=P_{x} h_{0}$, $h_{2}=T_{y} h_{0}$,
$h_{3}=T_{y} h_{1}$, $h_{4}=P_{y} h_{0}$,
$h_{5}=P_{y} h_{1}$.   Draw a simple cut $t\in\bar{W}$ joining $h_{0}$,
$h_{1}$, $h_{5}$, $h_{3}$, $h_{2}$, $h_{4}$, $h_{0}$ in turn.  We denote the part of $t$
joining two successive holes by, for example, $\widetilde{h_0 h_1}$, and that joining three successive holes by,
for example, $\widetilde{h_0 h_1 h_5} = \widetilde{h_0 h_1} \widetilde{h_1 h_5}$.  We let $t_0 = \widetilde{h_{0}h_{1}}$,
$t_1 = \widetilde{h_{0} h_4 h_{2}}$, $t_2 \widetilde{h_{1} h_5 h_{3}}$ and $t_3 = \widetilde{h_{2}h_{3}}$.
We choose $h_{0}$ and $t$ such that any vertices enclosed
by $t$ are located on the $y$-axis, and no vertex with $y$-coordinate greater than $1/2$ is enclosed by $t$.
Moreover, $t$ is constructed so that $T_{y}t_{0}=t_{3}$, $P_{x}t_{1}=t_{2}$,
$P_{y}\widetilde{h_{0}h_{4}}=\widetilde{h_{0}h_{4}}$, $T_{y}P_{y}\widetilde{h_{2}h_{4}}=\widetilde{h_{2}h_{4}}$.
Then we have 
\begin{equation}
\sigma_{typx}^{m}=\mathcal{L}_{t_{0}}^{m}\mathcal{L}_{t_{1}}^{m}T_{y}\left(\mathcal{L}_{t_{0}}^{m}\right)P_{x}\left(\mathcal{L}_{t_{1}}^{m}\right)\label{eq:mtypx-1} \text{,}
\end{equation}
using the same argument leading to Eq.~\ref{eq:mtypx} in the
proof of Lemma~\ref{lm:mpxpxytypx}. 

In our chosen gauge, $T_{y}\left(\mathcal{L}_{t_{0}}^{m}\right)=\mathcal{L}_{t_{3}}^{m}$.
We now prove $P_{x}\left(\mathcal{L}_{t_{1}}^{m}\right)=\mathcal{L}_{t_{2}}^{m}$
by showing $P_{x}\left(\mathcal{L}_{\widetilde{h_{0}h_{4}}}^{m}\right)=\mathcal{L}_{\widetilde{h_{1}h_{5}}}^{m}$
and $P_{x}\left(\mathcal{L}_{\widetilde{h_{4}h_{2}}}^{m}\right)=\mathcal{L}_{\widetilde{h_{5}h_{3}}}^{m}$. 

First, to show $P_{x}\left(\mathcal{L}_{\widetilde{h_{0}h_{4}}}^{m}\right)=\mathcal{L}_{\widetilde{h_{1}h_{5}}}^{m}$,
let 
\begin{eqnarray*}
E_{0}^{py} & = & \left\{ \ell \in E | P_{y} \ell=\ell\mbox{ with ends of } \ell \text{ fixed}\right\} ,\\
E_{1}^{py} & = & \left\{ \ell \in E | P_{y}\ell=\ell \mbox{ with ends of } \ell \text{ interchanged}\right\} .
\end{eqnarray*}
Then $E-\left(E_{0}^{py}\cup E_{1}^{py}\right)$ can be divided into
pairs $\left\{ \ell,P_{y}\ell\right\} $. Let $E_{2}^{py}$ be a set formed
by picking one edge from each such pair. Since $\left|P_{y}\ell \cap\widetilde{h_{0}h_{4}}\right|=\left|\ell\cap P_{y}\widetilde{h_{0}h_{4}}\right|=\left|\ell\cap\widetilde{h_{0}h_{4}}\right|$,
we have 
\[
\mathcal{L}_{\widetilde{h_{0}h_{4}}}^{m}=\prod_{\ell\in E_{0}^{py}\cup E_{1}^{py}}\left(\sigma_{\ell}^{x}\right)^{\left| \ell \cap\widetilde{h_{0}h_{4}}\right|}\prod_{\ell\in E_{2}^{py}}\left(\sigma_{\ell}^{x}\sigma_{P_{y}\ell}^{x}\right)^{\left| \ell\cap\widetilde{h_{0}h_{4}}\right|} \text{.}
\]
We notice $\sigma_{px}^{e}=-1$ implies there is no $v$ such that
$P_{x}v=v$, by Lemma~\ref{lm:e_inv}.
Hence there is no $v$ such that $P_{y}v=v$; otherwise $P_{xy}v$
is fixed under $P_{x}$. Thus, $E_{0}^{py}$ is empty. In addition,
$\left|\ell \cap\widetilde{h_{0}h_{4}}\right|$ is even for $\ell \in E_{1}^{py}$.  Finally,
\begin{eqnarray}
 &  & c_{\ell}^{x}\left(P_{x}\right)c_{P_{y}\ell}^{x}\left(P_{x}\right)=c_{\ell}^{x}\left(P_{x}\right)c_{R^{2}P_{y}\ell}^{x}\left(P_{x}\right)\nonumber \\
 & = & c_{\ell}^{x}\left(P_{x}\right)c_{P_{x}\ell}^{x}\left(P_{x}\right)=c_{\ell}^{x}\left(P_{x}^{2}\right)=c_{\ell}^{x}\left(1\right)=1,\label{eq:cc}
\end{eqnarray}
so we have 
\[
P_{x}\left(\mathcal{L}_{\widetilde{h_{0}h_{4}}}^{m}\right)=\mathcal{L}_{\widetilde{h_{1}h_{5}}}^{m}.
\]

Similarly, to show $P_{x}\left(\mathcal{L}_{\widetilde{h_{4}h_{2}}}^{m}\right)=\mathcal{L}_{\widetilde{h_{5}h_{3}}}^{m}$,
let 
\begin{eqnarray*}
E_{0}^{typy} & = & \left\{ \ell \in E | T_{y}P_{y}\ell=\ell \text{ with ends of } \ell \text{ fixed}\right\} ,\\
E_{1}^{typy} & = & \left\{ \ell \in E | P_{y}\ell=\ell \text{ with ends of } \ell \text{ interchanged}\right\} .
\end{eqnarray*}
Then $E-\left(E_{0}^{typy}\cup E_{1}^{typy}\right)$ can be divided
into pairs $\left\{ \ell,T_{y}P_{y}\ell\right\} $. Let $E_{2}^{typy}$
be a set formed by choosing one edge from each such pair, and let $E_{01}^{typy} =E_{0}^{typy}\cup E_{1}^{typy}$.  Then,
\[
\mathcal{L}_{\widetilde{h_{4}h_{2}}}^{m}=\prod_{\ell\in E_{01}^{typy}}\left(\sigma_{\ell}^{x}\right)^{\left|\ell\cap\widetilde{h_{4}h_{2}}\right|}\prod_{\ell\in E_{2}^{typy}}\left(\sigma_{\ell}^{x}\sigma_{T_{y}P_{y}\ell}^{x}\right)^{\left|\ell\cap\widetilde{h_{4}h_{2}}\right|}.
\]
We notice $\sigma_{txpx}^{e}=-1$ implies there is no $v$ such that
$T_{x}P_{x}v=v$, by Lemma~\ref{lm:e_inv}.
Hence there is no $v$
such that $T_{y}P_{y}v=v$; otherwise $P_{xy}v$ is fixed under $T_{x}P_{x}$.
Thus, $E_{0}^{typy}$ is empty. In addition, $\left|l\cap\widetilde{h_{4}h_{2}}\right|$ is even for $l\in E_{1}^{typy}$.
Finally,
\begin{equation*}
c_{\ell}^x(P_x) c^x_{T_y P_y \ell}(P_x) = c_{\ell}^x(P_x) c^x_{P_y \ell}(P_x) = 1 \text{,}
\end{equation*}
where the last equality was shown in Eq.~(\ref{eq:cc}).  
Therefore, we have 
\[
P_{x}\left(\mathcal{L}_{\widetilde{h_{4}h_{2}}}^{m}\right)=\mathcal{L}_{\widetilde{h_{5}h_{3}}}^{m}.
\]

Therefore, $P_{x}\left(\mathcal{L}_{t_{1}}^{m}\right)=\mathcal{L}_{t_{2}}^{m}$,
and hence $\sigma_{typx}^{m}=\mathcal{L}_{t}^{m}=a_{V_{t}}$. For $v \in V_t$, if $\mathscr{P}\left(v\right)\ne o, \kappa$,
then $v$, $P_{x}v$, $P_{y}v$, $P_{x}P_{y}v$ are four different
vertices in $V_t$.  This holds because $\sigma_{px}^{e}=-1$ requires $v\neq P_{x}v$ and
$P_{y}v\neq P_{x}P_{y}v$. Since $a_{v}a_{P_{x}v}a_{P_{y}v}a_{P_{x}P_{y}v}=1$
by Lemma~\ref{lm:r4}, we have $\sigma_{typx}^{m}=a_{V_{t}}=a_{\mathscr{P}^{-1}\left(o\right)}a_{\mathscr{P}^{-1}\left(\kappa\right)}.$
Hence, using Lemma~\ref{lm:mpxpxy_soc}, $\sigma_{pxpxy}^{m}\sigma_{typx}^{m}=a_{\mathscr{P}^{-1}\left(\kappa\right)}$.

Further, if $v\in\mathscr{P}^{-1}\left(\kappa\right)$ but $T_{y}R^{2}v\ne v$,
then $v$, $P_{x}v$, $T_{y}P_{y}v$ and $P_{x}T_{y}P_{y}v$ are distinct
vertices in $\mathscr{P}^{-1}\left(\kappa\right)$; $\sigma_{px}^{e}=-1$
requires that $v\neq P_{x}v$ and $\sigma_{txpx}^{e}=-1$ requires
that $v\neq T_{y}P_{y}v$. Using Lemma~\ref{lm:r4}, and the fact that $a_{T v} = a_v$ for any translation $T$, we have 
\[
a_{v}a_{P_{x}v}a_{T_{y}P_{y}v}a_{P_{x}T_{y}P_{y}v}=a_{v}a_{P_{x}v}a_{P_{y}v}a_{P_{x}P_{y}v}=1 \text{.}
\]
This implies that only those vertices $v \in \mathscr{P}^{-1}(\kappa)$ satisfying $v = T_y R^2 v$ give non-trivial contributions to $\sigma^m_{pxpxy} \sigma^m_{typx}$, and we have shown
\[
\sigma_{pxpxy}^{m}\sigma_{typx}^{m}=a_{\mathscr{P}^{-1}\left(\kappa\right)}=a_{\Gamma\left(T_{y}R^{2}\right)}.
\]
\end{proof}

\begin{lem}
\label{lm:invs}If $\sigma_{px}^{e}=\sigma_{txpx}^{e}=\sigma_{pxpxy}^{m}\sigma_{typx}^{m}=-1$,
then there exists $v\in V$ and $s=l_{1}l_{2}\cdots l_{q}\in W$ connecting
$v$, $v'=P_{x}v$ such that $T_{y}R^{2}v=v$ and $T_{y}R^{2}l=l$
with ends fixed for each edge $l$ in $s$.\end{lem}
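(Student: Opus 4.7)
The plan is as follows. Combining the three hypotheses with Lemma~\ref{lm:mtypx_soc} immediately gives $a_{\Gamma(T_yR^2)}=-1$, so in particular $\Gamma(T_yR^2)\neq\emptyset$; pick any $v_0\in\Gamma(T_yR^2)$. A direct computation with the defining relations \eqref{eq:px}--\eqref{eq:typx} shows that $P_x$ and $T_yR^2$ commute (geometrically, $T_yR^2$ is the half-turn about the point $\kappa=(0,\tfrac{1}{2})$, which lies on the $P_x$ axis), so $P_x$ preserves $\Gamma(T_yR^2)$. By Lemma~\ref{lm:e_inv} applied with $\sigma_{px}^e=-1$, $P_x$ fixes no vertex, so $v_0':=P_x v_0\neq v_0$, and we have two distinct vertices $v_0,v_0'\in\Gamma(T_yR^2)$ that we must now connect.

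The core step is to exhibit a path in the subgraph $(V^*,E^*)$ with $V^*:=\Gamma(T_yR^2)$ and $E^*:=\{\ell\in E : T_yR^2\ell=\ell\text{ with both ends fixed}\}$, running from some $v\in V^*$ to $P_x v$. My strategy is to start with an arbitrary path $s_0$ from $v_0$ to $v_0'$ in the connected graph $\mathcal{G}$, form its $T_yR^2$-image $s_0':=T_yR^2 s_0$ (which also joins $v_0$ to $v_0'$, since both endpoints are $T_yR^2$-fixed), and consider the closed contractible cycle $c=s_0\cdot(s_0')^{-1}$ at $v_0$, whose edge multiset is $T_yR^2$-invariant. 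Since contractible cycles decompose into plaquettes, and since $B_p|\psi_{0e}\rangle=b_p|\psi_{0e}\rangle$, one can perform local plaquette moves on $s_0$ that replace one edge by the remaining edges of a plaquette containing it; iterating such moves in a $T_yR^2$-compatible way should eliminate every edge of $s_0$ that is not in $E^*$, possibly after replacing the initial choice of $v_0$ by another element of $\Gamma(T_yR^2)$ encountered along the way.

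The main obstacle is verifying that this elimination procedure actually terminates in an honest $E^*$-path, rather than getting stuck, which requires careful bookkeeping of how individual plaquette moves interact with the $T_yR^2$ action. A likely cleaner alternative would be a contradiction argument: if no $v\in\Gamma(T_yR^2)$ could be joined to $P_x v$ inside $(V^*,E^*)$, then $P_x$ would act freely on the connected components of $(V^*,E^*)$ that meet $\Gamma(T_yR^2)$. Combining this with Lemma~\ref{lm:r4} and the identity $a_{P_xv}=a_{v}\prod_{\ell\ni v}c_\ell^x(P_x)$ (which follows from the $G$-invariance of $\mathcal{H}$ together with the spin-orbit factors) should pair the contributions to $a_{V^*}$ into squares, forcing $a_{\Gamma(T_yR^2)}=+1$ and contradicting Lemma~\ref{lm:mtypx_soc}. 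Whichever route is taken, the hard part is the combinatorial/geometric bookkeeping that translates the free $P_x$-action on $\Gamma(T_yR^2)$ into a statement about plaquette-level connectivity in the toric code.
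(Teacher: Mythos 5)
Your setup is right and matches the paper's: Lemma~\ref{lm:mtypx_soc} gives $a_{\Gamma(T_yR^2)}=-1$, hence $\Gamma(T_yR^2)\neq\emptyset$, and Lemma~\ref{lm:e_inv} with $\sigma^e_{px}=-1$ gives that $P_x$ acts freely on $\Gamma(T_yR^2)$. But your first route (deforming an arbitrary path $s_0$ from $v_0$ to $P_xv_0$ by plaquette moves into the $T_yR^2$-fixed subgraph) cannot work as stated: the lemma only asserts that \emph{some} $v\in\Gamma(T_yR^2)$ is joined to $P_xv$ inside the fixed subgraph, and for a generic choice of $v_0$ the two vertices $v_0$ and $P_xv_0$ may simply lie in different components of that subgraph, so no amount of homotopy bookkeeping will produce the desired path. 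The statement is a parity fact about the whole collection of fixed vertices, not a deformation fact about one of them.

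Your second route is the correct strategy and is essentially the paper's, but it has a genuine gap at its central step. The identity you invoke, $a_{P_xv}=a_v\prod_{\ell\ni v}c^x_\ell(P_x)$, runs over \emph{all} edges at $v$; when you multiply it over the vertices of a connected component $C$ of the fixed subgraph, the edges interior to $C$ cancel in pairs but the (many) edges leaving $\Gamma(T_yR^2)$ do not, so the product does not collapse to a square and you cannot conclude $a_Ca_{P_xC}=1$. The missing ingredient is the gauge-dependent reduction that is the heart of the paper's proof: using the gauge of Lemma~\ref{lm:R2_gauge} (so that $c^x_{T_yR^2\ell}(P_x)=c^x_{R^2\ell}(P_x)=c^x_\ell(P_x)$), the edges of $\operatorname{star}(v)$ not fixed by $T_yR^2$ cancel in $\{\ell,T_yR^2\ell\}$ pairs, leaving $a_va_{P_xv}=(-1)^{|E_0^v|}$ where $E_0^v$ counts only the edges at $v$ that are $T_yR^2$-fixed with ends fixed \emph{and} have $c^x_\ell(P_x)=-1$. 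Only after this reduction does the handshake argument close: the paper shows the vertices of odd degree in the graph $\mathcal{G}_0=(\Gamma(T_yR^2),E_0)$ are exactly those with $a_va_{P_xv}=-1$, that there is an odd number of $P_x$-pairs of such vertices (from $a_{\Gamma(T_yR^2)}=-1$), and that if no such vertex were joined to its $P_x$-image then some component of $\mathcal{G}_0$ would contain an odd number of odd-degree vertices, a contradiction. Your sketch has the right shape but, without the $(-1)^{|E_0^v|}$ reduction, the ``pairing into squares'' step does not go through.
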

\begin{proof}
By Lemma~\ref{lm:mtypx_soc}, $\sigma_{pxpxy}^{m}\sigma_{typx}^{m}=-1$
implies $\Gamma\left(T_{y}R^{2}\right)$ is non-empty. In addition,
$\sigma_{px}^{e}=-1$ implies that there is no $v\in V$ such that
$P_{x}v=v$. Let $J=\left\{ v\in\Gamma\left(T_{y}R^{2}\right)|a_{v}a_{P_{x}v}=-1\right\} $.
Then $J=\left\{ v_{1},v_{1}',v_{2},v_{2}',\cdots,v_{n},v_{n}'\right\} $
with $v_{i}'=P_{x}v_{i}$  for $i=1,2,\cdots,n$.
Here $n$ must be odd, since $-1=\sigma_{pxpxy}^{m}\sigma_{typx}^{m}=a_{\Gamma\left(T_{y}R^{2}\right)}=a_{J}=\left(-1\right)^{n}$.  In addition, $v'_i = T_y P_y v_i$, because $v'_i = P_x v_i = P_x T_y R^2 v_i = T_y P_y v_i$.

We consider the graph $\mathcal{G}_{0}=\left(Y,E_{0}\right)$, where
\begin{eqnarray*}
E_{0} & = & \left\{ \ell\in E|T_{y}R^{2}\ell=\ell\text{ with ends fixed},c^x_{\ell}\left(P_{x}\right)=-1\right\} ,\\
Y & = & \Gamma\left(T_{y}R^{2}\right).
\end{eqnarray*}
Let $E_{0}^{v}=\left\{ \ell\in E_{0}| \ell\ni v\right\} $. Now we show
that in $\mathcal{G}_{0}$, the degree of each vertex $v\in J$ is
odd, while the degree of $v\in Y-J$ is even. That is, $\left|E_{0}^{v}\right|$ is odd for $v\in J$ and
$\left|E_{0}^{v}\right|$ is even for $v\in Y-J$. To show this, we consider $v \in Y$ and
notice the following partition 
\[
\text{star}\left(v\right)=\Big[ \cup_{j}\left\{ \ell_{j},T_{y}R^{2}\ell_{j}\right\} \Big] \cup\text{star}_{T_{y}R^{2}} \left(v\right),
\]
where $\text{star}\left(v\right)=\left\{ \ell\in E| \ell\ni v\right\} $,
$j$ labels all distinct pairs $\left\{ \ell_{j},T_{y}R^{2}\ell_{j}\right\} $ for $\ell_j \in \operatorname{star}(v)$ with
$\ell_{j}\neq T_{y}R^{2}\ell_{j}$, and $\text{star}_{T_{y}R^{2}} \left(v\right)=\left\{ \ell\in\text{star}\left(v\right)|T_{y}R^{2}\ell=\ell\right\} $.
Then we have
\begin{eqnarray*}
 &  & P_{x}\left(A_{v}\right)=P_{x}\left(\prod_{\ell\in\text{star}\left(v\right)}\sigma_{\ell}^{x}\right)\\
 & = & \prod_{\ell\in\text{star}_{T_{y}R^{2}}\left(v\right)}c_{\ell}^{x}\left(P_{x}\right)\prod_{j}\left(c_{\ell_{j}}^{x}\left(P_{x}\right)c_{T_{y}R^{2}\ell_{j}}^{x}\left(P_{x}\right)\right)A_{P_{x}v}\\
 & = & \left(\prod_{\ell\in\text{star}_{T_{y}R^{2}}\left(v\right)}c_{\ell}^{x}\left(P_{x}\right)\right)A_{P_{x}v}\\
 & = & \left(-1\right)^{\left|E_{0}^{v}\right|}A_{P_{x}v} \text{,}
\end{eqnarray*}
where we used the fact that $c^x_{T_y R^2 \ell}(P_x) = c^x_{R^2 \ell}(P_x) = c^x_{\ell}(P_x)$.
It follows that $a_{v}a_{P_{x}v}=\left(-1\right)^{\left|E_{0}^{v}\right|}$.
So $\left|E_{0}^{v}\right|$ is odd for $v\in J$ and $\left|E_{0}^{v}\right|$
is even for $v\in Y-J$. 

Now we claim that there exists $v\in J$ and a path $s=\ell_{1}\ell_{2}\cdots \ell_{q}$
in $\mathcal{G}_{0}$ connecting $v$ with $P_{x}v$, which is a more detailed version of the result to be shown.
We prove this claim by contradiction, and assume there is no $v\in J$ such that $v$ and $P_{x}v$ are in the
same connected component of $\mathcal{G}_{0}$. Then, without loss of generality, we relabel pairs $v_i \leftrightarrow v'_i$, so that
each component has empty intersection with at least one of the sets $\{ v_1, \dots, v_n \}$ or $\{ v'_1, \dots, v'_n \}$.
Since $n$ is odd, there must then be at least one component of ${\cal G}_0$ containing an odd number of vertices in $J$.  This is a contradiction, since the number of vertices of odd degree is even in any graph.
\end{proof}
\begin{prop}
No TC symmetry classes in $\mathsf{C}'$
are realizable in $TC(G)$.\end{prop}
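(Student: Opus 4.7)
My plan is to leverage Lemma~\ref{lm:invs} to produce a vertex $\tilde v\in V$ with $S\tilde v=\tilde v$, where $S:=T_y R^2$ is the $\pi$-rotation about $\kappa=(0,\tfrac12)$, together with a path $\tilde s$ from $\tilde v$ to $P_x\tilde v$ whose edges are individually $S$-invariant (with endpoints fixed). Since $S^2=1$, Lemma~\ref{lm:e_inv} applied at $\tilde v$ gives $(U^e_S)^2=1$. The key object to exploit is the projective commutator $\sigma^e_{spx}:=[U^e_S,U^e_{P_x}]$, which is a well-defined element of $\{\pm1\}$ because $S$ and $P_x$ commute as group elements (both $R^2$ and $T_y$ commute with $P_x$). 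I will compute $\sigma^e_{spx}$ in two independent ways and equate the results to produce the desired contradiction.

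Abstractly, using $U^e_S=\omega(T_y,R^2)U^e_{T_y}U^e_{R^2}$ and factoring out scalars, the commutator decomposes as $\sigma^e_{spx}=[U^e_{R^2},U^e_{P_x}]\cdot[U^e_{T_y},U^e_{P_x}]$, where the second factor is $\sigma^e_{typx}$ by definition. The first factor equals $\sigma^e_{pxpxy}$: writing $A:=U^e_{P_x}$ and $r:=U^e_R=U^e_{P_x}U^e_{P_{xy}}$, one uses $(AB)(BA)=A B^2 A=\sigma^e_{px}\sigma^e_{pxy}$ to get $BA=\sigma^e_{px}\sigma^e_{pxy}r^{-1}$, then $ArA^{-1}=\sigma^e_{px}BA^{-1}=\sigma^e_{px}\sigma^e_{pxy}r^{-1}$, so $Ar^2A^{-1}=r^{-2}$, and therefore $[r^2,A]=r^2(Ar^{-2}A^{-1})=r^4=\sigma^e_{pxpxy}$. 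This yields the identity $\sigma^e_{spx}=\sigma^e_{pxpxy}\,\sigma^e_{typx}$.

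Concretely, I choose phases so that $U^e_{P_x}(\tilde v)=\mathcal{L}^e_{\tilde s}$ and $U^e_S(\tilde v)=1$, both achievable by overall-sign choices. Applying the localization formula of Appendix~\ref{app:eloc} with reference path $\tilde s$, and noting that $S\tilde s=\tilde s$ as a multiset (so $S(\mathcal{L}^e_{\tilde s})=c^z_{\tilde s}(S)\mathcal{L}^e_{\tilde s}$), one obtains $U^e_S(P_x\tilde v)=c^z_{\tilde s}(S)$, a scalar. Using the fundamental identity $U^e_{g_1}(g_2v)U^e_{g_2}(v)=\omega(g_1,g_2)U^e_{g_1g_2}(v)$ at $v=\tilde v$ with $(g_1,g_2)=(P_x,S)$ and $(S,P_x)$, and equating both expressions for $U^e_{SP_x}(\tilde v)$ (valid since $SP_x=P_xS$), yields $\sigma^e_{spx}=c^z_{\tilde s}(S)=c^z_{\tilde s}(R^2)$, where the last equality uses Lemma~\ref{lm:cT}. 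Combined with the previous paragraph,
\[
c^z_{\tilde s}(R^2)=\sigma^e_{pxpxy}\,\sigma^e_{typx}.
\]

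The remaining and hardest step is to show $c^z_{\tilde s}(R^2)=+1$, which will contradict the assumption $\sigma^e_{pxpxy}\sigma^e_{typx}=-1$. Each edge $\ell\in\tilde s$ satisfies $R^2\ell=T_y^{-1}\ell$, so its translation orbit is $R^2$-invariant; this falls into the case preceding Lemma~\ref{lm:R2_gauge} where the gauge does not directly fix $c^z_\ell(R^2)$, so a more structural argument is required. My plan for this step is to recompute $\sigma^e_{pxpxy}\sigma^e_{typx}$ directly as a closed $e$-string on a figure-eight cycle built from an $R$-orbit $\{s_0,s_1,s_2,s_3\}$ through $\tilde v$, the path $\tilde s$, its $T_y$-translate, and $P_x$-images, paralleling the no-SOC argument for $\mathsf C$ already given in this appendix. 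Using $T_y\tilde v=R^2\tilde v$ (forced by $S\tilde v=\tilde v$), the two loops of this figure-eight are related by the lattice symmetry $P_y=R^2P_x$, so the product of their enclosed $b_p$ eigenvalues squares away by $G$-invariance of the Hamiltonian. The residual $c^z$ factors are then reduced by cocycle manipulations (together with $\sigma^e_{px}=\sigma^e_{txpx}=-1$, which constrains which $P_y$- and $T_xP_x$-fixed edges are available) to the single product $c^z_{\tilde s}(R^2)$, forcing it to equal $+1$ and completing the contradiction.
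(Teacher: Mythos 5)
Your reduction is correct and rather elegant as far as it goes: the commutator identity $[U^e_{R^2},U^e_{P_x}]=\sigma^e_{pxpxy}$, the factorization $[U^e_{S},U^e_{P_x}]=\sigma^e_{pxpxy}\sigma^e_{typx}$ for $S=T_yR^2$, and the localization computation giving $[U^e_S,U^e_{P_x}]=c^z_{\tilde s}(S)=c^z_{\tilde s}(R^2)$ on the path supplied by Lemma~\ref{lm:invs} all check out. This packages the two separate string computations of the paper (the $(U^e_R)^4$ calculation and the $T_y$--$P_x$ commutator) into a single projective commutator, which is a genuinely different and cleaner organization of the first half of the argument.

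However, there is a real gap at the final step, and it is precisely where the mathematical content of the proposition lives. You must show $c^z_{\tilde s}(R^2)=+1$, and your sketch does not do this. First, the quantity is not disposable by gauge choice: every edge of $\tilde s$ has $R^2\boell=\boell$, so under a gauge transformation $c^z_\ell(R^2)\to(\gamma^z_{\boell})^2c^z_\ell(R^2)=c^z_\ell(R^2)$ is gauge-invariant, and indeed models with $\sigma^e_{pxpxy}\sigma^e_{typx}=-1$ exist, so the equality can only follow from the hypotheses $\sigma^e_{px}=\sigma^e_{txpx}=-1$ — which your sketch invokes only parenthetically. Second, the proposed route is circular: if you ``recompute $\sigma^e_{pxpxy}\sigma^e_{typx}$'' and reduce the answer back to $c^z_{\tilde s}(R^2)$, you have forced nothing; if instead the recomputation yields $+1$ outright, then the commutator machinery was unnecessary and you have simply reproduced the paper's direct argument, which you have not actually carried out. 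Third, the claim that the enclosed $b_p$ eigenvalues of the two lobes of your figure-eight ``square away by $G$-invariance'' is false with spin-orbit coupling, since $g(B_p)=\bigl(\prod_{\ell\in p}c^z_\ell(g)\bigr)B_{gp}$ and those $c^z$ phases are exactly the quantities in dispute. The paper instead eliminates the $s_\kappa$-dependence by substituting the operator identities $\cL^e_{s_\kappa}P_x(\cL^e_{s_\kappa})=\sigma^e_{px}=-1$ and $\cL^e_{R^3s_\kappa}P_x(\cL^e_{Rs_\kappa})=\sigma^e_{txpx}=-1$ into the product $\sigma^e_{pxpxy}\sigma^e_{typx}$, leaving only $c^z$ factors of an auxiliary path $s$ that cancel via Lemma~\ref{lm:R2_gauge}; some such use of the hypotheses is unavoidable, and your proposal would need to supply it explicitly to be complete.
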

\begin{proof}
Assume $\sigma_{px}^{e}=\sigma_{txpx}^{e}=\sigma_{pxpxy}^{m}\sigma_{typx}^{m}=-1$.
Lemma~\ref{lm:invs} tells us that there exists $v\in V$ and $s_{\kappa}\in W$
connecting $v$, $v'=P_{x}v$ such that $T_{y}R^{2}v=v$ and $T_{y}R^{2}s_{\kappa}=s_{\kappa}$,
where the subscript $\kappa$ indicates that $s_{\kappa}$ connects
 vertices with $\mathscr{P}\left(v\right)=\mathscr{P}\left(v' \right) =\kappa \equiv\left(0,\frac{1}{2}\right)$.
Choose $s\in W$ joining $P_{x}v$ to $R v $. Let $v_{j}=R^{j}v$,
$v_{j}'=R^{j}P_{x}v$ for $j=1,2,3$. 

In order to compute $\sigma_{pxpxy}^{e}$, we follow Appendix~\ref{app:eloc} to choose
\begin{eqnarray*}
U_{R}^{e}\left(v\right) & = &  \mathcal{L}_{s_{\kappa}s}^{e},\\
U_{R}^{e}\left(v_{1}\right) & = & \mathcal{L}_{s_{\kappa}s}^{e}\mathcal{L}_{s_{\kappa}s}^{e}R\left(\mathcal{L}_{s_{\kappa}s}^{e}\right),\\
U_{R}^{e}\left(v_{2}\right) & = & \mathcal{L}_{s_{\kappa}sR\left(s_{\kappa}s\right)}^{e}\mathcal{L}_{s_{\kappa}s}^{e}R\left(\mathcal{L}_{s_{\kappa}sR\left(s_{\kappa}s\right)}^{e}\right),\\
U_{R}^{e}\left(v_{3}\right) & = & \mathcal{L}_{s_{\kappa}sR\left(s_{\kappa}s\right)R^{2}\left(s_{\kappa}s\right)}^{e}\mathcal{L}_{s_{\kappa}s}^{e}R\left(\mathcal{L}_{s_{\kappa}sR\left(s_{\kappa}s\right)R^{2}\left(s_{\kappa}s\right)}^{e}\right).
\end{eqnarray*}
We have
\begin{equation*}
\sigma^e_{pxpxy} = (U^e_R)^4(v) = \cL^e_{s_\kappa s R^2 ( s_\kappa s ) } R ( \cL^e_{s_\kappa s R^2 ( s_\kappa s ) } ) \text{.}
\end{equation*}
Noticing $R\left(\mathcal{L}_{s_{\kappa}s}^{e}\mathcal{L}_{R^{2}\left(s_{\kappa}s\right)}^{e}\right)=\mathcal{L}_{R\left(s_{\kappa}s\right)}^{e}\mathcal{L}_{R^{3}\left(s_{\kappa}s\right)}^{e}$
by Lemma~\ref{lm:R2_gauge}, we then have 
\[
\sigma_{pxpxy}^{e}=\mathcal{L}_{s_{\kappa}sR\left(s_{\kappa}s\right)R^{2}\left(s_{\kappa}s\right)R^{3}\left(s_{\kappa}s\right)}^{e}.
\]

To calculate $\sigma_{typx}^{e}$,
we choose 
\begin{eqnarray*}
U_{P_{x}}^{e}\left(v_{2}\right) & = & \mathcal{L}_{T_{y}^{-1}s_{\kappa}}^{e}= \mathcal{L}_{R^{2}s_{\kappa}}^{e},\\
U_{T_{y}}^{e}\left(v_{2}\right) & = & \mathcal{L}_{s_{\kappa}sR\left(s_{\kappa}s\right)}^{e}.
\end{eqnarray*}
Following Appendix~\ref{app:eloc}, we further choose
\begin{eqnarray*}
U_{P_{x}}^{e}\left(v\right) & = & \mathcal{L}_{s_{\kappa}sR\left(s_{\kappa}s\right)}^{e} \mathcal{L}_{R^{2}s_{\kappa}}^{e}P_{x}\left(\mathcal{L}_{s_{\kappa}sR\left(s_{\kappa}s\right)}^{e}\right),\\
U_{T_{y}}^{e}\left(v_{2}'\right) & = & \mathcal{L}_{R^{2}s_{\kappa}}^{e} \mathcal{L}_{s_{\kappa}sR\left(s_{\kappa}s\right)}^{e}T_{y}\left(\mathcal{L}_{R^{2}s_{\kappa}}^{e}\right)\\
 &  & \mathcal{L}_{R^{2}s_{\kappa}}^{e}\mathcal{L}_{s_{\kappa}sR\left(s_{\kappa}s\right)}^{e}\mathcal{L}_{s_{\kappa}}^{e}.
\end{eqnarray*}
Thus, 
\begin{eqnarray*}
\sigma_{typx}^{e} & = & U_{T_{y}}^{e}U_{P_{x}}^{e}\left(U_{T_{y}}^{e}\right)^{-1}\left(U_{P_{x}}^{e}\right)^{-1}\left(v'\right)\\
 & = & U_{T_{y}}^{e}\left(v_{2}'\right)U_{P_{x}}^{e}\left(v_{2}\right)\left(U_{T_{y}}^{e}\left(v_{2}\right)\right)^{-1}\left(U_{P_{x}}^{e}\left(v\right)\right)^{-1}\\
 & = & \mathcal{L}_{s_{\kappa}sR\left(s_{\kappa}s\right)}^{e}\mathcal{L}_{R^{2}s_{\kappa}}^{e}P_{x}\left(\mathcal{L}_{s_{\kappa}sR\left(s_{\kappa}s\right)}^{e}\right)\mathcal{L}_{s_{\kappa}}^{e}.
\end{eqnarray*}

Finally, we have 
\begin{equation}
\sigma_{pxpxy}^{e}\sigma_{typx}^{e}  =  \mathcal{L}_{R^{2}\left(s\right)R^{3}\left(s_{\kappa}s\right)}^{e}\mathcal{L}_{s_{\kappa}}^{e}P_{x}\left(\mathcal{L}_{s_{\kappa}sR\left(s_{\kappa}s\right)}^{e}\right). \label{eqn:tosimp}
\end{equation}
This can be simplified, first noting that $U^e_{P_x}(v') = \cL^e_{s_\kappa} U^e_{P_x}(v) P_x ( \cL^e_{s_\kappa} )$, and therefore
\begin{equation}
-1 = \sigma^e_{px} = U^e_{P_x}(v) U^e_{P_x}(v') = \mathcal{L}_{s_\kappa}^{e}P_{x}\left(\mathcal{L}_{s_\kappa}^{e}\right) \text{.} \label{eqn:simp1}
\end{equation}
In addition, we have $T_x P_x v_3 = v'_3$, so we choose
\begin{eqnarray*}
U^e_{T_x P_x}(v_3) &=& \cL^e_{R^3 s_{\kappa}} \\
U^e_{T_x P_x}(v_3) &=& \cL^e_{R^3 s_{\kappa}} U^e_{T_x P_x}(v_3) (T_x P_x)( \cL^e_{R^3 s_{\kappa}} ) \\
&=& \cL^e_{R^3 s_{\kappa}} U^e_{T_x P_x}(v_3) P_x ( \cL^e_{R s_{\kappa}} ) \text{.}
\end{eqnarray*}
Therefore,
\begin{equation}
-1 = \sigma^e_{txpx} = U^e_{T_x P_x}(v_3) U^e_{T_x P_x} (v'_3) = \cL^e_{R^3 s_{\kappa}} P_x ( \cL^e_{R s_{\kappa}} ) \text{.}
\label{eqn:simp2}
\end{equation}

Substituting  Eqs.~(\ref{eqn:simp1}) and~\ref{eqn:simp2}) into Eq.~(\ref{eqn:tosimp}), we have
\begin{eqnarray*}
 &  & \sigma_{pxpxy}^{e}\sigma_{typx}^{e}\\
 & = & c_{s}^{z}\left(P_{x}\right)c_{Rs}^{z}\left(P_{x}\right)\mathcal{L}_{R^{2}s}^{e}\mathcal{L}_{R^{3}s}^{e}\mathcal{L}_{P_{x}s}^{e}\mathcal{L}_{P_{x}Rs}^{e}\\
 & = & c_{s}^{z}\left(P_{x}\right)c_{Rs}^{z}\left(P_{x}\right)c_{P_{x}sR^{3}s}^{z}\left(R^{-1}\right)\mathcal{L}_{P_{x}sR^{3}s}^{e}R^{-1}\left(\mathcal{L}_{P_{x}sR^{3}s}^{e}\right)\\
 & = & c_{s}^{z}\left(P_{x}\right)c_{Rs}^{z}\left(P_{x}\right)c_{P_{x}sR^{3}s}^{z}\left(R^{-1}\right),
\end{eqnarray*}
where in the last line we used the fact that $P_x s R^3 s$ is a closed path.
Since $c^z_{s_1 s_2}(g) = c^z_{s_1}(g) c^z_{s_2}(g)$, we have
\begin{eqnarray*}
 & & \sigma_{pxpxy}^{e}\sigma_{typx}^{e} =  c_{s}^{z}\left(P_{x}\right)c_{Rs}^{z}\left(P_{x}\right)c_{P_{x}sR^{3}s}^{z}\left(R^{-1}\right) \\
 &=& c_{s}^{z}\left(P_{x}\right) c^z_{P_x s} (R^{-1}) c_{Rs}^{z}\left(P_{x}\right) c_{R^{3}s}^{z}\left(R^{-1}\right) \text{.}
 \end{eqnarray*}
 To simplify this further, we make repeated use of Eq.~(\ref{eqn:c-restriction}).  First, we note that
 \begin{equation*}
 c_{s}^{z}\left(P_{x}\right) c^z_{P_x s} (R^{-1}) = c_s^z (P_{xy}) \text{,}
 \end{equation*}
 and so
 \begin{equation}
 \sigma_{pxpxy}^{e}\sigma_{typx}^{e} = c_s^z (P_{xy}) c_{Rs}^{z}\left(P_{x}\right) c_{R^{3}s}^{z}\left(R^{-1}\right) \text{.}
 \label{eqn:int}
 \end{equation}
 Next, $c^z_{P_{xy} s} (P_x) c^z_{R s}(P_x) = c^z_{P_{xy} s} (P_x^2) = 1$, and so
 \begin{equation}
 c^z_{R s}(P_x) = c^z_{P_{xy} s}(P_x) \text{.} \label{eqn:int_sub1} \text{.}
 \end{equation}
 Moreover, $c^z_{R^2 s}(R) c^z_{R^3 s} (R^{-1}) = c^z_{R^3 s}(R^{-1} R) = 1$, so
 \begin{equation}
 c^z_{R^3 s}(R^{-1}) = c^z_{R^2 s}(R) \label{eqn:int_sub2} \text{.}
 \end{equation}
 Substituting Eqs.~(\ref{eqn:int_sub1}, \ref{eqn:int_sub2}) into Eq.~(\ref{eqn:int}), we have
 \begin{eqnarray*}
&&   \sigma_{pxpxy}^{e}\sigma_{typx}^{e} = c_s^z (P_{xy}) c^z_{P_{xy} s}(P_x)c^z_{R^2 s}(R) \\
&=& c^z_{s}(R) c^z_{R^2 s}(R) = 1 \text{,}
\end{eqnarray*}
where the last equality follows from Lemma~\ref{lm:R2_gauge}.
 
In short, $\sigma_{px}^{e}=\sigma_{txpx}^{e}=\sigma_{pxpxy}^{m}\sigma_{typx}^{m}=-1$
implies $\sigma_{pxpxy}^{e}\sigma_{typx}^{e}=1$. Therefore, no TC symmetry
classes in $\mathsf{C}'$ are realizable. 
\end{proof}

\section{Models in $TC\left(G\right)$}
\label{app:models}

\begin{figure*}[t]
\includegraphics[width=0.6\textwidth]{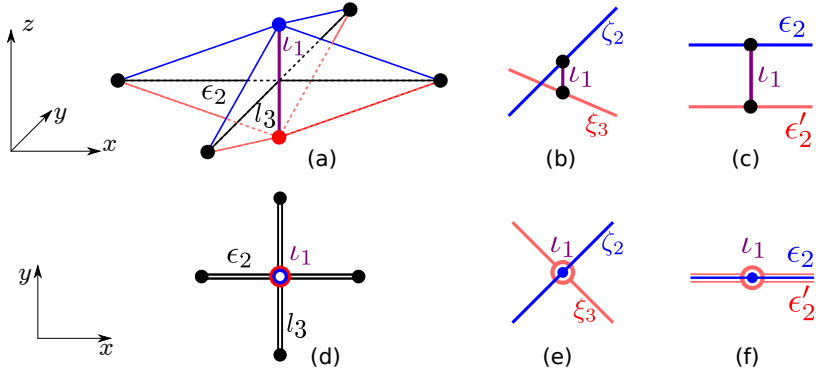}

\caption{(Color online) Depiction of the graphical notation used to represent stacking of vertices and edges.
The first row shows the connectivity of vertices and edges, and the second row gives the corresponding two-dimensional
presentation. It is convenient to imagine the graph of the lattice as first being embedded in three-dimensional space, and then projected into the two-dimensional plane. When these structures are present, we always assume top edges (blue online) are transformed to bottom edges (red online) under improper space group operations (\emph{i.e.} reflections), while translations do not swap edges with different colors.
Edges parallel to the $x$-axis, $y$-axis, $z$-axis
are labeled by symbols $\epsilon$, $l$, $\iota$, respectively.
We use $\zeta$ and $\xi$ to label diagonal edges.  For a diagonal edge, we can associate a unit vector $\hat{e}$ running along the direction of the edge, always choosing $\hat{e}_x > 0$.  Then $\zeta$ ($\xi$) is used to label edges with $\hat{e}_y > 0$ ($\hat{e}_y < 0$).
\emph{Panels (a,d).} This configuration  is only used in Fig.~\ref{fig:tcg1}c.
The two stacking vertices (blue and red online) together with
edge $\iota_{1}$ connecting them are projected into a point, presented
as a ring (blue and red online). Edges $\epsilon_{2}$, $l_{3}$
 pass through the ring but do not end on it. The triple-stacking edges
are presented as double lines. \emph{Panels (b,e).} A configuration with double-stacking
vertices and no stacking edges. We use a darker point
(blue online) to represent the upper vertex, and a lighter ring (red
online) to represent the lower vertex. The edges linked to the upper
vertex are darker (blue online) and the edges linked to
the lower vertex are lighter (red online). \emph{Panels (c,f).} A situation with
double-stacking vertices and edges. The vertices are represented as in (b,e).  The lower edge is represented by a lighter double line (red online),
and the upper edge is a single darker line (blue online) drawn in the center of the
double line.}

\label{fig:tcg-caption}
\end{figure*}

To complete the proof of Theorem~\ref{thm:sc_soc}, we need to give
explicit models in $TC\left(G\right)$ for the TC symmetry classes
that are not excluded by the theorem. These models are summarized
in Fig.~\ref{fig:tcg1}, Fig.~\ref{fig:tcg2} and Fig.~\ref{fig:tcg3}.
In some of the models, we use lattices with stacking of vertices and/or edges; that is, there
can be distinct edges or vertices with the same image under ${\mathscr P}$.
We use single
solid lines and points to present edges and vertices that do not
stack, while the meaning of other line and point types used is illustrated
in Fig.~\ref{fig:tcg-caption}. We use different letters
$l$, $\epsilon$, $\iota$, $\xi$, $\zeta$ to label edges with different direction,
as illustrated in Fig.~\ref{fig:tcg-caption}.   In particular, $l$ labels vertical edges, and $\epsilon$ horizontal edges, with $\xi$ and $\zeta$ indicating diagonal edges.  The symbol $\iota$ is reserved for edges that project to a single point under ${\mathscr P}$.

Following this discussion, it is easy but tedious to verify that all TC symmetry classes
not excluded by Theorem~\ref{thm:sc_soc} are realized by
the models in Fig.~\ref{fig:tcg1}, Fig.~\ref{fig:tcg2} and Fig.~\ref{fig:tcg3}.  

Finally, let's compute the total number of realizable symmetry classes.
Let $\mathsf{D}=\mathsf{P_{1}}\cup\mathsf{P_{2}}\cup\mathsf{P_{3}}\cup\mathsf{A}\cup\mathsf{B}$
be a subset of unrealizable TC symmetry classes, and let $\mathsf{T}$ be the set of all
TC symmetry classes.  From the form Eq.~(\ref{eq:sc_matrix}) and the definition of $\mathsf{D}$, it is apparent that 
\begin{eqnarray*}
\left| \mathsf{T} - \mathsf{D} \right| &=& 3^5 \times 4 = 972 .
\end{eqnarray*}
The TC symmetry classes in $\mathsf{C}' - \mathsf{C}' \cap \mathsf{D}$ are of the form
\[
\left(\begin{array}{cc}
\sigma_{px}^{e} & \sigma_{px}^{m}\\
\sigma_{pxy}^{e} & \sigma_{pxy}^{m}\\
\sigma_{txpx}^{e} & \sigma_{txpx}^{m}\\
\sigma_{pxpxy}^{e} & \sigma_{pxpxy}^{m}\\
\sigma_{txty}^{e}\sigma_{pxpxy}^{e} & \sigma_{txty}^{m}\sigma_{pxpxy}^{m}\\
\sigma_{pxpxy}^{e}\sigma_{typx}^{e} & \sigma_{pxpxy}^{m}\sigma_{typx}^{m}
\end{array}\right)=\left(\begin{array}{cc}
-1 & 1\\
\square & \square\\
-1 & 1\\
\square & \square\\
\square & \square\\
-1 & -1
\end{array}\right),
\]
with classes  in $\mathsf{P}_2,\mathsf{A},\mathsf{B}$ excluded, so
$\left|\mathsf{C}- \mathsf{C}' \cap \mathsf{D}\right|=3^{3}=27$. Thus, the total number
of TC symmetry classes realizable in $TC\left(G\right)$ is $\left|(\mathsf{T}-\mathsf{D})-(\mathsf{C}' - \mathsf{C}' \cap \mathsf{D} )\right|=972 - 27 =945$. 

To count realizable symmetry classes (as opposed to TC symmetry classes), we first count the number of symmetry classes obtained from $\mathsf{T} - \mathsf{D}$.  Under $e \leftrightarrow m$ relabeling, every TC symmetry class in $\mathsf{T} - \mathsf{D}$ either goes into itself, or goes into another TC symmetry class in $\mathsf{T} - \mathsf{D}$.  It is easy to see that only 2 TC symmetry classes in $\mathsf{T} - \mathsf{D}$ are invariant under $e \leftrightarrow m$ relabeling, so the number of distinct symmetry classes obtained from $\mathsf{T} - \mathsf{D}$ is $\frac{1}{2} ( 972 - 2) + 2 = 487$.

Now consider a TC symmetry class in $\mathsf{C}- \mathsf{C}' \cap \mathsf{D}$.  Under $e \leftrightarrow m$, we obtain   a TC symmetry class not contained in $\mathsf{C}- \mathsf{C}' \cap \mathsf{D}$, but which is contained in $\mathsf{T} - \mathsf{D}$.  Therefore the resulting TC symmetry class is realizable.  This means that removing $\mathsf{C}- \mathsf{C}' \cap \mathsf{D}$ from  $\mathsf{T} - \mathsf{D}$ does not reduce the number of symmetry classes, even though the number of TC symmetry classes is reduced.  The total number of realizable symmetry classes is thus 487.  This completes the proof of Theorem~\ref{thm:sc_soc}.

\begin{figure*}
\begin{minipage}[t]{0.33\textwidth}%
\includegraphics[width=0.9\columnwidth]{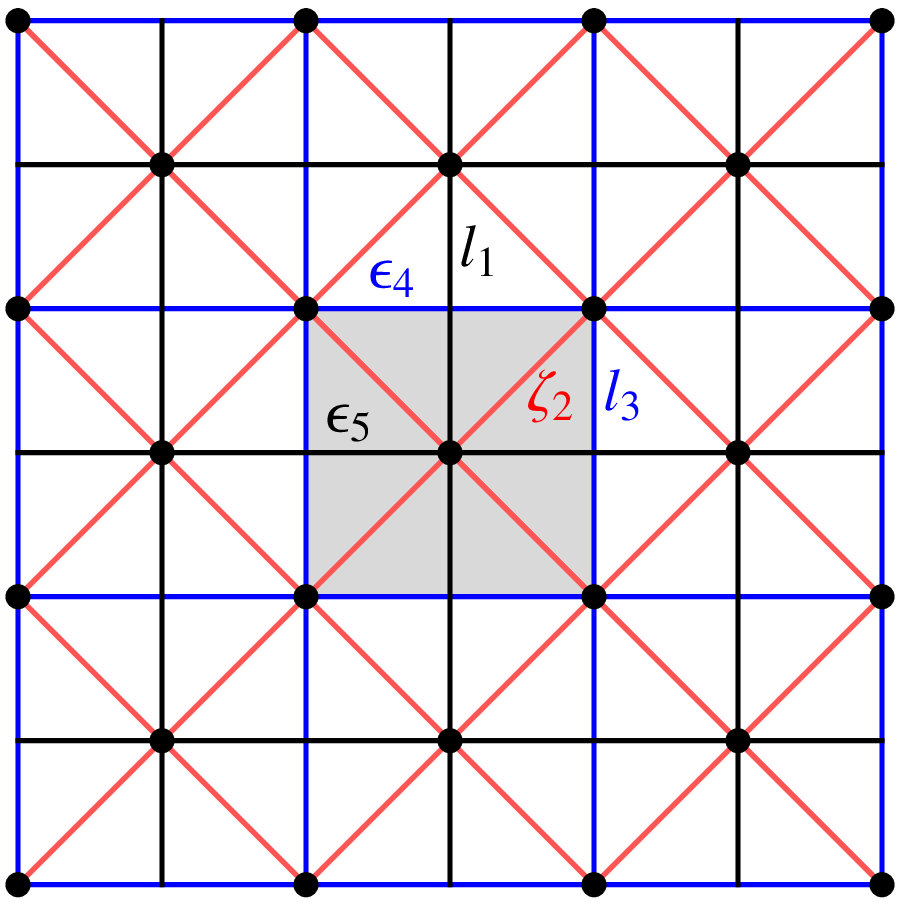}

(a) $\left(\begin{array}{cccccc}
1 & 1 & 1 & 1 & 1 & \gamma_{1}\\
\alpha_{1} & \beta_{2} & \alpha_{3} & e_{o} & e_{\tilde{o}} & \alpha_{4}\alpha_{5}
\end{array}\right)$%
\end{minipage}%
\begin{minipage}[t]{0.33\textwidth}%
\includegraphics[width=0.9\columnwidth]{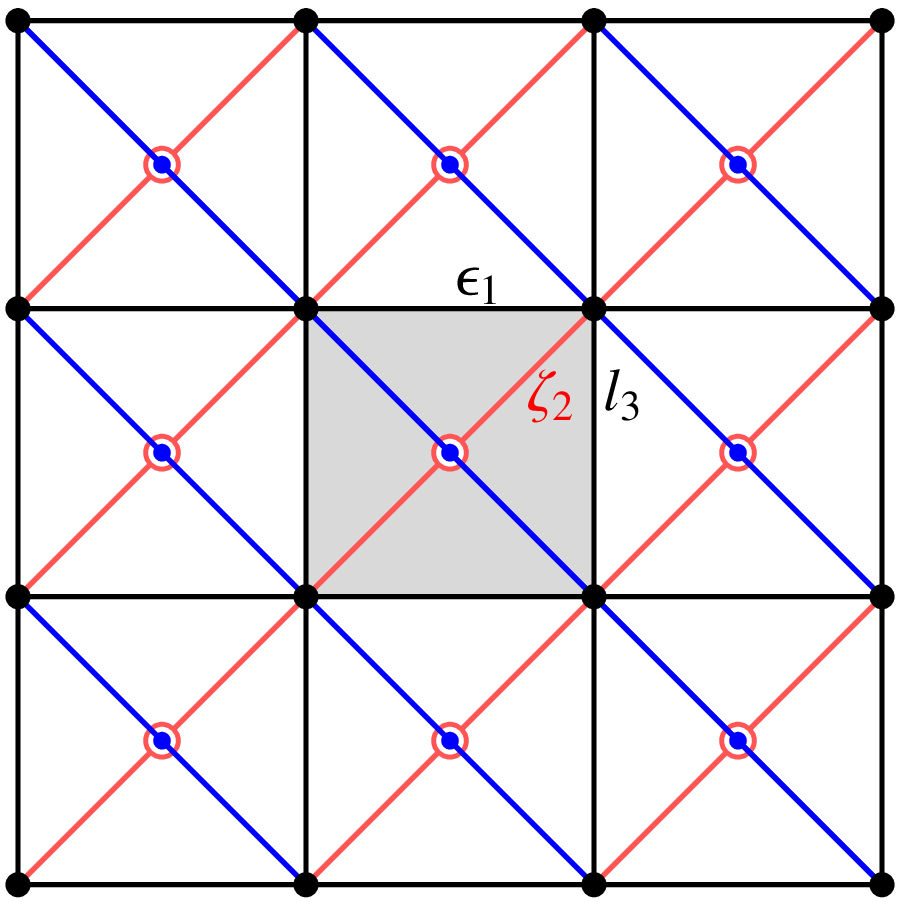}

(b) $\left(\begin{array}{cccccc}
\gamma_{o} & 1 & 1 & 1 & 1 & \gamma_{3}\\
1 & \alpha_{2} & \alpha_{3} & e_{o} & e_{\tilde{o}} & \alpha_{1}
\end{array}\right)$%
\end{minipage}%
\begin{minipage}[t]{0.33\textwidth}%
\includegraphics[width=0.9\columnwidth]{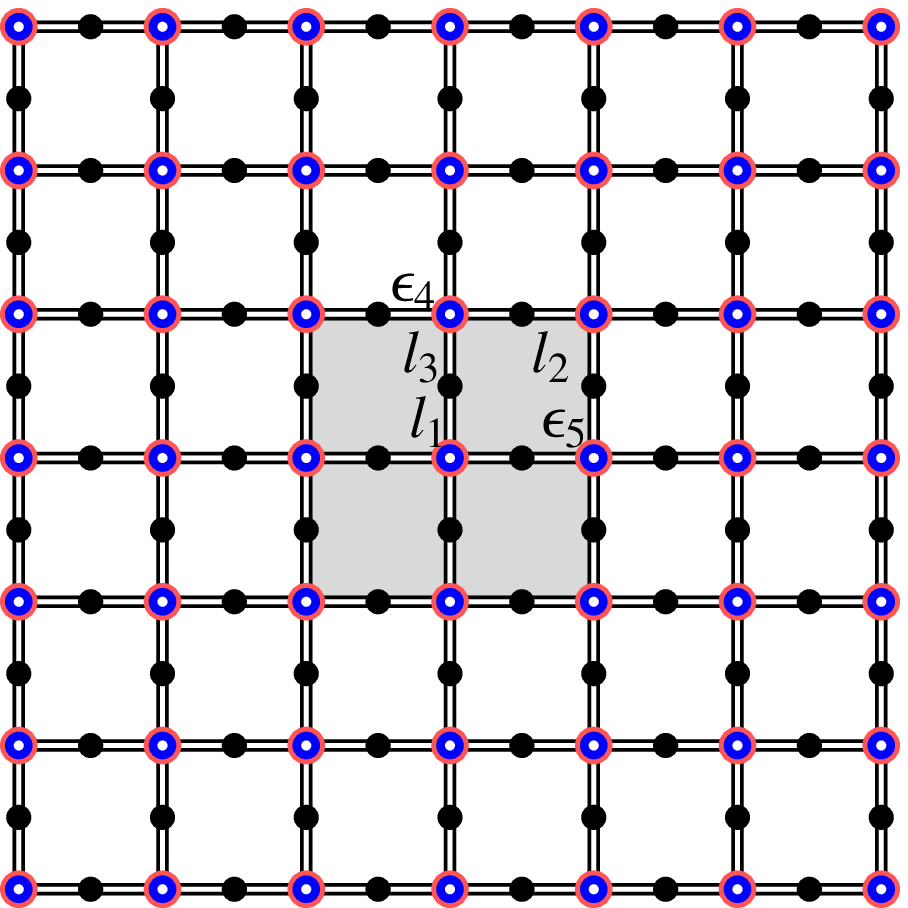}

(c) $\left(\begin{array}{cccccc}
1 & \delta_{o} & 1 & 1 & 1 & \gamma_{1}\gamma_{3}\\
\alpha_{1} & 1 & \alpha_{2} & e_{o} & e_{\tilde{o}} & \alpha_{4}\alpha_{5}e_{\kappa}
\end{array}\right)$%
\end{minipage}

\bigskip{}

\begin{minipage}[t]{0.33\textwidth}%
\includegraphics[width=0.9\columnwidth]{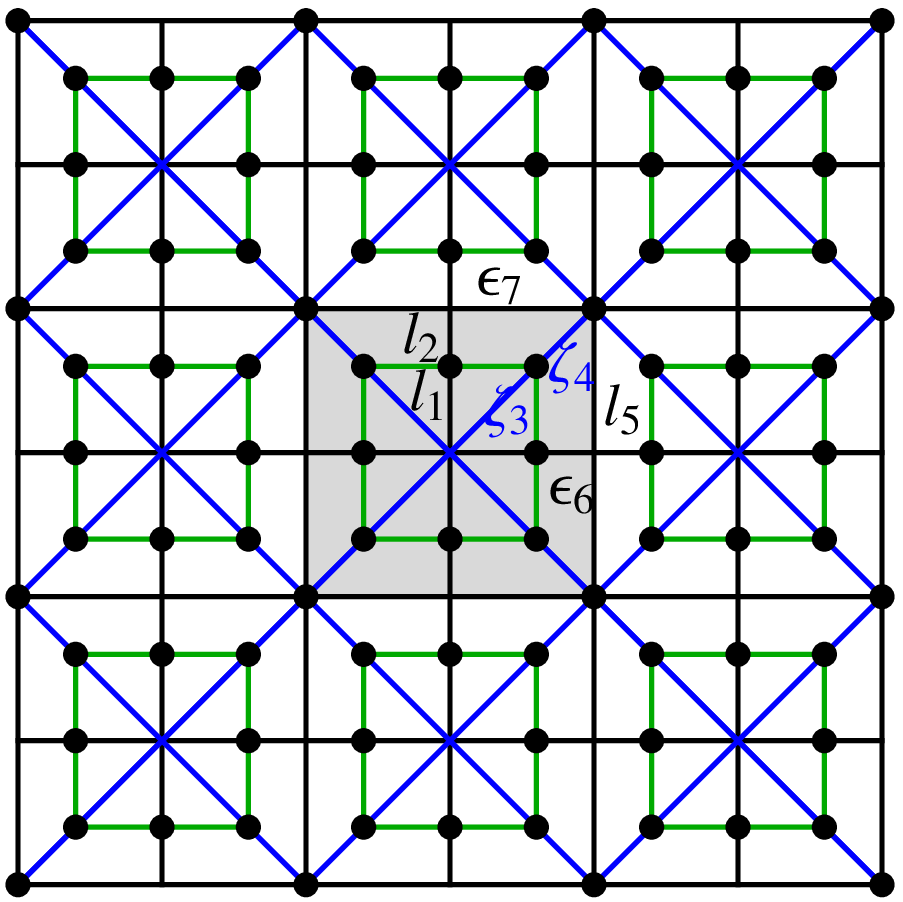}

(d) $\left(\begin{array}{cccccc}
1 & 1 & 1 & \gamma_{1} & 1 & \gamma_{2}\\
\alpha_{1} & \beta_{3} & \alpha_{5} & 1 & e_{\tilde{o}} & \alpha_{6}\alpha_{7}
\end{array}\right)$%
\end{minipage}%
\begin{minipage}[t]{0.33\textwidth}%
\includegraphics[width=0.9\columnwidth]{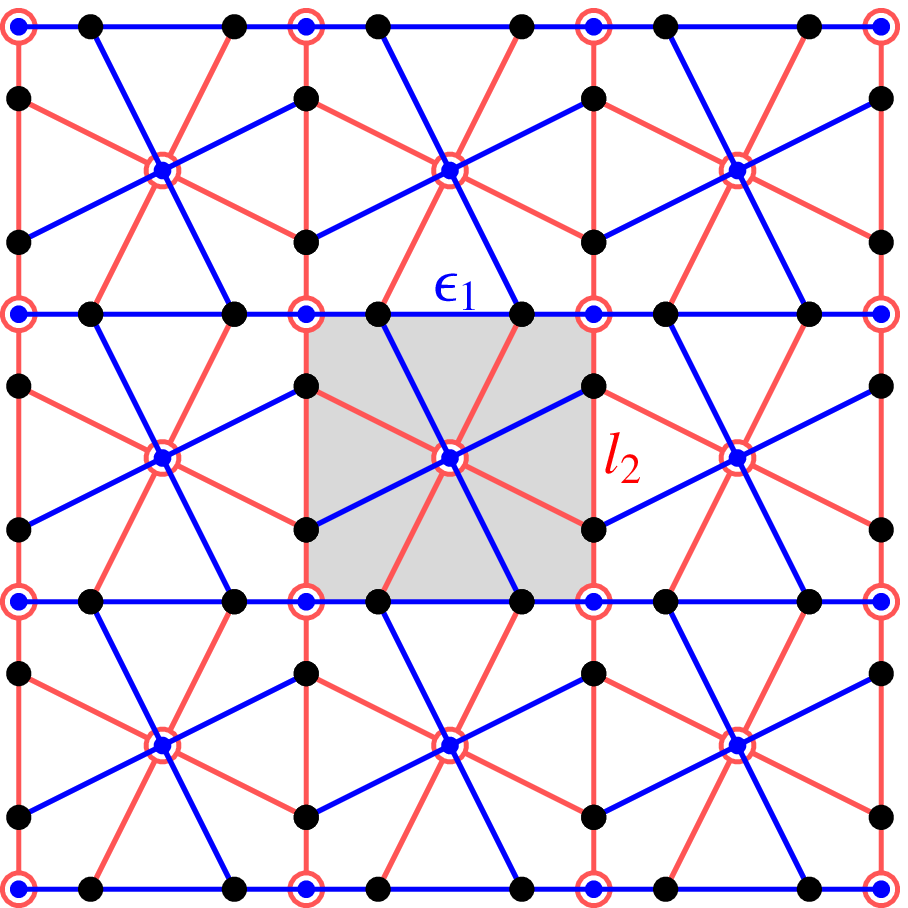}

(e) $\left(\begin{array}{cccccc}
\gamma_{o} & \delta_{o} & 1 & 1 & 1 & \gamma_{2}\\
1 & 1 & \alpha_{2} & e_{o} & e_{\tilde{o}} & \alpha_{1}
\end{array}\right)$%
\end{minipage}%
\begin{minipage}[t]{0.33\textwidth}%
\includegraphics[width=0.9\columnwidth]{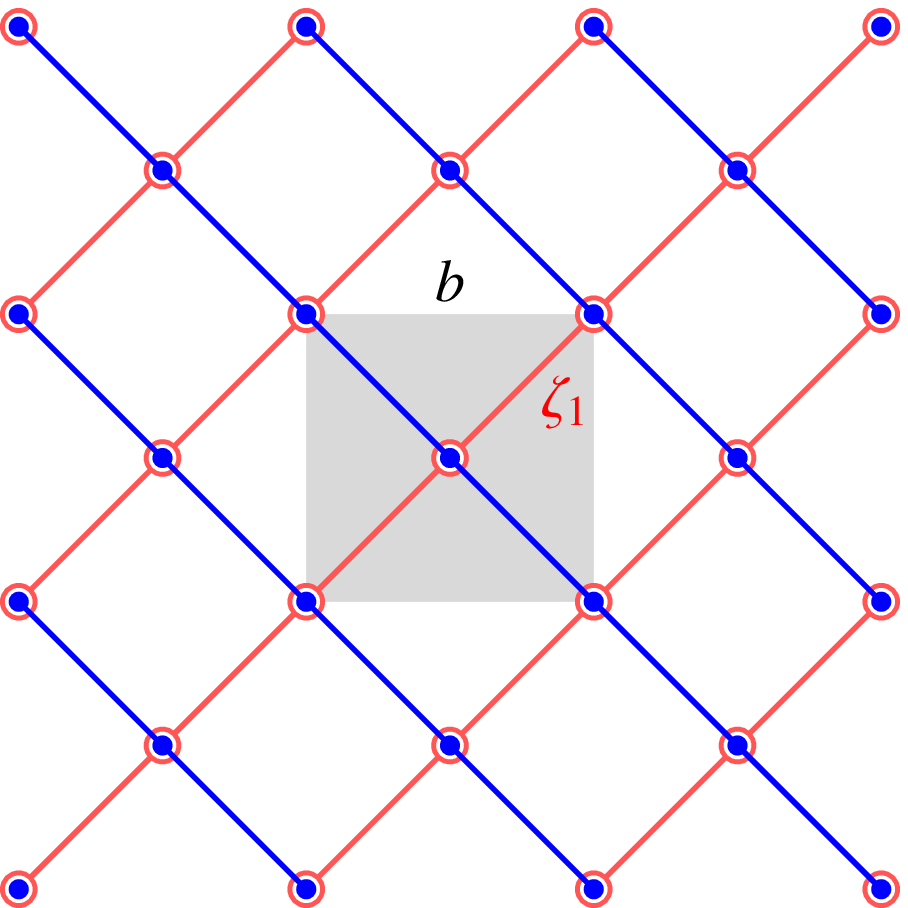}

(f) $\left(\begin{array}{cccccc}
\gamma_{o} & 1 & \gamma_{\tilde{o}} & 1 & 1 & b\gamma_{o}\gamma_{\tilde{o}}\\
1 & \beta_{1} & 1 & e_{o} & e_{\tilde{o}} & 1
\end{array}\right)$%
\end{minipage}\bigskip{}

\begin{minipage}[t]{0.33\textwidth}%
\includegraphics[width=0.9\columnwidth]{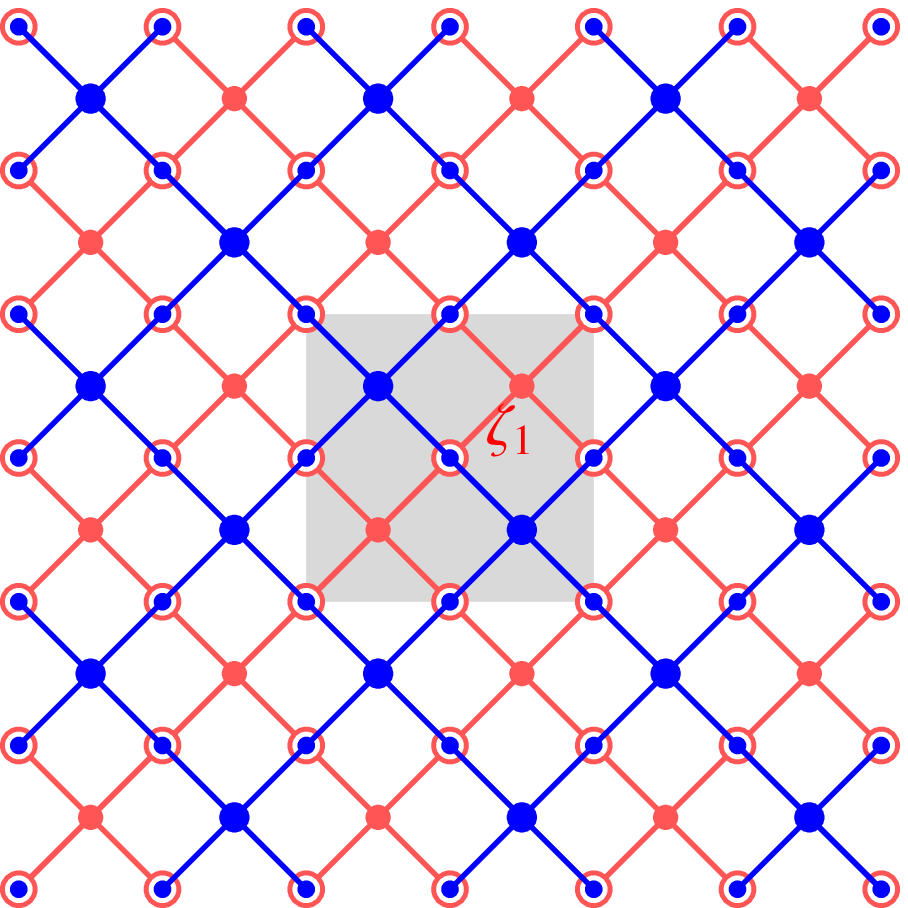}

(g) $\left(\begin{array}{cccccc}
\gamma_{o} & 1 & \gamma_{\tilde{o}} & 1 & 1 & \gamma_{o}\gamma_{\tilde{o}}\\
1 & \beta_{1} & 1 & e_{o} & e_{\tilde{o}} & e_{\kappa}
\end{array}\right)$%
\end{minipage}%
\begin{minipage}[t]{0.33\textwidth}%
\includegraphics[width=0.9\columnwidth]{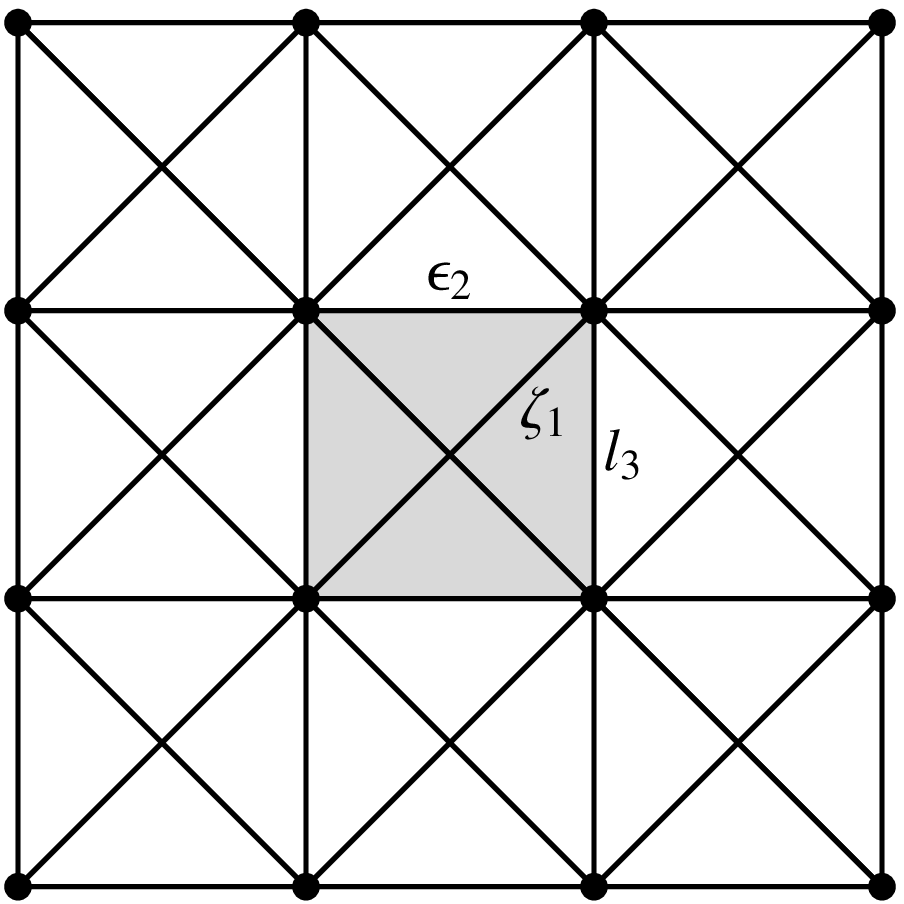}

(h) $\left(\begin{array}{cccccc}
\gamma_{2} & 1 & 1 & \delta_{1} & 1 & \gamma_{3}\\
1 & \beta_{1} & \alpha_{3} & 1 & e_{\tilde{o}} & \alpha_{2}
\end{array}\right)$%
\end{minipage}%
\begin{minipage}[t]{0.33\textwidth}%
\includegraphics[width=0.9\columnwidth]{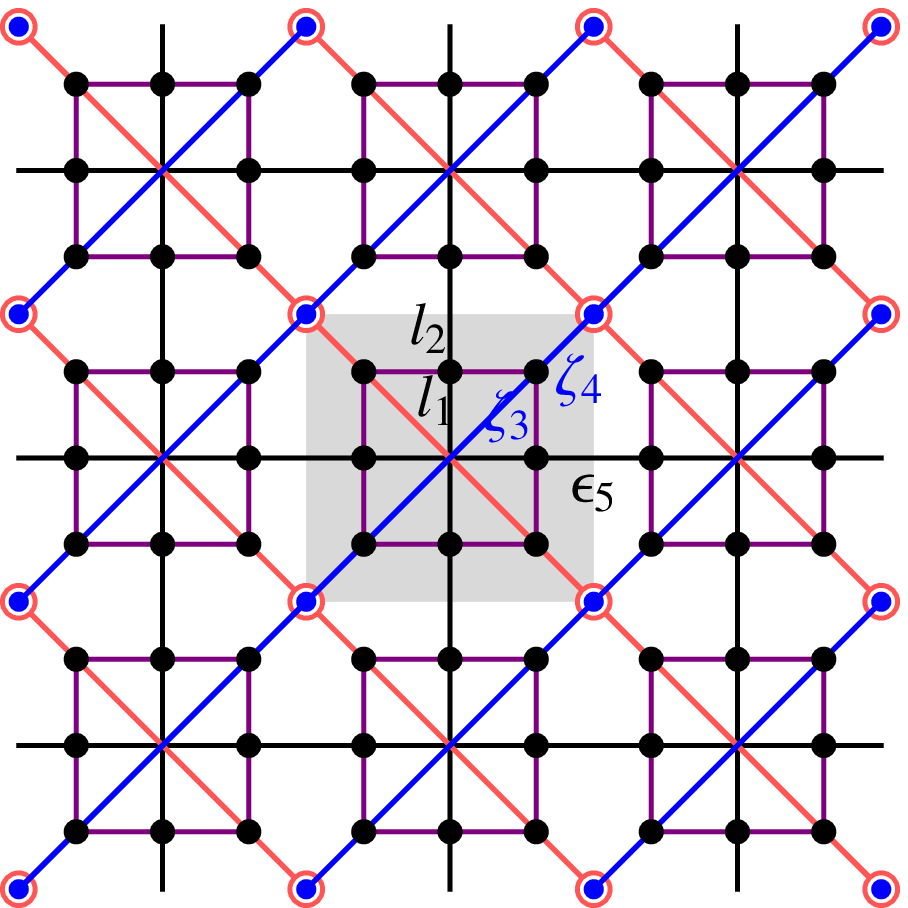}

(i) $\left(\begin{array}{cccccc}
1 & 1 & \gamma_{5} & \gamma_{1} & 1 & \gamma_{2}\\
\alpha_{1} & \beta_{3} & 1 & 1 & e_{\tilde{o}} & \alpha_{5}
\end{array}\right)$%
\end{minipage}

\caption{(Color online) $TC\left(G\right)$ models (Part I). The shaded square is a unit cell
and the TC symmetry classes are calculated with the origin $o$ at
the center of the shaded square. Below each  lattice is the
corresponding TC symmetry class in the form \eqref{eq:sc_matrix}.
The edges are labeled by different letters according to their directions
as described in the text and in Fig.~\ref{fig:tcg-caption}. Edges that map to a single 
point under ${\mathscr P}$ are labeled by $\iota_{o}$, $\iota_{\tilde{o}}$,
$\iota_{\kappa}$, $\iota_{\tilde{\kappa}}$ with the subscript indicating their position,
and $\tilde{o}=\left(\frac{1}{2},\frac{1}{2}\right)$, $\kappa=\left(0,\frac{1}{2}\right)$,
$\tilde{\kappa}=\left(\frac{1}{2},0\right)$, in units such that
the size of the unit cell is $1\times1$. For short, we define
$\alpha_{i}=c_{\varepsilon_{i}}^{x}\left(P_{x}\right)$, $\beta_{i}=c_{\varepsilon_{i}}^{x}\left(P_{xy}\right)$,
$\gamma_{i}=c_{\varepsilon_{i}}^{z}\left(P_{x}\right)$ and $\delta_{i}=c_{\varepsilon_{i}}^{z}\left(P_{xy}\right)$,
where $\varepsilon=l,\epsilon,\xi,\zeta,\iota$ stands for a generic
edge. In addition, $e_{r}=a_{\mathscr{P}^{-1}\left(r\right)}$, and
$b$ is the eigenvalue of $B_{p}$ for the plaquette (here meaning
smallest cycle) $p$ within which $b$ is written. The values of $e_{r}$ and $b$ are well-defined with respect to any local spin frame system satisfying Eqs.~(\ref{eq:gauge1}-\ref{eq:gauge3}). }

\label{fig:tcg1}
\end{figure*}

\begin{figure*}
\begin{minipage}[t]{0.33\textwidth}%
\includegraphics[width=0.9\columnwidth]{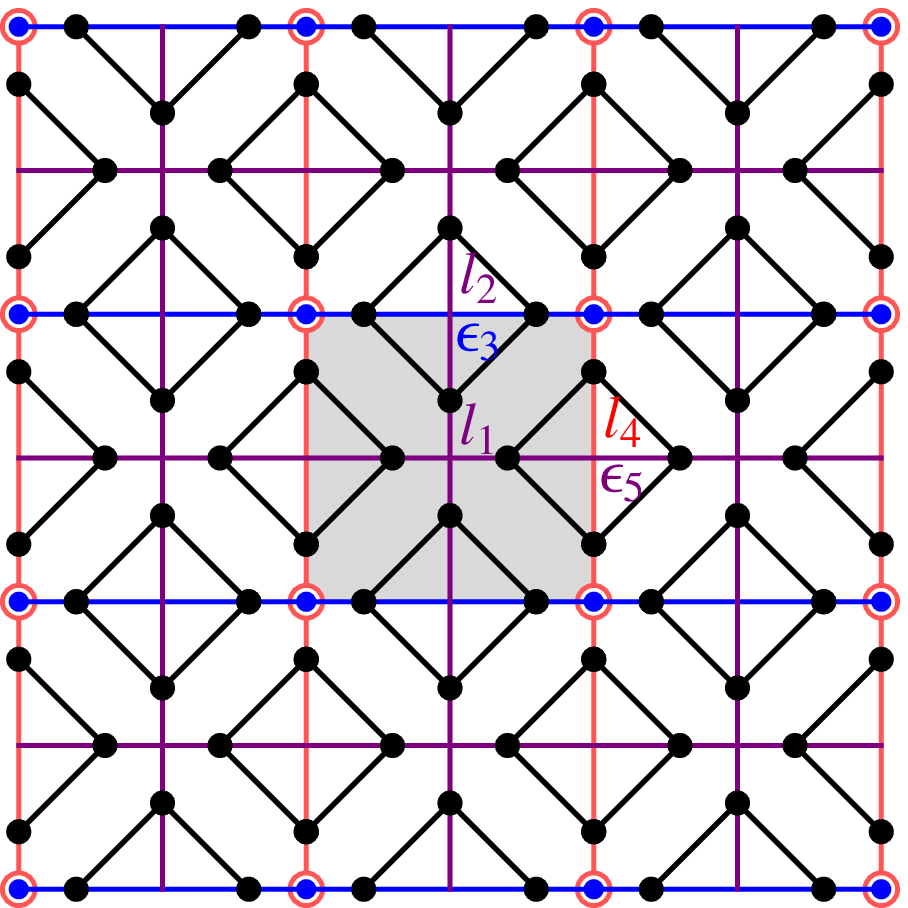}

(j) $\left(\begin{array}{cccccc}
1 & \delta_{\tilde{o}} & 1 & \gamma_{1} & 1 & \gamma_{2}\\
\alpha_{1} & 1 & \alpha_{4} & 1 & e_{\tilde{o}} & \alpha_{3}\alpha_{5}
\end{array}\right)$%
\end{minipage}%
\begin{minipage}[t]{0.33\textwidth}%
\includegraphics[width=0.9\columnwidth]{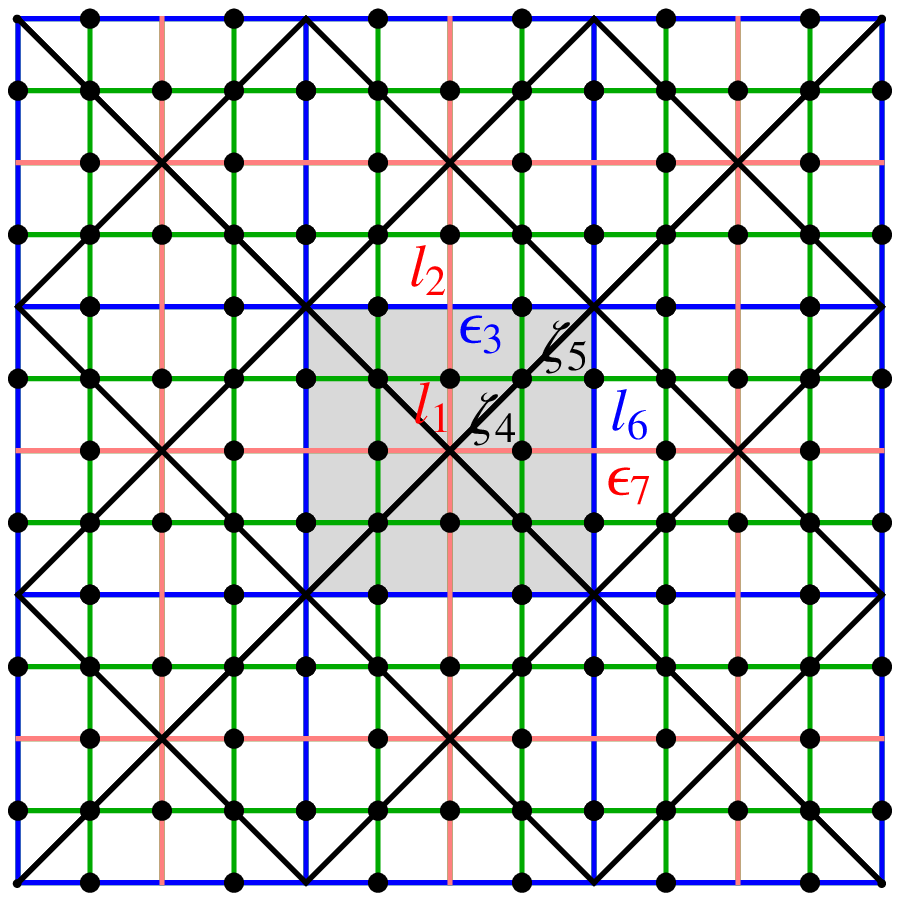}

(k) $\left(\begin{array}{cccccc}
1 & 1 & 1 & \gamma_{1} & \delta_{5} & \gamma_{2}\\
\alpha_{1} & \beta_{4} & \alpha_{6} & 1 & 1 & \alpha_{3}\alpha_{7}
\end{array}\right)$%
\end{minipage}%
\begin{minipage}[t]{0.33\textwidth}%
\includegraphics[width=0.9\columnwidth]{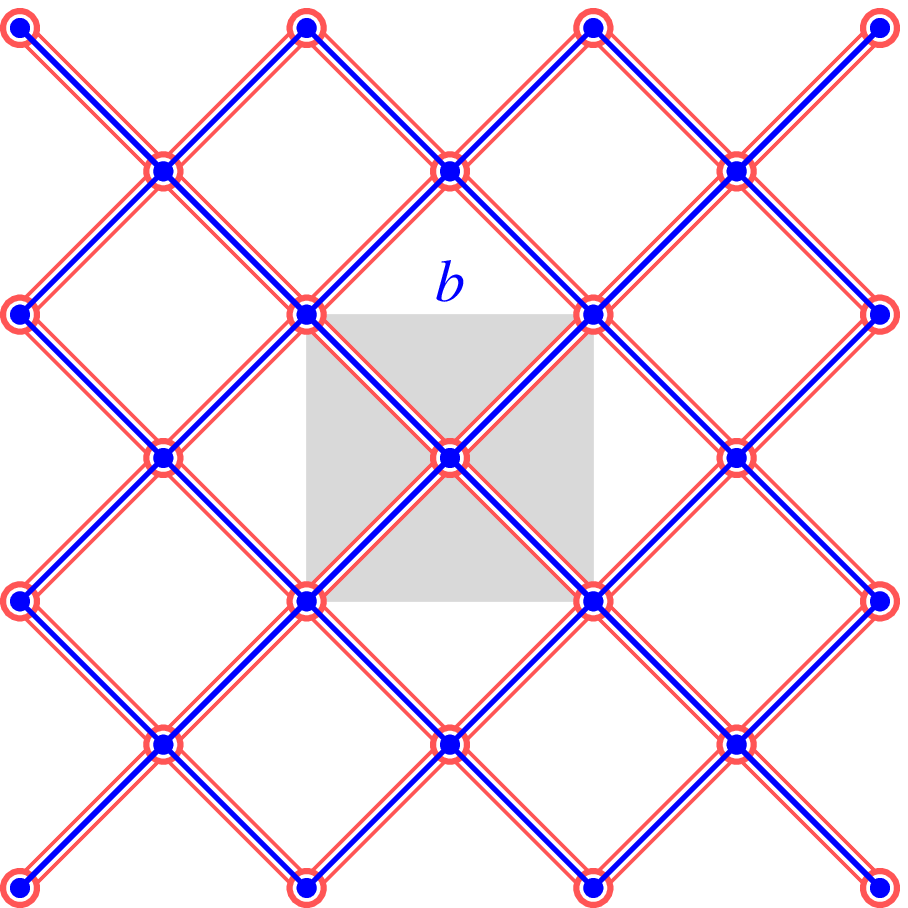}

(l) $\left(\begin{array}{cccccc}
\gamma_{o} & \delta_{o} & \gamma_{\tilde{o}} & 1 & 1 & b\gamma_{o}\gamma_{\tilde{o}}\\
1 & 1 & 1 & e_{o} & e_{\tilde{o}} & 1
\end{array}\right)$%
\end{minipage}

\bigskip{}

\begin{minipage}[t]{0.33\textwidth}%
\includegraphics[width=0.9\columnwidth]{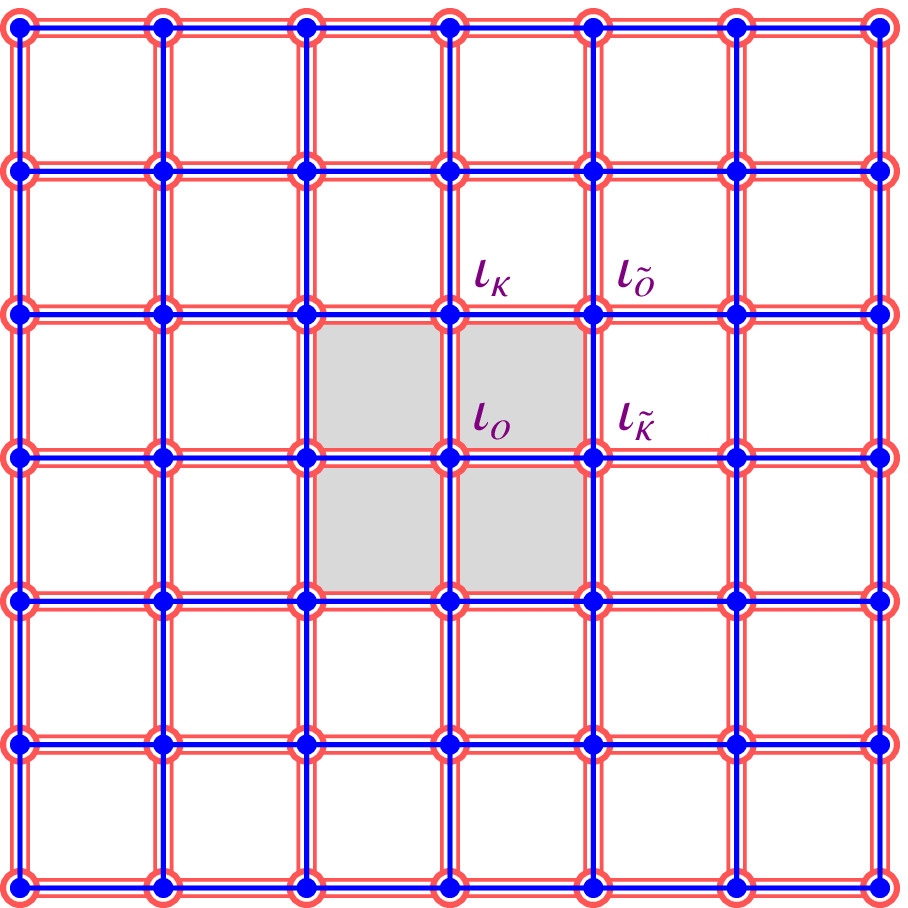}

(m) $\left(\begin{array}{cccccc}
\gamma_{o} & \delta_{o} & \gamma_{\tilde{o}} & 1 & 1 & \gamma_{o}\gamma_{\tilde{o}}\\
1 & 1 & 1 & e_{o} & e_{\tilde{o}} & e_{\kappa}  
\end{array}\right)$%
\end{minipage}%
\begin{minipage}[t]{0.33\textwidth}%
\includegraphics[width=0.9\columnwidth]{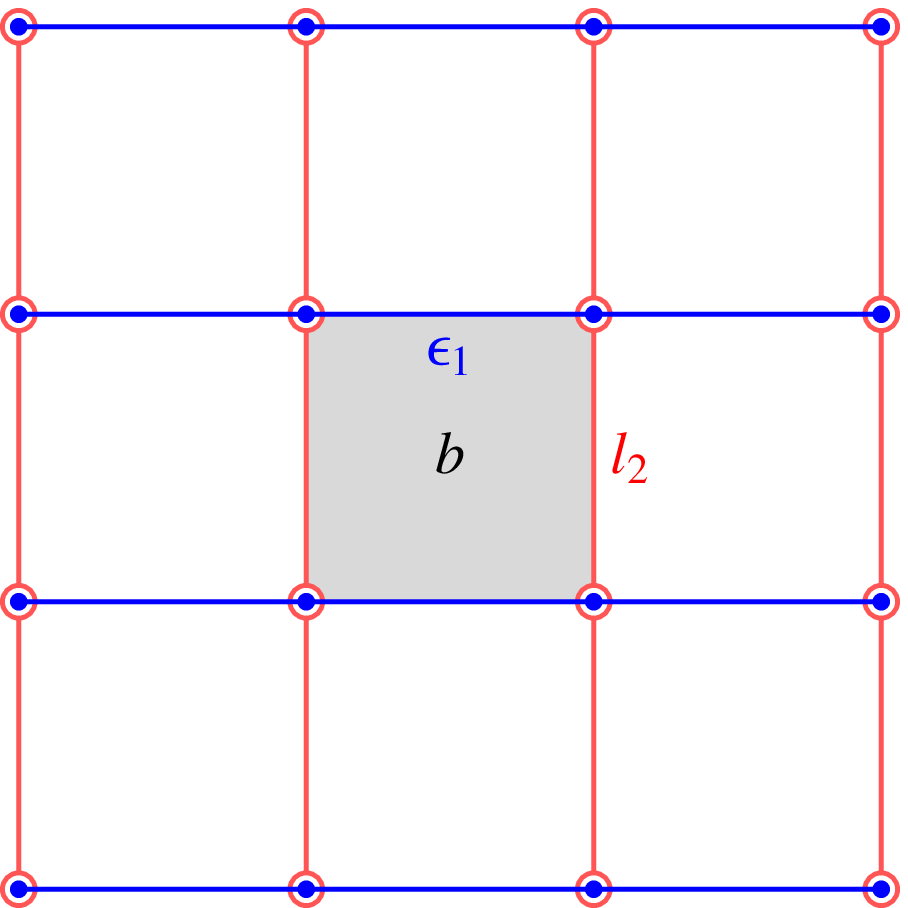}

(n) $\left(\begin{array}{cccccc}
\gamma_{1} & \delta_{\tilde{o}} & 1 & b & 1 & \gamma_{2}\\
1 & 1 & \alpha_{2} & 1 & e_{\tilde{o}} & \alpha_{1}
\end{array}\right)$%
\end{minipage}%
\begin{minipage}[t]{0.33\textwidth}%
\includegraphics[width=0.9\columnwidth]{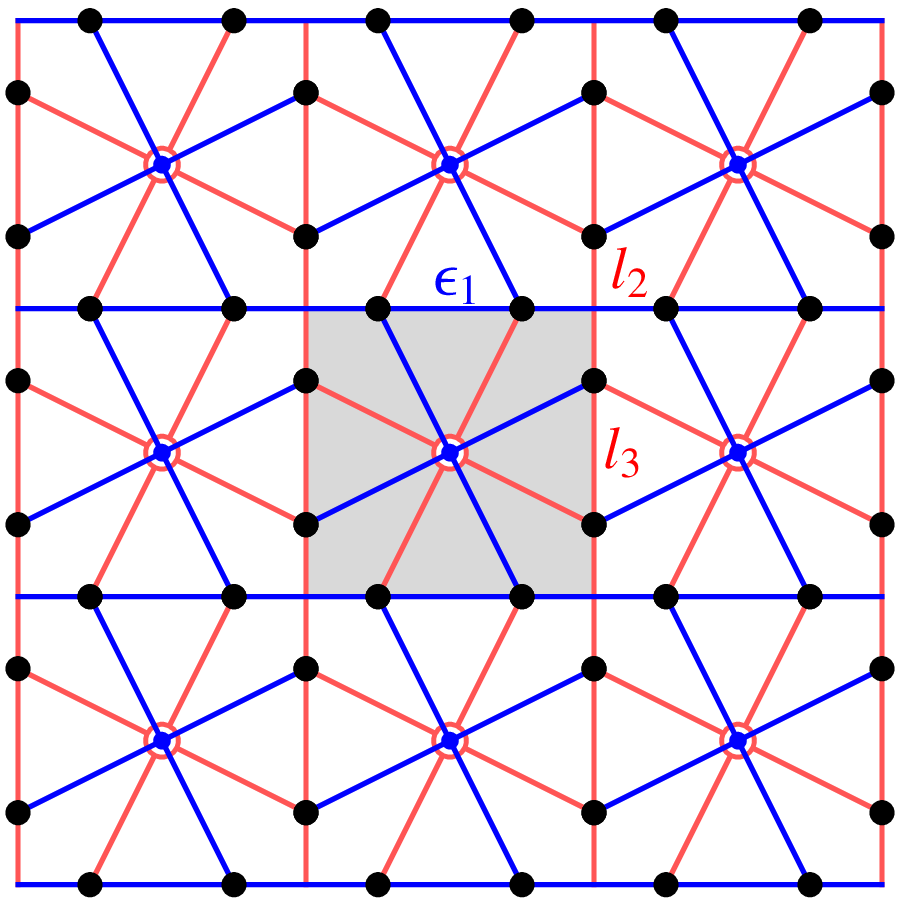}

(o) $\left(\begin{array}{cccccc}
\gamma_{o} & \delta_{o} & 1 & 1 & \gamma_{2} & \gamma_{3}\\
1 & 1 & \alpha_{3} & e_{o} & 1 & \alpha_{1}
\end{array}\right)$%
\end{minipage}

\bigskip{}

\begin{minipage}[t]{0.33\textwidth}%
\includegraphics[width=0.9\columnwidth]{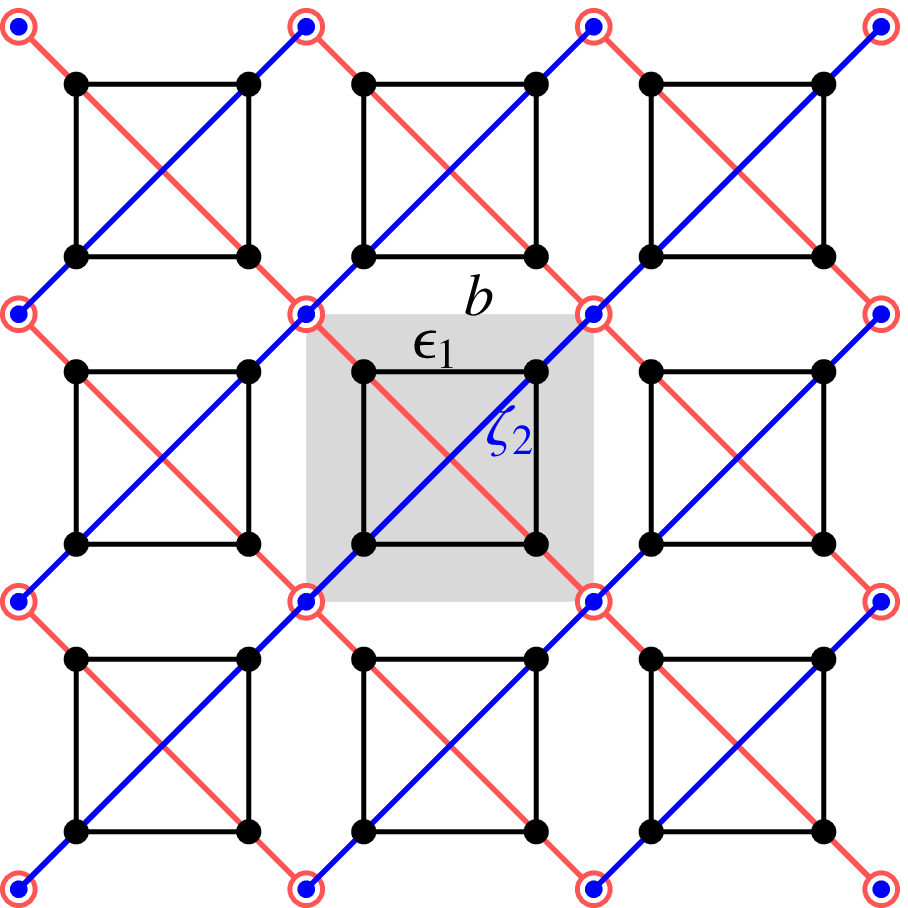}

(p) $\left(\begin{array}{cccccc}
\gamma_{1} & 1 & \gamma_{\tilde{o}} & \delta_{2} & 1 & b\gamma_{1}\gamma_{\tilde{o}}\\
1 & \beta_{2} & 1 & 1 & e_{\tilde{o}} & 1
\end{array}\right)$%
\end{minipage}%
\begin{minipage}[t]{0.33\textwidth}%
\includegraphics[width=0.9\columnwidth]{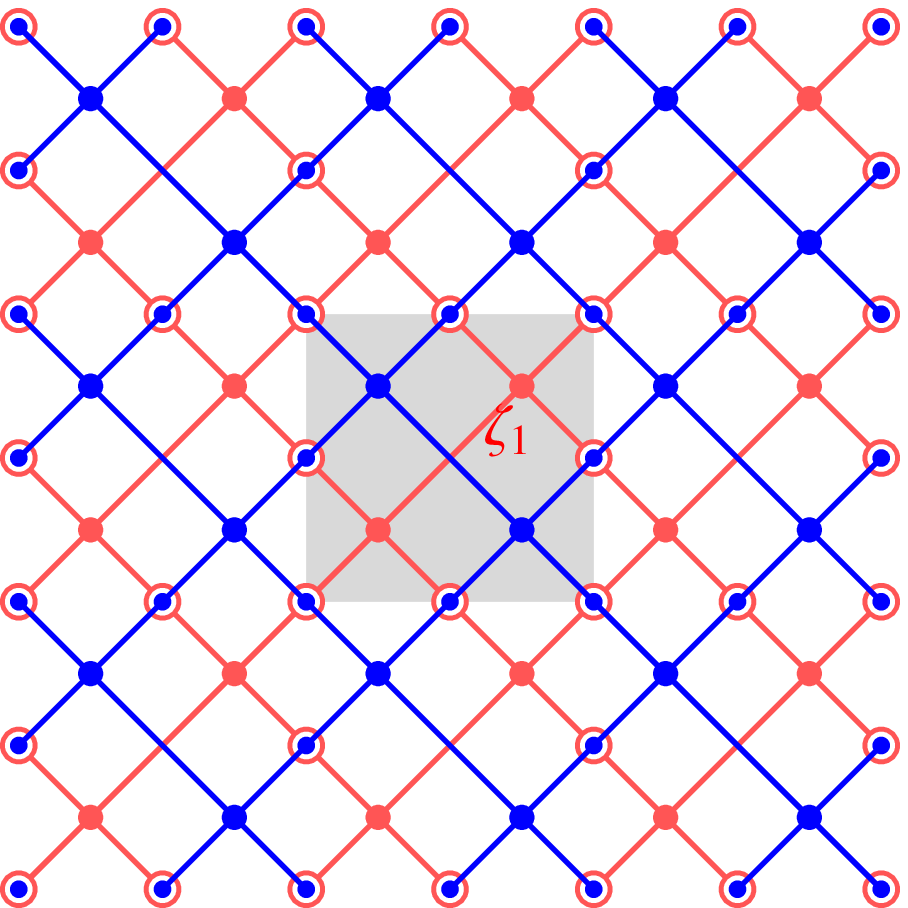}

(q) $\left(\begin{array}{cccccc}
\gamma_{\kappa} & 1 & \gamma_{\tilde{o}} & \delta_{1} & 1 & \gamma_{\kappa}\gamma_{\tilde{o}}\\
1 & \beta_{1} & 1 & 1 & e_{\tilde{o}} & e_{\kappa}
\end{array}\right)$%
\end{minipage}%
\begin{minipage}[t]{0.33\textwidth}%
\includegraphics[width=0.9\columnwidth]{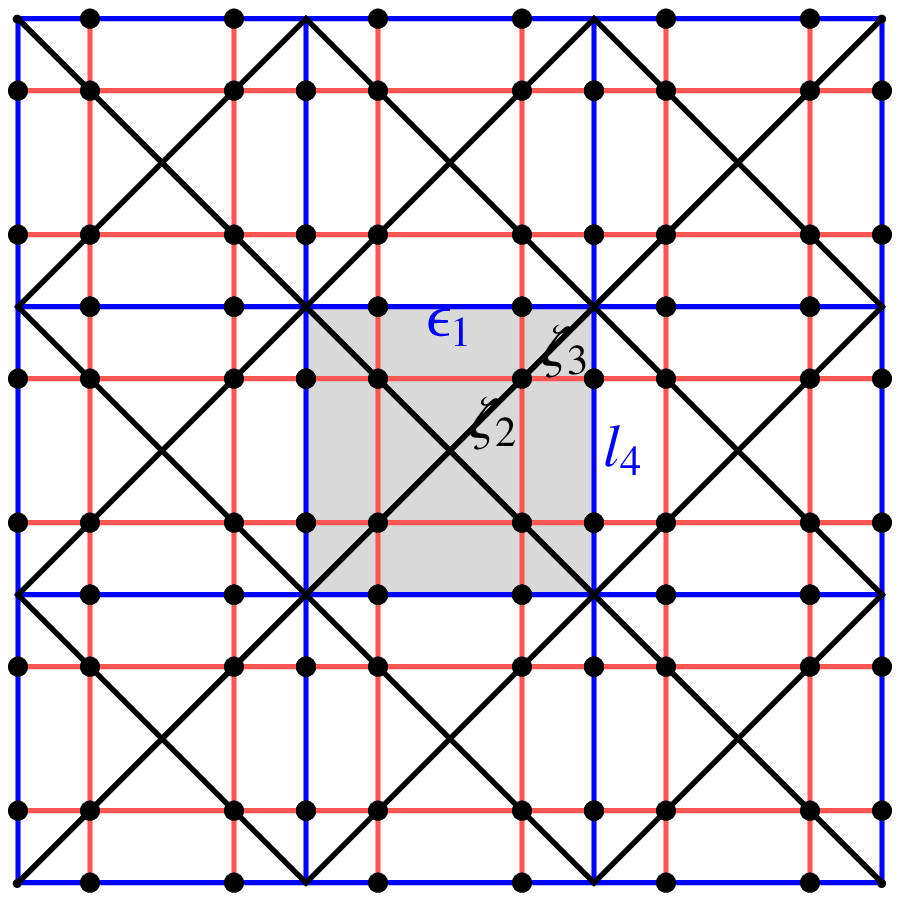}

(r) $\left(\begin{array}{cccccc}
\gamma_{1} & 1 & 1 & \delta_{2} & \delta_{3} & \gamma_{4}\\
1 & \beta_{2} & \alpha_{4} & 1 & 1 & \alpha_{1}
\end{array}\right)$%
\end{minipage}

\caption{(Color online) $TC\left(G\right)$ models (Part II). 
The shaded square is a unit cell
and the TC symmetry classes are calculated with the origin $o$ at
the center of the shaded square. Below each  lattice is the
corresponding TC symmetry class in the form \eqref{eq:sc_matrix}.
The edges are labeled by different letters according to their directions
as described in the text and in Fig.~\ref{fig:tcg-caption}. Edges that map to a single 
point under ${\mathscr P}$ are labeled by $\iota_{o}$, $\iota_{\tilde{o}}$,
$\iota_{\kappa}$, $\iota_{\tilde{\kappa}}$ with the subscript indicating their position,
and $\tilde{o}=\left(\frac{1}{2},\frac{1}{2}\right)$, $\kappa=\left(0,\frac{1}{2}\right)$,
$\tilde{\kappa}=\left(\frac{1}{2},0\right)$, in units such that
the size of the unit cell is $1\times1$. For short, we define
$\alpha_{i}=c_{\varepsilon_{i}}^{x}\left(P_{x}\right)$, $\beta_{i}=c_{\varepsilon_{i}}^{x}\left(P_{xy}\right)$,
$\gamma_{i}=c_{\varepsilon_{i}}^{z}\left(P_{x}\right)$ and $\delta_{i}=c_{\varepsilon_{i}}^{z}\left(P_{xy}\right)$,
where $\varepsilon=l,\epsilon,\xi,\zeta,\iota$ stands for a generic
edge. In addition, $e_{r}=a_{\mathscr{P}^{-1}\left(r\right)}$, and
$b$ is the eigenvalue of $B_{p}$ for the plaquette (here meaning
smallest cycle) $p$ within which $b$ is written.
 In panel (l),
$b$ is the eigenvalue of $B_{p}$ for the top plaquette.
The values of $e_{r}$ and $b$ are well-defined with respect to any local spin frame system satisfying Eqs.~(\ref{eq:gauge1}-\ref{eq:gauge3}). }

\label{fig:tcg2}
\end{figure*}

\begin{figure*}
\begin{minipage}[t]{0.33\textwidth}%
\includegraphics[width=0.9\columnwidth]{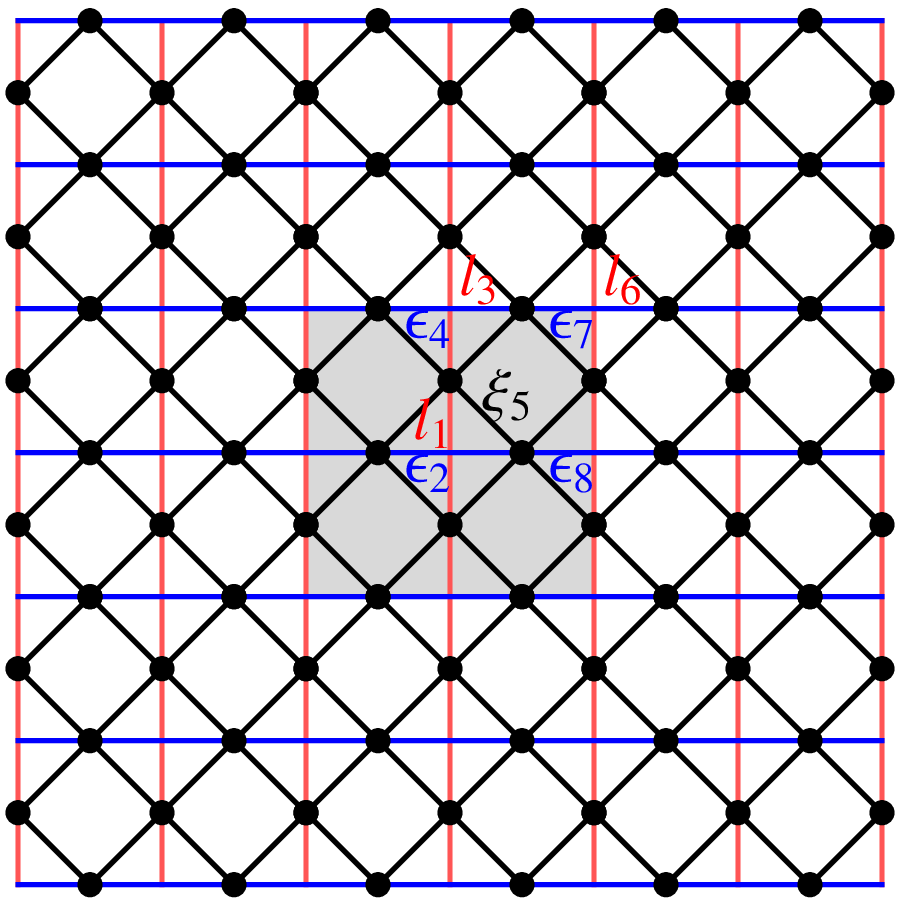}

(s) $\left(\begin{array}{cccccc}
1 & \gamma_{5} & 1 & \gamma_{1} & \gamma_{6} & \gamma_{3}\\
\alpha_{1} & 1 & \alpha_{6} & 1 & 1 & \alpha_{4}\alpha_{8}
\end{array}\right)$%
\end{minipage}%
\begin{minipage}[t]{0.33\textwidth}%
\includegraphics[width=0.9\columnwidth]{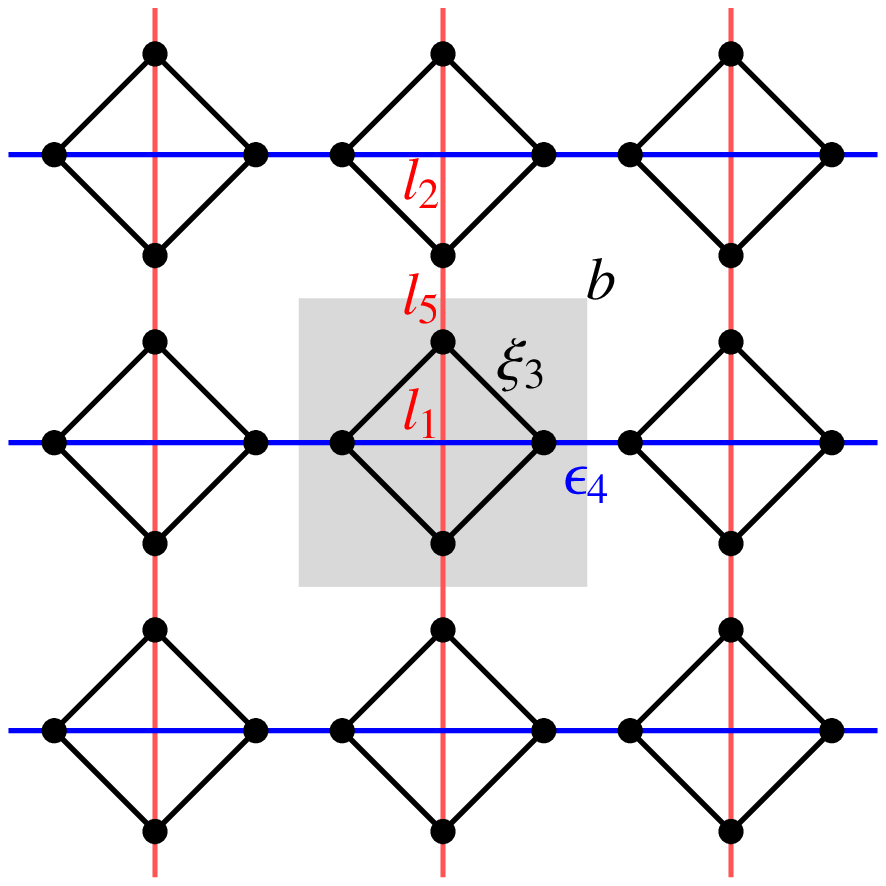}

(t) $\left(\begin{array}{cccccc}
1 & \gamma_{3} & \gamma_{4} & \gamma_{1} & b & \gamma_{5}\\
\alpha_{1} & 1 & 1 & 1 & 1 & \alpha_{4}
\end{array}\right)$%
\end{minipage}%
\begin{minipage}[t]{0.33\textwidth}%
\includegraphics[width=0.9\columnwidth]{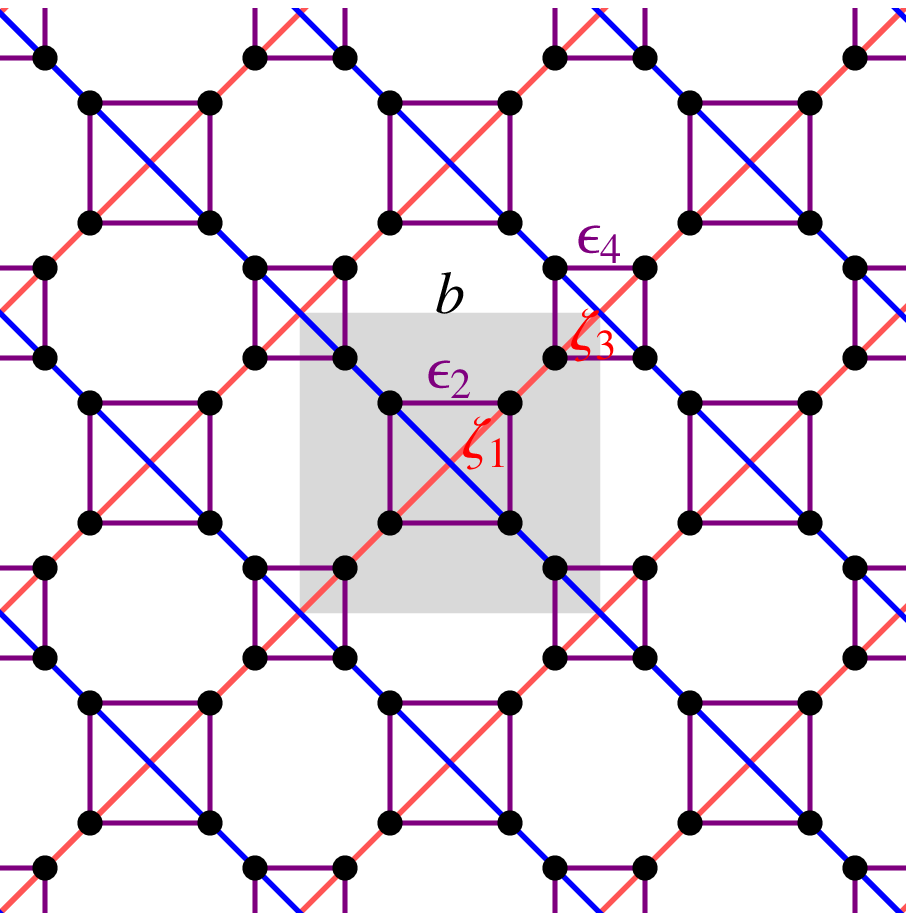}

(u) $\left(\begin{array}{cccccc}
\gamma_{2} & 1 & \gamma_{4} & \delta_{1} & \delta_{3} & b\gamma_{2}\gamma_{4}\\
1 & \beta_{1} & 1 & 1 & 1 & 1
\end{array}\right)$%
\end{minipage}

\bigskip{}

\begin{minipage}[t]{0.33\textwidth}%
\includegraphics[width=0.9\columnwidth]{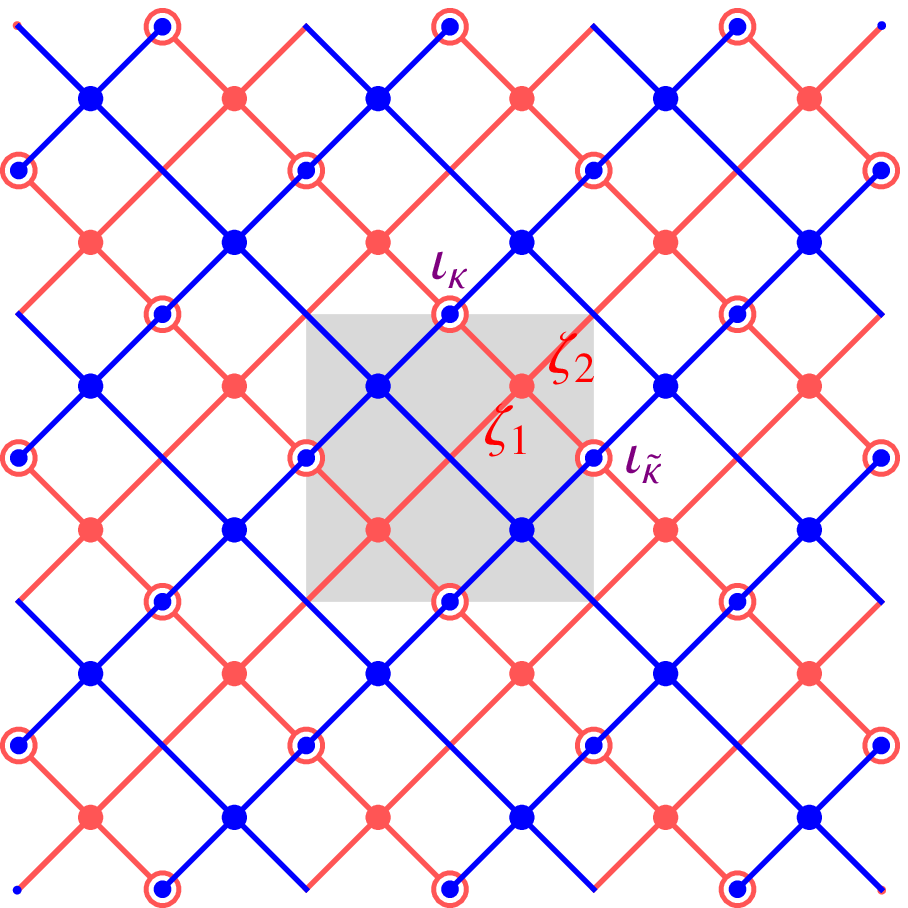}

(v) $\left(\begin{array}{cccccc}
\gamma_{\kappa} & 1 & \gamma_{\tilde{\kappa}} & \delta_{1} & \delta_{2} & \gamma_{\kappa}\gamma_{\tilde{\kappa}}\\
1 & \beta_{1} & 1 & 1 & 1 & e_{\kappa}
\end{array}\right)$%
\end{minipage}%
\begin{minipage}[t]{0.33\textwidth}%
\includegraphics[width=0.9\columnwidth]{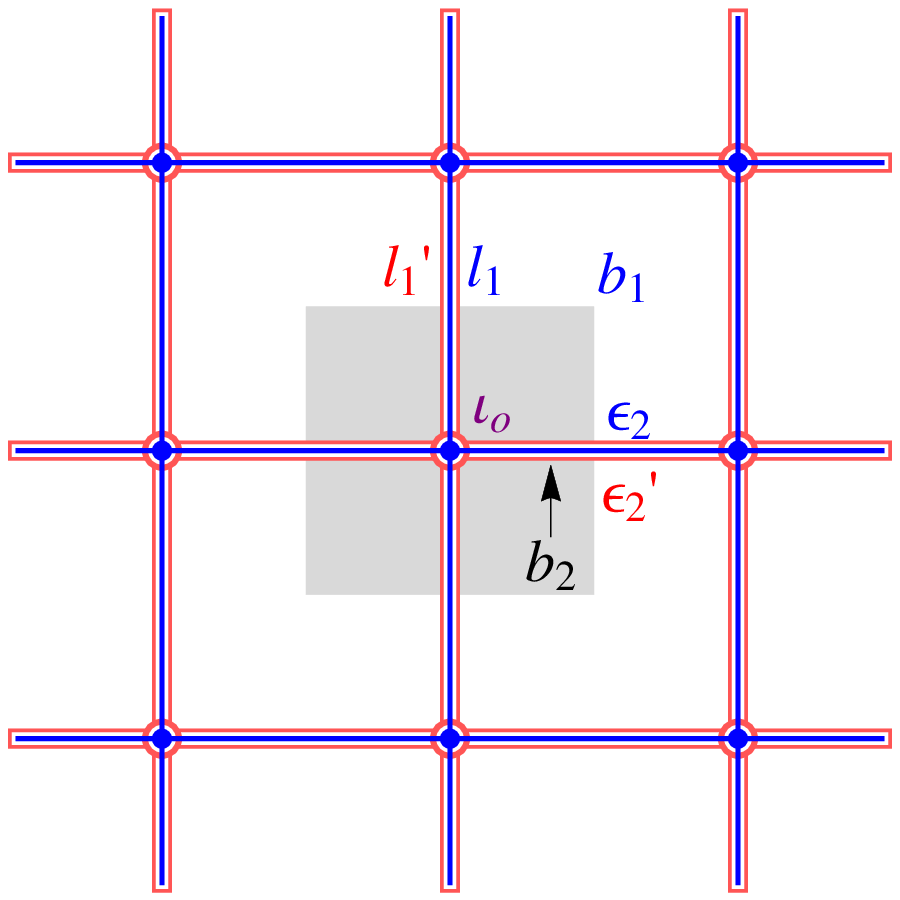}

(w) $\left(\begin{array}{cccccc}
\gamma_{o} & \delta_{o} & \gamma_{o}b_{2} & 1 & b_{1} & \gamma_{1}b_{2}\\
1 & 1 & 1 & e_{o} & 1 & 1
\end{array}\right)$%
\end{minipage}%
\begin{minipage}[t]{0.33\textwidth}%
\includegraphics[width=0.9\columnwidth]{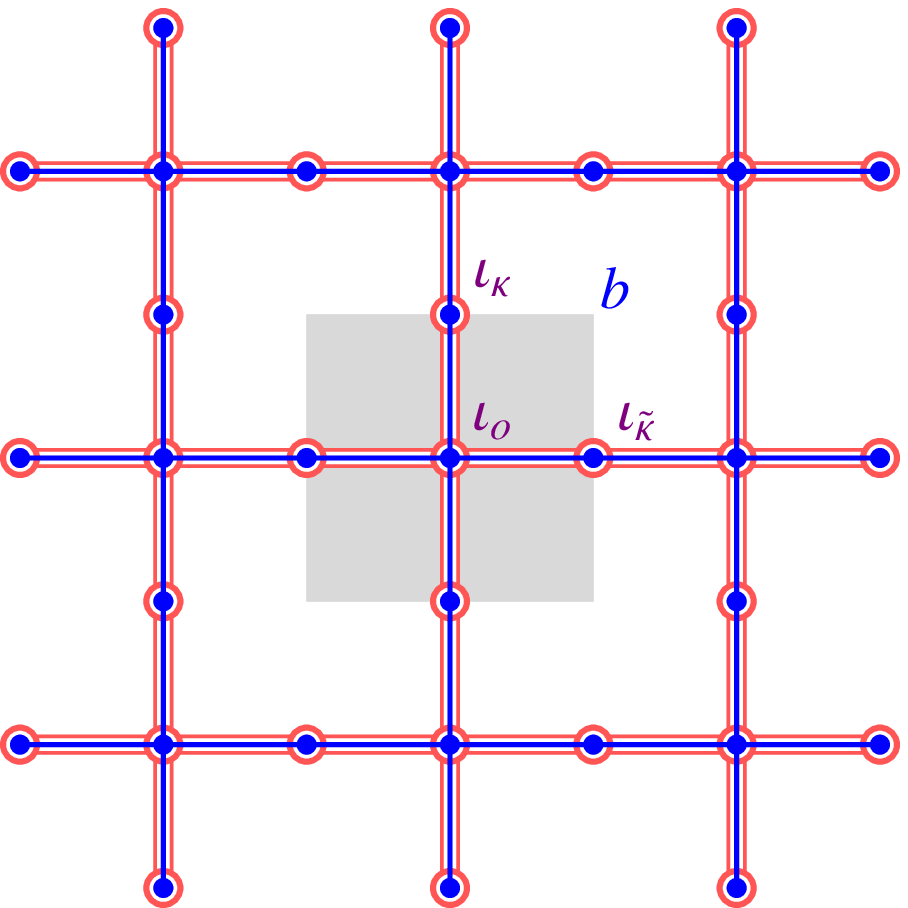}

(x) $\left(\begin{array}{cccccc}
\gamma_{o} & \delta_{o} & \gamma_{\tilde{\kappa}} & 1 & b & \gamma_{o}\gamma_{\tilde{\kappa}}\\
1 & 1 & 1 & e_{o} & 1 & e_{\kappa}
\end{array}\right)$%
\end{minipage}

\bigskip{}

\begin{minipage}[t]{0.33\textwidth}%
\includegraphics[width=0.9\columnwidth]{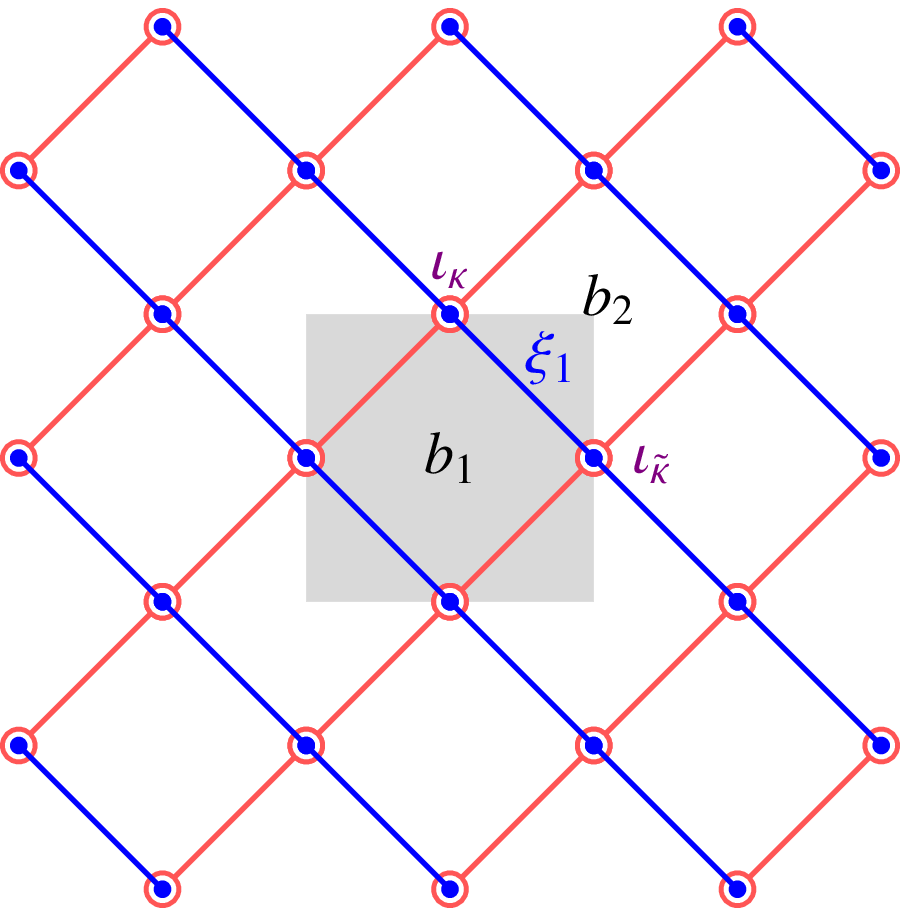}

(y) $\left(\begin{array}{cccccc}
\gamma_{\kappa} & \delta_{1} & \gamma_{\tilde{\kappa}} & b_{1} & b_{2} & \gamma_{\kappa}\gamma_{\tilde{\kappa}}\\
1 & 1 & 1 & 1 & 1 & e_{\kappa}
\end{array}\right)$%
\end{minipage}

\caption{(Color online) $TC\left(G\right)$ models (Part III). 
The shaded square is a unit cell
and the TC symmetry classes are calculated with the origin $o$ at
the center of the shaded square. Below each  lattice is the
corresponding TC symmetry class in the form \eqref{eq:sc_matrix}.
The edges are labeled by different letters according to their directions
as described in the text and in Fig.~\ref{fig:tcg-caption}. Edges that map to a single 
point under ${\mathscr P}$ are labeled by $\iota_{o}$, $\iota_{\tilde{o}}$,
$\iota_{\kappa}$, $\iota_{\tilde{\kappa}}$ with the subscript indicating their position,
and $\tilde{o}=\left(\frac{1}{2},\frac{1}{2}\right)$, $\kappa=\left(0,\frac{1}{2}\right)$,
$\tilde{\kappa}=\left(\frac{1}{2},0\right)$, in units such that
the size of the unit cell is $1\times1$. For short, we define
$\alpha_{i}=c_{\varepsilon_{i}}^{x}\left(P_{x}\right)$, $\beta_{i}=c_{\varepsilon_{i}}^{x}\left(P_{xy}\right)$,
$\gamma_{i}=c_{\varepsilon_{i}}^{z}\left(P_{x}\right)$ and $\delta_{i}=c_{\varepsilon_{i}}^{z}\left(P_{xy}\right)$,
where $\varepsilon=l,\epsilon,\xi,\zeta,\iota$ stands for a generic
edge. In addition, $e_{r}=a_{\mathscr{P}^{-1}\left(r\right)}$, and
$b$ (or $b_i$) is the eigenvalue of $B_{p}$ for the plaquette (here meaning
smallest cycle) $p$ within which $b$ (or $b_i$) is written.
In panels (w) and (x), $b_{1}$ and $b$ are the eigenvalues
of $B_{p}$ for the top plaquettes. The values of $e_{r}$ and $b$ (or $b_i$) are well-defined with respect to any local spin frame system satisfying Eqs.~(\ref{eq:gauge1}-\ref{eq:gauge3}) except in (w), where a further gauge fixing is needed and we require $c_{l_{1}}^{z}\left(P_{xy}\right)=c_{l_{1}'}^{z}\left(P_{xy}\right)=c_{l_{2}}^{z}\left(P_{x}\right)=1$.}

\label{fig:tcg3}
\end{figure*}

\newpage

\bibliography{referencez2}

\end{document}